\let\oldvec\vec
\DeclareMathOperator{\SEQ}{;}
\DeclareMathOperator{\LET}{\mathbf{let}}
\DeclareMathOperator{\IN}{\mathbf{in}}
\DeclareMathOperator{\WRITE}{:=}
\DeclareMathOperator\IFZERO{\mathbf{ifz}}
\DeclareMathOperator\THEN{\mathbf{then}}
\DeclareMathOperator\ELSE{\mathbf{else}}
\DeclareMathOperator{\COL}{:}
\newcommand\ASSERT{\mathbf{assert}}
\newcommand\HOLE{[]}
\newcommand\wf{\mathit{WF}}
\let\name\consort
\newcommand{\DOM}{\mathit{dom}}
\newcommand{\CVar}{\textbf{CVar}}
\let\imp\lstinline
\newcommand\tuple[1]{\left\langle{#1}\right\rangle}
\newcommand\TINT{\mathbf{int}}
\DeclareMathOperator\TREF{\mathbf{ref}}
\newcommand\set[1]{\left\{{#1}\right\}}
\DeclareMathOperator{\produces}{\Rightarrow}
\DeclareMathOperator{\MKREF}{\mathbf{mkref}}
\newcommand\tenv{\Gamma}
\DeclareMathOperator{\ra}{\rightarrow}
\newcommand\ALIAS{\mathbf{alias}}
\newcommand\sem[1]{\left\llbracket{#1}\right\rrbracket}
\newif\ifdraftComments
\def\mkDraftFn#1#2{%
  \expandafter\def\csname #1\endcsname##1{\ifdraftComments\textcolor{#2}{[#1: ##1]}\marginpar[$\longrightarrow$]{$\longleftarrow$}\fi}%
}
\def\needcite@with[#1]{\ifdraftComments\textcolor{blue}{[citation needed #1]}\else\empty\fi\xspace}
\def\needcite@bare{\ifdraftComments\textcolor{blue}{[citation needed]}\else\empty\fi\xspace}
\def\needcite{\@ifnextchar[{\needcite@with}{\needcite@bare}}
\lstdefinelanguage{Imp}{
  keywords=[0]{ifz,then,else,alias,assert,mkref,let,in,null,ifnull,return},
  morecomment=[l]{//},
  morecomment=[s]{/*}{*/}
}
\definecolor{comment-green}{rgb}{0,0.6,0}
\newif\iffullversion
\def\appref#1#2{\iffullversion #1\else#2\fi}
\let\llncs@addcontentsline\addcontentsline
\patchcmd{\maketitle}{\addcontentsline}{\llncs@addcontentsline}{}{}
\patchcmd{\maketitle}{\addcontentsline}{\llncs@addcontentsline}{}{}
\patchcmd{\maketitle}{\addcontentsline}{\llncs@addcontentsline}{}{}
\let\llncs@addcontentsline\addcontentsline
\patchcmd{\maketitle}{\addcontentsline}{\llncs@addcontentsline}{}{}
\patchcmd{\maketitle}{\addcontentsline}{\llncs@addcontentsline}{}{}
\patchcmd{\maketitle}{\addcontentsline}{\llncs@addcontentsline}{}{}
   \def\@citecolor{blue}%
   \def\@urlcolor{blue}%
   \def\@linkcolor{blue}%
\def\orcidID#1{\smash{\href{http://orcid.org/#1}{\protect\raisebox{-1.25pt}{\protect\includegraphics{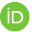}}}}}
\newcommand{\ottnt}[1]{\mathit{#1}}
\newcommand{\ottmv}[1]{\mathit{#1}}
\newcommand{\ottkw}[1]{\mathbf{#1}}
\newcommand{\ottsym}[1]{#1}
\def\jhaliascount{6}
\def\jhmaxalias{3}
\begin{document}
\title{\name: Context- and Flow-Sensitive Ownership Refinement Types for Imperative Programs}
\titlerunning{\name: Context- and Flow-Sensitive Ownership Refinement Types}
\author{
  John Toman\inst{1} \and
  Ren Siqi\inst{1} \and
  Kohei Suenaga\inst{1}\orcidID{0000-0002-7466-8789} \and \\
  Atsushi Igarashi\inst{1}\orcidID{0000-0002-5143-9764} \and
  Naoki Kobayashi\inst{2}\orcidID{0000-0002-0537-0604}
}
\authorrunning{J. Toman et al.}
\institute{
  Kyoto University, Kyoto, Japan, \email{\{jtoman,shiki,ksuenaga,igarashi\}@fos.kuis.kyoto-u.ac.jp}
  \and
  The University of Tokyo, Tokyo, Japan, \email{koba@is.s.u-tokyo.ac.jp}
}
\maketitle              %
\begin{abstract}
  We present \name, a type system for safety verification in the
  presence of mutability and aliasing. Mutability requires
  \emph{strong updates} to model changing invariants
  during program execution, but aliasing between pointers
  makes it difficult
  to determine which invariants must be updated
  in response to mutation.
  Our type system addresses this difficulty with a
  novel combination of refinement types and
  fractional ownership types. Fractional ownership types
  provide flow-sensitive and precise aliasing information
  for reference variables. \name interprets this ownership information
  to soundly handle strong updates of potentially aliased references.
  We have proved \name sound and implemented a prototype,
  fully automated inference tool. We evaluated our tool
  and found it verifies non-trivial programs including data structure implementations.
\keywords{refinement types, mutable references, aliasing, strong updates, fractional ownerships, program verification, type systems}
\end{abstract}
\section{Introduction}
\label{sec:intro}
Driven by the increasing power of automated theorem provers and
recent high-profile software failures,
fully automated program verification has seen a surge of interest
in recent years \cite{leino2010dafny,cousot2005astree,ball2011decade,zave2012using,hawblitzel2015ironfleet,bhargavan2017everest}.
In particular, \emph{refinement types}
\cite{flanagan2006hybrid,xi1999dependent,freeman1991refinement,bengtson2011refinement},
which refine base types with logical predicates, have been shown to be a practical
approach for program verification that are amenable to
(sometimes full) automation \cite{rondon2008liquid,vazou2014refinement,vazou2013abstract,unno2009dependent}.
Despite promising advances \cite{rondon2010low,kahsai2017quantified,gordon2013rely},
the sound and precise application of refinement types (and program
verification in general) in settings with mutability and aliasing
(e.g., Java, Ruby, etc.) remains difficult.

\begin{figure}[t]
  \begin{minipage}[t]{0.45\textwidth}
  \begin{lstlisting}[numbers=left, numbersep=3pt, numberstyle=\tiny\color{black},numberblanklines=false]
mk(n) { mkref n }$\label{line:mk}$

let p = mk(3) in$\label{line:main-start}$
let q = mk(5) in
p := *p + 1;$\label{line:first-p-write}$
q := *q + 1;$\label{line:r-write}$
assert(*p = 4);$\label{line:rq-eq-assert}$
  \end{lstlisting}
  \caption{Example demonstrating the difficulty of effecting strong updates in the presence of aliasing. The function \imp{mk} is
    bound in the program from \crefrange{line:main-start}{line:rq-eq-assert}; its body is given within the braces.}
  \label{fig:strong-update-example}
\end{minipage}
\hfill
\begin{minipage}[t]{0.45\textwidth}
  \begin{lstlisting}[numbers=left, numbersep=3pt, numberstyle=\tiny\color{black},numberblanklines=false]
loop(a, b) {
  let aold = *a in
  b := *b + 1;$\label{line:b-write}$
  a := *a + 1;$\label{line:a-write}$
  assert(*a = aold + 1);
  if $\star$ then
    loop(b, mkref $\star$)$\label{line:alloc1}$
  else
    loop(b,a)$\label{line:call2}$
}
loop(mkref $\star$, mkref $\star$)$\label{line:start}$
\end{lstlisting}
  \caption{Example with non-trivial aliasing behavior.}
  \label{fig:hard-loop}
\end{minipage}
\end{figure}

One of the major challenges is how to precisely and soundly support
\emph{strong updates} for the invariants on memory cells.
In a setting with mutability, a single invariant may not necessarily
hold throughout the lifetime of a memory cell; while the program
mutates the memory the invariant may change or evolve.
To model these changes, a program verifier
must support different, incompatible invariants which hold at
different points during program execution. Further, precise
program verification requires supporting different invariants
on distinct pieces of memory.

One solution is to use refinement types on the static program names
(i.e., variables) which point to a memory location.
This approach can model evolving invariants
while tracking distinct invariants for each memory cell.
For example,
consider the (contrived) example in \Cref{fig:strong-update-example}.
This program is written in an ML-like language with mutable references;
references are updated with \imp{:=} and allocated with \imp{mkref}.
Variable \imp{p} can initially be given the type
$ \set{  \nu  \COL \TINT \mid  \nu \, \ottsym{=} \, \ottsym{3} } \TREF$, indicating it is a
reference to the integer 3.
Similarly, \imp{q} can be given the type $ \set{  \nu  \COL \TINT \mid  \nu \, \ottsym{=} \, \ottsym{5} }  \TREF$.
We can model the mutation of \imp{p}'s memory on \cref{line:first-p-write}
by strongly updating \imp{p}'s type to $ \set{  \nu  \COL \TINT \mid  \nu \, \ottsym{=} \, \ottsym{4} }  \TREF$.

Unfortunately, the precise application of this technique is confounded
by the existence of unrestricted aliasing.
In general, updating just the type of the mutated reference
is insufficient: due to aliasing, other variables may point to the mutated
memory and their refinements must be updated as well. However,
in the presence of conditional, \emph{may} aliasing, it is
impossible to strongly update the refinements on all possible
aliases; given the static uncertainty
about whether a variable points to the mutated memory, that variable's
refinement may only be \emph{weakly updated}. For example,
suppose we used a simple alias analysis that imprecisely (but soundly)
concluded all references allocated at the same program point \emph{might} alias.
Variables \imp{p} and \imp{q} share the allocation site on
\cref{line:mk}, so on \cref{line:first-p-write} we would have to weakly update
\imp{q}'s type to $ \set{  \nu  \COL \TINT \mid  \nu \, \ottsym{=} \, \ottsym{4}  \vee  \nu \, \ottsym{=} \, \ottsym{5} } $,
indicating it may hold either 4 \emph{or} 5.
Under this same imprecise aliasing assumption,
we would also have to weakly update
\imp{p}'s type on \cref{line:r-write},
preventing the verification of the example program.

Given the precision loss associated with weak updates,
it is critical that verification techniques built upon
refinement types use precise aliasing information and
avoid spuriously applied weak updates. Although it
is relatively simple to conclude that \imp{p} and \imp{q}
do not alias in \Cref{fig:strong-update-example}, consider
the example in \Cref{fig:hard-loop}. (In this example,
$\star$ represents non-deterministic values.) Verifying
this program requires proving \imp{a} and \imp{b} never
alias at the writes on \cref{line:b-write,line:a-write}.
In fact, \imp{a} and \imp{b} \emph{may} point to the same
memory location, but only in different invocations of \imp{loop};
this pattern may confound even sophisticated symbolic alias analyses.
Additionally, \imp{a} and \imp{b} share an allocation site on \cref{line:alloc1},
so an approach based on the simple alias analysis described above will also fail on
this example. This must-not alias proof obligation  \emph{can} be discharged with existing
techniques \cite{spath2016boomerang,spath2019context}, but requires an expensive, on-demand,
interprocedural, flow-sensitive alias analysis. 

This paper presents \name (CONtext Sensitive Ownership Refinement Types),
a type system for the automated verification
of program safety in imperative languages with mutability and aliasing.
\name is built upon the novel combination of refinement types and
fractional ownership types \cite{suenaga2009fractional,suenaga2012type}.
Fractional ownership types extend pointer types with a rational number in
the range $[0,1]$ called an \emph{ownership}.
These ownerships encapsulate the permission of the reference; only
references with ownership $1$ may be used for mutation. Fractional ownership
types also obey the following key invariant: any references with a mutable
alias must have ownership 0.
Thus, any reference with non-zero ownership \emph{cannot}
be an alias of a reference with ownership $1$. In other words,
ownerships encode precise aliasing information in the form of
\emph{must-not} aliasing relationships.\looseness=-1

To understand the benefit of this approach, let us return to
\Cref{fig:strong-update-example}. As \imp{mk} returns a freshly allocated
reference with no aliases, its type indicates it returns a
reference with ownership 1. Thus, our type system can initially
give \imp{p} and \imp{q} types $  \set{  \nu  \COL \TINT \mid  \nu \, \ottsym{=} \, \ottsym{3} }   \TREF^{ \ottsym{1} } $ and $  \set{  \nu  \COL \TINT \mid  \nu \, \ottsym{=} \, \ottsym{5} }   \TREF^{ \ottsym{1} } $ respectively. The ownership $\ottsym{1}$ on the reference type constructor $\TREF$ indicates both
pointers hold ``exclusive'' ownership of the pointed to reference cell;
from the invariant of fractional ownership types \imp{p} and \imp{q}
must \emph{not} alias. The types of both references
can be strongly updated \emph{without} requiring spurious weak updates. As a result,
at the assertion statement on \cref{line:rq-eq-assert}, \imp{p} has type
$  \set{  \nu  \COL \TINT \mid  \nu \, \ottsym{=} \, \ottsym{4} }   \TREF^{ \ottsym{1} } $ expressing the required invariant.\looseness=-1

Our type system can also verify the example in
\Cref{fig:hard-loop} \emph{without} expensive
side analyses. As \imp{a} and \imp{b} are both
mutated, they must both have ownership 1; i.e., they cannot alias.
This pre-condition is satisfied by all invocations of \imp{loop};
on \cref{line:alloc1}, \imp{b} has
ownership 1 (from the argument type), and the newly allocated
reference must also have ownership 1.
Similarly, both arguments on \cref{line:call2} have ownership $\ottsym{1}$
(from the assumed ownership on the argument types). %

Ownerships behave linearly;
they cannot be duplicated, only \emph{split}
when aliases are created. This linear behavior
preserves the critical ownership invariant. For example, if
we replace \cref{line:call2} in \Cref{fig:hard-loop} with
\imp{loop(b,b)}, the program becomes ill-typed; there
is no way to divide \imp{b}'s ownership of 1 to into \emph{two}
ownerships of 1.

Ownerships also obviate updating refinement information
of aliases at mutation. \name ensures that only the trivial refinement
$ \top $ is used in reference types with ownership $0$, i.e., mutably-aliased references.
When memory is mutated through a reference with ownership $1$,
\name simply updates the refinement of the mutated reference variable.
From the soundness of ownership types, all aliases have ownership 0
and must therefore only contain the $ \top $ refinement. Thus, the types of all aliases already
soundly describe \emph{all} possible contents.\footnote{This
  assumption holds only if updates do not change simple types, a condition our type-system enforces.}

\name is also \emph{context-sensitive}, and can
use different summaries of function behavior
at different points in the program.
For example, consider the variant
\begin{wrapfigure}{l}{0.29\textwidth}
  \begin{lstlisting}[numbers=left,numbersep=3pt,numberstyle=\tiny\color{black},numberblanklines=false]
get(p) { *p }

let p = mkref 3 in
let q = mkref 5 in
p := get(p) + 1;$\label{line:get1}$
q := get(q) + 1;$\label{line:get2}$
assert(*p = 4);
assert(*q = 6);
  \end{lstlisting}
  \caption{Example of context-sensitivity}
  \label{fig:running-example}
\end{wrapfigure}
of \Cref{fig:strong-update-example}
shown in \Cref{fig:running-example}.
The function \imp{get} returns the contents of its argument,
and is called on \cref{line:get1,line:get2}. To precisely verify
this program, on \cref{line:get1}
\imp{get} must be typed as a function that takes a reference to
3 and returns 3. Similarly, on \cref{line:get2}
\imp{get} must be typed as a function that
takes a reference to 5 and returns 5. Our type system can
give \imp{get} a function type that distinguishes between these two calling
contexts and selects the appropriate summary of \imp{get}'s behavior.

We have formalized \name as a type system for a small imperative calculus
and proved the system is sound: i.e., a well-typed program never encounters assertion
failures during execution. We have
implemented a prototype type inference tool targeting this
imperative language and found it can automatically verify several
non-trivial programs, including sorted lists and an array list data structure.

The rest of this paper is organized as follows. \Cref{sec:prelim} defines the
imperative language targeted by \name and its semantics. \Cref{sec:typesystem}
defines our type system and states our soundness theorem. \Cref{sec:infr} sketches
our implementation's inference algorithm and its current limitations.
\Cref{sec:eval} describes an evaluation of our prototype, \Cref{sec:rw} outlines
related work, and \Cref{sec:concl} concludes.\looseness=-1

\section{Target Language}
\label{sec:prelim}
This section describes a simple imperative language with mutable references and first-order, recursive functions.

\subsection{Syntax}
\label{sec:language}
We assume a set of \emph{variables}, ranged over by $x,y,z,\dots$,
a set of \emph{function names}, ranged over by $f$, and
a set of \emph{labels}, ranged over by $\ell_{{\mathrm{1}}}, \ell_{{\mathrm{2}}}, \dots$.
The grammar of the language is as follows.
\[
  \begin{array}{rcl}
  \ottnt{d} &::=& \mathit{f}  \mapsto  \ottsym{(}  \mathit{x_{{\mathrm{1}}}}  \ottsym{,} \, ... \, \ottsym{,}  \mathit{x_{\ottmv{n}}}  \ottsym{)}  \ottnt{e} \\
  \ottnt{e} &::= &
               \mathit{x} \mid
                \LET  \mathit{x}  =  \mathit{y}  \IN  \ottnt{e}  \mid
                \LET  \mathit{x}  =  n  \IN  \ottnt{e}  \mid
                \IFZERO  \mathit{x}  \THEN  \ottnt{e_{{\mathrm{1}}}}  \ELSE  \ottnt{e_{{\mathrm{2}}}}  \\
        &\mid&  \LET  \mathit{x}  =   \MKREF  \mathit{y}   \IN  \ottnt{e}  \mid
                \LET  \mathit{x}  =   *  \mathit{y}   \IN  \ottnt{e}  \mid
                \LET  \mathit{x}  =   \mathit{f} ^ \ell (  \mathit{y_{{\mathrm{1}}}} ,\ldots, \mathit{y_{\ottmv{n}}}  )   \IN  \ottnt{e}  \\
        &\mid&  \mathit{x}  \WRITE  \mathit{y}  \SEQ  \ottnt{e}  \mid
                \ALIAS( \mathit{x}  =  \mathit{y} ) \SEQ  \ottnt{e}  \mid
                \ALIAS( \mathit{x}  = *  \mathit{y} ) \SEQ  \ottnt{e}  \mid
                \ASSERT( \varphi ) \SEQ  \ottnt{e}  \mid
                \ottnt{e_{{\mathrm{1}}}}  \SEQ  \ottnt{e_{{\mathrm{2}}}}  \\
    \ottnt{P} &::=&  \tuple{ \ottsym{\{}  \ottnt{d_{{\mathrm{1}}}}  \ottsym{,} \, ... \, \ottsym{,}  \ottnt{d_{\ottmv{n}}}  \ottsym{\}} ,  \ottnt{e} }  %
  \end{array}
\]
$\varphi$ stands for a formula in propositional first-order
logic over variables, integers and contexts;
we discuss these formulas later in \Cref{sec:types}.

Variables are introduced by function parameters or let bindings.
Like ML, the variable bindings introduced by let expressions and parameters
are immutable.
Mutable variable declarations such as \texttt{int x = 1;} in C are achieved in our language with:\looseness=-1
\[
   \LET  \mathit{y}  =  \ottsym{1}  \IN  \ottsym{(}   \LET  \mathit{x}  =   \MKREF  \mathit{y}   \IN   \ldots    \ottsym{)} \ .
\]
As a convenience, we assume all variable names
introduced with let bindings and function parameters are distinct.

Unlike ML (and like C or Java) we do not allow general expressions on the right hand side of
let bindings. The simplest right hand forms are a variable $\mathit{y}$ or an integer literal $n$.
$ \MKREF  \mathit{y} $ creates a reference cell with value $\mathit{y}$, and
$ *  \mathit{y} $ accesses the contents of reference $\mathit{y}$. For simplicity,
we do not include an explicit null value;
an extension to support null is discussed in \Cref{sec:infr}.
Function calls must occur on the right hand side of a variable binding
and take the form $ \mathit{f} ^ \ell (  \mathit{x_{{\mathrm{1}}}} ,\ldots, \mathit{x_{\ottmv{n}}}  ) $, where $ \mathit{x_{{\mathrm{1}}}} ,\ldots, \mathit{x_{\ottmv{n}}} $ are distinct variables and $\ell$ is a
(unique) label. %
These labels are used to make our type system context-sensitive as
discussed in \Cref{sec:cs}.

The single base case for expressions is a single variable.
If the variable expression is executed in a tail position of a function,
then the value of that variable is the return value of the function,
otherwise the value is ignored.

The only intraprocedural control-flow operations in our language are if statements.
$\ottkw{ifz}$ checks whether the condition variable $\mathit{x}$ equals zero and chooses
the corresponding branch. Loops can be implemented with recursive functions and
we do not include them explicitly in our formalism.

Our grammar requires that side-effecting, result-free statements, \lstinline{assert}($\varphi$)
\lstinline{alias}($\mathit{x}$ = $\mathit{y}$), \lstinline{alias}($\mathit{x} = *\mathit{y}$) and
assignment $\mathit{x} := \mathit{y}$ are followed by a continuation expression.
We impose this requirement for technical reasons to ease our formal
presentation; this requirement does not reduce expressiveness
as dummy continuations can be inserted as needed.
The $ \ASSERT( \varphi ) \SEQ  \ottnt{e} $ form executes $\ottnt{e}$ if the predicate
$\varphi$ holds in the current state and aborts the program otherwise.
$ \ALIAS( \mathit{x}  =  \mathit{y} ) \SEQ  \ottnt{e} $ and $ \ALIAS( \mathit{x}  = *  \mathit{y} ) \SEQ  \ottnt{e} $ assert a must-aliasing relationship between $x$ and $y$ (resp. $\mathit{x}$ and $ *  \mathit{y} $)
and then execute $\ottnt{e}$.
\imp{alias} statements
are effectively \emph{annotations} that our type system exploits to gain
added precision.
$ \mathit{x}  \WRITE  \mathit{y}  \SEQ  \ottnt{e} $ updates the contents of the memory cell pointed to by $\mathit{x}$
with the value of $\mathit{y}$.
In addition to the above continuations, our language supports general sequencing with
$ \ottnt{e_{{\mathrm{1}}}}  \SEQ  \ottnt{e_{{\mathrm{2}}}} $.

A program is a pair $ \tuple{ \ottnt{D} ,  \ottnt{e} } $, where
$\ottnt{D}  \ottsym{=}  \ottsym{\{}  \ottnt{d_{{\mathrm{1}}}}  \ottsym{,} \, ... \, \ottsym{,}  \ottnt{d_{\ottmv{n}}}  \ottsym{\}}$ is a set of first-order, mutually recursive function definitions, and $\ottnt{e}$
is the program entry point. A function definition $\ottnt{d}$
maps the function name to a tuple of argument names $\mathit{x_{{\mathrm{1}}}}  \ottsym{,} \, ... \, \ottsym{,}  \mathit{x_{\ottmv{n}}}$
that are
bound within the function body $\ottnt{e}$.

\paragraph{Paper Syntax.}
In the remainder of the paper, we will write programs that
are technically illegal according to our grammar, but can be
easily ``de-sugared'' into an equivalent, valid program.
For example, we will write
\begin{lstlisting}
  let x = mkref 4 in assert(*x = 4)
\end{lstlisting}
as syntactic sugar for:
\begin{lstlisting}
  let f = 4 in let x = mkref f in
  let tmp = *x in assert(tmp = 4); let dummy = 0 in dummy
\end{lstlisting}

\bcprulessavespacetrue
\begin{figure}[t]
  \scriptsize
  \begin{center}
  \infrule[R-Var]{
  }{
      \tuple{ \ottnt{H} ,  \ottnt{R} ,  F  \ottsym{:}  \oldvec{F} ,  \mathit{x} }     \longrightarrow _{ \ottnt{D} }     \tuple{ \ottnt{H} ,  \ottnt{R} ,  \oldvec{F} ,  F  \ottsym{[}  \mathit{x}  \ottsym{]} }  
   }
   \infrule[R-Seq]{
   }{
       \tuple{ \ottnt{H} ,  \ottnt{R} ,  F  \ottsym{:}  \oldvec{F} ,  \ottnt{E}  \ottsym{[}   \mathit{x}  \SEQ  \ottnt{e}   \ottsym{]} }     \longrightarrow _{ \ottnt{D} }     \tuple{ \ottnt{H} ,  \ottnt{R} ,  \oldvec{F} ,  \ottnt{E}  \ottsym{[}  \ottnt{e}  \ottsym{]} }  
   }
  \infrule[R-Let]{
     \mathit{x'}  \not\in \DOM( \ottnt{R} ) 
  }{
     \begin{array}{l}  \tuple{ \ottnt{H} ,  \ottnt{R} ,  \oldvec{F} ,  \ottnt{E}  \ottsym{[}   \LET  \mathit{x}  =  \mathit{y}  \IN  \ottnt{e}   \ottsym{]} }   \\ \quad   \longrightarrow _{ \ottnt{D} }     \tuple{ \ottnt{H} ,  \ottnt{R}  \ottsym{\{}  \mathit{x'}  \mapsto  \ottnt{R}  \ottsym{(}  \mathit{y}  \ottsym{)}  \ottsym{\}} ,  \oldvec{F} ,  \ottnt{E}  \ottsym{[}    [  \mathit{x'}  /  \mathit{x}  ]    \ottnt{e}   \ottsym{]} }  \end{array} 
  }
  \infrule[R-LetInt]{
     \mathit{x'}  \not\in \DOM( \ottnt{R} ) 
  }{
     \begin{array}{l}  \tuple{ \ottnt{H} ,  \ottnt{R} ,  \oldvec{F} ,  \ottnt{E}  \ottsym{[}   \LET  \mathit{x}  =  n  \IN  \ottnt{e}   \ottsym{]} }   \\ \quad   \longrightarrow _{ \ottnt{D} }     \tuple{ \ottnt{H} ,  \ottnt{R}  \ottsym{\{}  \mathit{x'}  \mapsto  n  \ottsym{\}} ,  \oldvec{F} ,  \ottnt{E}  \ottsym{[}    [  \mathit{x'}  /  \mathit{x}  ]    \ottnt{e}   \ottsym{]} }  \end{array} 
  }
  \infrule[R-IfTrue]{
    \ottnt{R}  \ottsym{(}  \mathit{x}  \ottsym{)} \, \ottsym{=} \, \ottsym{0}
  }{
     \begin{array}{l}  \tuple{ \ottnt{H} ,  \ottnt{R} ,  \oldvec{F} ,  \ottnt{E}  \ottsym{[}   \IFZERO  \mathit{x}  \THEN  \ottnt{e_{{\mathrm{1}}}}  \ELSE  \ottnt{e_{{\mathrm{2}}}}   \ottsym{]} }   \\ \quad   \longrightarrow _{ \ottnt{D} }     \tuple{ \ottnt{H} ,  \ottnt{R} ,  \oldvec{F} ,  \ottnt{E}  \ottsym{[}  \ottnt{e_{{\mathrm{1}}}}  \ottsym{]} }  \end{array} 
  }
  \infrule[R-IfFalse]{
    \ottnt{R}  \ottsym{(}  \mathit{x}  \ottsym{)} \, \neq \, \ottsym{0}
  }{
     \begin{array}{l}  \tuple{ \ottnt{H} ,  \ottnt{R} ,  \oldvec{F} ,  \ottnt{E}  \ottsym{[}   \IFZERO  \mathit{x}  \THEN  \ottnt{e_{{\mathrm{1}}}}  \ELSE  \ottnt{e_{{\mathrm{2}}}}   \ottsym{]} }   \\ \quad   \longrightarrow _{ \ottnt{D} }     \tuple{ \ottnt{H} ,  \ottnt{R} ,  \oldvec{F} ,  \ottnt{E}  \ottsym{[}  \ottnt{e_{{\mathrm{2}}}}  \ottsym{]} }  \end{array} 
  }
  \infrule[R-MkRef]{
     \ottmv{a}  \not\in \DOM( \ottnt{H} )  \andalso  \mathit{x'}  \not\in \DOM( \ottnt{R} ) 
  }{
     \begin{array}{r}  \tuple{ \ottnt{H} ,  \ottnt{R} ,  \oldvec{F} ,  \ottnt{E}  \ottsym{[}   \LET  \mathit{x}  =   \MKREF  \mathit{y}   \IN  \ottnt{e}   \ottsym{]} }     \longrightarrow _{ \ottnt{D} }   \\   \tuple{ \ottnt{H}  \ottsym{\{}  \ottmv{a}  \mapsto  \ottnt{R}  \ottsym{(}  \mathit{y}  \ottsym{)}  \ottsym{\}} ,  \ottnt{R}  \ottsym{\{}  \mathit{x'}  \mapsto  \ottmv{a}  \ottsym{\}} ,  \oldvec{F} ,  \ottnt{E}  \ottsym{[}    [  \mathit{x'}  /  \mathit{x}  ]    \ottnt{e}   \ottsym{]} }  \end{array} 
  }
  \infrule[R-Deref]{
    \ottnt{R}  \ottsym{(}  \mathit{y}  \ottsym{)} \, \ottsym{=} \, \ottmv{a} \andalso \ottnt{H}  \ottsym{(}  \ottmv{a}  \ottsym{)} \, \ottsym{=} \, \ottnt{v} \andalso  \mathit{x'}  \not\in \DOM( \ottnt{R} ) 
  }{
     \begin{array}{r}  \tuple{ \ottnt{H} ,  \ottnt{R} ,  \oldvec{F} ,  \ottnt{E}  \ottsym{[}   \LET  \mathit{x}  =   *  \mathit{y}   \IN  \ottnt{e}   \ottsym{]} }     \longrightarrow _{ \ottnt{D} }   \\   \tuple{ \ottnt{H} ,  \ottnt{R}  \ottsym{\{}  \mathit{x'}  \mapsto  \ottnt{v}  \ottsym{\}} ,  \oldvec{F} ,  \ottnt{E}  \ottsym{[}    [  \mathit{x'}  /  \mathit{x}  ]    \ottnt{e}   \ottsym{]} }  \end{array} 
  }
\end{center}
\caption{Transition Rules (1).}
\label{fig:transitionRules1}
\end{figure}
\begin{figure}[t]
  \scriptsize
  \begin{center}
  \infrule[R-Call]{
     \mathit{f}  \mapsto  \ottsym{(}  \mathit{x_{{\mathrm{1}}}}  \ottsym{,} \, .. \, \ottsym{,}  \mathit{x_{\ottmv{n}}}  \ottsym{)}  \ottnt{e}  \in  \ottnt{D} 
  }{
     \begin{array}{l}  \tuple{ \ottnt{H} ,  \ottnt{R} ,  \oldvec{F} ,  \ottnt{E}  \ottsym{[}   \LET  \mathit{x}  =   \mathit{f} ^ \ell (  \mathit{y_{{\mathrm{1}}}} ,\ldots, \mathit{y_{\ottmv{n}}}  )   \IN  \ottnt{e'}   \ottsym{]} }   \\ \quad   \longrightarrow _{ \ottnt{D} }     \tuple{ \ottnt{H} ,  \ottnt{R} ,   \ottnt{E} [\LET  \mathit{x}  =   \HOLE^ \ell   \IN  \ottnt{e'}  ]   \ottsym{:}  \oldvec{F} ,     [  \mathit{y_{{\mathrm{1}}}}  /  \mathit{x_{{\mathrm{1}}}}  ]  \cdots  [  \mathit{y_{\ottmv{n}}}  /  \mathit{x_{\ottmv{n}}}  ]     \ottnt{e}  }  \end{array} 
  }
  \infrule[R-Assign]{
    \ottnt{R}  \ottsym{(}  \mathit{x}  \ottsym{)} \, \ottsym{=} \, \ottmv{a} \andalso  \ottmv{a}  \in \DOM( \ottnt{H} ) 
  }{
     \begin{array}{r}  \tuple{ \ottnt{H} ,  \ottnt{R} ,  \oldvec{F} ,  \ottnt{E}  \ottsym{[}   \mathit{x}  \WRITE  \mathit{y}  \SEQ  \ottnt{e}   \ottsym{]} }     \longrightarrow _{ \ottnt{D} }   \\   \tuple{ \ottnt{H}  \ottsym{\{}  \ottmv{a}  \hookleftarrow  \ottnt{R}  \ottsym{(}  \mathit{y}  \ottsym{)}  \ottsym{\}} ,  \ottnt{R} ,  \oldvec{F} ,  \ottnt{E}  \ottsym{[}  \ottnt{e}  \ottsym{]} }  \end{array} 
  }
  \infrule[R-Alias]{
    \ottnt{R}  \ottsym{(}  \mathit{x}  \ottsym{)} \, \ottsym{=} \, \ottnt{R}  \ottsym{(}  \mathit{y}  \ottsym{)}
  }{
     \begin{array}{l}  \tuple{ \ottnt{H} ,  \ottnt{R} ,  \oldvec{F} ,  \ottnt{E}  \ottsym{[}   \ALIAS( \mathit{x}  =  \mathit{y} ) \SEQ  \ottnt{e}   \ottsym{]} }   \\ \quad   \longrightarrow _{ \ottnt{D} }     \tuple{ \ottnt{H} ,  \ottnt{R} ,  \oldvec{F} ,  \ottnt{E}  \ottsym{[}  \ottnt{e}  \ottsym{]} }  \end{array} 
  }
  \infrule[R-AliasPtr]{
    \ottnt{R}  \ottsym{(}  \mathit{y}  \ottsym{)} \, \ottsym{=} \, \ottmv{a} \andalso \ottnt{H}  \ottsym{(}  \ottmv{a}  \ottsym{)} \, \ottsym{=} \, \ottnt{R}  \ottsym{(}  \mathit{x}  \ottsym{)}
  }{
      \tuple{ \ottnt{H} ,  \ottnt{R} ,  \oldvec{F} ,  \ottnt{E}  \ottsym{[}   \ALIAS( \mathit{x}  = *  \mathit{y} ) \SEQ  \ottnt{e}   \ottsym{]} }     \longrightarrow _{ \ottnt{D} }     \tuple{ \ottnt{H} ,  \ottnt{R} ,  \oldvec{F} ,  \ottnt{E}  \ottsym{[}  \ottnt{e}  \ottsym{]} }  
  }
  \infrule[R-AliasFail]{
    \ottnt{R}  \ottsym{(}  \mathit{x}  \ottsym{)} \, \neq \, \ottnt{R}  \ottsym{(}  \mathit{y}  \ottsym{)}
  }{
      \tuple{ \ottnt{H} ,  \ottnt{R} ,  \oldvec{F} ,  \ottnt{E}  \ottsym{[}   \ALIAS( \mathit{x}  =  \mathit{y} ) \SEQ  \ottnt{e}   \ottsym{]} }     \longrightarrow _{ \ottnt{D} }     \mathbf{AliasFail}  
  }
  \infrule[R-AliasPtrFail]{
    \ottnt{R}  \ottsym{(}  \mathit{x}  \ottsym{)} \, \neq \, \ottnt{H}  \ottsym{(}  \ottnt{R}  \ottsym{(}  \mathit{y}  \ottsym{)}  \ottsym{)}
  }{
      \tuple{ \ottnt{H} ,  \ottnt{R} ,  \oldvec{F} ,  \ottnt{E}  \ottsym{[}   \ALIAS( \mathit{x}  = *  \mathit{y} ) \SEQ  \ottnt{e}   \ottsym{]} }     \longrightarrow _{ \ottnt{D} }     \mathbf{AliasFail}  
  }
  \infrule[R-Assert]{
    \models  \ottsym{[}  \ottnt{R}  \ottsym{]} \, \varphi
  }{
     \begin{array}{l}  \tuple{ \ottnt{H} ,  \ottnt{R} ,  \oldvec{F} ,  \ottnt{E}  \ottsym{[}   \ASSERT( \varphi ) \SEQ  \ottnt{e}   \ottsym{]} }   \\ \quad   \longrightarrow _{ \ottnt{D} }     \tuple{ \ottnt{H} ,  \ottnt{R} ,  \oldvec{F} ,  \ottnt{E}  \ottsym{[}  \ottnt{e}  \ottsym{]} }  \end{array} 
  }
  \infrule[R-AssertFail]{
    \not\models  \ottsym{[}  \ottnt{R}  \ottsym{]} \, \varphi
  }{
      \tuple{ \ottnt{H} ,  \ottnt{R} ,  \oldvec{F} ,  \ottnt{E}  \ottsym{[}   \ASSERT( \varphi ) \SEQ  \ottnt{e}   \ottsym{]} }     \longrightarrow _{ \ottnt{D} }     \mathbf{AssertFail}  
  }
  \end{center}
\caption{Transition Rules (2).}
\label{fig:transitionRules2}
\end{figure}
\bcprulessavespacefalse

\subsection{Operational Semantics}
\label{sec:semantics}
We now introduce the operational semantics for our language.
We assume a finite domain of heap addresses \textbf{Addr}:
we denote an arbitrary address with $\ottmv{a}$.
A runtime state is represented by a configuration $ \tuple{ \ottnt{H} ,  \ottnt{R} ,  \oldvec{F} ,  \ottnt{e} } $, which consists of a heap,
register file, stack, and currently reducing expression respectively.
The register file maps variables to runtime values $v$, which are
either integers $n$ or addresses $\ottmv{a}$. The heap maps a finite subset
of addresses to runtime values. The runtime stack represents pending
function calls as a sequence of return contexts, which we
describe below.
While the final configuration component is an expression,
the rewriting rules are defined in terms of $\ottnt{E}  \ottsym{[}  \ottnt{e}  \ottsym{]}$, which is
an evaluation context $\ottnt{E}$ and redex $\ottnt{e}$, as is standard.
The grammar for evaluation contexts is defined by: 
\(
    \ottnt{E}  ::=   \ottnt{E'} \SEQ \ottnt{e}  \mid  \HOLE .
  \)

Our operational semantics is given in \Cref{fig:transitionRules1,fig:transitionRules2}.
We write $ \DOM( \ottnt{H} ) $ to indicate the domain of
a function and $\ottnt{H}  \ottsym{\{}  \ottmv{a}  \mapsto  v  \ottsym{\}}$ where $ \ottmv{a}  \not\in   \DOM( \ottnt{H} )  $ to denote a map
which takes all values in $\DOM(H)$ to their values in $H$ and which
additionally takes $\ottmv{a}$ to $v$.
We will write $ \ottnt{H}  \ottsym{\{}  \ottmv{a}  \hookleftarrow  v  \ottsym{\}}$
where $ \ottmv{a}  \in   \DOM( \ottnt{H} )  $ to denote a map equivalent to $\ottnt{H}$ except that $\ottmv{a}$ takes value $v$.
We use similar notation for $ \DOM( \ottnt{R} ) $ and $\ottnt{R}  \ottsym{\{}  \mathit{x}  \mapsto  v  \ottsym{\}}$.
We also write $\emptyset$ for the empty register file and heap.
The step relation  $ \longrightarrow _{ \ottnt{D} } $ is parameterized
by a set of function definitions $\ottnt{D}$; a program $ \tuple{ \ottnt{D} ,  \ottnt{e} } $ is executed
by stepping the initial configuration $ \tuple{  \emptyset  ,   \emptyset  ,   \cdot  ,  \ottnt{e} } $ according
to $\longrightarrow_{\ottnt{D}}$.
The semantics is mostly standard; we highlight some important points below.

Return contexts $F$ take the form $\ottnt{E}  \ottsym{[}   \LET  \mathit{y}  =   \HOLE^ \ell   \IN  \ottnt{e}   \ottsym{]}$. A return context represents a pending
function call with label $\ell$, and indicates that $\mathit{y}$ should be bound to
the return value of the callee during the execution of $\ottnt{e}$ within the larger execution context $\ottnt{E}$.
The call stack $\oldvec{F}$ is a sequence of these contexts, with
the first such return context representing the most recent function call.
The stack grows at function calls as described by rule \rn{R-Call}.
For a call $\ottnt{E}  \ottsym{[}   \LET  \mathit{x}  =   \mathit{f} ^ \ell (  \mathit{y_{{\mathrm{1}}}} ,\ldots, \mathit{y_{\ottmv{n}}}  )   \IN  \ottnt{e}   \ottsym{]}$ where
$f$ is defined as $\ottsym{(}  \mathit{x_{{\mathrm{1}}}}  \ottsym{,} \, ... \, \ottsym{,}  \mathit{x_{\ottmv{n}}}  \ottsym{)}  \ottnt{e'}$, the return context $\ottnt{E}  \ottsym{[}   \LET  \mathit{y}  =   \HOLE^ \ell   \IN  \ottnt{e}   \ottsym{]}$ is
prepended onto the stack of the input configuration.
The substitution of formal arguments for parameters in $e'$, denoted by $   [  \mathit{y_{{\mathrm{1}}}}  /  \mathit{x_{{\mathrm{1}}}}  ]  \cdots  [  \mathit{y_{\ottmv{n}}}  /  \mathit{x_{\ottmv{n}}}  ]     \ottnt{e'} $,
becomes the currently reducing expression in the output configuration.
Function returns are handled by \rn{R-Var}.
Our semantics return values by name; when the currently executing function fully reduces to a single variable $x$,
$x$ is substituted into the return context on the top of the stack,
denoted by $\ottnt{E}  \ottsym{[}   \LET  \mathit{y}  =   \HOLE^ \ell   \IN  \ottnt{e}   \ottsym{]}[x]$.

In the rules \rn{R-Assert} we write $\models  \ottsym{[}  \ottnt{R}  \ottsym{]} \, \varphi$ to mean that the formula
yielded by substituting the concrete values in $\ottnt{R}$ for the variables in $\varphi$
is valid within some chosen logic (see \Cref{sec:types}); in \rn{R-AssertFail} we write $\not\models  \ottsym{[}  \ottnt{R}  \ottsym{]} \, \varphi$
when the formula is \emph{not} valid.
The substitution operation $\ottsym{[}  \ottnt{R}  \ottsym{]} \, \varphi$ is defined inductively as $\ottsym{[}   \emptyset   \ottsym{]} \, \varphi  \ottsym{=}  \varphi, \ottsym{[}  \ottnt{R}  \ottsym{\{}  \mathit{x}  \mapsto  n  \ottsym{\}}  \ottsym{]} \, \varphi  \ottsym{=}  \ottsym{[}  \ottnt{R}  \ottsym{]} \, \ottsym{[}  n  \ottsym{/}  \mathit{x}  \ottsym{]}  \varphi, \ottsym{[}  \ottnt{R}  \ottsym{\{}  \mathit{x}  \mapsto  \ottmv{a}  \ottsym{\}}  \ottsym{]} \, \varphi  \ottsym{=}  \ottsym{[}  \ottnt{R}  \ottsym{]} \, \varphi$.
In the case of an assertion failure, the semantics steps to a distinguished
configuration $ \mathbf{AssertFail} $. The goal of our type system is to show that no
execution of a well-typed program may reach this configuration.
The \imp{alias} form checks whether the two references actually alias;
i.e., if the must-alias assertion provided by the programmer is correct.
If not, our semantics steps to the distinguished $ \mathbf{AliasFail} $ configuration.
Our type system does \emph{not} guarantee that $ \mathbf{AliasFail} $ is unreachable;
aliasing assertions are effectively trusted annotations that are assumed to hold.

In order to avoid duplicate variable names in our register file due to recursive functions,
we refresh the bound variable $\mathit{x}$ in a let expression to $\mathit{x'}$.
Take expression $ \LET  \mathit{x}  =  \mathit{y}  \IN  \ottnt{e} $ as an example;
we substitute a fresh variable $\mathit{x'}$ for $\mathit{x}$ in $\ottnt{e}$, then bind $\mathit{x'}$ to the value
of variable $\mathit{y}$.
We assume this refreshing of variables preserves our assumption that all variable
bindings introduced with let and function parameters are unique, i.e. $\mathit{x'}$ does
not overlap with variable names that occur in the program.

\section{Typing}
\label{sec:typesystem}
We now introduce a
fractional ownership refinement type system that guarantees well-typed
programs do not encounter assertion failures.

\begin{figure}[t]
  \begin{minipage}{0.2\textwidth}
  \[
    \begin{array}{rrcl}
      \text{\scriptsize Types} & \tau %
                                         &::=&  \set{  \nu  \COL \TINT \mid  \varphi }  \mid  \tau  \TREF^{ r }  \\
      \text{\scriptsize Ownership} & r & \in & [0,1] \\
      \text{\scriptsize Refinements} & \varphi & ::= & \varphi_{{\mathrm{1}}}  \vee  \varphi_{{\mathrm{2}}} \mid  \neg  \varphi  \mid  \top  \\
                               & & \mid & \phi  \ottsym{(}  \widehat{v}_{{\mathrm{1}}}  \ottsym{,} \, .. \, \ottsym{,}  \widehat{v}_{\ottmv{n}}  \ottsym{)} \\
                               & & \mid & \widehat{v}_{{\mathrm{1}}} \, \ottsym{=} \, \widehat{v}_{{\mathrm{2}}} \\
                                 & & \mid & \mathcal{CP} \\
      \text{\scriptsize Ref. Values} & \widehat{v} & ::= & \mathit{x} \mid n \mid \nu \\
    \end{array}
  \]
\end{minipage}
\hfill
\begin{minipage}{0.59\textwidth}
  \[
    \begin{array}{rrcl}
      \text{\scriptsize Function Types} & \sigma & ::= &  \forall  \lambda .\tuple{ \mathit{x_{{\mathrm{1}}}} \COL \tau_{{\mathrm{1}}} ,\dots, \mathit{x_{\ottmv{n}}} \COL \tau_{\ottmv{n}} } \\ & & & \ra\tuple{ \mathit{x_{{\mathrm{1}}}} \COL \tau'_{{\mathrm{1}}} ,\dots, \mathit{x_{\ottmv{n}}} \COL \tau'_{\ottmv{n}}  \mid  \tau }  \\
      \text{\scriptsize Context Variables} & \lambda & \in & \CVar \\
      \text{\scriptsize Concrete Context} & \oldvec{\ell} & ::= & \ell  \ottsym{:}  \oldvec{\ell} \mid  \epsilon  \\
      \text{\scriptsize Pred. Context} & \mathcal{C} & ::= &  \ell  :  \mathcal{C}  \mid \lambda \mid  \epsilon  \\
      \text{\scriptsize Context Query} & \mathcal{CP} & ::= &   \oldvec{\ell}     \preceq    \mathcal{C}  \\
      \text{\scriptsize Typing Context} & \mathcal{L} & ::= & \lambda \mid \oldvec{\ell} \\
    \end{array}
  \]
  \end{minipage}
  \caption{Syntax of types, refinements, and contexts.}
  \label{fig:types}
\end{figure}

\subsection{Types and Contexts}
\label{sec:types}
The syntax of types is given in \Cref{fig:types}.
Our type system has two type constructors: references and integers. $ \tau  \TREF^{ r } $ is the
type of a (non-null) reference to a value of type $\tau$.
$r$ is an ownership which is a rational number in the
range $[0,1]$. An ownership of $\ottsym{0}$ indicates a
reference that cannot be written,
and for which there may exist a mutable alias.
By contrast, $\ottsym{1}$ indicates a pointer with exclusive
ownership that can be read and written. Reference types with ownership values between
these two extremes indicate a pointer that is readable but not writable, and for which no mutable aliases exist.
\name ensures that these
invariants hold while aliases are created and destroyed during execution.

Integers are refined with a predicate $\varphi$.  The language of predicates is built using the
standard logical connectives of first-order logic,
with (in)equality between variables and integers, and atomic predicate symbols
$\phi$ as the basic atoms. We include a special
``value'' variable $\nu$ representing the
value being refined by the predicate. For simplicity, we omit the
connectives $ \varphi_{{\mathrm{1}}}  \wedge  \varphi_{{\mathrm{2}}} $ and $\varphi_{{\mathrm{1}}}  \implies  \varphi_{{\mathrm{2}}}$; they can be written
as derived forms using the given connectives.
We do not fix a particular theory from which $\phi$ are drawn, provided a sound
(but not necessarily complete) decision procedure exists.
$\mathcal{CP}$ are context predicates, which are used for context sensitivity as
explained below.

\begin{example}
  $ \set{  \nu  \COL \TINT \mid  \nu \, \ottsym{>} \, \ottsym{0} } $ is the type of strictly positive integers.
  The type of immutable
  references to integers exactly equal to $3$ can be expressed by
  $  \set{  \nu  \COL \TINT \mid  \nu \, \ottsym{=} \, \ottsym{3} }   \TREF^{ \ottsym{0}  \ottsym{.}  \ottsym{5} } $.
\end{example}

As is standard, we denote a type environment with $\Gamma$, which is a
finite map from variable names to type $\tau$.
We write $\Gamma  \ottsym{[}  \mathit{x}  \ottsym{:}  \tau  \ottsym{]}$ to denote a type environment $\Gamma$ such
that $\Gamma  \ottsym{(}  \mathit{x}  \ottsym{)}  \ottsym{=}  \tau$ where $ \mathit{x}  \in   \DOM( \Gamma )  $, $\Gamma  \ottsym{,}  \mathit{x}  \ottsym{:}  \tau$ to indicate the
extension of $\Gamma$ with the type binding $ \mathit{x} \COL \tau $, and $\Gamma  \ottsym{[}  \mathit{x}  \hookleftarrow  \tau  \ottsym{]}$
to indicate the type environment $\Gamma$ with the binding of $\mathit{x}$
updated to $\tau$. We write the empty environment as $ \bullet $.
The treatment of type environments as mappings instead of sequences
in a dependent type system is somewhat non-standard.
The standard formulation based on ordered sequences of bindings and
its corresponding well-formedness condition did not easily admit
variables with mutually dependent refinements as introduced by
our function types (see below). We therefore use an unordered
environment and relax well-formedness to ignore variable binding order.

\paragraph{Function Types, Contexts, and Context Polymorphism.}
Our type system achieves context sensitivity
by allowing function types to depend on where a function
is called, i.e., the \emph{execution context} of the function invocation.
Our system represents a \emph{concrete} execution contexts with
strings of call site labels (or just ``call strings''),
defined by $\oldvec{\ell} ::=  \epsilon  \mid \ell  \ottsym{:}  \oldvec{\ell}$.
As is standard (e.g., \cite{sharir1978two,shivers1991control}), the string $\ell  \ottsym{:}  \oldvec{\ell}$ abstracts
an execution context where the most recent, active function call occurred
at call site $\ell$ which itself was executed in a context abstracted
by $\oldvec{\ell}$; $ \epsilon $ is the context under which program execution begins. \emph{Context variables},
drawn from a finite domain $\CVar$ and ranged over by $\lambda_{{\mathrm{1}}}, \lambda_{{\mathrm{2}}}, \ldots$,
represent arbitrary, unknown contexts.\looseness=-1

A function type takes the form
$ \forall  \lambda .\tuple{ \mathit{x_{{\mathrm{1}}}} \COL \tau_{{\mathrm{1}}} ,\dots, \mathit{x_{\ottmv{n}}} \COL \tau_{\ottmv{n}} }\ra\tuple{ \mathit{x_{{\mathrm{1}}}} \COL \tau'_{{\mathrm{1}}} ,\dots, \mathit{x_{\ottmv{n}}} \COL \tau'_{\ottmv{n}}  \mid  \tau } $.
The arguments of a function are an $n$-ary tuple of types $\tau_{\ottmv{i}}$.
To model side-effects on arguments, the function type includes the same number of \emph{output types}
$\tau'_{\ottmv{i}}$. In addition, function types have a direct return type $\tau$. 
The argument and output types are given names: refinements within the function type
may refer to these names.
Function types in our language are context polymorphic,
expressed by universal quantification ``$\forall \lambda.$''
over a context variable. Intuitively,
this context variable represents the many different execution contexts
under which a function may be called.

Argument and return types may depend on this context variable by
including \emph{context query predicates} in their refinements.
A context query predicate $\mathcal{CP}$ usually takes the
form $  \oldvec{\ell}     \preceq    \lambda $, and is true iff $\oldvec{\ell}$ is a
prefix of the concrete context represented by $\lambda$.
Intuitively, a refinement $  \oldvec{\ell}     \preceq    \lambda   \implies  \varphi$
states that $\varphi$ holds in any concrete execution context
with prefix $\oldvec{\ell}$, and provides no information in any other
context. In full generality, a context query predicate may
be of the form $  \oldvec{\ell}_{{\mathrm{1}}}     \preceq    \oldvec{\ell}_{{\mathrm{2}}} $ or $  \oldvec{\ell}     \preceq     \ell_{{\mathrm{1}}} \ldots \ell_{\ottmv{n}}   \ottsym{:}  \lambda $;
these forms may be immediately simplified to
$ \top $, $ \bot $ or $  \oldvec{\ell}'     \preceq    \lambda $.

\begin{example}
  \label{exmp:cs-type-example}
  The type $ \set{  \nu  \COL \TINT \mid   \ottsym{(}    \ell_{{\mathrm{1}}}     \preceq    \lambda   \implies  \nu \, \ottsym{=} \, \ottsym{3}  \ottsym{)}  \wedge  \ottsym{(}    \ell_{{\mathrm{2}}}     \preceq    \lambda   \implies  \nu \, \ottsym{=} \, \ottsym{5}  \ottsym{)}  } $ represents
  an integer that is 3 if the most recent active
  function call site is $\ell_{{\mathrm{1}}}$, 5 if the most recent call site is $\ell_{{\mathrm{2}}}$,
  and is otherwise unconstrained. This type may be used for the argument of
  \imp{f} in, e.g., \imp[mathescape]{f$^{\ell_{{\mathrm{1}}}}$(3) + f$^{\ell_{{\mathrm{2}}}}$(5)}.
\end{example}

As types in our type system may contain context variables, our typing judgment
(introduced below) includes a typing context $\mathcal{L}$, which is either a single
context variable $\lambda$ or a concrete context $\oldvec{\ell}$. This typing context
represents the assumptions about the execution context of the term being
typed. If the typing context is a context variable $\lambda$, then no assumptions
are made about the execution context of the term, although types
may depend upon $\lambda$ with context query predicates.
Accordingly, function bodies are typed under the context variable
universally quantified over in the corresponding function type; i.e.,
no assumptions are made about the exact execution context of the function body.
As in parametric polymorphism, consistent
substitution of a concrete context $\oldvec{\ell}$ for a context variable $\lambda$
in a typing derivation yields a valid type derivation under concrete context $\oldvec{\ell}$.

\begin{remark}
  \label{rem:cfa}
  The context-sensitivity scheme described here corresponds
  to the standard CFA approach \cite{shivers1991control} without \emph{a priori} call-string limiting.
  We chose this scheme because it can be easily encoded with equality over integer variables (see \Cref{sec:infr}),
  but in principle another context-sensitivity strategy could be used instead. The important
  feature of our type system is the inclusion of predicates over
  contexts, not the specific choice for these predicates.
\end{remark}

Function type environments are denoted with $\Theta$ and are finite
maps from function names ($\mathit{f}$) to function types ($\sigma$).

\paragraph{Well Formedness.}
We impose two well-formedness conditions on types:
\emph{ownership well-formedness} and \emph{refinement well-formedness}.
The ownership condition is purely syntactic:
$\tau$ is ownership well-formed if $\tau  \ottsym{=}   \tau'  \TREF^{ \ottsym{0} } $ implies
$\tau'  \ottsym{=}  \top_{\ottmv{n}}$ for some $\ottmv{n}$. $\top_{\ottmv{i}}$ is the ``maximal'' type
of a chain of $\ottmv{i}$ references, and is defined inductively as
$\top_{{\mathrm{0}}}  \ottsym{=}   \set{  \nu  \COL \TINT \mid   \top  } , \top_{\ottmv{i}}  \ottsym{=}   \top_{{\ottmv{i}-1}}  \TREF^{ \ottsym{0} } $.

The ownership well-formedness condition ensures that aliases introduced via
heap writes do not violate the invariant of ownership types \emph{and} that refinements
are consistent with updates performed through mutable aliases. Recall our ownership type invariant
ensures all aliases of a mutable reference have 0 ownership.
Any mutations through that mutable alias will therefore be consistent with the
``no information'' $ \top $ refinement required by this
well-formedness condition.

Refinement well-formedness, denoted $ \mathcal{L}   \mid   \Gamma   \vdash _{\wf}  \varphi $,
ensures that free program variables in refinement $\varphi$ are
bound in a type environment $\Gamma$ and have integer type. It
also requires that for a typing context
$\mathcal{L}  \ottsym{=}  \lambda$, only context query predicates over $\lambda$ are used (no such
predicates may be used if $\mathcal{L}  \ottsym{=}  \oldvec{\ell}$). Notice this condition
forbids refinements that refer to references. Although
ownership information can signal when refinements on a mutably-aliased
reference must be discarded, our current formulation
provides no such information for refinements that \emph{mention}
mutably-aliased references. We therefore conservatively reject
such refinements at the cost of some expressiveness in our type system.

We write $ \mathcal{L}   \mid   \Gamma   \vdash _{\wf}  \tau $ to indicate a well-formed type where
all refinements are well-formed with respect to $\mathcal{L}$ and $\Gamma$.
We write $ \mathcal{L}   \vdash _{\wf}  \Gamma $ for a type environment where all
types are well-formed. A function environment
is well-formed (written $ \vdash _{\wf}  \Theta $) if, for every $\sigma$ in $\Theta$,
the argument, result, and output types are well-formed with respect
to each other and the context variable quantified over in $\sigma$.
As the formal definition of refinement well-formedness is fairly standard, we
omit it for space reasons (the full definition may be found in \appref{\Cref{sec:aux-defn-lem}}{the full version \cite{toman2020consort}}).

\begin{figure}[t]
  \scriptsize
  \infrule[T-Var]{
  }{
     \Theta   \mid   \mathcal{L}   \mid   \Gamma  \ottsym{[}  \mathit{x}  \ottsym{:}  \tau_{{\mathrm{1}}}  \ottsym{+}  \tau_{{\mathrm{2}}}  \ottsym{]}   \vdash   \mathit{x}  :  \tau_{{\mathrm{1}}}   \produces   \Gamma  \ottsym{[}  \mathit{x}  \hookleftarrow  \tau_{{\mathrm{2}}}  \ottsym{]} 
  }
  \vspace*{2ex}
  \infrule[T-Let]{
     \Theta   \mid   \mathcal{L}   \mid   \Gamma  \ottsym{[}  \mathit{y}  \hookleftarrow   \tau_{{\mathrm{1}}}  \wedge_{ \mathit{y} }   \mathit{y}  =_{ \tau_{{\mathrm{1}}} }  \mathit{x}    \ottsym{]}  \ottsym{,}  \mathit{x}  \ottsym{:}  \ottsym{(}   \tau_{{\mathrm{2}}}  \wedge_{ \mathit{x} }   \mathit{x}  =_{ \tau_{{\mathrm{2}}} }  \mathit{y}    \ottsym{)}   \vdash   \ottnt{e}  :  \tau   \produces   \Gamma'  \andalso
     \mathit{x}  \not\in   \DOM( \Gamma' )  
  }{
     \Theta   \mid   \mathcal{L}   \mid   \Gamma  \ottsym{[}  \mathit{y}  \ottsym{:}  \tau_{{\mathrm{1}}}  \ottsym{+}  \tau_{{\mathrm{2}}}  \ottsym{]}   \vdash    \LET  \mathit{x}  =  \mathit{y}  \IN  \ottnt{e}   :  \tau   \produces   \Gamma' 
  }
  \vspace*{2ex}
  \infrule[T-LetInt]{
     \Theta   \mid   \mathcal{L}   \mid   \Gamma  \ottsym{,}  \mathit{x}  \ottsym{:}   \set{  \nu  \COL \TINT \mid  \nu \, \ottsym{=} \, n }    \vdash   \ottnt{e}  :  \tau   \produces   \Gamma'  \andalso
     \mathit{x}  \not\in   \DOM( \Gamma' )  
  }{
     \Theta   \mid   \mathcal{L}   \mid   \Gamma   \vdash    \LET  \mathit{x}  =  n  \IN  \ottnt{e}   :  \tau   \produces   \Gamma' 
  }
  \vspace*{2ex}
  \infrule[T-If]{
     \Theta   \mid   \mathcal{L}   \mid   \Gamma  \ottsym{[}  \mathit{x}  \hookleftarrow   \set{  \nu  \COL \TINT \mid   \varphi  \wedge  \nu \, \ottsym{=} \, \ottsym{0}  }   \ottsym{]}   \vdash   \ottnt{e_{{\mathrm{1}}}}  :  \tau   \produces   \Gamma'  \\
     \Theta   \mid   \mathcal{L}   \mid   \Gamma  \ottsym{[}  \mathit{x}  \hookleftarrow   \set{  \nu  \COL \TINT \mid   \varphi  \wedge  \nu \, \neq \, \ottsym{0}  }   \ottsym{]}   \vdash   \ottnt{e_{{\mathrm{2}}}}  :  \tau   \produces   \Gamma' 
  }{
     \Theta   \mid   \mathcal{L}   \mid   \Gamma  \ottsym{[}  \mathit{x}  \ottsym{:}   \set{  \nu  \COL \TINT \mid  \varphi }   \ottsym{]}   \vdash    \IFZERO  \mathit{x}  \THEN  \ottnt{e_{{\mathrm{1}}}}  \ELSE  \ottnt{e_{{\mathrm{2}}}}   :  \tau   \produces   \Gamma' 
  }
  \vspace*{2ex}
  \bcprulessavespacetrue
  \begin{center}
  \infrule[T-MkRef]{
     \Theta   \mid   \mathcal{L}   \mid   \Gamma  \ottsym{[}  \mathit{y}  \hookleftarrow  \tau_{{\mathrm{1}}}  \ottsym{]}  \ottsym{,}  \mathit{x}  \ottsym{:}   \ottsym{(}   \tau_{{\mathrm{2}}}  \wedge_{ \mathit{x} }   \mathit{x}  =_{ \tau_{{\mathrm{2}}} }  \mathit{y}    \ottsym{)}  \TREF^{ \ottsym{1} }    \vdash   \ottnt{e}  :  \tau   \produces   \Gamma'  \\
     \mathit{x}  \not\in   \DOM( \Gamma' )  
  }{
     \Theta   \mid   \mathcal{L}   \mid   \Gamma  \ottsym{[}  \mathit{y}  \ottsym{:}  \tau_{{\mathrm{1}}}  \ottsym{+}  \tau_{{\mathrm{2}}}  \ottsym{]}   \vdash    \LET  \mathit{x}  =   \MKREF  \mathit{y}   \IN  \ottnt{e}   :  \tau   \produces   \Gamma' 
  }
  \infrule[T-Seq]{
     \Theta   \mid   \mathcal{L}   \mid   \Gamma   \vdash   \ottnt{e_{{\mathrm{1}}}}  :  \tau'   \produces   \Gamma'  \\
     \Theta   \mid   \mathcal{L}   \mid   \Gamma'   \vdash   \ottnt{e_{{\mathrm{2}}}}  :  \tau''   \produces   \Gamma'' 
  }{
     \Theta   \mid   \mathcal{L}   \mid   \Gamma   \vdash    \ottnt{e_{{\mathrm{1}}}}  \SEQ  \ottnt{e_{{\mathrm{2}}}}   :  \tau''   \produces   \Gamma'' 
  }
  \infrule[T-Deref]{
    \tau' = \begin{cases}
       \tau_{{\mathrm{1}}}  \wedge_{ \mathit{y} }   \mathit{y}  =_{ \tau_{{\mathrm{1}}} }  \mathit{x}   &  r   \ottsym{>}   \ottsym{0}  \\
      \tau_{{\mathrm{1}}} & r  \ottsym{=}  \ottsym{0}
    \end{cases}  \\
     \Theta   \mid   \mathcal{L}   \mid   \Gamma  \ottsym{[}  \mathit{y}  \hookleftarrow   \tau'  \TREF^{ r }   \ottsym{]}  \ottsym{,}  \mathit{x}  \ottsym{:}  \tau_{{\mathrm{2}}}   \vdash   \ottnt{e}  :  \tau   \produces   \Gamma'  \\
     \mathit{x}  \not\in   \DOM( \Gamma' )  
  }{
     \Theta   \mid   \mathcal{L}   \mid   \Gamma  \ottsym{[}  \mathit{y}  \ottsym{:}   \ottsym{(}  \tau_{{\mathrm{1}}}  \ottsym{+}  \tau_{{\mathrm{2}}}  \ottsym{)}  \TREF^{ r }   \ottsym{]}   \vdash    \LET  \mathit{x}  =   *  \mathit{y}   \IN  \ottnt{e}   :  \tau   \produces   \Gamma' 
  }
  \infrule[T-Assert]{
    \Gamma  \models  \varphi \andalso
      \epsilon    \mid   \Gamma   \vdash _{\wf}  \varphi  \\
     \Theta   \mid   \mathcal{L}   \mid   \Gamma   \vdash   \ottnt{e}  :  \tau   \produces   \Gamma' 
  }{
     \Theta   \mid   \mathcal{L}   \mid   \Gamma   \vdash    \ASSERT( \varphi ) \SEQ  \ottnt{e}   :  \tau   \produces   \Gamma' 
  }
  \end{center}
  \bcprulessavespacefalse
  \caption{Expression typing rules.}
  \label{fig:intra-type}
\end{figure}

\subsection{Intraprocedural Type System}
\label{sec:intra-types}
We now introduce the type system for the intraprocedural fragment of
our language. Accordingly, this section focuses on the interplay of
mutability and refinement types. The typing rules are given in
\Cref{fig:intra-type,fig:pointer-typing}.  
A typing judgment takes the
form $ \Theta   \mid   \mathcal{L}   \mid   \Gamma   \vdash   \ottnt{e}  :  \tau   \produces   \Gamma' $, which indicates that $\ottnt{e}$
is well-typed under a function type
environment $\Theta$, typing context $\mathcal{L}$, and type environment $\Gamma$, 
and evaluates to a value of type $\tau$ and modifies the input environment
according to $\Gamma'$. Any valid typing derivation must
have $ \mathcal{L}   \vdash _{\wf}  \Gamma $, $ \mathcal{L}   \vdash _{\wf}  \Gamma' $, and $ \mathcal{L}   \mid   \Gamma'   \vdash _{\wf}  \tau $, i.e.,
the input and output type environments and result type must
be well-formed.

The typing rules in \Cref{fig:intra-type} handle the relatively
standard features in our language. The rule \rn{T-Seq} for sequential
composition is fairly straightforward except that the output type
environment for $\ottnt{e_{{\mathrm{1}}}}$ is the input type environment for $\ottnt{e_{{\mathrm{2}}}}$.
\rn{T-LetInt} is also straightforward; since $\mathit{x}$ is bound to a
constant, it is given type $ \set{  \nu  \COL \TINT \mid  \nu \, \ottsym{=} \, n } $ to indicate $\mathit{x}$ is
exactly $n$.  The output type environment $\Gamma'$ cannot mention
$\mathit{x}$ (expressed with $ \mathit{x}  \not\in   \DOM( \Gamma' )  $)
to prevent $\mathit{x}$ from escaping its scope.  This requirement
can be met by applying the subtyping rule (see below) to weaken
refinements to no longer mention $\mathit{x}$. As in other refinement type
systems \cite{rondon2008liquid}, this requirement is critical for
ensuring soundness.

Rule \rn{T-Let} is crucial to understanding our ownership type system.
The body of the let expression $\ottnt{e}$ is typechecked under a type environment
where the type of $\mathit{y}$ in $\Gamma$ is linearly split into two types:
$\tau_{{\mathrm{1}}}$ for $\mathit{y}$ and $\tau_{{\mathrm{2}}}$ for the newly created binding $\mathit{x}$.
This splitting is expressed using the $+$ operator. If $\mathit{y}$ is a reference
type, the split operation distributes some portion of $\mathit{y}$'s ownership information
to its new alias $\mathit{x}$. The split operation also distributes
refinement information between the two types.
For example, type $  \set{  \nu  \COL \TINT \mid  \nu \, \ottsym{>} \, \ottsym{0} }   \TREF^{ \ottsym{1} } $ can be split into (1)
$  \set{  \nu  \COL \TINT \mid  \nu \, \ottsym{>} \, \ottsym{0} }   \TREF^{ r } $ and $  \set{  \nu  \COL \TINT \mid  \nu \, \ottsym{>} \, \ottsym{0} }   \TREF^{ \ottsym{(}  \ottsym{1}  \ottsym{-}  r  \ottsym{)} } $ (for $r \in (0,1)$),
i.e., two \emph{immutable} references with non-trivial refinement information,
or (2) $  \set{  \nu  \COL \TINT \mid  \nu \, \ottsym{>} \, \ottsym{0} }   \TREF^{ \ottsym{1} } $ and $  \set{  \nu  \COL \TINT \mid   \top  }   \TREF^{ \ottsym{0} } $, where
one of the aliases is mutable and the other provides no refinement information.
How a type is split depends on the usage of $\mathit{x}$ and $\mathit{y}$ in $\ottnt{e}$.
Formally, we define the type addition operator as
the least commutative partial operation that satisfies the following
rules:\looseness=-1
\begin{align*}
   \set{  \nu  \COL \TINT \mid  \varphi_{{\mathrm{1}}} }   \ottsym{+}   \set{  \nu  \COL \TINT \mid  \varphi_{{\mathrm{2}}} }  & =  \set{  \nu  \COL \TINT \mid   \varphi_{{\mathrm{1}}}  \wedge  \varphi_{{\mathrm{2}}}  }  & (\rn{Tadd-Int})  \\
    \tau_{{\mathrm{1}}}  \TREF^{ r_{{\mathrm{1}}} }   \ottsym{+}  \tau_{{\mathrm{2}}}  \TREF^{ r_{{\mathrm{2}}} }  & =  \ottsym{(}  \tau_{{\mathrm{1}}}  \ottsym{+}  \tau_{{\mathrm{2}}}  \ottsym{)}  \TREF^{ r_{{\mathrm{1}}}  \ottsym{+}  r_{{\mathrm{2}}} }  & (\rn{Tadd-Ref})
\end{align*}
Viewed another way, type addition describes how to combine two types
for the same value such that the combination soundly incorporates all
information from the two original types.  Critically, the type
addition operation cannot create or destroy ownership and refinement
information, only combine or divide it between types.  Although not
explicit in the rules, by ownership well-formedness, if
the entirety of a reference's ownership is transferred to another type
during a split, all refinements in the remaining type must be
$ \top $.

The additional bits $\land_y  \mathit{y}  =_{ \tau_{{\mathrm{1}}} }  \mathit{x} $ and $\land_x  \mathit{x}  =_{ \tau_{{\mathrm{2}}} }  \mathit{y} $
express equality between $\mathit{x}$ and $\mathit{y}$ as refinements.  We use
the strengthening operation $ \tau  \wedge_{ \mathit{x} }  \varphi $ and typed equality
proposition $ \mathit{x}  =_{ \tau }  \mathit{y} $, defined respectively as:
\begin{align*}
    \set{  \nu  \COL \TINT \mid  \varphi }   \wedge_{ \mathit{y} }  \varphi'  & =  \set{  \nu  \COL \TINT \mid   \varphi  \wedge  \ottsym{[} \, \nu \, \ottsym{/}  \mathit{y}  \ottsym{]} \, \varphi'  }  &
  \ottsym{(}   \mathit{x}  =_{  \set{  \nu  \COL \TINT \mid  \varphi }  }  \mathit{y}   \ottsym{)} & =  \ottsym{(}  \mathit{x} \, \ottsym{=} \, \mathit{y}  \ottsym{)} \\ 
      \tau  \TREF^{ r }   \wedge_{ \mathit{y} }  \varphi'  & =  \tau  \TREF^{ r }  &
    \ottsym{(}   \mathit{x}  =_{  \tau  \TREF^{ r }  }  \mathit{y}   \ottsym{)} & =  \top 
\end{align*}
We do not track equality between references or between the contents of aliased
reference cells as doing so would violate our refinement
well-formedness condition.  These operations are also used in other rules
that can introduce equality.

Rule \rn{T-MkRef} is very similar to \rn{T-Let}, except that $\mathit{x}$
is given a reference type of ownership 1 pointing to $\tau_{{\mathrm{2}}}$, which
is obtained by splitting the type of $\mathit{y}$.  In \rn{T-Deref},
the content type of $\mathit{y}$ is split and distributed to
$\mathit{x}$.  The strengthening is \emph{conditionally} applied depending
on the ownership of the dereferenced pointer, that is, if $r = 0$,
$\tau'$ has to be a maximal type $\top_{\ottmv{i}}$.

Our type system also tracks path information; in the \rn{T-If}
rule, we update the refinement on the condition variable within the respective
branches to indicate whether the variable must be zero. By requiring
both branches to produce the same output type environment, we guarantee
that these conflicting refinements are rectified within the
type derivations of the two branches.

The type rule for assert statements has the precondition
$\Gamma  \models  \varphi$ which is defined to be $\models   \sem{ \Gamma }   \implies  \varphi$, i.e.,
the logical formula $ \sem{ \Gamma }   \implies  \varphi$ is valid in the chosen theory.
$ \sem{ \Gamma } $ lifts the refinements on the integer valued variables into a proposition
in the logic used for verification. This denotation
operation is defined as:
\[
\begin{array}{rlcrl}
   \sem{  \bullet  }  &=  \top  & \hspace{1em} &   \sem{  \set{  \nu  \COL \TINT \mid  \varphi }  }_{ \mathit{y} }  & = \ottsym{[}  \mathit{y}  \ottsym{/} \, \nu \, \ottsym{]} \, \varphi \\
   \sem{ \Gamma  \ottsym{,}  \mathit{x}  \ottsym{:}  \tau }  &=  \sem{ \Gamma }   \wedge   \sem{ \tau }_{ \mathit{x} }  &   &   \sem{  \tau'  \TREF^{ r }  }_{ \mathit{y} }  & =  \top  \\
\end{array}
\]
If the formula $ \sem{ \Gamma }   \implies  \varphi$
is valid, then in any context and under any valuation of program variables that satisfy
the refinements in $ \sem{ \Gamma } $,
the predicate $\varphi$ must be true and the
assertion must not fail. This intuition forms the foundation of our
soundness claim (\Cref{sec:soundness}).

\begin{figure}[t]
  \scriptsize
    \infrule[T-Assign]{
    (\text{The shapes of $\tau'$ and $\tau_{{\mathrm{2}}}$ are similar}) \\
     \Theta   \mid   \mathcal{L}   \mid   \Gamma  \ottsym{[}  \mathit{x}  \hookleftarrow  \tau_{{\mathrm{1}}}  \ottsym{]}  \ottsym{[}  \mathit{y}  \hookleftarrow   \ottsym{(}   \tau_{{\mathrm{2}}}  \wedge_{ \mathit{y} }   \mathit{y}  =_{ \tau_{{\mathrm{2}}} }  \mathit{x}    \ottsym{)}  \TREF^{ \ottsym{1} }   \ottsym{]}   \vdash   \ottnt{e}  :  \tau   \produces   \Gamma'  \\
  }{
     \Theta   \mid   \mathcal{L}   \mid   \Gamma  \ottsym{[}  \mathit{x}  \ottsym{:}  \tau_{{\mathrm{1}}}  \ottsym{+}  \tau_{{\mathrm{2}}}  \ottsym{]}  \ottsym{[}  \mathit{y}  \ottsym{:}   \tau'  \TREF^{ \ottsym{1} }   \ottsym{]}   \vdash    \mathit{y}  \WRITE  \mathit{x}  \SEQ  \ottnt{e}   :  \tau   \produces   \Gamma' 
  }
    \vspace*{2ex}
  \infrule[T-Alias]{
    \ottsym{(}    \tau_{{\mathrm{1}}}  \TREF^{ r_{{\mathrm{1}}} }   \ottsym{+}  \tau_{{\mathrm{2}}}  \TREF^{ r_{{\mathrm{2}}} }   \ottsym{)}  \approx  \ottsym{(}    \tau'_{{\mathrm{1}}}  \TREF^{ r'_{{\mathrm{1}}} }   \ottsym{+}  \tau'_{{\mathrm{2}}}  \TREF^{ r'_{{\mathrm{2}}} }   \ottsym{)} \\
     \Theta   \mid   \mathcal{L}   \mid   \Gamma  \ottsym{[}  \mathit{x}  \hookleftarrow   \tau'_{{\mathrm{1}}}  \TREF^{ r'_{{\mathrm{1}}} }   \ottsym{]}  \ottsym{[}  \mathit{y}  \hookleftarrow   \tau'_{{\mathrm{2}}}  \TREF^{ r'_{{\mathrm{2}}} }   \ottsym{]}   \vdash   \ottnt{e}  :  \tau   \produces   \Gamma' 
  }{
     \Theta   \mid   \mathcal{L}   \mid   \Gamma  \ottsym{[}  \mathit{x}  \ottsym{:}   \tau_{{\mathrm{1}}}  \TREF^{ r_{{\mathrm{1}}} }   \ottsym{]}  \ottsym{[}  \mathit{y}  \ottsym{:}   \tau_{{\mathrm{2}}}  \TREF^{ r_{{\mathrm{2}}} }   \ottsym{]}   \vdash    \ALIAS( \mathit{x}  =  \mathit{y} ) \SEQ  \ottnt{e}   :  \tau   \produces   \Gamma' 
  }
    \vspace*{2ex}
  \infrule[T-AliasPtr]{
    \ottsym{(}    \tau_{{\mathrm{1}}}  \TREF^{ r_{{\mathrm{1}}} }   \ottsym{+}  \tau_{{\mathrm{2}}}  \TREF^{ r_{{\mathrm{2}}} }   \ottsym{)}  \approx  \ottsym{(}    \tau'_{{\mathrm{1}}}  \TREF^{ r'_{{\mathrm{1}}} }   \ottsym{+}  \tau'_{{\mathrm{2}}}  \TREF^{ r'_{{\mathrm{2}}} }   \ottsym{)} \\
     \Theta   \mid   \mathcal{L}   \mid   \Gamma  \ottsym{[}  \mathit{x}  \hookleftarrow   \tau'_{{\mathrm{1}}}  \TREF^{ r_{{\mathrm{1}}} }   \ottsym{]}  \ottsym{[}  \mathit{y}  \hookleftarrow   \ottsym{(}   \tau'_{{\mathrm{2}}}  \TREF^{ r'_{{\mathrm{2}}} }   \ottsym{)}  \TREF^{ r }   \ottsym{]}   \vdash   \ottnt{e}  :  \tau   \produces   \Gamma' 
  }{
     \Theta   \mid   \mathcal{L}   \mid   \Gamma  \ottsym{[}  \mathit{x}  \ottsym{:}   \tau_{{\mathrm{1}}}  \TREF^{ r_{{\mathrm{1}}} }   \ottsym{]}  \ottsym{[}  \mathit{y}  \ottsym{:}   \ottsym{(}   \tau_{{\mathrm{2}}}  \TREF^{ r_{{\mathrm{2}}} }   \ottsym{)}  \TREF^{ r }   \ottsym{]}   \vdash    \ALIAS( \mathit{x}  = *  \mathit{y} ) \SEQ  \ottnt{e}   :  \tau   \produces   \Gamma' 
  }
    \vspace*{2ex}
  \infrule[T-Sub]{
    \Gamma  \leq  \Gamma' \andalso
     \Theta   \mid   \mathcal{L}   \mid   \Gamma'   \vdash   \ottnt{e}  :  \tau   \produces   \Gamma''  \andalso
    \Gamma''  \ottsym{,}  \tau  \leq  \Gamma'''  \ottsym{,}  \tau'
  }{
     \Theta   \mid   \mathcal{L}   \mid   \Gamma   \vdash   \ottnt{e}  :  \tau'   \produces   \Gamma''' 
  }
  \vspace*{2ex}
  \begin{center}
    $\tau_{{\mathrm{1}}}  \approx  \tau_{{\mathrm{2}}}$ iff $ \bullet   \vdash  \tau_{{\mathrm{1}}}  \leq  \tau_{{\mathrm{2}}}$ and $ \bullet   \vdash  \tau_{{\mathrm{2}}}  \leq  \tau_{{\mathrm{1}}}$.
  \end{center}
  \caption{Pointer manipulation and subtyping}
  \label{fig:pointer-typing}
\end{figure}

\begin{figure}[t]
  \begin{multicols}{2}
  \scriptsize
  \infrule[S-Int]{
  \Gamma  \models  \varphi_{{\mathrm{1}}}  \implies  \varphi_{{\mathrm{2}}}
  }{
   \Gamma  \vdash   \set{  \nu  \COL \TINT \mid  \varphi_{{\mathrm{1}}} }   \leq   \set{  \nu  \COL \TINT \mid  \varphi_{{\mathrm{2}}} } 
 }
    \vspace*{2ex}
  \infrule[S-Ref]{
     r_{{\mathrm{1}}}   \ge   r_{{\mathrm{2}}} 
    \andalso
    \Gamma  \vdash  \tau_{{\mathrm{1}}}  \leq  \tau_{{\mathrm{2}}}
  }{
    \Gamma  \vdash   \tau_{{\mathrm{1}}}  \TREF^{ r_{{\mathrm{1}}} }   \leq   \tau_{{\mathrm{2}}}  \TREF^{ r_{{\mathrm{2}}} } 
  }
  \infrule[S-TyEnv]{
    \forall \,  \mathit{x}  \in \DOM( \Gamma' )   \ottsym{.}  \Gamma  \vdash  \Gamma  \ottsym{(}  \mathit{x}  \ottsym{)}  \leq  \Gamma'  \ottsym{(}  \mathit{x}  \ottsym{)}
  }{
    \Gamma  \leq  \Gamma'
  }
    \vspace*{2ex}
  \infrule[S-Res]{
    \Gamma  \ottsym{,}  \mathit{x}  \ottsym{:}  \tau  \leq  \Gamma'  \ottsym{,}  \mathit{x}  \ottsym{:}  \tau' \andalso  \mathit{x}  \not\in   \DOM( \Gamma )  
  }{
    \Gamma  \ottsym{,}  \tau  \leq  \Gamma  \ottsym{,}  \tau'
  }
  \end{multicols}
  \caption{Subtyping rules.}
  \label{fig:subtyping}
\end{figure}

\paragraph{Destructive Updates, Aliasing, and Subtyping.}
We now discuss the handling of assignment, aliasing annotations,
and subtyping as described in \Cref{fig:pointer-typing}.
Although apparently unrelated, all three concern
updating the refinements of (potentially) aliased reference cells.

Like the binding forms discussed above, \rn{T-Assign} splits the assigned
value's type into two types via the type addition operator,
and distributes these types between the right hand side of the
assignment and the mutated reference contents.
Refinement information in the fresh
contents \emph{may} be inconsistent with
any previous refinement information; only the shapes must be the same.
In a system with unrestricted aliasing, this
typing rule would be unsound as it would admit writes that
are inconsistent with refinements on aliases of the left hand side.
However, the assignment rule
requires that the updated reference has an ownership of $1$.
By the ownership type invariant, all aliases
with the updated reference have $0$ ownership, and
by ownership well-formedness may only contain the $ \top $ refinement.

\begin{example}
  We can type the program as follows:
\begin{lstlisting}
let x = mkref 5 in       // $\color{comment-green}{ \mathit{x} \COL   \set{  \nu  \COL \TINT \mid  \nu \, \ottsym{=} \, \ottsym{5} }   \TREF^{ \ottsym{1} }  }$
let y = x in             // $\color{comment-green}{\mathit{x}  \ottsym{:}  \top_{{\mathrm{1}}}  \ottsym{,}  \mathit{y}  \ottsym{:}    \set{  \nu  \COL \TINT \mid  \nu \, \ottsym{=} \, \ottsym{5} }   \TREF^{ \ottsym{1} } }$
  y := 4; assert(*y = 4) // $\color{comment-green}{\mathit{x}  \ottsym{:}  \top_{{\mathrm{1}}}  \ottsym{,}  \mathit{y}  \ottsym{:}    \set{  \nu  \COL \TINT \mid  \nu \, \ottsym{=} \, \ottsym{4} }   \TREF^{ \ottsym{1} } }$
\end{lstlisting}
  In this and later examples, we include type annotations within comments. We stress
  that these annotations are for expository purposes only; our tool can infer these types
  automatically with no manual annotations.
\end{example}

As described thus far, the type system is quite strict: if ownership
has been completely transferred from one reference to another, the
refinement information found in the original reference is effectively useless.
Additionally, once
a mutable pointer has been split through an assignment or let expression, there is no
way to recover mutability. The typing rule for
must alias assertions, \rn{T-Alias} and \rn{T-AliasPtr}, overcomes this restriction
by exploiting the must-aliasing information to
``shuffle'' or redistribute ownerships \emph{and refinements} between two
aliased pointers. The typing rule assigns
two fresh types $ \tau'_{{\mathrm{1}}}  \TREF^{ r'_{{\mathrm{1}}} } $ and $ \tau'_{{\mathrm{2}}}  \TREF^{ r'_{{\mathrm{2}}} } $ to the two operand pointers.
The choice of $\tau'_{{\mathrm{1}}}, r'_{{\mathrm{1}}}, \tau'_{{\mathrm{2}}}$, and $r'_{{\mathrm{2}}}$ is left open
provided that the sum of the new types, $\ottsym{(}   \tau'_{{\mathrm{1}}}  \TREF^{ r'_{{\mathrm{1}}} }   \ottsym{)}  \ottsym{+}  \ottsym{(}   \tau'_{{\mathrm{2}}}  \TREF^{ r'_{{\mathrm{2}}} }   \ottsym{)}$ is
equivalent (denoted $ \approx $) to the sum of the original types.
Formally, $ \approx $ is defined as in \Cref{fig:pointer-typing}; it implies
that any refinements in the two types must be logically equivalent and
that ownerships must also be equal. 
This redistribution is sound precisely because the two references are
assumed to alias; the total ownership for the single memory cell pointed
to by both references cannot be increased by this shuffling. Further,
any refinements that hold for the contents of one reference must necessarily
hold for contents of the other and vice versa.

\begin{example}[Shuffling ownerships and refinements]
  Let $ \varphi_{= n } $ be $\nu \, \ottsym{=} \, n$.
\begin{lstlisting}
let x = mkref 5 in     // $\color{comment-green}{ \mathit{x} \COL   \set{  \nu  \COL \TINT \mid   \varphi_{= \ottsym{5} }  }   \TREF^{ \ottsym{1} }  }$
let y = x in           // $\color{comment-green}{\mathit{x}  \ottsym{:}  \top_{{\mathrm{1}}}  \ottsym{,}  \mathit{y}  \ottsym{:}    \set{  \nu  \COL \TINT \mid   \varphi_{= \ottsym{5} }  }   \TREF^{ \ottsym{1} } }$
  y := 4; alias(x = y) // $\color{comment-green}{\mathit{x}  \ottsym{:}    \set{  \nu  \COL \TINT \mid   \varphi_{= \ottsym{4} }  }   \TREF^{ \ottsym{0}  \ottsym{.}  \ottsym{5} }   \ottsym{,}  \mathit{y}  \ottsym{:}    \set{  \nu  \COL \TINT \mid   \varphi_{= \ottsym{4} }  }   \TREF^{ \ottsym{0}  \ottsym{.}  \ottsym{5} } }$    
\end{lstlisting}
  The final type assignment for $\mathit{x}$ and $\mathit{y}$ is justified by
  \begin{align*}
    & \top_{{\mathrm{1}}}  \ottsym{+}   \set{  \nu  \COL \TINT \mid   \varphi_{= \ottsym{4} }  }   \TREF^{ \ottsym{1} }   \ottsym{=}    \set{  \nu  \COL \TINT \mid    \top   \wedge   \varphi_{= \ottsym{4} }   }   \TREF^{ \ottsym{1} }  \approx \\
    \,\,\,\,&  \set{  \nu  \COL \TINT \mid    \varphi_{= \ottsym{4} }   \wedge   \varphi_{= \ottsym{4} }   }   \TREF^{ \ottsym{1} }   \ottsym{=}     \set{  \nu  \COL \TINT \mid   \varphi_{= \ottsym{4} }  }   \TREF^{ \ottsym{0}  \ottsym{.}  \ottsym{5} }   \ottsym{+}   \set{  \nu  \COL \TINT \mid   \varphi_{= \ottsym{4} }  }   \TREF^{ \ottsym{0}  \ottsym{.}  \ottsym{5} }  .
  \end{align*}
\end{example}

The aliasing rules give fine-grained control over ownership information. This
flexibility allows mutation through two or more aliased references within the same scope.
Provided sufficient aliasing annotations, the type system may shuffle ownerships between
one or more live references, enabling and disabling mutability as required. Although the
reliance on these annotations appears to decrease the practicality of our type system,
we expect these aliasing annotations can be inserted by a conservative must-aliasing
analysis. Further, empirical experience from our prior work \cite{suenaga2009fractional} indicates
that only a small number of annotations are required for larger programs.

\begin{example}[Shuffling Mutability]
  \label{exmp:shuffle-example}
  Let $ \varphi_{= n } $ again be $\nu \, \ottsym{=} \, n$.
  The following program uses two live, aliased references to mutate the same memory location:
\begin{lstlisting}
let x = mkref 0 in
let y = x in            // $\color{comment-green}{\mathit{x}  \ottsym{:}    \set{  \nu  \COL \TINT \mid   \varphi_{= \ottsym{0} }  }   \TREF^{ \ottsym{1} }   \ottsym{,}  \mathit{y}  \ottsym{:}  \top_{{\mathrm{1}}}}$
  x := 1; alias(x = y); // $\color{comment-green}{\mathit{x}  \ottsym{:}  \top_{{\mathrm{1}}}  \ottsym{,}  \mathit{y}  \ottsym{:}    \set{  \nu  \COL \TINT \mid   \varphi_{= \ottsym{1} }  }   \TREF^{ \ottsym{1} } }$
  y := 2; alias(x = y); // $\color{comment-green}{\mathit{x}  \ottsym{:}    \set{  \nu  \COL \TINT \mid   \varphi_{= \ottsym{2} }  }   \TREF^{ \ottsym{0}  \ottsym{.}  \ottsym{5} }   \ottsym{,}  \mathit{y}  \ottsym{:}    \set{  \nu  \COL \TINT \mid   \varphi_{= \ottsym{2} }  }   \TREF^{ \ottsym{0}  \ottsym{.}  \ottsym{5} } }$
  assert(*x = 2)
\end{lstlisting}
  After the first aliasing
  statement the type system shuffles the (exclusive) mutability between $\mathit{x}$
  and $\mathit{y}$ to enable the write to $\mathit{y}$. After the second aliasing statement
  the ownership in $\mathit{y}$ is split with $\mathit{x}$; note that
  transferring all ownership from $\mathit{y}$ to $\mathit{x}$ would also yield a
  valid typing.
\end{example}

Finally, we describe the subtyping rule. The rules for subtyping types
and environments are shown in \Cref{fig:subtyping}. For integer types,
the rules require the refinement of a supertype is a logical consequence of
the subtype's refinement conjoined with the lifting of $\Gamma$.
The subtype rule for references is \emph{covariant} in the type
of reference contents. It is widely known that in a language with unrestricted aliasing
and mutable references such a rule is unsound: after a write into the coerced
pointer, reads from an alias may yield a value disallowed by the alias' type
\cite{pierce2002types}. However, as in
the assign case, ownership types prevent unsoundness; a write to the
coerced pointer requires the pointer to have ownership 1, which guarantees
any aliased pointers have the maximal type and provide no information about their contents
beyond simple types.\looseness=-1

\begin{figure}[t]
  \leavevmode
    \infrule[T-Call]{
    \Theta  \ottsym{(}  \mathit{f}  \ottsym{)}  \ottsym{=}   \forall  \lambda .\tuple{ \mathit{x_{{\mathrm{1}}}} \COL \tau_{{\mathrm{1}}} ,\dots, \mathit{x_{\ottmv{n}}} \COL \tau_{\ottmv{n}} }\ra\tuple{ \mathit{x_{{\mathrm{1}}}} \COL \tau'_{{\mathrm{1}}} ,\dots, \mathit{x_{\ottmv{n}}} \COL \tau'_{\ottmv{n}}  \mid  \tau }  \\
    \sigma_{\alpha}  \ottsym{=}  \ottsym{[}  \ell  \ottsym{:}  \mathcal{L}  \ottsym{/}  \lambda  \ottsym{]} \andalso \sigma_{x}  \ottsym{=}    [  \mathit{y_{{\mathrm{1}}}}  /  \mathit{x_{{\mathrm{1}}}}  ]  \cdots  [  \mathit{y_{\ottmv{n}}}  /  \mathit{x_{\ottmv{n}}}  ]   \\
     \Theta   \mid   \mathcal{L}   \mid   \Gamma  \ottsym{[}  \mathit{y_{\ottmv{i}}}  \hookleftarrow  \sigma_{\alpha} \, \sigma_{x} \, \tau'_{\ottmv{i}}  \ottsym{]}  \ottsym{,}  \mathit{x}  \ottsym{:}  \sigma_{\alpha} \, \sigma_{x} \, \tau   \vdash   \ottnt{e}  :  \tau'   \produces   \Gamma'  \andalso
     \mathit{x}  \not\in   \DOM( \Gamma' )  
  }{
     \Theta   \mid   \mathcal{L}   \mid   \Gamma  \ottsym{[}  \mathit{y_{\ottmv{i}}}  \ottsym{:}  \sigma_{\alpha} \, \sigma_{x} \, \tau_{\ottmv{i}}  \ottsym{]}   \vdash    \LET  \mathit{x}  =   \mathit{f} ^ \ell (  \mathit{y_{{\mathrm{1}}}} ,\ldots, \mathit{y_{\ottmv{n}}}  )   \IN  \ottnt{e}   :  \tau'   \produces   \Gamma' 
  }
  \infrule[T-FunDef]{
    \Theta  \ottsym{(}  \mathit{f}  \ottsym{)}  \ottsym{=}   \forall  \lambda .\tuple{ \mathit{x_{{\mathrm{1}}}} \COL \tau_{{\mathrm{1}}} ,\dots, \mathit{x_{\ottmv{n}}} \COL \tau_{\ottmv{n}} }\ra\tuple{ \mathit{x_{{\mathrm{1}}}} \COL \tau'_{{\mathrm{1}}} ,\dots, \mathit{x_{\ottmv{n}}} \COL \tau'_{\ottmv{n}}  \mid  \tau }  \\
     \Theta   \mid   \lambda   \mid    \mathit{x_{{\mathrm{1}}}} \COL \tau_{{\mathrm{1}}} ,\ldots, \mathit{x_{\ottmv{n}}} \COL \tau_{\ottmv{n}}    \vdash   \ottnt{e}  :  \tau   \produces    \mathit{x_{{\mathrm{1}}}} \COL \tau'_{{\mathrm{1}}} ,\ldots, \mathit{x_{\ottmv{n}}} \COL \tau'_{\ottmv{n}}  
  }{
    \Theta  \vdash  \mathit{f}  \mapsto  \ottsym{(}  \mathit{x_{{\mathrm{1}}}}  \ottsym{,} \, .. \, \ottsym{,}  \mathit{x_{\ottmv{n}}}  \ottsym{)}  \ottnt{e}
  }
  \bcprulessavespacetrue
  \infrule[T-Funs]{
    \forall  \mathit{f}  \mapsto  \ottsym{(}  \mathit{x_{{\mathrm{1}}}}  \ottsym{,} \, .. \, \ottsym{,}  \mathit{x_{\ottmv{n}}}  \ottsym{)}  \ottnt{e}  \in  \ottnt{D} .\Theta  \vdash  \mathit{f}  \mapsto  \ottsym{(}  \mathit{x_{{\mathrm{1}}}}  \ottsym{,} \, .. \, \ottsym{,}  \mathit{x_{\ottmv{n}}}  \ottsym{)}  \ottnt{e} \\  \DOM( \ottnt{D} )   \ottsym{=}   \DOM( \Theta ) 
  }{
    \Theta  \vdash  \ottnt{D}
  }
  \hfil
  \infrule[T-Prog]{
    \Theta  \vdash  \ottnt{D} \andalso  \vdash _{\wf}  \Theta  \\
     \Theta   \mid    \epsilon    \mid    \bullet    \vdash   \ottnt{e}  :  \tau   \produces   \Gamma 
  }{
    \vdash   \tuple{ \ottnt{D} ,  \ottnt{e} } 
  }
  \bcprulessavespacefalse
\caption{Program typing rules}
\label{fig:progTyping}
\end{figure}

\subsection{Interprocedural Fragment and Context-Sensitivity}
\label{sec:cs}
We now turn to a discussion of the interprocedural fragment of our
language, and how our type system propagates context information. The remaining
typing rules for our language are shown in \Cref{fig:progTyping}.
These rules concern the typing of function calls, function bodies, and
entire programs.

We first explain the \rn{T-Call} rule. The rule uses two substitution
maps. $\sigma_{x}$ translates between the parameter names used in the
function type and actual argument names at the call-site. $\sigma_{\alpha}$
instantiates all occurrences of $\lambda$ in the callee type with
$\ell  \ottsym{:}  \mathcal{L}$, where $\ell$ is the label of the call-site and $\mathcal{L}$
the typing context of the call.  The types of the arguments $\mathit{y_{\ottmv{i}}}$'s
are required to match the parameter types (post substitution). The
body of the let binding is then checked with the argument types
updated to reflect the changes in the function call (again, post
substitution). This update is well-defined because we require all
function arguments be distinct as described in \Cref{sec:language}.
Intuitively, the substitution $\sigma_{\alpha}$ represents incrementally
refining the behavior of the callee function with partial context
information.  If $\mathcal{L}$ is itself a context variable $\lambda'$, this
substitution effectively transforms any context prefix queries over
$\lambda$ in the argument/return/output types into a queries over
$\ell  \ottsym{:}  \lambda'$.  In other words, while the exact concrete execution context
of the callee is unknown, the context must at least begin with $\ell$
which can potentially rule out certain behaviors.

Rule \rn{T-FunDef} type checks a function definition
$\mathit{f}  \mapsto  \ottsym{(}  \mathit{x_{{\mathrm{1}}}}  \ottsym{,} \, .. \, \ottsym{,}  \mathit{x_{\ottmv{n}}}  \ottsym{)}  \ottnt{e}$ against the function type given in $\Theta$.
As a convenience we assume that the parameter
names in the function type match the formal parameters in the function
definition. The rule checks that under an initial environment given
by the argument types the function body produces a value of the
return type and transforms the arguments according to the output types.
As mentioned above, functions may be executed under many different contexts,
so type checking the function body is performed under the context
variable $\lambda$ that occurs in the function type.

Finally, the rule for typing programs (\rn{T-Prog}) checks that
all function definitions are well typed under a well-formed
function type environment, and that the entry point $\ottnt{e}$
is well typed in an empty type environment and the typing context
$ \epsilon $, i.e., the initial context.

\begin{example}[1-CFA]
  \label{exmp:1cfa}
  Recall the program in \Cref{fig:running-example} in \Cref{sec:intro}; assume
  the function calls are labeled as follows:
\begin{lstlisting}
p := get$^{\ell1}$(p) + 1;
// ...
q := get$^{\ell2}$(q) + 1;
\end{lstlisting}
  Taking $\tau_{\ottmv{p}}$ to be the type shown in \Cref{exmp:cs-type-example}:
  \[
     \set{  \nu  \COL \TINT \mid   \ottsym{(}    \ell_{{\mathrm{1}}}     \preceq    \lambda   \implies  \nu \, \ottsym{=} \, \ottsym{3}  \ottsym{)}  \wedge  \ottsym{(}    \ell_{{\mathrm{2}}}     \preceq    \lambda   \implies  \nu \, \ottsym{=} \, \ottsym{5}  \ottsym{)}  } 
  \]
  we can give \lstinline{get} the type $ \forall  \lambda .\tuple{ \mathit{z}  \ottsym{:}   \tau_{\ottmv{p}}  \TREF^{ \ottsym{1} }  }\ra\tuple{ \mathit{z}  \ottsym{:}   \tau_{\ottmv{p}}  \TREF^{ \ottsym{1} }  \mid \tau_{\ottmv{p}} } $.
\end{example}
\begin{example}[2-CFA]
  To see how context information propagates across multiple calls, consider
  the following change to the code considered in \Cref{exmp:1cfa}:
  \begin{lstlisting}
    get_real(z) { *z }
    get(z) { get_real$^{\ell_{{\mathrm{3}}}}$(z) }
  \end{lstlisting}
  The type of \imp{get} remains as in \Cref{exmp:1cfa}, and taking $\tau$
  to be \[
     \set{  \nu  \COL \TINT \mid   \ottsym{(}    \ell_{{\mathrm{3}}} \, \ell_{{\mathrm{1}}}     \preceq    \lambda'   \implies  \nu \, \ottsym{=} \, \ottsym{3}  \ottsym{)}  \wedge  \ottsym{(}    \ell_{{\mathrm{3}}} \, \ell_{{\mathrm{2}}}     \preceq    \lambda'   \implies  \nu \, \ottsym{=} \, \ottsym{5}  \ottsym{)}  } 
  \]
  the type of \imp{get_real} is: $ \forall  \lambda' .\tuple{ \mathit{z}  \ottsym{:}   \tau  \TREF^{ \ottsym{1} }  }\ra\tuple{ \mathit{z}  \ottsym{:}   \tau  \TREF^{ \ottsym{1} }  \mid \tau } $.

  We focus on the typing of the call to \imp{get_real} in \imp{get}; it
  is typed in context $\lambda$ and a type environment where
  \imp{p} is given type $\tau_{\ottmv{p}}$ from \Cref{exmp:1cfa}.

  Applying the substitution $\ottsym{[}  \ell_{{\mathrm{3}}}  \ottsym{:}  \lambda  \ottsym{/}  \lambda'  \ottsym{]}$ to the argument
  type of \imp{get_real} yields:
  \begin{align*}
    &   \set{  \nu  \COL \TINT \mid   \ottsym{(}    \ell_{{\mathrm{3}}} \, \ell_{{\mathrm{1}}}     \preceq     \ell_{{\mathrm{3}}}  :  \lambda    \implies  \nu \, \ottsym{=} \, \ottsym{3}  \ottsym{)}  \wedge  \ottsym{(}    \ell_{{\mathrm{3}}} \, \ell_{{\mathrm{2}}}     \preceq     \ell_{{\mathrm{3}}}  :  \lambda    \implies  \nu \, \ottsym{=} \, \ottsym{5}  \ottsym{)}  }   \TREF^{ \ottsym{1} }  \approx  \\
    & \,\,\,\,\,   \set{  \nu  \COL \TINT \mid   \ottsym{(}    \ell_{{\mathrm{1}}}     \preceq    \lambda   \implies  \nu \, \ottsym{=} \, \ottsym{3}  \ottsym{)}  \wedge  \ottsym{(}    \ell_{{\mathrm{2}}}     \preceq    \lambda   \implies  \nu \, \ottsym{=} \, \ottsym{5}  \ottsym{)}  }   \TREF^{ \ottsym{1} } 
  \end{align*}
  which is exactly the type of \imp{p}.
  A similar derivation applies to
  the return type of \imp{get_real} and thus \imp{get}.
\end{example}

\subsection{Soundness}
\label{sec:soundness}
We have proven that any program that type checks according to
the rules above will never experience an assertion failure.
We formalize this claim with the following soundness theorem.

\begin{theorem}[Soundness]
  \label{thm:soundness}
  If  $\vdash   \tuple{ \ottnt{D} ,  \ottnt{e} } $, then
  $  \tuple{  \emptyset  ,   \emptyset  ,   \cdot  ,  \ottnt{e} }      \not \longrightarrow^*  _{ \ottnt{D} }     \mathbf{AssertFail}  $.
  
  Further, any well-typed program either diverges, halts in the configuration $ \mathbf{AliasFail} $, or
  halts in a configuration $ \tuple{ \ottnt{H} ,  \ottnt{R} ,   \cdot  ,  \mathit{x} } $ for some $\ottnt{H}, \ottnt{R}$ and $\mathit{x}$, i.e.,
  evaluation does not get stuck.
\end{theorem}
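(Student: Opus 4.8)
The plan is to establish soundness syntactically, in the style of Wright and Felleisen, via \emph{progress} and \emph{preservation} lemmas. Since reduction operates on configurations $\tuple{H, R, \oldvec{F}, e}$ rather than closed expressions, the first task is to lift the static judgment $\Theta \mid \mathcal{L} \mid \Gamma \vdash e : \tau \produces \Gamma'$ to a configuration typing judgment $\vdash_{\mathit{conf}}^D \mathbf{C}$. Intuitively, $\vdash_{\mathit{conf}}^D \tuple{H, R, \oldvec{F}, e}$ should hold when there exist a type environment $\Gamma$ and an ownership assignment describing the live variables in $R$ such that (i) $e$ is well typed under $\Gamma$ at the concrete context recorded by the stack, (ii) the register file and heap \emph{realize} $\Gamma$, and (iii) the stack $\oldvec{F}$ is itself well typed as a sequence of return contexts whose expected input and output types compose with the type of $e$.

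The heart of the invariant is the consistency between $\Gamma$ and the runtime state, and I would impose two conditions. First, a \emph{value-agreement} condition: every concrete value in $R$ (and transitively in $H$) satisfies the refinement its type assigns to it, so that $\sem{\Gamma}$ with the concrete values of $R$ substituted is valid---exactly the property the assertion rule exploits. Second, an \emph{ownership-soundness} condition: for each heap address $a$, the sum of the ownerships of all references (reachable through $R$ and $H$) that point to $a$ is at most $1$, and whenever some reference to $a$ carries ownership $1$ every other reference to $a$ carries ownership $0$ and hence, by ownership well-formedness, only the $\top$ refinement. This is the dynamic counterpart of the static ownership invariant, and it is what makes strong updates sound. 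I would define the relevant ownership map using the auxiliary operators already present in the grammar ($\mathsf{Own}(H,R,\Gamma)$, $\mathsf{own}(H,v,\tau)$) and phrase ownership soundness as a pointwise bound on this map.

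For progress I would first prove a canonical-forms lemma (a value of reference type is an address in $\DOM(H)$; a value of integer type is an integer), and then do a case analysis on the typing derivation of the current redex to exhibit an applicable reduction rule. The only rules that can get stuck are the assertion rules, so the crucial case is \textsc{T-Assert}: its premise $\Gamma \models \varphi$ means $\models \sem{\Gamma} \Rightarrow \varphi$, and value-agreement gives $\models [R]\sem{\Gamma}$, whence $\models [R]\varphi$ and \textsc{R-Assert} fires rather than \textsc{R-AssertFail}. Thus no well-typed configuration steps to $\mathbf{AssertFail}$, while $\mathbf{AliasFail}$ remains reachable because \textsc{T-Alias} and \textsc{T-AliasPtr} only \emph{assume} the alias relationship rather than establishing it. The halting case---a bare variable with empty stack---is precisely a value configuration, giving the ``does not get stuck'' clause.

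Preservation is the technical core: if $\vdash_{\mathit{conf}}^D \mathbf{C}$ and $\mathbf{C} \longrightarrow_D \mathbf{C}'$ then $\vdash_{\mathit{conf}}^D \mathbf{C}'$. I would proceed by cases on the reduction rule, supported by substitution lemmas for the variable substitution $[x'/x]$ (used by the let, deref, and mkref rules) and for the context substitution $[\ell : \mathcal{L}/\lambda]$ (used at calls), together with properties of type addition $+$ and the equivalence $\approx$. The routine cases follow from the matching static rules after re-establishing value-agreement. The delicate cases are mutation and aliasing. For \textsc{R-Assign}, I must show that overwriting $H(a)$ preserves both value-agreement and ownership soundness: since the written reference has ownership $1$, ownership soundness forces every alias of $a$ to have ownership $0$ and hence a $\top$ refinement, so the freshly written cell is consistent with every alias' type and updating only the written reference's type (as in \textsc{T-Assign}) suffices. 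For \textsc{R-Alias} and \textsc{R-AliasPtr}, the runtime premise forces the two operands to denote the same address, so the $\approx$-preserving redistribution performed by the static rule leaves that cell's total ownership unchanged, keeping ownership soundness intact. The call and return cases (\textsc{R-Call}, \textsc{R-Var}) require showing the stack stays well typed: a call pushes a return context typed from the enclosing $E[\cdot]$ and the callee's instantiated output types, and $\sigma_\alpha = [\ell : \mathcal{L}/\lambda]$ correctly specializes the callee's context-polymorphic type to the extended call string. The main obstacle I anticipate is exactly the ownership bookkeeping across \textsc{R-Assign} and the alias rules---formalizing and maintaining the ``ownership $1$ implies all aliases are $\top$'' invariant over heap reachability---since this is where the fractional-ownership discipline and the flow-sensitive refinement updates must be shown to cohere.
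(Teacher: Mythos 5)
Your proposal is correct and follows essentially the same route as the paper: a configuration typing judgment whose core is a consistency relation combining value agreement (the paper's $\ottkw{SAT}$/$\ottkw{SATv}$) with a per-address ownership bound ($\ottkw{Own} \le 1$), stack typing via return contexts, and progress/preservation supported by substitution, context-substitution, and type-addition lemmas, with the same delicate cases (assignment, aliasing, call/return) resolved by the same arguments. The only cosmetic difference is that you discharge the impossibility of $\mathbf{AssertFail}$ inside progress, whereas the paper derives a contradiction in the \textsc{R-AssertFail} case of preservation and separately notes $\mathbf{AssertFail}$ is untypable; the underlying reasoning is identical.
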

\begin{proof}[Sketch]
  By standard progress and preservation lemmas; 
  the full proof has been omitted for space reasons and can be found
  in the \appref{accompanying appendix}{full version \cite{toman2020consort}}.\looseness=-1

\end{proof}

\section{Inference and Extensions}
\label{sec:infr}
We now briefly describe the inference algorithm implemented in
our tool \name. We sketch some implemented extensions needed to type more
interesting programs and close with a discussion of current
limitations of our prototype.

\subsection{Inference}
Our tool first runs a standard, simple type inference
algorithm to generate type templates for every function parameter
type, return type, and for every live variable at each program point.
For a variable $x$ of simple type $\tau_{S} ::=  \TINT  \mid  \tau_{S}  \TREF $
at program point $p$, \name generates a type template $ \sem{ \tau_{S} }_{ \mathit{x} , \ottsym{0} , p } $
as follows:
\begin{align*}
   \sem{  \TINT  }_{ \mathit{x} , n , p }  =  \set{  \nu  \COL \TINT \mid    \varphi _{ \mathit{x} , n , p }  ( \nu ;   \ottkw{FV} _{ p }  )  }  &&  \sem{  \tau_{S}  \TREF  }_{ \mathit{x} , n , p }  =   \sem{ \tau_{S} }_{ \mathit{x} , n  \ottsym{+}  \ottsym{1} , p }   \TREF^{  r _{ \mathit{x} , n , p }  } 
\end{align*}
$  \varphi _{ \mathit{x} , n , p }  ( \nu ;   \ottkw{FV} _{ p }  ) $ denotes a fresh relation symbol
applied to $\nu$ and the free variables of simple type $ \TINT $
at program point $p$ (denoted $ \ottkw{FV} _{ p } $). $ r _{ \mathit{x} , n , p } $ is a
fresh ownership variable. For each function $\mathit{f}$,
there are two synthetic program points, $ { \mathit{f} ^{b} } $ and $ { \mathit{f} ^{e} } $
for the beginning and end of the function respectively.
At both points, \name generates type template for each argument, where $ \ottkw{FV} _{  { \mathit{f} ^{b} }  } $
and $ \ottkw{FV} _{  { \mathit{f} ^{e} }  } $ are the names of integer typed parameters.
At $ { \mathit{f} ^{e} } $, \name also generates a type template for the return
value.
We write $ \Gamma ^ p $ to indicate the type environment at point $p$,
where every variable is mapped to its corresponding type template.
$ \sem{  \Gamma ^ p  } $ is thus equivalent to
$ \bigwedge_{  \mathit{x}  \in   \ottkw{FV} _{ p }   }    \varphi _{ \mathit{x} , \ottsym{0} , p }  ( \mathit{x} ;   \ottkw{FV} _{ p }  )  $.

When generating these type templates, our implementation
also generates ownership well-formedness constraints. Specifically,
for a type template of the form $  \set{  \nu  \COL \TINT \mid    \varphi _{ \mathit{x} , n  \ottsym{+}  \ottsym{1} , p }  ( \nu ;   \ottkw{FV} _{ p }  )  }   \TREF^{  r _{ \mathit{x} , n , p }  } $
\name emits the constraint: $ r _{ \mathit{x} , n , p }   \ottsym{=}  \ottsym{0}  \implies    \varphi _{ \mathit{x} , n  \ottsym{+}  \ottsym{1} , p }  ( \nu ;   \ottkw{FV} _{ p }  ) $
and for a type template $ \ottsym{(}   \tau  \TREF^{  r _{ \mathit{x} , n  \ottsym{+}  \ottsym{1} , p }  }   \ottsym{)}  \TREF^{  r _{ \mathit{x} , n , p }  } $
\name emits the constraint $ r _{ \mathit{x} , n , p }   \ottsym{=}  \ottsym{0}  \implies   r _{ \mathit{x} , n  \ottsym{+}  \ottsym{1} , p }   \ottsym{=}  \ottsym{0}$.

\name then walks the program, generating constraints between
relation symbols and ownership variables according to
the typing rules. These constraints take three forms, ownership
constraints, subtyping constraints, and assertion constraints.
Ownership constraints are simple linear (in)equalities over ownership variables
and constants, according to conditions imposed by the typing
rules. For example, if variable $\mathit{x}$ has the type template
$ \tau  \TREF^{  r _{ \mathit{x} , \ottsym{0} , p }  } $ for the expression $ \mathit{x}  \WRITE  \mathit{y}  \SEQ  \ottnt{e} $ at point
$p$, \name generates the constraint $ r _{ \mathit{x} , \ottsym{0} , p }   \ottsym{=}  \ottsym{1}$.

\name emits subtyping constraints between the relation symbols
at related program points according to the rules of the type system.
For example, for the term $ \LET  \mathit{x}  =  \mathit{y}  \IN  \ottnt{e} $
at program point $p$ (where $\ottnt{e}$ is at program point $p'$, and $\mathit{x}$ has simple type
$  \TINT   \TREF $) \name generates the following subtyping constraint:
\[
   \sem{  \Gamma ^ p  }   \wedge    \varphi _{ \mathit{y} , \ottsym{1} , p }  ( \nu ;   \ottkw{FV} _{ p }  )   \implies    \varphi _{ \mathit{y} , \ottsym{1} , p' }  ( \nu ;   \ottkw{FV} _{ p' }  )   \wedge    \varphi _{ \mathit{x} , \ottsym{1} , p' }  ( \nu ;   \ottkw{FV} _{ p' }  ) 
\]
in addition to the ownership constraint $ r _{ \mathit{y} , \ottsym{0} , p }   \ottsym{=}    r _{ \mathit{y} , \ottsym{0} , p' }   \ottsym{+}  r _{ \mathit{x} , \ottsym{0} , p' } $.

Finally, for each \lstinline[mathescape]{assert($\varphi$)} in the program, \name emits an assertion
constraint of the form: $ \sem{  \Gamma ^ p  }   \implies  \varphi$
which requires the refinements on integer typed variables
in scope are sufficient to prove $\varphi$.

\paragraph{Encoding Context Sensitivity.}
To make inference tractable, we require the user to fix \emph{a
  priori} the maximum length of prefix queries to a constant $k$ (this
choice is easily controlled with a command line parameter to our tool). We
supplement the arguments in \emph{every} predicate application with a set of
integer context variables $ \mathit{c_{{\mathrm{1}}}} ,\ldots, \mathit{c_{\ottmv{k}}} $; these variables do not
overlap with any program variables.\looseness=-1

\name uses these variables to infer context sensitive
refinements as follows. Consider a function call
$ \LET  \mathit{x}  =   \mathit{f} ^ \ell (  \mathit{y_{{\mathrm{1}}}} ,\ldots, \mathit{y_{\ottmv{n}}}  )   \IN  \ottnt{e} $ at point $p$
where $\ottnt{e}$ is at point $p'$.
\name generates the following constraint for a refinement
$  \varphi _{ \mathit{y_{\ottmv{i}}} , n , p }  ( \nu ,  \mathit{c_{{\mathrm{1}}}} ,\ldots, \mathit{c_{\ottmv{k}}}  ;  \ottkw{FV} _{ p }  ) $ which occurs in the type template of $\mathit{y_{\ottmv{i}}}$:
\begin{align*}
  &     \varphi _{ \mathit{y_{\ottmv{i}}} , n , p }  ( \nu ,  \mathit{c_{{\mathrm{0}}}} ,\ldots, \mathit{c_{\ottmv{k}}}  ;  \ottkw{FV} _{ p }  )   \implies  \sigma_{x} \, \varphi _{ \mathit{x_{\ottmv{i}}} , n ,  { \mathit{f} ^{b} }  }  ( \nu , \ell  \ottsym{,}   \mathit{c_{{\mathrm{0}}}} ,\ldots, \mathit{c_{{\ottmv{k}-1}}}  ;  \ottkw{FV} _{  { \mathit{f} ^{b} }  }  )  \\
  &     \sigma_{x} \, \varphi _{ \mathit{x_{\ottmv{i}}} , n ,  { \mathit{f} ^{e} }  }  ( \nu , \ell  \ottsym{,}   \mathit{c_{{\mathrm{0}}}} ,\ldots, \mathit{c_{{\ottmv{k}-1}}}  ;  \ottkw{FV} _{  { \mathit{f} ^{e} }  }  )   \implies  \varphi _{ \mathit{y_{\ottmv{i}}} , n , p' }  ( \nu ,  \mathit{c_{{\mathrm{0}}}} ,\ldots, \mathit{c_{\ottmv{k}}}  ;  \ottkw{FV} _{ p' }  )  \\
  & \sigma_{x}  \ottsym{=}    [  \mathit{y_{{\mathrm{1}}}}  /  \mathit{x_{{\mathrm{1}}}}  ]  \cdots  [  \mathit{y_{\ottmv{n}}}  /  \mathit{x_{\ottmv{n}}}  ]  
\end{align*}
Effectively, we have encoded $   \ell_{{\mathrm{1}}} \ldots \ell_{\ottmv{k}}      \preceq    \lambda $ as $\land_{0<i\leq k} c_i = \ell_{\ottmv{i}}$.
In the above, the shift from $ \mathit{c_{{\mathrm{0}}}} ,\ldots, \mathit{c_{\ottmv{k}}} $ to $\ell  \ottsym{,}   \mathit{c_{{\mathrm{0}}}} ,\ldots, \mathit{c_{{\ottmv{k}-1}}} $ plays the role of
$\sigma_{\alpha}$ in the \rn{T-Call} rule. The above constraint serves to
determine the value of $\mathit{c_{{\mathrm{0}}}}$ within the body of the function $\mathit{f}$.
If $\mathit{f}$ calls another function $\mathit{g}$, the above rule
propagates this value of $\mathit{c_{{\mathrm{0}}}}$ to $\mathit{c_{{\mathrm{1}}}}$ within $\mathit{g}$ and so on.
The solver may then instantiate relation symbols with predicates
that are conditional over the values of $\mathit{c_{\ottmv{i}}}$.

\paragraph{Solving Constraints.}
The results of the above process are two systems of constraints;
real arithmetic constraints over
ownership variables and constrained Horn clauses (CHC) over the refinement relations.
Under certain assumptions about the simple types in a program, the size of the ownership
and subtyping constraints will be polynomial to the size of the program.
These systems are not independent; the relation constraints may mention the value
of ownership variables due to the well-formedness constraints described above.
The ownership constraints are first solved with Z3 \cite{de2008z3}. These constraints
are non-linear but Z3 appears
particularly well-engineered to quickly find solutions for the instances generated
by \name.
We constrain Z3 to maximize the number of non-zero ownership
variables to ensure as few refinements as possible are constrained to be $ \top $ by ownership well-formedness.

The values of ownership variables inferred by Z3 are then substituted into the
constrained Horn clauses, and the resulting system is checked
for satisfiability with an off-the-shelf CHC solver.
Our implementation generates constraints in the industry standard
SMT-Lib2 format \cite{BarFT-SMTLIB}; any solver that accepts
this format can be used as a backend for \name. Our implementation
currently supports Spacer \cite{komuravelli2013automatic} (part of the Z3 solver \cite{de2008z3}),
HoICE \cite{champion2018hoice}, and Eldarica \cite{rummer2013disjunctive}
(adding a new backend requires only a handful of lines of glue code).
We found that different solvers are better tuned to different problems;
we also implemented \emph{parallel mode} which runs all supported solvers
in parallel, using the first available result.

\subsection{Extensions}

\paragraph{Primitive Operations.}
As defined in \Cref{sec:prelim}, our language can compare integers to zero and load
and store them from memory, but can perform no meaningful computation
over these numbers. To promote the flexibility of our type system and simplify
our soundness statement, we do not
fix a set of primitive operations and their static semantics.
Instead, we assume any set of primitive operations
used in a program are given sound function types in $\Theta$.
For example, under the assumption that $+$ has its usual semantics
and the underlying logic supports $+$, we can give $+$ the type
$ \forall  \lambda .\tuple{ \mathit{x}  \ottsym{:}  \top_{{\mathrm{0}}}  \ottsym{,}  \mathit{y}  \ottsym{:}  \top_{{\mathrm{0}}} }\ra\tuple{ \mathit{x}  \ottsym{:}  \top_{{\mathrm{0}}}  \ottsym{,}  \mathit{y}  \ottsym{:}  \top_{{\mathrm{0}}} \mid  \set{  \nu  \COL \TINT \mid  \nu \, \ottsym{=} \, \mathit{x}  \ottsym{+}  \mathit{y} }  } $.
Interactions with a nondeterministic environment
or unknown program inputs can then be modeled with a primitive
that returns integers refined with $ \top $.

\paragraph{Dependent Tuples.}
Our implementation supports types of the form: $( \mathit{x_{{\mathrm{1}}}} : \tau_{{\mathrm{1}}}, \ldots,$ $\mathit{x_{\ottmv{n}}}:\tau_{\ottmv{n}})$,
where $\mathit{x_{\ottmv{i}}}$ can appear within $\tau_{\ottmv{j}}$
($j\neq i$) if $\tau_{\ottmv{i}}$ is an integer type. For example,
$\ottsym{(}   \mathit{x} \COL  \set{  \nu  \COL \TINT \mid   \top  }    \ottsym{,}  \mathit{y}  \ottsym{:}   \set{  \nu  \COL \TINT \mid  \nu \, \ottsym{>} \, \mathit{x} }   \ottsym{)}$ is the type of tuples whose
second element is strictly greater than the first. We also extend
the language with tuple constructors as a new value form, and
let bindings with tuple patterns as the LHS.

The extension to type checking is relatively straightforward; the only
significant extensions are to the subtyping rules.
Specifically, the subtyping check for a tuple element
$ \mathit{x_{\ottmv{i}}} \COL \tau_{\ottmv{i}} $ is performed in a type environment elaborated with the
types and names of other tuple elements.
The extension to type inference is also
straightforward; the arguments for a
predicate symbol include any enclosing dependent tuple names
and the environment in subtyping constraints is likewise extended.

\paragraph{Recursive Types.}
Our language also supports some unbounded heap structures via recursive
reference types. To keep inference tractable, we forbid nested
recursive types, multiple occurrences of the recursive type variable,
and additionally fix the shape of refinements that occur within a
recursive type. For recursive refinements that fit the
above restriction, our approach for refinements
is broadly similar to that in \cite{kawaguchi2009type}, and we use
the ownership scheme of \cite{suenaga2009fractional} for handling
ownership. We first use simple type inference
to infer the shape of the recursive types, and automatically insert
fold/unfold annotations into the source program. As in
\cite{kawaguchi2009type},
the refinements within an unfolding of a recursive type may refer to
dependent tuple names bound by the enclosing type. These recursive
types can express, e.g., the invariants of a mutable,
sorted list. As in \cite{suenaga2009fractional},
recursive types are unfolded once before assigning ownership variables;
further unfoldings copy existing ownership variables.\looseness=-1

As in Java or C++, our language does not support sum types,
and any instantiation of a recursive type must use a null pointer.
Our implementation supports an \lstinline{ifnull} construct
in addition to a distinguished \lstinline{null} constant. Our implementation allows any refinement
to hold for the null constant, including $\bot$. Currently, our implementation
does \emph{not} detect null pointer dereferences, and all soundness
guarantees are made modulo freedom of null dereferences. As $ \sem{ \Gamma } $
omits refinements under reference types, null pointer refinements do not affect
the verification of programs without null pointer dereferences.

\paragraph{Arrays.}
Our implementation supports arrays of integers. Each array is given
an ownership describing the ownership of memory allocated for the entire array.
The array type contains two refinements: the first refines the length of the array itself,
and the second refines the entire array contents.
The content refinement may refer to a symbolic index variable
for precise, per-index refinements. At reads and writes to the
array, \name instantiates the refinement's symbolic index variable with
the concrete index used at the read/write.

As in \cite{suenaga2009fractional}, our restriction to arrays of integers
stems from the difficulty of ownership inference. Soundly handling
pointer arrays requires index-wise tracking of ownerships which
significantly complicates automated inference. We leave supporting
arrays of pointers to future work.

\subsection{Limitations}
Our current approach is not complete; there are safe programs that will be
rejected by our type system.
As mentioned in \Cref{sec:types}, our well-formedness
condition forbids refinements that refer to memory locations. As a result,
\name cannot in general express, e.g., that the contents of two references are equal.
Further, due to our reliance on automated theorem provers
we are restricted to logics with sound but potentially incomplete decision procedures.
\name also does not support conditional or context-sensitive ownerships, and therefore
cannot precisely handle conditional mutation or aliasing.

\section{Experiments}
\label{sec:eval}

We now present the results of preliminary experiments performed with the implementation
described in \Cref{sec:infr}. The goal of these experiments was to answer the following questions: \begin{inparaenum}[i)]
\item is the type system (and extensions of \Cref{sec:infr}) expressive enough to type and verify non-trivial programs? and
\item is type inference feasible?
\end{inparaenum}\looseness=-1

To answer these questions, we evaluated our prototype implementation
on two sets of benchmarks.\footnote{Our experiments and the \name source code are available
  at \url{https://www.fos.kuis.kyoto-u.ac.jp/projects/consort/}.} 
The first set is adapted from JayHorn \cite{kahsai2017quantified,kahsai2016jayhorn},
a verification tool for Java.
This test suite contains a combination of 82 safe and unsafe programs written in
Java. We chose this benchmark suite as, like \name, JayHorn
is concerned with the automated verification of
programs in a language with mutable, aliased memory cells.
Further, although some of their benchmark programs tested Java
specific features, most could be adapted
into our low-level language.
The tests we could adapt provide a comparison with existing state-of-the-art
verification techniques.
A detailed breakdown of the adapted benchmark suite can be found in \Cref{tab:breakdown}.

\begin{table}[t]
  \caption{Description of benchmark suite adapted from JayHorn. \textbf{Java} are programs
    that test Java-specific features. \textbf{Inc} are tests that cannot be handled by \name, e.g.,
    null checking, etc. \textbf{Bug} includes a ``safe'' program
    we discovered was actually incorrect.}
  \begin{center}
    \begin{tabular}{cccccc}\toprule
\textbf{Set} & \textbf{Orig.} & \textbf{Adapted} & \textbf{Java} & \textbf{Inc} & \textbf{Bug} \\ \midrule
Safe & 41 & 32 & 6 & 2 & 1
\\
Unsafe & 41 & 26 & 13 & 2 & 0
\end{tabular}

  \end{center}
  \label{tab:breakdown}
\end{table}

\begin{remark}
  The original JayHorn paper includes two additional benchmark sets, Mine Pump and CBMC.
  Both our tool and recent JayHorn versions time out on the Mine Pump benchmark. Further,
  the CBMC tests were either subsumed by our own test programs, tested Java specific
  features, or tested program synthesis functionality. We therefore omitted both of these
  benchmarks from our evaluation.
\end{remark}

The second benchmark set consists of data structure
implementations and microbenchmarks written directly in our low-level
imperative language. We developed this suite to
test the expressive power of our type system and inference.
The programs included in this suite are:
\begin{itemize}
\item \textbf{Array-List} Implementation of an unbounded list backed by an array.
\item \textbf{Sorted-List} Implementation of a mutable, sorted list maintained with an in-place insertion sort algorithm.
\item \textbf{Shuffle} Multiple live references are used to mutate the same location in program memory as in \Cref{exmp:shuffle-example}.
\item \textbf{Mut-List} Implementation of general linked lists with a clear operation.
\item \textbf{Array-Inv} A program which allocates a length $n$ array and writes the value $i$ at every index $i$.
\item \textbf{Intro2} The motivating program shown in \Cref{fig:hard-loop} in \Cref{sec:intro}.
\end{itemize}
We introduced unsafe mutations to these programs to
check our tool for unsoundness and translated these programs
into Java for further comparison with JayHorn.\looseness=-1

Our benchmarks and JayHorn's require a small number of trivially identified alias annotations.
The adapted JayHorn benchmarks contain a total of \jhaliascount{}
annotations; the most for any individual test was \jhmaxalias.
The number of annotations required for our benchmark suite are shown in column
\textbf{Ann.} of \Cref{tab:jh-results}.

We first ran \name on each program in our benchmark suite and ran version 0.7 of
JayHorn on the corresponding Java version. We recorded the final verification
result for both our tool and JayHorn. We also collected the
end-to-end runtime of \name for each test;
we do not give a performance comparison with
JayHorn given the many differences in target languages.
For the JayHorn suite, we first ran our tool on the adapted version of
each test program and ran JayHorn on the original Java version. We also did not
collect runtime information for this set of experiments because our goal
is a comparison of tool precision, not performance.
All tests were run on a machine with 16 GB RAM and 4 Intel i5 CPUs at 2GHz
and with a timeout of 60 seconds (the same timeout
was used in \cite{kahsai2017quantified}). We used \name's parallel backend (\Cref{sec:infr})
with Z3 version 4.8.4, HoICE version 1.8.1, and Eldarica version 2.0.1 and JayHorn's Eldarica backend.

\begin{table}[t]
  \caption{Comparison of \name to JayHorn on the benchmark set of \cite{kahsai2017quantified} (top) and our custom benchmark suite (bottom). \emph{T/O}
  indicates a time out.}
  \begin{center}
    \begin{tabular}{ccccccc}\toprule
& & \multicolumn{2}{l}{\textbf{\name}} & \multicolumn{3}{c}{\textbf{JayHorn}} \\
\cmidrule(lr){3-4} \cmidrule(lr){5-7}
\textbf{Set} & \textbf{N. Tests} & \emph{Correct} & \emph{T/O} & \emph{Correct} & \emph{T/O} & \emph{Imp.} \\ \midrule
\textbf{Safe} & 32 & 29 & 3 &  24 &  5 & 3\\
\textbf{Unsafe} & 26 & 26 & 0 &  19 &  0 & 7
\end{tabular}

    \begin{tabular}{lcccc|lcccc}\toprule
\textbf{Name} & \textbf{Safe?} & \textbf{Time(s)} & \textbf{Ann} & \textbf{JH} & \textbf{Name} & \textbf{Safe?} & \textbf{Time(s)} & \textbf{Ann} & \textbf{JH} \\ \midrule
\textbf{Array-Inv} & \checkmark & 10.07 & 0 & \text{T/O} &
\textbf{Array-Inv-BUG} & \text{\sffamily X} & 5.29 & 0 & \text{T/O} \\
\textbf{Array-List} & \checkmark & 16.76 & 0 & \text{T/O} &
\textbf{Array-List-BUG} & \text{\sffamily X} & 1.13 & 0 & \text{T/O} \\
\textbf{Intro2} & \checkmark & 0.08 & 0 & \text{T/O} &
\textbf{Intro2-BUG} & \text{\sffamily X} & 0.02 & 0 & \text{T/O} \\
\textbf{Mut-List} & \checkmark & 1.45 & 3 & \text{T/O} &
\textbf{Mut-List-BUG} & \text{\sffamily X} & 0.41 & 3 & \text{T/O} \\
\textbf{Shuffle} & \checkmark & 0.13 & 3 & \checkmark &
\textbf{Shuffle-BUG} & \text{\sffamily X} & 0.07 & 3 & \text{\sffamily X} \\
\textbf{Sorted-List} & \checkmark & 1.90 & 3 & \text{T/O} &
\textbf{Sorted-List-BUG} & \text{\sffamily X} & 1.10 & 3 & \text{T/O} \\
\end{tabular}

  \end{center}
  \label{tab:jh-results}
\end{table}

\subsection{Results}
The results of our experiments are shown in \Cref{tab:jh-results}. On the JayHorn benchmark
suite \name performs competitively with JayHorn, correctly identifying 29 of the 32 safe programs
as such. For all 3 tests on which \name timed out after 60 seconds, JayHorn also timed out
(column \emph{T/O}).
For the unsafe programs, \name correctly identified all programs as unsafe within 60 seconds;
JayHorn answered \textsc{Unknown} for 7 tests (column \emph{Imp.}).

On our own benchmark set, \name correctly verifies all safe versions
of the programs within 60 seconds. For the unsafe variants,
\name was able to quickly and definitively determine these programs unsafe.
JayHorn times out on all tests except for \textbf{Shuffle} and \textbf{ShuffleBUG} (column \textbf{JH}).
We investigated the cause of time outs and discovered that after verification failed
with an unbounded heap model, JayHorn attempts verification on increasingly
larger bounded heaps. In every case, JayHorn exceeded the 60 second timeout
before reaching a preconfigured limit on the heap bound. This result suggests JayHorn struggles in the presence
of per-object invariants and unbounded allocations; 
the only two tests JayHorn successfully analyzed contain just a single object allocation.\looseness=-1

We do not believe
this struggle is indicative of a shortcoming in JayHorn's implementation,
but stems from the fundamental limitations of JayHorn's memory representation.
Like many verification tools (see \Cref{sec:rw}), JayHorn uses a single, unchanging
invariant to for every object allocated at the same syntactic location;
effectively, all objects allocated at the same location
are assumed to alias with one another.
This representation cannot, in general, handle programs with different
invariants for distinct objects that evolve over time.
We hypothesize other tools that adopt a similar approach will exhibit
the same difficulty.

\section{Related Work}
\label{sec:rw}

The difficulty in handling programs with mutable references and
aliasing has been well-studied. Like JayHorn, many approaches model the heap explicitly
at verification time, approximating concrete heap locations with allocation site
labels \cite{kahsai2017quantified,kahsai2016jayhorn,chugh2012dependent,rondon2010low,fink2008effective};
each \emph{abstract location} is also associated with a refinement.
As abstract locations summarize many concrete locations, this approach
does not in general admit strong updates and flow-sensitivity; in particular, the refinement
associated with an abstract location is fixed for the lifetime of the program.
The techniques cited above include various workarounds for this limitation. For
example, \cite{rondon2010low,chugh2012dependent} temporarily allows
breaking these invariants through a distinguished program name
as long as the abstract location is not accessed through another name.
The programmer must therefore eventually bring the invariant back in sync with the summary location.
As a result, these systems ultimately cannot precisely handle programs that require
evolving invariants on mutable memory.\looseness=-1

A similar approach was taken in
CQual \cite{foster2002flow} by Aiken et al. \cite{aiken2003checking}.
They used an explicit \emph{restrict} binding for pointers.
Strong updates are permitted through pointers bound
with \emph{restrict}, but the program is forbidden from
using any pointers which share an allocation site while the restrict binding is live.

A related technique used in the field of object-oriented verification is
to declare object invariants at the class level and allow these
invariants on object fields to be broken during a limited period of time
\cite{barnett2011specification,flanagan2002extended}. In particular, the work on
Spec\# \cite{barnett2011specification} uses an ownership system which tracks
whether object $a$ owns object $b$; like \name's ownership system,
these ownerships contain the effects of mutation. However, Spec\#'s ownership
is quite strict and does not admit references to $b$ outside of the owning object $a$.

Viper \cite{heule2013verification,mueller2016viper}
(and its related projects \cite{heule2013abstract,leino2010deadlock}) uses
access annotations (expressed as permission predicates) to explicitly transfer access/mutation
permissions for references between static program names. Like \name,
permissions may be fractionally transferred, allowing temporary shared, immutable
access to a mutable memory cell. However,
while \name automatically infers many ownership transfers, Viper requires extensive
annotations for each transfer.

F*, a dependently typed dialect of ML, includes an update/select
theory of heaps and requires explicit annotations summarizing the heap
effects of a method
\cite{protzenko2017verified,swamy2016dependent,swamy2013verifying}.
This approach enables modular reasoning and precise specification of
pre- and post-conditions with respect to the heap, but precludes full automation.

The work on rely--guarantee reference types by Gordon et
al. \cite{gordon2013rely,gordon2017verifying} uses refinement
types in a language mutable references and aliasing.
Their approach extends reference types with rely/guarantee predicates;
the rely predicate describes possible mutations via aliases,
and the guarantee predicate describes the admissible mutations through the current reference.
If two references may alias, then the guarantee predicate of one
reference implies the rely predicate of the other and vice versa. 
This invariant is maintained with a splitting operation that
is similar to our $+$ operator.
Further, their type system allows strong updates to reference refinements
provided the new refinements are preserved by the rely predicate.
Thus, rely--guarantee refinement support multiple
mutable, aliased references with non-trivial refinement information. Unfortunately this
expressiveness comes at the cost of automated inference and verification; an embedding
of this system into Liquid Haskell \cite{vazou2014refinement} described in
\cite{gordon2017verifying} was forced to sacrifice strong updates.

Work by Degen et al. \cite{degen2007tracking} introduced linear \emph{state annotations} to Java.
To effect strong updates in the presence of aliasing, like \name,
their system requires annotated memory locations are mutated only through
a distinguished reference. Further, all aliases of this mutable reference give
no information about the state of the object much like our $0$ ownership pointers.
However, their system cannot handle multiple, immutable aliases with non-trivial
annotation information; \emph{only} the mutable reference may
have non-trivial annotation information.

The fractional ownerships in \name and their counterparts in
\cite{suenaga2009fractional,suenaga2012type} have a clear relation to
linear type systems. Many authors have explored the use of linear type
systems to reason in contexts with aliased mutable references
\cite{fahndrich2002adoption,deline2001enforcing,smith2000alias}, and
in particular with the goal of supporting strong updates
\cite{ahmed20073}. A closely related approach is RustHorn by
Matsushita et al. \cite{matsushita2020rusthorn}. Much like
\name, RustHorn uses CHC and linear aliasing information for
the sound and---unlike \name---complete verification of
programs with aliasing and mutability. However, their approach depends on Rust's
strict \emph{borrowing discipline}, and cannot handle programs
where multiple aliased references are used 
in the same lexical region. In contrast, \name supports
fine-grained, per-statement changes in mutability and even further
control with \imp{alias} annotations, which allows it to verify larger
classes of programs.

The ownerships of \name also have
a connection to separation logic \cite{reynolds2002separation};
the separating conjunction isolates write effects to local subheaps,
while \name's ownership system isolates effects to local updates of pointer types.
Other researchers have used separation logic to precisely
support strong updates of abstract state.
For example, in work by Kloos et al. \cite{kloos2015asynchronous}
resources are associated with static, abstract names; each
resource (represented by its static name) may be owned (and thus,
mutated) by exactly one thread. Unlike \name, their ownership
system forbids even temporary immutable,
shared ownership, or transferring ownerships at arbitrary program points.
An approach proposed by Bakst and Jhala \cite{bakst2016predicate}
uses a similar technique,
combining separation logic with refinement types. Their approach gives
allocated memory cells abstract names, and associates these names with
refinements in an abstract heap. Like the approach of Kloos et al.
and \name's ownership 1 pointers, they
ensure these abstract locations are distinct
in all concrete heaps, enabling sound, strong updates.

The idea of using a rational number to express permissions to access a
reference dates back to the type system of \emph{fractional
permissions} by Boyland~\cite{DBLP:conf/sas/Boyland03}.  His work used
fractional permissions to verify race freedom of a concurrent program
without a may-alias analysis.  Later,
Terauchi~\cite{terauchi2008checking} proposed a type-inference
algorithm that reduces typing constraints to a set of linear
inequalities over rational numbers.  Boyland's idea also inspired a
variant of separation logic for a concurrent programming
language~\cite{DBLP:conf/popl/BornatCOP05} to express sharing of read
permissions among several threads.  Our previous
work~\cite{suenaga2009fractional,suenaga2012type}, inspired by that
in \cite{terauchi2008checking,DBLP:conf/popl/BornatCOP05}, proposed methods for type-based
verification of resource-leak freedom, in which a rational number
expresses an \emph{obligation} to deallocate certain resource, not
just a permission.\looseness=-1

The issue of context-sensitivity (sometimes called \emph{polyvariance})
is well-studied in the field of abstract interpretation
(e.g., \cite{kashyap2014jsai,hardekopf2014widening,shivers1991control,smaragdakis2011pick,milanova2005parameterized}, see \cite{gilray2013survey}
for a recent survey).
Polyvariance has also been used in type systems to assign
different behaviors to the same function depending on its call site
\cite{banerjee1997modular,wells2002calculus,amtoft2000faithful}.
In the area of refinement type systems, Zhu and Jagannathan developed
a context-sensitive dependent type system for a functional language \cite{zhu2013compositional}
that indexed function types by unique labels attached to call-sites. Our
context-sensitivity approach was inspired by this work.
In fact, we could have formalized context-polymorphism within the framework of full dependent
types, but chose the current presentation for simplicity.\looseness=-1

\section{Conclusion}
\label{sec:concl}

We presented \name, a novel type system for safety verification of imperative programs with mutability
and aliasing. \name is built upon the novel combination of fractional ownership types and refinement
types. Ownership types flow-sensitively and precisely track the existence of mutable aliases. \name
admits sound strong updates by discarding refinement information on mutably-aliased references as
indicated by ownership types.
Our type system is amenable
to automatic type inference; we have implemented a prototype of this inference tool and found it can verify several
non-trivial programs and outperforms a state-of-the-art program verifier.
As an area of future work, we plan to investigate using fractional ownership types to soundly allow refinements that mention memory
locations.\looseness=-1

\paragraph{\footnotesize Acknowledgments}
{\footnotesize The authors would like to the reviewers for their thoughtful
feedback and suggestions, and Yosuke Fukuda and Alex Potanin for their feedback on early drafts.
This work was supported in part by JSPS KAKENHI, grant numbers JP15H05706 and JP19H04084,
and in part by the JST ERATO MMSD Project.\looseness=-1}

\bibliographystyle{splncs04}
\bibliography{main}

\vfill

{\small\medskip\noindent{\bf Open Access} This chapter is licensed under the terms of the Creative Commons\break Attribution 4.0 International License (\url{http://creativecommons.org/licenses/by/4.0/}), which permits use, sharing, adaptation, distribution and reproduction in any medium or format, as long as you give appropriate credit to the original author(s) and the source, provide a link to the Creative Commons license and indicate if changes were made.}

{\small \spaceskip .28em plus .1em minus .1em The images or other third party material in this chapter are included in the chapter's Creative Commons license, unless indicated otherwise in a credit line to the material.~If material is not included in the chapter's Creative Commons license and your intended\break use is not permitted by statutory regulation or exceeds the permitted use, you will need to obtain permission directly from the copyright holder.}

\medskip\noindent\includegraphics{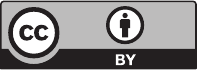}
\newpage
\iffullversion\appendix

\begin{figure}
  \leavevmode
  \infrule[TE-Seq]{
    \Theta  \mid  \HOLE  \ottsym{:}  \tau  \produces  \Gamma  \mid  \mathcal{L}  \vdash_{\mathit{ectx} }  \ottnt{E}  \ottsym{:}  \tau'  \produces  \Gamma' \andalso
     \Theta   \mid   \mathcal{L}   \mid   \Gamma'   \vdash   \ottnt{e}  :  \tau''   \produces   \Gamma'' 
  }{
    \Theta  \mid  \HOLE  \ottsym{:}  \tau  \produces  \Gamma  \mid  \mathcal{L}  \vdash_{\mathit{ectx} }   \ottnt{E} \SEQ \ottnt{e}   \ottsym{:}  \tau''  \produces  \Gamma''
  }
  \infrule[TE-Hole]{}{
    \Theta  \mid  \HOLE  \ottsym{:}  \tau  \produces  \Gamma  \mid  \mathcal{L}  \vdash_{\mathit{ectx} }  \HOLE  \ottsym{:}  \tau  \produces  \Gamma
  }
  \infrule[TE-Stack]{
    \Theta  \mid  \HOLE  \ottsym{:}  \tau'  \produces  \Gamma'  \mid  \mathcal{L}  \vdash_{\mathit{ectx} }  \ottnt{E}  \ottsym{:}  \tau''  \produces  \Gamma'' \\
     \Theta   \mid   \mathcal{L}   \mid   \Gamma  \ottsym{,}  \mathit{x}  \ottsym{:}  \tau   \vdash   \ottnt{e}  :  \tau'   \produces   \Gamma'  \\
     \mathit{x}  \not\in   \DOM( \Gamma' )  
  }{
    \Theta  \mid  \HOLE  \ottsym{:}  \tau  \produces  \Gamma  \mid  \mathcal{L}  \vdash_{\mathit{ectx} }   \ottnt{E} [\LET  \mathit{x}  =   \HOLE^ \ell   \IN  \ottnt{e}  ]   \ottsym{:}  \tau''  \produces  \Gamma''
  }

  \begin{align*}
    \ottsym{(}   \ottnt{E} \SEQ \ottnt{e}   \ottsym{)}  \ottsym{[}  \ottnt{e'}  \ottsym{]} & =  \ottnt{E}  \ottsym{[}  \ottnt{e'}  \ottsym{]}  \SEQ  \ottnt{e}  \\
    \HOLE  \ottsym{[}  \ottnt{e'}  \ottsym{]} & = e' \\
     \ottnt{E} [\LET  \mathit{y}  =   \HOLE^ \ell   \IN  \ottnt{e}  ]   \ottsym{[}  \mathit{x}  \ottsym{]} & = \ottnt{E}  \ottsym{[}   \LET  \mathit{y}  =  \mathit{x}  \IN  \ottnt{e}   \ottsym{]}
  \end{align*}
\caption{Context typing and substitution}
\label{fig:context-typing}
\end{figure}

\begin{figure}
  \leavevmode
  \infrule{
    \oldvec{\ell}  \ottsym{=}  \ottkw{Trace} \, \ottsym{(}  \oldvec{F}  \ottsym{)} \andalso n  \ottsym{=}   | \oldvec{\ell} |   \ottsym{=}   | \oldvec{F} |  \andalso \Theta  \vdash  \ottnt{D} \andalso \forall j \in \set{1..n}.\oldvec{\ell}_{\ottmv{j}} = tail^{n-j+1}(\oldvec{\ell}) \\
    \ottkw{Cons} \, \ottsym{(}  \ottnt{H}  \ottsym{,}  \ottnt{R}  \ottsym{,}  \Gamma  \ottsym{)} \andalso
    \forall i\in\set{1..n}.\Theta  \mid  \HOLE  \ottsym{:}  \tau_{\ottmv{i}}  \produces  \Gamma_{\ottmv{i}}  \mid  \oldvec{\ell}_{\ottmv{i}}  \vdash_{\mathit{ectx} }  F_{\ottmv{i}}  \ottsym{:}  \tau_{{\ottmv{i}-1}}  \produces  \Gamma_{{\ottmv{i}-1}} \\
    \oldvec{F} = F_{\ottmv{n}} : \cdots : F_{{\mathrm{1}}} : \cdot \andalso
     \Theta   \mid   \oldvec{\ell}   \mid   \Gamma   \vdash   \ottnt{e}  :  \tau_{\ottmv{n}}   \produces   \Gamma_{\ottmv{n}} 
  }{
     \vdash_{\mathit{conf} }^D   \tuple{ \ottnt{H} ,  \ottnt{R} ,  \oldvec{F} ,  \ottnt{e} }  
  }
  \infax[]{
     \vdash_{\mathit{conf} }^D   \mathbf{AliasFail}  
  }
  \begin{align*}
    \ottkw{Trace} \, \ottsym{(}   \cdot   \ottsym{)} &=  \epsilon  \\
    \ottkw{Trace} \, \ottsym{(}   \ottnt{E} [\LET  \mathit{x}  =   \HOLE^ \ell   \IN  \ottnt{e}  ]   \ottsym{:}  \oldvec{F}  \ottsym{)} &= \ell  \ottsym{:}  \ottkw{Trace} \, \ottsym{(}  \oldvec{F}  \ottsym{)} \\
   \end{align*}
   \begin{align*}
    \ottkw{Cons} \, \ottsym{(}  \ottnt{H}  \ottsym{,}  \ottnt{R}  \ottsym{,}  \Gamma  \ottsym{)} & \stackrel{\textrm{\tiny def}}{\iff} \ottkw{SAT} \, \ottsym{(}  \ottnt{H}  \ottsym{,}  \ottnt{R}  \ottsym{,}  \Gamma  \ottsym{)} \wedge \forall \,  \ottmv{a}  \in \DOM( \ottnt{H} )   \ottsym{.}  \ottkw{Own} \, \ottsym{(}  \ottnt{H}  \ottsym{,}  \ottnt{R}  \ottsym{,}  \Gamma  \ottsym{)}  \ottsym{(}  \ottmv{a}  \ottsym{)} \leq 1 \\
    \ottkw{SAT} \, \ottsym{(}  \ottnt{H}  \ottsym{,}  \ottnt{R}  \ottsym{,}  \Gamma  \ottsym{)} & \stackrel{\textrm{\tiny def}}{\iff} \forall \,  \mathit{x}  \in \DOM( \Gamma )   \ottsym{.}   \mathit{x}  \in \DOM( \ottnt{R} )   \wedge   \ottkw{SATv} ( \ottnt{H} , \ottnt{R} , \ottnt{R}  \ottsym{(}  \mathit{x}  \ottsym{)} , \Gamma  \ottsym{(}  \mathit{x}  \ottsym{)} )  \\
     \ottkw{SATv} ( \ottnt{H} , \ottnt{R} , v , \tau )  & \stackrel{\textrm{\tiny def}}{\iff} \begin{cases}
       v  \in  \mathbb{Z}   \wedge  \ottsym{[}  \ottnt{R}  \ottsym{]} \, \ottsym{[}  v  \ottsym{/}  \nu  \ottsym{]}  \varphi & \tau  \ottsym{=}   \set{  \nu  \COL \TINT \mid  \varphi }  \\
       \ottmv{a}  \in \DOM( \ottnt{H} )   \wedge   \ottkw{SATv} ( \ottnt{H} , \ottnt{R} , \ottnt{H}  \ottsym{(}  \ottmv{a}  \ottsym{)} , \tau' )  & \tau  \ottsym{=}   \tau'  \TREF^{ r }   \wedge  v \, \ottsym{=} \, \ottmv{a}
    \end{cases} \\
    \ottsym{[}   \emptyset   \ottsym{]} \, \varphi & = \varphi \\
    \ottsym{[}  \ottnt{R}  \ottsym{\{}  \mathit{y}  \mapsto  n  \ottsym{\}}  \ottsym{]} \, \varphi & =  \ottsym{[}  \ottnt{R}  \ottsym{]} \, \ottsym{[}  n  \ottsym{/}  \mathit{y}  \ottsym{]}  \varphi \\
    \ottsym{[}  \ottnt{R}  \ottsym{\{}  \mathit{y}  \mapsto  \ottmv{a}  \ottsym{\}}  \ottsym{]} \, \varphi & =  \ottsym{[}  \ottnt{R}  \ottsym{]} \, \varphi \\
    \ottkw{Own} \, \ottsym{(}  \ottnt{H}  \ottsym{,}  \ottnt{R}  \ottsym{,}  \Gamma  \ottsym{)} & =    \Sigma _{ \mathit{x} \in  \DOM( \Gamma )  }\, \ottkw{own} \, \ottsym{(}  \ottnt{H}  \ottsym{,}  \ottnt{R}  \ottsym{(}  \mathit{x}  \ottsym{)}  \ottsym{,}  \Gamma  \ottsym{(}  \mathit{x}  \ottsym{)}  \ottsym{)}  \\
    \ottkw{own} \, \ottsym{(}  \ottnt{H}  \ottsym{,}  \ottnt{v}  \ottsym{,}  \tau  \ottsym{)} & = \begin{cases}
      \ottsym{\{}  \ottmv{a}  \mapsto  r  \ottsym{\}}  \ottsym{+}  \ottkw{own} \, \ottsym{(}  \ottnt{H}  \ottsym{,}  \ottnt{H}  \ottsym{(}  \ottmv{a}  \ottsym{)}  \ottsym{,}  \tau'  \ottsym{)} & \ottnt{v} \, \ottsym{=} \, \ottmv{a}  \wedge   \ottmv{a}  \in \DOM( \ottnt{H} )   \wedge  \tau  \ottsym{=}   \tau'  \TREF^{ r }  \\
       \emptyset  & o.w.
    \end{cases} \\
  \end{align*}
\caption{Machine state typing}
\label{fig:state-typing}
\end{figure}

\section{Proof of Type Soundness (\Cref{thm:soundness})}

We first define a typing relation for machine configurations $ \tuple{ \ottnt{H} ,  \ottnt{R} ,  \oldvec{F} ,  \ottnt{e} } $ as shown in
\Cref{fig:state-typing}. The critical component of this typing relation is the consistency relation $\ottkw{Cons}$.
Intuitively, $\ottkw{Cons}$ expresses that the current heap and registers are consistent with the ownership
and refinement information implied by $\Gamma$. We say triple $( \ottnt{H},\ottnt{R},\Gamma )$ is \emph{consistent},
and write $\ottkw{Cons} \, \ottsym{(}  \ottnt{H}  \ottsym{,}  \ottnt{R}  \ottsym{,}  \Gamma  \ottsym{)}$. %
In the definitions for $\ottkw{own}$ we write $\ottsym{\{}  \ottmv{a}  \mapsto  r  \ottsym{\}}$ to denote a function $\textbf{Addr} \rightarrow [0,1]$
which returns $r$ for $\ottmv{a}$, and $0$ otherwise. We write $ \emptyset $ to denote a constant
function $\textbf{Addr} \rightarrow [0,1]$ which always returns $0$. We define the addition between
two functions $\ottnt{O_{{\mathrm{1}}}}, \ottnt{O_{{\mathrm{2}}}}: \textbf{Addr} \rightarrow [0,1]$ as: $\ottsym{(}  \ottnt{O_{{\mathrm{1}}}}  \ottsym{+}  \ottnt{O_{{\mathrm{2}}}}  \ottsym{)}  \ottsym{(}  \ottmv{a}  \ottsym{)}  \ottsym{=}  \ottnt{O_{{\mathrm{1}}}}  \ottsym{(}  \ottmv{a}  \ottsym{)}  \ottsym{+}  \ottnt{O_{{\mathrm{2}}}}  \ottsym{(}  \ottmv{a}  \ottsym{)}$.
Finally, if a summation $ \Sigma $ has no summands, we take its result to be $ \emptyset $.

The proof of \Cref{thm:soundness} requires the following four key lemmas. These lemmas are stated
with respect to some well-typed program $ \tuple{ \ottnt{D} ,  \ottnt{e} } $, i.e. $\vdash   \tuple{ \ottnt{D} ,  \ottnt{e} } $.
\begin{lemma}
  \label{lem:initial}
  $ \vdash_{\mathit{conf} }^D   \tuple{  \emptyset  ,   \emptyset  ,   \cdot  ,  \ottnt{e} }  $
\end{lemma}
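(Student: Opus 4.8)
The plan is to invert the program typing rule and then build the unique configuration-typing derivation in the degenerate case of an empty call stack. First I would apply inversion to the hypothesis $\vdash \tuple{D, e}$: the only rule concluding such a judgment is \rn{T-Prog}, so there exist $\Theta$, $\tau$, and $\Gamma$ with $\Theta \vdash D$, $\vdash_{\wf} \Theta$, and $\Theta \mid \epsilon \mid \bullet \vdash e : \tau \produces \Gamma$. The judgment $\Theta \mid \epsilon \mid \bullet \vdash e : \tau \produces \Gamma$ is precisely what the configuration rule will consume, and $\Theta \vdash D$ is one of its premises, so inversion supplies essentially everything.

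Next I would construct a derivation of $\vdash_{\mathit{conf}}^D \tuple{\emptyset, \emptyset, \cdot, e}$ using the single configuration-typing rule of \Cref{fig:state-typing}, instantiating it to the empty stack. Since $\oldvec{F} = \cdot$, we have $\oldvec{\ell} = \mathbf{Trace}(\cdot) = \epsilon$ and hence $n = |\oldvec{\ell}| = |\oldvec{F}| = 0$; I would take $\bullet$ as the type environment associated with $e$, and set $\tau_0 = \tau$, $\Gamma_0 = \Gamma$. With $n = 0$ every premise quantified over $j \in \set{1..n}$ or $i \in \set{1..n}$ is vacuously satisfied: there are no indices $j$ for which the tail condition $\oldvec{\ell}_j = tail^{\,n-j+1}(\oldvec{\ell})$ must hold, and there are no return frames $F_i$ for which a context-typing obligation $\Theta \mid \HOLE : \tau_i \produces \Gamma_i \mid \oldvec{\ell}_i \vdash_{\mathit{ectx}} F_i : \tau_{i-1} \produces \Gamma_{i-1}$ arises, consistent with $\oldvec{F} = \cdot$. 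The remaining typing obligation $\Theta \mid \oldvec{\ell} \mid \Gamma \vdash e : \tau_n \produces \Gamma_n$ then specializes to $\Theta \mid \epsilon \mid \bullet \vdash e : \tau \produces \Gamma$, which is exactly the judgment extracted from \rn{T-Prog}.

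The only premise requiring genuine (if brief) verification is the consistency condition $\mathbf{Cons}(\emptyset, \emptyset, \bullet)$. I would unfold its definition from \Cref{fig:state-typing}: the clause $\mathbf{SAT}(\emptyset, \emptyset, \bullet)$ quantifies over $x \in \DOM(\bullet) = \emptyset$ and so holds vacuously, while the ownership bound $\forall a \in \DOM(\emptyset).\, \mathbf{Own}(\emptyset, \emptyset, \bullet)(a) \le 1$ quantifies over $\DOM(\emptyset) = \emptyset$ and likewise holds vacuously. Assembling $\Theta \vdash D$, the vacuous frame conditions, $\mathbf{Cons}(\emptyset, \emptyset, \bullet)$, and the main typing judgment into the configuration-typing rule yields $\vdash_{\mathit{conf}}^D \tuple{\emptyset, \emptyset, \cdot, e}$.

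There is no substantive mathematical obstacle here; the only point demanding care is the bookkeeping, namely confirming that the empty-stack instantiation forces $n = 0$, that all universally quantified side conditions of the configuration-typing rule collapse to triviality, and that the consistency predicate unfolds over empty domains. This lemma serves purely as the base case of the preservation argument for \Cref{thm:soundness}: it certifies that the initial configuration of any well-typed program is itself well-typed under the machine-state typing relation, so that repeated application of preservation can carry the invariant forward along every reduction sequence.
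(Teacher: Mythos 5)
Your proof is correct and follows essentially the same route as the paper's: the paper's proof is the one-liner ``Trivial, taking $\Gamma  \ottsym{=}   \bullet $ and by inversion on $\vdash   \tuple{ \ottnt{D} ,  \ottnt{e} } $,'' and your argument is exactly that inversion of \rn{T-Prog} followed by instantiating the configuration-typing rule at the empty stack, with the vacuous-quantification and $\ottkw{Cons} \, \ottsym{(}   \emptyset   \ottsym{,}   \emptyset   \ottsym{,}   \bullet   \ottsym{)}$ details the paper leaves implicit spelled out. Nothing is missing; you have simply made the trivial bookkeeping explicit.
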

\begin{proof}
  Trivial, taking $\Gamma  \ottsym{=}   \bullet $ and by inversion on $\vdash   \tuple{ \ottnt{D} ,  \ottnt{e} } $.
\end{proof}

\begin{lemma}
  \label{lem:assertfail}
  $ \vdash_{\mathit{conf} }^D  \mathbf{C} $ implies $ \mathbf{C}  \neq   \mathbf{AssertFail}  $
\end{lemma}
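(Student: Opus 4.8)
The plan is to prove this by a one-step inversion on the derivation of $ \vdash_{\mathit{conf} }^D  \mathbf{C} $, exploiting the fact that the judgment $ \vdash_{\mathit{conf} }^D $ is defined by exactly the rules displayed in \Cref{fig:state-typing} and no others. Recall from the grammar that a configuration $\mathbf{C}$ is syntactically one of three things: a running configuration $ \tuple{ \ottnt{H} ,  \ottnt{R} ,  \oldvec{F} ,  \ottnt{e} } $, the distinguished failure state $ \mathbf{AssertFail} $, or the distinguished failure state $ \mathbf{AliasFail} $. So it suffices to show that neither typing rule for $ \vdash_{\mathit{conf} }^D $ can ever have $ \mathbf{AssertFail} $ as its subject.

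First I would inspect the two rules of \Cref{fig:state-typing}. The composite inference rule has a conclusion of the form $ \vdash_{\mathit{conf} }^D   \tuple{ \ottnt{H} ,  \ottnt{R} ,  \oldvec{F} ,  \ottnt{e} }  $, i.e.\ its subject is always a running configuration, and the remaining axiom concludes precisely $ \vdash_{\mathit{conf} }^D   \mathbf{AliasFail}  $. Neither rule mentions $ \mathbf{AssertFail} $ in its conclusion. This omission is by design: the whole purpose of the configuration typing relation is to characterize states reachable by a well-typed program, and since we intend to show $ \mathbf{AssertFail} $ is unreachable, we deliberately provide no way to type it. Hence, by inversion, any derivation of $ \vdash_{\mathit{conf} }^D  \mathbf{C} $ must end in one of these two rules, which forces $\mathbf{C}$ to be either of the form $ \tuple{ \ottnt{H} ,  \ottnt{R} ,  \oldvec{F} ,  \ottnt{e} } $ or to equal $ \mathbf{AliasFail} $.

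Both of these possibilities are syntactically distinct from $ \mathbf{AssertFail} $, so in either case $\mathbf{C} \neq  \mathbf{AssertFail} $, which is exactly the claim. The statement is therefore essentially immediate, and I do not expect any genuine difficulty; the only point requiring care is to confirm that \Cref{fig:state-typing} really constitutes the complete inventory of rules for $ \vdash_{\mathit{conf} }^D $, so that the inversion is exhaustive and no hidden introduction rule for $ \mathbf{AssertFail} $ slips in. Once that bookkeeping is checked, the argument is complete.
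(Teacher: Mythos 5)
Your proof is correct and matches the paper's argument: the paper also proves this by observing that $ \mathbf{AssertFail} $ admits no typing rule in \Cref{fig:state-typing} (phrased there as a contradiction rather than your direct inversion, but resting on exactly the same fact). The extra care you take in confirming the two rules are exhaustive is the right and only substantive point.
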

\begin{proof}
  Simple proof by contradiction, as the $ \mathbf{AssertFail} $ is not well-typed.
\end{proof}
  
\begin{lemma}
  \label{lem:preservation}
  If $ \vdash_{\mathit{conf} }^D   \tuple{ \ottnt{H} ,  \ottnt{R} ,  \oldvec{F} ,  \ottnt{e} }  $ and $  \tuple{ \ottnt{H} ,  \ottnt{R} ,  \oldvec{F} ,  \ottnt{e} }     \longrightarrow _{ \ottnt{D} }    \mathbf{C} $, then $ \vdash_{\mathit{conf} }^D  \mathbf{C} $
\end{lemma}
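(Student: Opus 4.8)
The plan is to prove subject reduction by inverting the configuration typing rule of \Cref{fig:state-typing} and then performing a case analysis on the reduction rule of \Cref{fig:transitionRules1,fig:transitionRules2} used to derive $\tuple{ \ottnt{H} , \ottnt{R} , \oldvec{F} , \ottnt{e} } \longrightarrow_{\ottnt{D}} \mathbf{C}$. Inverting $\vdash_{\mathit{conf}}^D \tuple{ \ottnt{H} , \ottnt{R} , \oldvec{F} , \ottnt{e} }$ yields (i) the expression typing $\Theta \mid \oldvec{\ell} \mid \Gamma \vdash \ottnt{e} : \tau_n \produces \Gamma_n$ for the current trace $\oldvec{\ell} = \ottkw{Trace}(\oldvec{F})$, (ii) the chain of context typings $\Theta \mid \HOLE : \tau_i \produces \Gamma_i \mid \oldvec{\ell}_i \vdash_{\mathit{ectx}} F_i : \tau_{i-1} \produces \Gamma_{i-1}$ threading return types down the stack, (iii) consistency $\ottkw{Cons}(H,R,\Gamma)$, and (iv) $\Theta \vdash \ottnt{D}$. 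Before the case analysis I would establish two structural ``plumbing'' lemmas. First, a \emph{decomposition/replacement lemma} for evaluation contexts: since $\ottnt{e} = \ottnt{E}[\ottnt{e'}]$ with $\ottnt{E} ::= \ottnt{E'} \SEQ \ottnt{e''} \mid \HOLE$, the derivation of (i) factors by repeated inversion of \rn{T-Seq} into a typing of the redex $\ottnt{e'}$ from some intermediate $\Gamma_0$ to $\Gamma_1$ plus the residual typing of $\ottnt{E}$, and replacing $\ottnt{e'}$ by any expression that types from $\Gamma_0$ to $\Gamma_1$ with the same result type re-types the whole expression to $\tau_n \produces \Gamma_n$. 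Second, a \emph{plugging lemma} relating the $\vdash_{\mathit{ectx}}$ judgment of \Cref{fig:context-typing} to ordinary typing, so that a filled return context $F[x]$ types under the popped trace whenever the hole type matches the returned value's type.

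With these in hand the intraprocedural rules (\rn{R-Seq}, \rn{R-Let}, \rn{R-LetInt}, \rn{R-IfTrue}/\rn{R-IfFalse}, \rn{R-MkRef}, \rn{R-Deref}, \rn{R-Assign}, \rn{R-Alias}, \rn{R-AliasPtr}) leave the stack and trace unchanged, so I only re-derive (i) and (iii). For each I invert the matching typing rule on the redex, apply the corresponding register/heap update, and rebuild via the replacement lemma; the substitutions $[x'/x]$ introduced by the let-family rules are handled by a routine variable-renaming lemma using the freshness side condition $x' \notin \DOM(\ottnt{R})$ and the global uniqueness of bound names. The genuine content is re-establishing $\ottkw{Cons}$, for which I would prove bookkeeping facts about $\ottkw{Own}$, $\ottkw{own}$, and $\ottkw{SATv}$ under heap extension (\rn{R-MkRef}: the new cell gets ownership $1$ from the type given to $x$, and the ownership sum at the fresh address stays $\le 1$), register extension (\rn{R-Deref}: the bound value satisfies the split content type by $\ottkw{SATv}$), and in-place update (\rn{R-Assign}). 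The assign case is where strong updates are justified: \rn{T-Assign} forces the written reference to have ownership $1$, so the $\ottkw{Cons}$ invariant $\ottkw{Own}(H,R,\Gamma)(a)\le 1$ guarantees every other live alias of that cell has ownership $0$ and hence, by ownership well-formedness, carries only the $\top$ refinement, so overwriting the cell cannot falsify any surviving $\ottkw{SATv}$ obligation. For the failing transitions, \rn{R-AliasFail}/\rn{R-AliasPtrFail} step to $\mathbf{AliasFail}$, which is well-typed by the axiom of \Cref{fig:state-typing}; and \rn{R-AssertFail} is shown \emph{unreachable} from a well-typed configuration, since inverting \rn{T-Assert} gives $\Gamma \models \varphi$, i.e. $\models \sem{\Gamma} \implies \varphi$, while $\ottkw{Cons}$ (hence $\ottkw{SAT}$) yields $\models [\ottnt{R}]\sem{\Gamma}$, and together these force $\models [\ottnt{R}]\varphi$, contradicting the side condition $\not\models [\ottnt{R}]\varphi$.

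The interprocedural rules manipulate the stack. For \rn{R-Var} (return), decomposition gives the returned variable type $\tau_n$; the plugging lemma turns the top context typing of $F_n$ (taking hole type $\tau_n \produces \Gamma_n$ to $\tau_{n-1} \produces \Gamma_{n-1}$, under $\oldvec{\ell}_n$, which the suffix condition $\oldvec{\ell}_j = tail^{n-j+1}(\oldvec{\ell})$ makes equal to the popped trace $tail(\oldvec{\ell})$) into an ordinary typing of the new current expression $F_n[x]$, and the remaining shorter stack directly satisfies the chain condition. For \rn{R-Call}, from $\Theta \vdash \ottnt{D}$ and \rn{T-FunDef} I obtain the callee body typed against its declared type under the bound context variable $\lambda$; I then apply a \emph{context-substitution lemma} stating that substituting the concrete context $\ell : \oldvec{\ell}$ for $\lambda$ throughout a derivation yields a valid derivation under that context (realizing $\sigma_\alpha$), followed by a \emph{parameter-substitution lemma} realizing $\sigma_x = [y_i/x_i]$. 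The \rn{T-Call} premise that the actual arguments carry the substituted formal types is exactly what makes $\ottkw{Cons}$ hold for the callee's entry environment, and the let-continuation of the call is repackaged as a fresh return context, typed by \rn{TE-Stack} and pushed onto the stack so as to lengthen the chain consistently.

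The main obstacles I anticipate are \rn{R-Call} and \rn{R-Assign}. \rn{R-Call} is delicate because it simultaneously grows the stack, specializes a context-polymorphic type, and substitutes actuals for formals, so it rests on getting the context- and value-substitution lemmas and their commutation with well-formedness exactly right. \rn{R-Assign} concentrates the entire soundness argument for strong updates into a single consistency-preservation step, and verifying it demands the precise ownership bookkeeping, namely the $\ottkw{Own}(\cdot)(a) \le 1$ invariant together with ownership well-formedness, that underpins the type system's central claim; everything else is comparatively routine context-plumbing and refinement manipulation.
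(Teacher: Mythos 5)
Your plan reproduces the paper's proof structure almost verbatim: the same case analysis on the reduction rule, the same decomposition and plugging lemmas for evaluation and return contexts (the paper's \Cref{lem:stack_var,lem:ectxt-sub-well-typed,lem:stack-well-typed}), the same variable- and context-substitution lemmas packaged into the call case (\Cref{lem:substitution,lem:ctxt-substitution,lem:callfunc}), the same $\ottkw{own}$/$\ottkw{SATv}$ bookkeeping under heap extension, register extension, and in-place update, the same ownership-$1$ argument justifying strong update in \rn{R-Assign}, the axiom for $\mathbf{AliasFail}$, and the same contradiction argument showing \rn{R-AssertFail} cannot fire from a well-typed configuration.

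The one genuine gap is your treatment of inversion. Because of \rn{T-Sub}, the typing relation is not syntax-directed, so the step ``invert the matching typing rule on the redex'' fails as stated: a derivation for the redex may end in arbitrarily many subsumptions. The paper spends real effort here: \Cref{lem:inversion} (which in turn needs transitivity of subtyping, \Cref{lem:subtype-transitive}) recovers the syntax-directed premises only up to an intermediate environment $\Gamma_{\ottmv{p}}$ with $\Gamma  \leq  \Gamma_{\ottmv{p}}$ and a weakened output $\Gamma'_{\ottmv{p}}  \ottsym{,}  \tau_{\ottmv{p}}  \leq  \Gamma'  \ottsym{,}  \tau$; and then \Cref{lem:subtyp-preserves-cons} is needed throughout the case analysis to transfer $\ottkw{Cons} \, \ottsym{(}  \ottnt{H}  \ottsym{,}  \ottnt{R}  \ottsym{,}  \Gamma  \ottsym{)}$ to $\ottkw{Cons} \, \ottsym{(}  \ottnt{H}  \ottsym{,}  \ottnt{R}  \ottsym{,}  \Gamma_{\ottmv{p}}  \ottsym{)}$, since the consistency obligation you must discharge is stated against $\Gamma_{\ottmv{p}}$, not $\Gamma$. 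That consistency-under-subtyping lemma is not mere plumbing: it is precisely where one checks that \rn{S-Ref} can only lower ownerships (so the $\ottkw{Own}$ sums at each address cannot increase past $1$) and that the semantic implication in \rn{S-Int}, combined with $\ottkw{SAT}$ of the environment, preserves $\ottkw{SATv}$. Your proposal never mentions subsumption, so each case of your analysis would be arguing about the wrong environment; once these two lemmas are added, the rest of your outline goes through exactly as the paper's proof does.
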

\begin{lemma}
  \label{lem:progress}
  If $ \vdash_{\mathit{conf} }^D  \mathbf{C} $, then, one of the following conditions hold:
  \begin{enumerate}
  \item $\exists \mathbf{C}', \mathbf{C}    \longrightarrow _{ \ottnt{D} }    \mathbf{C}' $, or
  \item $\mathbf{C}  \ottsym{=}   \mathbf{AssertFail} $, or
  \item $\mathbf{C}  \ottsym{=}   \tuple{ \ottnt{H} ,  \ottnt{R} ,   \cdot  ,  \mathit{x} } $
  \end{enumerate}
\end{lemma}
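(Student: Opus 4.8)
The plan is to prove \Cref{lem:progress} by the usual progress argument adapted to the machine-configuration typing of \Cref{fig:state-typing}: invert $\vdash_{\mathit{conf}}^D \mathbf{C}$, expose the head redex of the active term, and for each form exhibit an applicable transition, using the consistency invariant $\ottkw{Cons}$ to supply the runtime side-conditions. First I would invert the configuration typing. Only two forms are derivable: $\mathbf{C} = \mathbf{AliasFail}$, which is already a normal form (it is one of the halting configurations enumerated in \Cref{thm:soundness} and is handled as a trivial terminal case), and $\mathbf{C} = \tuple{H, R, \oldvec{F}, e}$, where all the work lies. In the latter, inversion delivers the active-term derivation $\Theta \mid \oldvec{\ell} \mid \Gamma \vdash e : \tau_n \produces \Gamma_n$, the consistency fact $\ottkw{Cons}(H, R, \Gamma)$, and the chained frame typings relating $\oldvec{F}$, $\oldvec{\ell} = \ottkw{Trace}(\oldvec{F})$, and $n = |\oldvec{F}| = |\oldvec{\ell}|$.

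Next I would establish the standard decomposition: every $e$ is either a bare variable or factors uniquely as $E[r]$, where $E$ ranges over the evaluation contexts $E ::= \HOLE \mid E';e$ and $r$ is a head redex (a let-form, an $\IFZERO$, an assignment, an $\ALIAS$, an $\ASSERT$, or a sequence). I would then case-split on $r$, inverting the typing at the exposed position modulo \rn{T-Sub} (which preserves the top-level shape of a type, and in particular the ``is a reference'' structure, via \rn{S-Ref}). The let- and $\IFZERO$-forms always reduce. For the heap-touching redexes consistency does the work: when the typing assigns the relevant operand a reference type in $\Gamma$, the reference clause of $\ottkw{SATv}$ inside $\ottkw{Cons}$ forces $R(\cdot)$ to be an address $a \in \DOM(H)$, discharging the side-conditions of \rn{R-Deref} and \rn{R-Assign}; for function calls, $\Theta \vdash D$ guarantees the callee is present in $D$ so \rn{R-Call} fires. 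For $\ASSERT(\varphi)$ and $\ALIAS$ progress is \emph{unconditional}: whichever way the semantic check resolves, some rule steps --- either \rn{R-Assert}/\rn{R-Alias}/\rn{R-AliasPtr} on success, or \rn{R-AssertFail}/\rn{R-AliasFail}/\rn{R-AliasPtrFail} on failure (stepping to $\mathbf{AssertFail}$ or $\mathbf{AliasFail}$) --- so all of these land in case~(1). Finally, for $e = x$: if $\oldvec{F} = F : \oldvec{F'}$ then \rn{R-Var} returns into $F$, and if $\oldvec{F} = \cdot$ then $\mathbf{C} = \tuple{H, R, \cdot, x}$, which is case~(3).

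The hard part will be the heap-safety link: showing that the static information in $\Gamma$ together with $\ottkw{Cons}$ yields exactly the dynamic facts each heap-accessing rule demands. I would isolate this in a small canonical-forms / bridging lemma that reads $\ottkw{SAT}$ and $\ottkw{SATv}$ as: a variable of reference type is mapped by $R$ to an allocated address whose contents again satisfy the pointed-to type, so dereference and assignment can never be stuck. Threading \rn{T-Sub} cleanly through the case analysis is a secondary, routine obstacle, dispatched by noting subtyping never converts an integer type into a reference type or vice versa. A further bookkeeping point is the stack discipline: the return rule \rn{R-Var} pops a frame and hence needs a non-empty $\oldvec{F}$ precisely when the active term is a returning variable, and the premises $\oldvec{\ell} = \ottkw{Trace}(\oldvec{F})$ and $n = |\oldvec{F}| = |\oldvec{\ell}|$ of the configuration judgment are what let me match stack height to the active term so that the needed return context is always available.
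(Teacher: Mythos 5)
Your proposal is correct and takes essentially the same route as the paper's own proof: inversion of the configuration typing (with $\mathbf{AliasFail}$ dispatched as a trivial terminal case, exactly as the paper does), a decomposition lemma splitting $e$ into a bare variable or $E[\text{redex}]$, typing inversion through the evaluation context modulo \textsc{T-Sub} (the paper's Lemmas~\ref{lem:stack_var} and~\ref{lem:inversion}), and the consistency invariant $\ottkw{Cons}$/$\ottkw{SATv}$ supplying the runtime side-conditions for \textsc{R-Deref}, \textsc{R-Assign}, and the presence of the callee for \textsc{R-Call}, with $\ASSERT$/$\ALIAS$ stepping unconditionally via their success or failure rules and the empty-stack variable case yielding condition~(3). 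Your ``bridging lemma'' is precisely the paper's combination of Lemma~\ref{lem:inversion} with Lemma~\ref{lem:subtyp-preserves-cons}, so no new idea is missing.
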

\Cref{lem:preservation,lem:progress} are the heart of proof effort, we give their proofs in \Cref{sec:preservation-proof,sec:progress-proof}
respectively.

We can now prove \Cref{thm:soundness}:
\begin{proof}[\Cref{thm:soundness}: Soundness]
  From \Cref{lem:initial,lem:preservation} and an inductive argument, any configuration
  reachable from the initial state must be well-typed. Then, by \Cref{lem:assertfail} every
  configuration reachable from the initial state cannot be $ \mathbf{AssertFail} $, i.e., a well-typed
  program never experiences an assertion failure. This completes the first part of the proof.

  To prove the second portion of the theorem, it suffices to show that any configuration reachable
  from the initial state can step or is a final configuration. Again from
  \Cref{lem:initial,lem:preservation} and a simple inductive argument, we must have that
  for any state   $\mathbf{C}$ such that $  \tuple{  \emptyset  ,   \emptyset  ,   \cdot  ,  \ottnt{e} }     \longrightarrow^* _{ \ottnt{D} }    \mathbf{C} $ $ \vdash_{\mathit{conf} }^D  \mathbf{C} $.
  Then by \Cref{lem:progress} we have the configuration may step or is one of the final configurations.
\end{proof}

The remainder of this appendix proves \Cref{lem:preservation,lem:progress}. We introduce some auxiliary definitions and
lemmas in \Cref{sec:aux-defn-lem}, give the proof of \Cref{lem:preservation} in \Cref{sec:preservation-proof}, and prove
\Cref{lem:progress} in \Cref{sec:progress-proof}.

\section{Auxiliary Lemmas and Definitions}
\label{sec:aux-defn-lem}

\begin{figure}[t]
  \bcprulessavespacetrue
  \scriptsize
  \leavevmode
  \begin{center}
  \infrule[WF-Env]{
    \forall \,  \mathit{x}  \in \DOM( \Gamma )   \ottsym{.}   \mathcal{L}   \mid   \Gamma   \vdash _{\wf}  \Gamma  \ottsym{(}  \mathit{x}  \ottsym{)} 
  }{
     \mathcal{L}   \vdash _{\wf}  \Gamma 
  }
  \infrule[WF-Int]{
     \mathcal{L}   \mid   \Gamma   \vdash _{\wf}  \varphi 
  }{
     \mathcal{L}   \mid   \Gamma   \vdash _{\wf}   \set{  \nu  \COL \TINT \mid  \varphi }   
  }
  \infrule[WF-Ref]{
     \mathcal{L}   \mid   \Gamma   \vdash _{\wf}  \tau 
  }{
     \mathcal{L}   \mid   \Gamma   \vdash _{\wf}   \tau  \TREF^{ r }  
  }
  \infrule[WF-Phi]{
    \forall \,  \mathit{x}  \in   \ottkw{FPV} \, \ottsym{(}  \varphi  \ottsym{)}  \setminus   \set{ \nu }     \ottsym{.}  \Gamma  \ottsym{(}  \mathit{x}  \ottsym{)}  \ottsym{=}   \set{  \nu  \COL \TINT \mid \_ }  \\
    \mathbf{FCV} \, \ottsym{(}  \varphi  \ottsym{)} \subseteq \ottkw{CV} \, \ottsym{(}  \mathcal{L}  \ottsym{)}
  }{
     \mathcal{L}   \mid   \Gamma   \vdash _{\wf}  \varphi 
  }
  \infrule[WF-Result]{
     \mathcal{L}   \mid   \Gamma   \vdash _{\wf}  \tau  \andalso
     \mathcal{L}   \vdash _{\wf}  \Gamma 
  }{
     \mathcal{L}   \vdash _{\wf}  \tau   \produces   \Gamma 
  }
  \infrule[WF-FunType]{
     \lambda   \vdash _{\wf}   \mathit{x_{{\mathrm{1}}}} \COL \tau_{{\mathrm{1}}} ,\ldots, \mathit{x_{\ottmv{n}}} \COL \tau_{\ottmv{n}}   \\  \lambda   \vdash _{\wf}  \tau   \produces    \mathit{x_{{\mathrm{1}}}} \COL \tau'_{{\mathrm{1}}} ,\ldots, \mathit{x_{\ottmv{n}}} \COL \tau'_{\ottmv{n}}  
  }{
     \vdash _{\wf}   \forall  \lambda .\tuple{ \mathit{x_{{\mathrm{1}}}} \COL \tau_{{\mathrm{1}}} ,\dots, \mathit{x_{\ottmv{n}}} \COL \tau_{\ottmv{n}} }\ra\tuple{ \mathit{x_{{\mathrm{1}}}} \COL \tau'_{{\mathrm{1}}} ,\dots, \mathit{x_{\ottmv{n}}} \COL \tau'_{\ottmv{n}}  \mid  \tau }  
  }
  \infrule[WF-FunEnv]{
    \forall \,  \mathit{f}  \in \DOM( \Theta )   \ottsym{.}   \vdash _{\wf}  \Theta  \ottsym{(}  \mathit{f}  \ottsym{)} 
  }{
     \vdash _{\wf}  \Theta 
  }
  \end{center}
  \bcprulessavespacefalse
  \[
    \begin{array}{rrl}
      \text{Free Ctxt Vars} & \mathbf{FCV} \, \ottsym{(}  \varphi_{{\mathrm{1}}}  \vee  \varphi_{{\mathrm{2}}}  \ottsym{)} & =  \mathbf{FCV} \, \ottsym{(}  \varphi_{{\mathrm{1}}}  \ottsym{)}  \cup  \mathbf{FCV} \, \ottsym{(}  \varphi_{{\mathrm{2}}}  \ottsym{)}  \\
                            & \mathbf{FCV} \, \ottsym{(}   \neg  \varphi   \ottsym{)} & = \mathbf{FCV} \, \ottsym{(}  \varphi  \ottsym{)} \\
                            & \mathbf{FCV} \, \ottsym{(}  \widehat{v}_{{\mathrm{1}}} \, \ottsym{=} \, \widehat{v}_{{\mathrm{2}}}  \ottsym{)} & = \mathbf{FCV} \, \ottsym{(}  \phi  \ottsym{(}  \widehat{v}_{{\mathrm{1}}}  \ottsym{,} \, .. \, \ottsym{,}  \widehat{v}_{\ottmv{n}}  \ottsym{)}  \ottsym{)}  \ottsym{=}   \emptyset  \\
                            & \mathbf{FCV} \, \ottsym{(}    \oldvec{\ell}     \preceq    \mathcal{C}   \ottsym{)} & = \mathbf{FCV} \, \ottsym{(}  \mathcal{C}  \ottsym{)} \\
                            & \mathbf{FCV} \, \ottsym{(}   \ell  :  \mathcal{C}   \ottsym{)} & = \mathbf{FCV} \, \ottsym{(}  \mathcal{C}  \ottsym{)} \\
                            & \mathbf{FCV} \, \ottsym{(}  \mathcal{L}  \ottsym{)} & = \ottkw{CV} \, \ottsym{(}  \mathcal{L}  \ottsym{)} \\
      \text{Ctxt Vars} & \ottkw{CV} \, \ottsym{(}  \oldvec{\ell}  \ottsym{)} & =  \emptyset  \\
                            & \ottkw{CV} \, \ottsym{(}  \lambda  \ottsym{)} & =  \set{ \lambda } 
    \end{array}
  \]
\caption{Well-formedness of types and environments. }
  \label{fig:type-wf}
\end{figure}

The well-formedness rules omitted from the main paper are found in \Cref{fig:type-wf}. We write
$ \mathcal{L}   \vdash _{\wf}  \tau   \produces   \Gamma $ as shorthand for $ \mathcal{L}   \vdash _{\wf}  \Gamma $ and $ \mathcal{L}   \mid   \Gamma   \vdash _{\wf}  \tau $.

We first prove that the subtyping relations are transitive.

\begin{lemma} %
  \label{lem:subtype-transitive}
  \leavevmode
  \begin{enumerate}
  \item \label{part:sub-env-impl} If $\Gamma  \leq  \Gamma'$ then $\models   \sem{ \Gamma }   \implies   \sem{ \Gamma' } $.
  \item \label{part:single-env-trans} If $\Gamma  \vdash  \tau_{{\mathrm{1}}}  \leq  \tau_{{\mathrm{2}}}$ and $\Gamma  \vdash  \tau_{{\mathrm{2}}}  \leq  \tau_{{\mathrm{3}}}$, then $\Gamma  \vdash  \tau_{{\mathrm{1}}}  \leq  \tau_{{\mathrm{3}}}$
  \item \label{part:env-sup-subtype} If $\Gamma  \leq  \Gamma'$ and $\Gamma'  \vdash  \tau_{{\mathrm{1}}}  \leq  \tau_{{\mathrm{2}}}$, then $\Gamma  \vdash  \tau_{{\mathrm{1}}}  \leq  \tau_{{\mathrm{2}}}$
  \item \label{part:env-sub-trans} If $\Gamma  \leq  \Gamma'$,  $\Gamma  \vdash  \tau_{{\mathrm{1}}}  \leq  \tau_{{\mathrm{2}}}$, and $\Gamma'  \vdash  \tau_{{\mathrm{2}}}  \leq  \tau_{{\mathrm{3}}}$, then $\Gamma  \vdash  \tau_{{\mathrm{1}}}  \leq  \tau_{{\mathrm{3}}}$.
  \item If $\Gamma  \leq  \Gamma'$ and $\Gamma'  \leq  \Gamma''$, then $\Gamma  \leq  \Gamma''$.
  \end{enumerate}
\end{lemma}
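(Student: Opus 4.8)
The plan is to prove the five parts in the order listed, exploiting their forward dependency structure: parts~(1) and~(2) are self-contained, part~(3) invokes part~(1), part~(4) invokes parts~(2) and~(3), and part~(5) invokes part~(4). For part~(1), I would invert \rn{S-TyEnv} on $\Gamma \leq \Gamma'$ to obtain $\Gamma \vdash \Gamma(x) \leq \Gamma'(x)$ for every $x \in \DOM(\Gamma')$, together with the domain inclusion $\DOM(\Gamma') \subseteq \DOM(\Gamma)$ that \rn{S-TyEnv} leaves implicit. Since $\sem{\Gamma'} = \bigwedge_{x \in \DOM(\Gamma')} \sem{\Gamma'(x)}_x$, it suffices to derive $\models \sem{\Gamma} \implies \sem{\Gamma'(x)}_x$ for each $x$. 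The reference case is immediate because $\sem{\tau' \TREF^{r}}_x = \top$. For the integer case, subtyping preserves shape (\rn{S-Int} relates only integer types), so $\Gamma(x) = \set{ \nu \COL \TINT \mid \varphi }$ and $\Gamma'(x) = \set{ \nu \COL \TINT \mid \varphi' }$ with $\models \sem{\Gamma} \implies (\varphi \implies \varphi')$, where $\nu$ occurs free and is implicitly universally quantified. Instantiating this valid formula at $\nu := x$ and observing that $[x/\nu]\varphi$ is itself a conjunct of $\sem{\Gamma}$ yields $\models \sem{\Gamma} \implies [x/\nu]\varphi'$, i.e.\ $\models \sem{\Gamma} \implies \sem{\Gamma'(x)}_x$.

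Part~(2) is a routine induction on the shape of $\tau_2$ (equivalently, on the two subtyping derivations). If $\tau_2$ is an integer type, inversion of \rn{S-Int} forces $\tau_1$ and $\tau_3$ to be integer types as well and supplies $\Gamma \models \varphi_1 \implies \varphi_2$ and $\Gamma \models \varphi_2 \implies \varphi_3$; transitivity of $\implies$ inside the logic gives $\Gamma \models \varphi_1 \implies \varphi_3$, and \rn{S-Int} reassembles the conclusion. If $\tau_2$ is a reference type, inversion of \rn{S-Ref} yields the ownership chain $r_1 \ge r_2 \ge r_3$ (hence $r_1 \ge r_3$ by transitivity of $\ge$ on $\RAT$) and the content subtypings, to which the induction hypothesis applies before reassembling with \rn{S-Ref}. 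Part~(3) is proved by induction on the derivation of $\Gamma' \vdash \tau_1 \leq \tau_2$: the \rn{S-Ref} case is purely structural, and the \rn{S-Int} case is exactly where part~(1) is used, since from $\models \sem{\Gamma'} \implies (\varphi_1 \implies \varphi_2)$ and $\models \sem{\Gamma} \implies \sem{\Gamma'}$ we obtain $\models \sem{\Gamma} \implies (\varphi_1 \implies \varphi_2)$ and re-apply \rn{S-Int}.

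Parts~(4) and~(5) are then short compositions. For part~(4), part~(3) converts $\Gamma' \vdash \tau_2 \leq \tau_3$ into $\Gamma \vdash \tau_2 \leq \tau_3$, after which part~(2) composes it with $\Gamma \vdash \tau_1 \leq \tau_2$ to give $\Gamma \vdash \tau_1 \leq \tau_3$. For part~(5), I would invert \rn{S-TyEnv} on both hypotheses, note $\DOM(\Gamma'') \subseteq \DOM(\Gamma') \subseteq \DOM(\Gamma)$, and for each $x \in \DOM(\Gamma'')$ apply part~(4) with $\Gamma \leq \Gamma'$, $\Gamma \vdash \Gamma(x) \leq \Gamma'(x)$, and $\Gamma' \vdash \Gamma'(x) \leq \Gamma''(x)$ to conclude $\Gamma \vdash \Gamma(x) \leq \Gamma''(x)$; \rn{S-TyEnv} then yields $\Gamma \leq \Gamma''$. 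I expect the main obstacle to be concentrated entirely in part~(1): it is the only place where the semantics of $\sem{\cdot}$ and of $\Gamma \models$ must be unfolded, and care is needed both to justify the $\nu := x$ instantiation (which relies on the implicit universal quantification of $\nu$ in the validity statement) and to confirm the domain inclusion $\DOM(\Gamma') \subseteq \DOM(\Gamma)$ that \rn{S-TyEnv} leaves tacit. Everything downstream is bookkeeping layered on top of that single semantic step.
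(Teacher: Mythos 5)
Your proposal is correct and follows essentially the same route as the paper's proof: part~(1) by unfolding $\sem{\cdot}$ and instantiating $\nu := x$ against the conjunct $[x/\nu]\varphi$ of $\sem{\Gamma}$, parts~(2) and~(3) by induction on the subtyping derivations with the \rn{S-Int} case carrying the logical content (part~(3) invoking part~(1)), and parts~(4) and~(5) as immediate compositions. The only difference is presentational: you spell out the tacit domain inclusion, the reference case of part~(1), and the ownership chain in part~(2), all of which the paper elides as trivial.
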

\begin{proof}
  \leavevmode
  \begin{enumerate}
  \item It suffices to show that $\models   \sem{ \Gamma }   \implies  \ottsym{[}  \mathit{x}  \ottsym{/} \, \nu \, \ottsym{]} \, \varphi'$ for any $ \mathit{x}  \in   \DOM( \Gamma' )  $ where $\Gamma'  \ottsym{(}  \mathit{x}  \ottsym{)}  \ottsym{=}   \set{  \nu  \COL \TINT \mid  \varphi' } $.
    From $\Gamma  \leq  \Gamma'$ we have $\models   \sem{ \Gamma }   \wedge  \varphi  \implies  \varphi'$ where $\Gamma  \ottsym{(}  \mathit{x}  \ottsym{)}  \ottsym{=}   \set{  \nu  \COL \TINT \mid  \varphi } $.
    We must then have $\models   \sem{ \Gamma }   \wedge  \ottsym{[}  \mathit{x}  \ottsym{/} \, \nu \, \ottsym{]} \, \varphi  \implies  \ottsym{[}  \mathit{x}  \ottsym{/} \, \nu \, \ottsym{]} \, \varphi'$. From the definition of $ \sem{ \Gamma } $ we have $ \sem{ \Gamma }   \wedge  \ottsym{[}  \mathit{x}  \ottsym{/} \, \nu \, \ottsym{]} \, \varphi  \iff   \sem{ \Gamma } $, giving the desired result.
  \item By induction on $\Gamma  \vdash  \tau_{{\mathrm{1}}}  \leq  \tau_{{\mathrm{2}}}$. We only consider the base case where $\tau_{{\mathrm{1}}}  \ottsym{=}   \set{  \nu  \COL \TINT \mid  \varphi_{{\mathrm{1}}} } $ and $\tau_{{\mathrm{2}}}  \ottsym{=}   \set{  \nu  \COL \TINT \mid  \varphi_{{\mathrm{2}}} } $,
    the case for reference types follows from the induction hypothesis.
    By further inversion on $\Gamma  \vdash  \tau_{{\mathrm{2}}}  \leq  \tau_{{\mathrm{3}}}$ we therefore have:
    
    \begin{bcpcasearray}
      \tau_{{\mathrm{3}}}  \ottsym{=}   \set{  \nu  \COL \TINT \mid  \varphi_{{\mathrm{3}}} }  & \\
      \models   \sem{ \Gamma }   \wedge  \varphi_{{\mathrm{1}}}  \implies  \varphi_{{\mathrm{2}}} & \models   \sem{ \Gamma }   \wedge  \varphi_{{\mathrm{2}}}  \implies  \varphi_{{\mathrm{3}}}
    \end{bcpcasearray}
    
    From which it is immediate that we must have $\models   \sem{ \Gamma }   \wedge  \varphi_{{\mathrm{1}}}  \implies  \varphi_{{\mathrm{3}}}$, whereby \rn{S-Int} gives $\Gamma  \vdash  \tau_{{\mathrm{1}}}  \leq  \tau_{{\mathrm{3}}}$.
  \item By induction on $\Gamma'  \vdash  \tau_{{\mathrm{1}}}  \leq  \tau_{{\mathrm{2}}}$. The case for reference types is immediate from the inductive hypothesis, we focus
    on the base case where $\tau_{{\mathrm{1}}}  \ottsym{=}   \set{  \nu  \COL \TINT \mid  \varphi_{{\mathrm{1}}} } $ and $\tau_{{\mathrm{2}}}  \ottsym{=}   \set{  \nu  \COL \TINT \mid  \varphi_{{\mathrm{2}}} } $, and where $\models   \sem{ \Gamma' }   \wedge  \varphi_{{\mathrm{1}}}  \implies  \varphi_{{\mathrm{2}}}$.
    From $\Gamma  \leq  \Gamma'$ and \Cref{part:sub-env-impl} above, we have $\models   \sem{ \Gamma }   \implies   \sem{ \Gamma' } $ from which we can derive $ \sem{ \Gamma }   \wedge  \varphi_{{\mathrm{1}}}  \implies  \varphi_{{\mathrm{2}}}$, i.e.,
    $\Gamma  \vdash  \tau_{{\mathrm{1}}}  \leq  \tau_{{\mathrm{2}}}$.
  \item Immediate from \Cref{part:single-env-trans,part:env-sup-subtype}.
  \item Immediate corollary of \Cref{part:env-sub-trans}.
  \end{enumerate}
\end{proof}

\begin{definition}
  A value $v$ reaches an integer with $n$ dereferences in heap $\ottnt{H}$ when it is in the relation  $ \ottnt{H} \vdash   v  \Downarrow  n $
  defined as the smallest relation closed under the following rules:
  \begin{enumerate}
  \item If $ v  \in  \mathbb{Z} $ then $ \ottnt{H} \vdash   v  \Downarrow  \ottsym{0} $
  \item If $ \ottnt{H} \vdash   v  \Downarrow  n $ and $\ottnt{H}  \ottsym{(}  \ottmv{a}  \ottsym{)} \, \ottsym{=} \, v$ then $ \ottnt{H} \vdash   \ottmv{a}  \Downarrow  n  \ottsym{+}  \ottsym{1} $
  \end{enumerate}
  We will write $ \ottnt{H} \vdash   v  \Downarrow   | \tau |  $ to indicate a value $v$ is shape consistent with $\tau$ in heap $\ottnt{H}$, where
  $ | \tau | $ is the number of reference constructors in the type $\tau$.
\end{definition}

We also prove a standard inversion lemma to
handle the fact our typing rules are not syntax directed.

\begin{lemma}[Inversion] %
  \label{lem:inversion}
  If 
  $ \Theta   \mid   \mathcal{L}   \mid   \Gamma   \vdash   \ottnt{e_{{\mathrm{0}}}}  :  \tau   \produces   \Gamma' $,
  then there exists some $\Gamma_{\ottmv{p}}$, $\tau_{\ottmv{p}}$, and $\Gamma'_{\ottmv{p}}$
  such that $\Gamma  \leq  \Gamma_{\ottmv{p}}$, $ \oldvec{\ell}   \vdash _{\wf}  \Gamma_{\ottmv{p}} $, $\Gamma'_{\ottmv{p}}  \ottsym{,}  \tau_{\ottmv{p}}  \leq  \Gamma'  \ottsym{,}  \tau$, and:
  \begin{enumerate}
  \item If $\ottnt{e_{{\mathrm{0}}}}  \ottsym{=}  \mathit{x}$ then $\Gamma_{\ottmv{p}}  \ottsym{(}  \mathit{x}  \ottsym{)}  \ottsym{=}  \tau_{\ottmv{p}}  \ottsym{+}  \tau'$,  $\Gamma'_{\ottmv{p}}  \ottsym{=}  \Gamma_{\ottmv{p}}  \ottsym{[}  \mathit{x}  \hookleftarrow  \tau'  \ottsym{]}$.
  \item If $\ottnt{e_{{\mathrm{0}}}}  \ottsym{=}   \LET  \mathit{x}  =  \mathit{y}  \IN  \ottnt{e} $, then
    $ \Theta   \mid   \mathcal{L}   \mid   \Gamma_{\ottmv{p}}  \ottsym{[}  \mathit{y}  \hookleftarrow   \tau_{{\mathrm{1}}}  \wedge_{ \mathit{y} }   \mathit{y}  =_{ \tau_{{\mathrm{1}}} }  \mathit{x}    \ottsym{]}  \ottsym{,}  \mathit{x}  \ottsym{:}  \ottsym{(}   \tau_{{\mathrm{2}}}  \wedge_{ \mathit{x} }   \mathit{x}  =_{ \tau_{{\mathrm{2}}} }  \mathit{y}    \ottsym{)}   \vdash   \ottnt{e}  :  \tau_{\ottmv{p}}   \produces   \Gamma'_{\ottmv{p}} $
    and $ \mathit{x}  \not\in   \DOM( \Gamma'_{\ottmv{p}} )  $ where $\Gamma_{\ottmv{p}}  \ottsym{(}  \mathit{y}  \ottsym{)}  \ottsym{=}  \tau_{{\mathrm{1}}}  \ottsym{+}  \tau_{{\mathrm{2}}}$.
  \item If $\ottnt{e_{{\mathrm{0}}}}  \ottsym{=}   \LET  \mathit{x}  =  n  \IN  \ottnt{e} $ then $ \Theta   \mid   \mathcal{L}   \mid   \Gamma_{\ottmv{p}}  \ottsym{,}  \mathit{x}  \ottsym{:}   \set{  \nu  \COL \TINT \mid  \nu \, \ottsym{=} \, n }    \vdash   \ottnt{e}  :  \tau_{\ottmv{p}}   \produces   \Gamma'_{\ottmv{p}} $ and $ \mathit{x}  \not\in   \DOM( \Gamma'_{\ottmv{p}} )  $.
  \item If $\ottnt{e_{{\mathrm{0}}}}  \ottsym{=}   \IFZERO  \mathit{x}  \THEN  \ottnt{e_{{\mathrm{1}}}}  \ELSE  \ottnt{e_{{\mathrm{2}}}} $ then:
    \begin{itemize}
    \item $\Gamma_{\ottmv{p}}  \ottsym{(}  \mathit{x}  \ottsym{)}  \ottsym{=}   \set{  \nu  \COL \TINT \mid  \varphi } $
    \item $ \Theta   \mid   \mathcal{L}   \mid   \Gamma_{\ottmv{p}}  \ottsym{[}  \mathit{x}  \hookleftarrow   \set{  \nu  \COL \TINT \mid   \varphi  \wedge  \nu \, \ottsym{=} \, \ottsym{0}  }   \ottsym{]}   \vdash   \ottnt{e_{{\mathrm{1}}}}  :  \tau_{\ottmv{p}}   \produces   \Gamma'_{\ottmv{p}} $
    \item $ \Theta   \mid   \mathcal{L}   \mid   \Gamma_{\ottmv{p}}  \ottsym{[}  \mathit{x}  \hookleftarrow   \set{  \nu  \COL \TINT \mid   \varphi  \wedge  \nu \, \neq \, \ottsym{0}  }   \ottsym{]}   \vdash   \ottnt{e_{{\mathrm{2}}}}  :  \tau_{\ottmv{p}}   \produces   \Gamma'_{\ottmv{p}} $
    \end{itemize}
  \item If $\ottnt{e_{{\mathrm{0}}}}  \ottsym{=}   \LET  \mathit{x}  =   \MKREF  \mathit{y}   \IN  \ottnt{e} $, then $\Gamma_{\ottmv{p}}  \ottsym{(}  \mathit{y}  \ottsym{)}  \ottsym{=}  \tau_{{\mathrm{1}}}  \ottsym{+}  \tau_{{\mathrm{2}}}$, $ \Theta   \mid   \mathcal{L}   \mid   \Gamma  \ottsym{[}  \mathit{y}  \hookleftarrow  \tau_{{\mathrm{1}}}  \ottsym{]}  \ottsym{,}  \mathit{x}  \ottsym{:}   \ottsym{(}   \tau_{{\mathrm{2}}}  \wedge_{ \mathit{x} }   \mathit{x}  =_{ \tau_{{\mathrm{2}}} }  \mathit{y}    \ottsym{)}  \TREF^{ \ottsym{1} }    \vdash   \ottnt{e}  :  \tau   \produces   \Gamma'_{\ottmv{p}} $, and $ \mathit{x}  \not\in   \DOM( \Gamma'_{\ottmv{p}} )  $
  \item If $\ottnt{e_{{\mathrm{0}}}}  \ottsym{=}   \LET  \mathit{x}  =   *  \mathit{y}   \IN  \ottnt{e} $, then:
    \begin{itemize}
    \item $\Gamma_{\ottmv{p}}  \ottsym{(}  \mathit{y}  \ottsym{)}  \ottsym{=}   \tau_{{\mathrm{1}}}  \ottsym{+}  \tau_{{\mathrm{2}}}  \TREF^{ r } $
    \item $ \Theta   \mid   \mathcal{L}   \mid   \Gamma_{\ottmv{p}}  \ottsym{[}  \mathit{y}  \hookleftarrow   \tau''  \TREF^{ r }   \ottsym{]}  \ottsym{,}  \mathit{x}  \ottsym{:}  \tau_{{\mathrm{2}}}   \vdash   \ottnt{e}  :  \tau_{\ottmv{p}}   \produces   \Gamma'_{\ottmv{p}} $
    \item $ \mathit{x}  \not\in \DOM( \Gamma'_{\ottmv{p}} ) $
    \item \[
        \tau'' = \begin{cases}
          \ottsym{(}   \tau_{{\mathrm{1}}}  \wedge_{ \mathit{y} }   \mathit{y}  =_{ \tau_{{\mathrm{1}}} }  \mathit{x}    \ottsym{)} &  r   \ottsym{>}   \ottsym{0}  \\
          \tau_{{\mathrm{1}}} & r  \ottsym{=}  \ottsym{0}
        \end{cases}
      \]
    \end{itemize}
  \item If $\ottnt{e_{{\mathrm{0}}}}  \ottsym{=}   \LET  \mathit{x}  =   \mathit{f} ^ \ell (  \mathit{y_{{\mathrm{1}}}} ,\ldots, \mathit{y_{\ottmv{n}}}  )   \IN  \ottnt{e} $ then:
    \begin{itemize}
    \item $\Gamma_{\ottmv{p}}  \ottsym{(}  \mathit{y_{\ottmv{i}}}  \ottsym{)}  \ottsym{=}  \sigma_{\alpha} \, \sigma_{x} \, \tau_{\ottmv{i}}$ for each $i \in \set{1,\ldots,n}$
    \item $ \Theta   \mid   \mathcal{L}   \mid   \Gamma_{\ottmv{p}}  \ottsym{[}  \mathit{y_{\ottmv{i}}}  \hookleftarrow  \sigma_{\alpha} \, \sigma_{x} \, \tau'_{\ottmv{i}}  \ottsym{]}  \ottsym{,}  \mathit{x}  \ottsym{:}  \sigma_{\alpha} \, \sigma_{x} \, \tau   \vdash   \ottnt{e}  :  \tau_{\ottmv{p}}   \produces   \Gamma'_{\ottmv{p}} $
    \item $\Theta  \ottsym{(}  \mathit{f}  \ottsym{)}  \ottsym{=}   \forall  \lambda .\tuple{ \mathit{x_{{\mathrm{1}}}} \COL \tau_{{\mathrm{1}}} ,\dots, \mathit{x_{\ottmv{n}}} \COL \tau_{\ottmv{n}} }\ra\tuple{ \mathit{x_{{\mathrm{1}}}} \COL \tau'_{{\mathrm{1}}} ,\dots, \mathit{x_{\ottmv{n}}} \COL \tau'_{\ottmv{n}}  \mid  \tau } $
    \item $\sigma_{\alpha}  \ottsym{=}  \ottsym{[}  \ell  \ottsym{:}  \mathcal{L}  \ottsym{/}  \lambda  \ottsym{]}$
    \item $\sigma_{x}  \ottsym{=}    [  \mathit{y_{{\mathrm{1}}}}  /  \mathit{x_{{\mathrm{1}}}}  ]  \cdots  [  \mathit{y_{\ottmv{n}}}  /  \mathit{x_{\ottmv{n}}}  ]  $
    \item $ \mathit{x}  \not\in   \DOM( \Gamma'_{\ottmv{p}} )  $
    \end{itemize}
  \item If $\ottnt{e_{{\mathrm{0}}}}  \ottsym{=}   \mathit{y}  \WRITE  \mathit{x}  \SEQ  \ottnt{e} $ then:
    \begin{itemize}
    \item $\Gamma_{\ottmv{p}}  \ottsym{(}  \mathit{x}  \ottsym{)}  \ottsym{=}  \tau_{{\mathrm{1}}}  \ottsym{+}  \tau_{{\mathrm{2}}}$
    \item $\Gamma_{\ottmv{p}}  \ottsym{(}  \mathit{y}  \ottsym{)}  \ottsym{=}   \tau'  \TREF^{ \ottsym{1} } $
    \item $ \Theta   \mid   \mathcal{L}   \mid   \Gamma_{\ottmv{p}}  \ottsym{[}  \mathit{x}  \hookleftarrow  \tau_{{\mathrm{1}}}  \ottsym{]}  \ottsym{[}  \mathit{y}  \hookleftarrow   \ottsym{(}   \tau_{{\mathrm{2}}}  \wedge_{ \mathit{y} }   \mathit{y}  =_{ \tau_{{\mathrm{2}}} }  \mathit{x}    \ottsym{)}  \TREF^{ \ottsym{1} }   \ottsym{]}   \vdash   \ottnt{e}  :  \tau_{\ottmv{p}}   \produces   \Gamma'_{\ottmv{p}} $
    \item The shapes of $\tau'$ and $\tau_{{\mathrm{2}}}$ are similar, i.e, $ | \tau' |  \, \ottsym{=} \,  | \tau_{{\mathrm{2}}} | $.
    \end{itemize}
  \item If $\ottnt{e_{{\mathrm{0}}}}  \ottsym{=}   \ALIAS( \mathit{x}  =  \mathit{y} ) \SEQ  \ottnt{e} $ then there exist some $\tau_{{\mathrm{1}}}, \tau_{{\mathrm{2}}}, \tau'_{{\mathrm{1}}}, \tau'_{{\mathrm{2}}}, r_{{\mathrm{1}}}, r_{{\mathrm{2}}}, r'_{{\mathrm{1}}}, r'_{{\mathrm{2}}}$ such that:
    \begin{itemize}
    \item $  \tau_{{\mathrm{1}}}  \TREF^{ r_{{\mathrm{1}}} }   \ottsym{+}  \tau_{{\mathrm{2}}}  \TREF^{ r_{{\mathrm{2}}} }   \approx    \tau'_{{\mathrm{1}}}  \TREF^{ r'_{{\mathrm{1}}} }   \ottsym{+}  \tau'_{{\mathrm{2}}}  \TREF^{ r'_{{\mathrm{2}}} } $
    \item $\Gamma_{\ottmv{p}}  \ottsym{(}  \mathit{x}  \ottsym{)}  \ottsym{=}   \tau_{{\mathrm{1}}}  \TREF^{ r_{{\mathrm{1}}} } $ and $\Gamma_{\ottmv{p}}  \ottsym{(}  \mathit{y}  \ottsym{)}  \ottsym{=}   \tau_{{\mathrm{2}}}  \TREF^{ r_{{\mathrm{2}}} } $
    \item $ \Theta   \mid   \mathcal{L}   \mid   \Gamma  \ottsym{[}  \mathit{x}  \hookleftarrow   \tau'_{{\mathrm{1}}}  \TREF^{ r'_{{\mathrm{1}}} }   \ottsym{]}  \ottsym{[}  \mathit{y}  \hookleftarrow   \tau'_{{\mathrm{2}}}  \TREF^{ r'_{{\mathrm{2}}} }   \ottsym{]}   \vdash   \ottnt{e}  :  \tau_{\ottmv{p}}   \produces   \Gamma'_{\ottmv{p}} $
    \end{itemize}
  \item If $\ottnt{e_{{\mathrm{0}}}}  \ottsym{=}   \ALIAS( \mathit{x}  = *  \mathit{y} ) \SEQ  \ottnt{e} $ then there exist some $\tau_{{\mathrm{1}}}, \tau_{{\mathrm{2}}}, \tau'_{{\mathrm{1}}}, \tau'_{{\mathrm{2}}}, r_{{\mathrm{1}}}, r_{{\mathrm{2}}}, r'_{{\mathrm{1}}}, r'_{{\mathrm{2}}}, r$, such that:
    \begin{itemize}
    \item $  \tau_{{\mathrm{1}}}  \TREF^{ r_{{\mathrm{1}}} }   \ottsym{+}  \tau_{{\mathrm{2}}}  \TREF^{ r_{{\mathrm{2}}} }   \approx    \tau'_{{\mathrm{1}}}  \TREF^{ r'_{{\mathrm{1}}} }   \ottsym{+}  \tau'_{{\mathrm{2}}}  \TREF^{ r'_{{\mathrm{2}}} } $
    \item $\Gamma_{\ottmv{p}}  \ottsym{(}  \mathit{x}  \ottsym{)}  \ottsym{=}   \tau_{{\mathrm{1}}}  \TREF^{ r_{{\mathrm{1}}} } $ and $\Gamma_{\ottmv{p}}  \ottsym{(}  \mathit{y}  \ottsym{)}  \ottsym{=}   \ottsym{(}   \tau_{{\mathrm{2}}}  \TREF^{ r_{{\mathrm{2}}} }   \ottsym{)}  \TREF^{ r } $
    \item $ \Theta   \mid   \mathcal{L}   \mid   \Gamma  \ottsym{[}  \mathit{x}  \hookleftarrow   \tau'_{{\mathrm{1}}}  \TREF^{ r'_{{\mathrm{1}}} }   \ottsym{]}  \ottsym{[}  \mathit{y}  \hookleftarrow   \ottsym{(}   \tau'_{{\mathrm{2}}}  \TREF^{ r'_{{\mathrm{2}}} }   \ottsym{)}  \TREF^{ r }   \ottsym{]}   \vdash   \ottnt{e}  :  \tau_{\ottmv{p}}   \produces   \Gamma'_{\ottmv{p}} $
    \end{itemize}
  \item If $ \ottnt{e_{{\mathrm{0}}}}  \ottsym{=}  \ottnt{e_{{\mathrm{1}}}}  \SEQ  \ottnt{e_{{\mathrm{2}}}} $ then $ \Theta   \mid   \mathcal{L}   \mid   \Gamma_{\ottmv{p}}   \vdash   \ottnt{e_{{\mathrm{1}}}}  :  \tau_{{\mathrm{1}}}   \produces   \Gamma_{{\mathrm{1}}} $ and $ \Theta   \mid   \mathcal{L}   \mid   \Gamma_{{\mathrm{1}}}   \vdash   \ottnt{e_{{\mathrm{2}}}}  :  \tau_{\ottmv{p}}   \produces   \Gamma'_{\ottmv{p}} $
  \item If $ \ottnt{e_{{\mathrm{0}}}}  \ottsym{=}  \mathit{x}  \SEQ  \ottnt{e'} $ then $ \Theta   \mid   \mathcal{L}   \mid   \Gamma_{\ottmv{p}}  \ottsym{[}  \mathit{x}  \ottsym{:}  \tau'  \ottsym{+}  \tau_{{\mathrm{0}}}  \ottsym{]}   \vdash   \mathit{x}  :  \tau_{{\mathrm{1}}}   \produces   \Gamma_{\ottmv{p}}  \ottsym{[}  \mathit{x}  \hookleftarrow  \tau_{{\mathrm{0}}}  \ottsym{]} $ and $ \Theta   \mid   \mathcal{L}   \mid   \Gamma_{\ottmv{p}}  \ottsym{[}  \mathit{x}  \hookleftarrow  \tau_{{\mathrm{0}}}  \ottsym{]}   \vdash   \ottnt{e'}  :  \tau_{\ottmv{p}}   \produces   \Gamma'_{\ottmv{p}} $
  \item If $\ottnt{e_{{\mathrm{0}}}}  \ottsym{=}   \ASSERT( \varphi ) \SEQ  \ottnt{e} $ then $\Gamma_{\ottmv{p}}  \models  \varphi$ and $ \Theta   \mid   \mathcal{L}   \mid   \Gamma_{\ottmv{p}}   \vdash   \ottnt{e}  :  \tau_{\ottmv{p}}   \produces   \Gamma'_{\ottmv{p}} $
  \end{enumerate}
\end{lemma}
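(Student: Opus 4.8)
The plan is to proceed by induction on the derivation of $\Theta \mid \mathcal{L} \mid \Gamma \vdash e_0 : \tau \produces \Gamma'$, with a case analysis on the last rule applied. The lemma is needed precisely because the system is not syntax-directed: the subsumption rule \rn{T-Sub} may be applied an arbitrary number of times around the single structural rule that matches the shape of $e_0$. The purpose of the pivot data $\Gamma_p, \tau_p, \Gamma'_p$ is to name the environment and types at the point where the unique structural rule fires, while the obligations $\Gamma \le \Gamma_p$ and $\Gamma'_p, \tau_p \le \Gamma', \tau$ absorb all the surrounding applications of \rn{T-Sub}.

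Before the induction I would record two auxiliary facts. First, reflexivity of the subtyping judgments: $\Gamma \vdash \tau \le \tau$ by induction on $\tau$ (using $\Gamma \models \varphi \implies \varphi$ in \rn{S-Int} and $r \ge r$ in \rn{S-Ref}), hence $\Gamma \le \Gamma$ via \rn{S-TyEnv} and $\Gamma, \tau \le \Gamma, \tau$ via \rn{S-Res}. Second, transitivity of the combined result relation on pairs $(\Gamma, \tau)$, which I would assemble from \Cref{lem:subtype-transitive}, in particular the environment transitivity and the mixed form \Cref{part:env-sub-trans}. For the base cases, suppose the last rule is the structural rule whose conclusion matches $e_0$ (for instance \rn{T-Let} when $e_0 = \LET x = y \IN e$). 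Then I take $\Gamma_p = \Gamma$, $\tau_p = \tau$, and $\Gamma'_p = \Gamma'$; the obligations $\Gamma \le \Gamma_p$ and $\Gamma'_p, \tau_p \le \Gamma', \tau$ follow by reflexivity, the well-formedness $\oldvec{\ell} \vdash_{\wf} \Gamma_p$ is the standing well-formedness invariant carried by every typing judgment, and the rule-specific conclusion of the lemma is read off verbatim from the premises of that structural rule. A single uniform argument thus dispatches every shape of $e_0$.

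The only genuinely inductive case is when the last rule is \rn{T-Sub}, giving $\Gamma \le \Gamma_1$, $\Theta \mid \mathcal{L} \mid \Gamma_1 \vdash e_0 : \tau_1 \produces \Gamma_2$, and $\Gamma_2, \tau_1 \le \Gamma', \tau$. Applying the induction hypothesis to the subderivation yields pivots $\Gamma_p, \tau_p, \Gamma'_p$ with $\Gamma_1 \le \Gamma_p$, $\oldvec{\ell} \vdash_{\wf} \Gamma_p$, $\Gamma'_p, \tau_p \le \Gamma_2, \tau_1$, together with the appropriate rule-specific conclusion. Since the pivots are unchanged, both the well-formedness fact and the rule-specific conclusion transfer immediately; it remains only to re-establish the two subtyping bounds by composing $\Gamma \le \Gamma_1 \le \Gamma_p$ and $\Gamma'_p, \tau_p \le \Gamma_2, \tau_1 \le \Gamma', \tau$ through transitivity.

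The hard part will be the transitivity reasoning for the result relation $\Gamma'_p, \tau_p \le \Gamma', \tau$ in the \rn{T-Sub} case: unlike plain environment subtyping, this relation is defined through \rn{S-Res} by introducing a fresh binding and comparing under the resulting environments, so I must verify that the environment part and the type part compose coherently when the intermediate environment $\Gamma_2$ sits between them. This is exactly what the mixed statements of \Cref{lem:subtype-transitive} are designed to supply, and \Cref{part:env-sub-trans} in particular lets the right-hand type subtyping be retaken under the larger environment. Everything else is bookkeeping that reuses the premises of the structural rules unchanged.
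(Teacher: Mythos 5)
Your overall skeleton (induction on the typing derivation, structural rules as base cases discharged by reflexivity of subtyping, \rn{T-Sub} as the only inductive case discharged by transitivity, i.e., \Cref{lem:subtype-transitive}) is the same route the paper takes. However, your claim that ``a single uniform argument thus dispatches every shape of $e_0$'' has a genuine gap at case 12, the case $e_0 = x \SEQ e'$ --- which is precisely the one case the paper flags as nontrivial. When the last structural rule is \rn{T-Seq} applied to $x \SEQ e'$, its premises only give you \emph{some} derivations $\Theta \mid \mathcal{L} \mid \Gamma_p \vdash x : \tau_a \produces \Gamma_a$ and $\Theta \mid \mathcal{L} \mid \Gamma_a \vdash e' : \tau_p \produces \Gamma'_p$; they do not give what case 12 demands, namely that $x$ is typed by a \emph{bare} instance of \rn{T-Var} --- input environment literally of the form $\Gamma_p[x : \tau' + \tau_0]$, output literally $\Gamma_p[x \hookleftarrow \tau_0]$ --- with that same environment $\Gamma_p[x \hookleftarrow \tau_0]$ serving as the input environment for $e'$. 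The subderivation for $x$ may itself be wrapped in arbitrarily many applications of \rn{T-Sub}, so reading the premises of the structural rule ``verbatim'' does not produce the required shape, and your base-case argument does not close this case.

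To repair it you must do extra work in exactly this case: invert the subderivation for $x$ (via case 1 of the lemma, applied to that subderivation), obtaining a pivot $\Gamma_q$ with $\Gamma_p \le \Gamma_q$, $\Gamma_q(x) = \tau' + \tau_0$, and $\Gamma_q[x \hookleftarrow \tau_0], \tau_q \le \Gamma_a, \tau_a$; then promote $\Gamma_q$ to the pivot of the whole statement (composing $\Gamma \le \Gamma_p \le \Gamma_q$ by transitivity), and re-derive the typing of $e'$ under the input environment $\Gamma_q[x \hookleftarrow \tau_0]$ by wrapping its derivation in an application of \rn{T-Sub} using the environment part of $\Gamma_q[x \hookleftarrow \tau_0], \tau_q \le \Gamma_a, \tau_a$. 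This is the ``pushing'' the paper describes: subtypings on the output side of the derivation of $x$ are pushed into the input of the derivation of $e'$, and subtypings on the input side of $x$ are pushed into the outer bound $\Gamma \le \Gamma_p[x : \tau' + \tau_0]$. Separately, the step you single out as ``the hard part'' --- composing the result relation $\Gamma'_p, \tau_p \le \Gamma', \tau$ through the intermediate pair $(\Gamma_2, \tau_1)$ in the \rn{T-Sub} case --- is routine given \Cref{lem:subtype-transitive}; the real difficulty of this lemma lives in case 12, which your proposal leaves unproven.
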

\begin{proof}
  By straightforward induction on the typing relation and
  the transitivity of the subtyping relation \Cref{lem:subtype-transitive}.

  The only case of note is the case for $ \ottnt{e_{{\mathrm{0}}}}  \ottsym{=}  \mathit{x}  \SEQ  \ottnt{e_{{\mathrm{2}}}} $. If the subderivation for $\mathit{x}$
  has applications of \rn{T-Sub} then the subtypings on the output environment can be pushed
  into application subtyping on input environments when typing $\ottnt{e'}$. Similarly,
  any input subtypings on the input environment of the derivation of $\mathit{x}$ can be pushed
  into \rn{T-Sub} rules such that $\Gamma  \leq  \Gamma_{\ottmv{p}}  \ottsym{[}  \mathit{x}  \ottsym{:}  \tau'  \ottsym{+}  \tau_{{\mathrm{0}}}  \ottsym{]}$.
\end{proof}

\Cref{lem:stack_var,lem:ectxt-sub-well-typed} prove some standard properties of execution contexts:
any decomposition of a well-typed expression into a execution context and redex can be well-typed,
and substituting a well-typed expression matching a context's hole type yields a well-typed expression

\begin{lemma} %
  \label{lem:stack_var}
  For any $\ottnt{E}$ and $\ottnt{e'}$ such that $\ottnt{E}  \ottsym{[}  \ottnt{e'}  \ottsym{]}  \ottsym{=}  \ottnt{e}$ where
  $ \Theta   \mid   \mathcal{L}   \mid   \Gamma   \vdash   \ottnt{e}  :  \tau   \produces   \Gamma' $ there exists some $\tau_{{\mathrm{0}}}$, $\Gamma_{{\mathrm{0}}}$ such that
  $\Theta  \mid  \HOLE  \ottsym{:}  \tau_{{\mathrm{0}}}  \produces  \Gamma_{{\mathrm{0}}}  \mid  \mathcal{L}  \vdash_{\mathit{ectx} }  \ottnt{E}  \ottsym{:}  \tau  \produces  \Gamma'$ and
  $ \Theta   \mid   \mathcal{L}   \mid   \Gamma   \vdash   \ottnt{e'}  :  \tau_{{\mathrm{0}}}   \produces   \Gamma_{{\mathrm{0}}} $.
\end{lemma}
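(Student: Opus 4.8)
The plan is to proceed by structural induction on the evaluation context $E$, whose grammar has exactly the two forms $E ::= \HOLE \mid E'' \SEQ \ottnt{e}$. The base case $E = \HOLE$ is immediate: since $\HOLE[\ottnt{e'}] = \ottnt{e'} = \ottnt{e}$, I take $\tau_{{\mathrm{0}}} = \tau$ and $\Gamma_{{\mathrm{0}}} = \Gamma'$, so that the hypothesis $\Theta \mid \mathcal{L} \mid \Gamma \vdash \ottnt{e'} : \tau \produces \Gamma'$ supplies the second conclusion directly, while \rn{TE-Hole} yields $\Theta \mid \HOLE : \tau \produces \Gamma' \mid \mathcal{L} \vdash_{\mathit{ectx}} \HOLE : \tau \produces \Gamma'$ for the first.

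For the inductive step $E = E'' \SEQ \ottnt{e_{{\mathrm{2}}}}$, the substitution equations give $\ottnt{e} = E[\ottnt{e'}] = E''[\ottnt{e'}] \SEQ \ottnt{e_{{\mathrm{2}}}}$. The main obstacle is that the source typing judgment is not syntax-directed because of \rn{T-Sub}, so I cannot simply invert on \rn{T-Seq}. Instead I will apply the Inversion Lemma (\Cref{lem:inversion}) in its sequencing case, obtaining an intermediate environment $\Gamma_{\ottmv{p}}$, a type $\tau_{\ottmv{p}}$, and an output environment $\Gamma'_{\ottmv{p}}$ with the subtyping envelopes $\Gamma \leq \Gamma_{\ottmv{p}}$ and $\Gamma'_{\ottmv{p}}, \tau_{\ottmv{p}} \leq \Gamma', \tau$, together with subderivations $\Theta \mid \mathcal{L} \mid \Gamma_{\ottmv{p}} \vdash E''[\ottnt{e'}] : \tau_{{\mathrm{1}}} \produces \Gamma_{{\mathrm{1}}}$ and $\Theta \mid \mathcal{L} \mid \Gamma_{{\mathrm{1}}} \vdash \ottnt{e_{{\mathrm{2}}}} : \tau_{\ottmv{p}} \produces \Gamma'_{\ottmv{p}}$ (with a minor split according to whether $E''[\ottnt{e'}]$ is a plain variable or a general expression, matching the two sequencing clauses of the lemma; both are handled identically below).

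I then apply the induction hypothesis to $E''$ using the subderivation for $E''[\ottnt{e'}]$ under $\Gamma_{\ottmv{p}}$, obtaining $\tau_{{\mathrm{0}}}, \Gamma_{{\mathrm{0}}}$ with a context typing $\Theta \mid \HOLE : \tau_{{\mathrm{0}}} \produces \Gamma_{{\mathrm{0}}} \mid \mathcal{L} \vdash_{\mathit{ectx}} E'' : \tau_{{\mathrm{1}}} \produces \Gamma_{{\mathrm{1}}}$ and a redex typing $\Theta \mid \mathcal{L} \mid \Gamma_{\ottmv{p}} \vdash \ottnt{e'} : \tau_{{\mathrm{0}}} \produces \Gamma_{{\mathrm{0}}}$. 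The crux is to make the whole context come out with the required result $\tau \produces \Gamma'$ rather than the principal result $\tau_{\ottmv{p}} \produces \Gamma'_{\ottmv{p}}$ extracted by inversion, since the context typing rules carry no subsumption rule of their own. I resolve this by discharging the two subtyping envelopes on opposite ends. Using $\Gamma'_{\ottmv{p}}, \tau_{\ottmv{p}} \leq \Gamma', \tau$, an application of \rn{T-Sub} re-types the continuation as $\Theta \mid \mathcal{L} \mid \Gamma_{{\mathrm{1}}} \vdash \ottnt{e_{{\mathrm{2}}}} : \tau \produces \Gamma'$, and \rn{TE-Seq} then combines this with the context typing for $E''$ to give $\Theta \mid \HOLE : \tau_{{\mathrm{0}}} \produces \Gamma_{{\mathrm{0}}} \mid \mathcal{L} \vdash_{\mathit{ectx}} E'' \SEQ \ottnt{e_{{\mathrm{2}}}} : \tau \produces \Gamma'$. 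Symmetrically, I push the input envelope $\Gamma \leq \Gamma_{\ottmv{p}}$ onto the redex by a second application of \rn{T-Sub}, yielding $\Theta \mid \mathcal{L} \mid \Gamma \vdash \ottnt{e'} : \tau_{{\mathrm{0}}} \produces \Gamma_{{\mathrm{0}}}$, which is the second required conclusion. The soundness of these subsumption steps rests on reflexivity and transitivity of the subtyping relations established in \Cref{lem:subtype-transitive}; the only genuine bookkeeping burden is keeping the two envelopes separate and attaching each to the correct side (continuation versus redex) so that the context's hole type $\tau_{{\mathrm{0}}} \produces \Gamma_{{\mathrm{0}}}$ agrees exactly with the type assigned to $\ottnt{e'}$.
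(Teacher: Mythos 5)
Your proposal is correct and follows essentially the same route as the paper's proof: structural induction on $\ottnt{E}$, with the hole case discharged by \rn{TE-Hole}, and the sequencing case handled by applying the inversion lemma to peel off the \rn{T-Sub} envelopes, invoking the induction hypothesis under the intermediate environment $\Gamma_{\ottmv{p}}$, and then re-attaching the output envelope to the continuation (via \rn{T-Sub} and \rn{TE-Seq}) and the input envelope to the redex (via a second \rn{T-Sub}). The only cosmetic difference is your explicit remark about the two sequencing clauses of the inversion lemma, which the paper glosses over.
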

\begin{proof}
  By induction on the structure of $\ottnt{E}$.
  \begin{rncase}{$\ottnt{E}  \ottsym{=}  \HOLE$}
    Trivial, by taking $\tau_{{\mathrm{0}}}  \ottsym{=}  \tau$ and $\Gamma_{{\mathrm{0}}}  \ottsym{=}  \Gamma'$.
  \end{rncase}
  \begin{rncase}{$ \ottnt{E}  \ottsym{=}  \ottnt{E'} \SEQ \ottnt{e''} $}
    Then $ \ottnt{E}  \ottsym{[}  \ottnt{e'}  \ottsym{]}  \ottsym{=}  \ottnt{E'}  \ottsym{[}  \ottnt{e'}  \ottsym{]}  \SEQ  \ottnt{e''}   \ottsym{=}  \ottnt{e}$. By
    \Cref{lem:inversion} we have
	\[
      \begin{bcpcasearray}
         \Theta   \mid   \mathcal{L}   \mid   \Gamma_{\ottmv{p}}   \vdash   \ottnt{E'}  \ottsym{[}  \ottnt{e'}  \ottsym{]}  :  \tau_{{\mathrm{1}}}   \produces   \Gamma_{{\mathrm{1}}}  &  \Theta   \mid   \mathcal{L}   \mid   \Gamma_{{\mathrm{1}}}   \vdash   \ottnt{e''}  :  \tau_{\ottmv{p}}   \produces   \Gamma'_{\ottmv{p}}  \\
        \Gamma  \leq  \Gamma_{\ottmv{p}} & \Gamma'_{\ottmv{p}}  \ottsym{,}  \tau_{\ottmv{p}}  \leq  \Gamma'  \ottsym{,}  \tau
      \end{bcpcasearray}
    \]
    for some $\Gamma_{\ottmv{p}}$, $\Gamma'_{\ottmv{p}}$, and $\tau_{\ottmv{p}}$.
    
    By the induction hypothesis
    we then have $ \Theta   \mid   \mathcal{L}   \mid   \Gamma_{\ottmv{p}}   \vdash   \ottnt{e'}  :  \tau_{{\mathrm{0}}}   \produces   \Gamma_{{\mathrm{0}}} $ and
    $\Theta  \mid  \HOLE  \ottsym{:}  \tau_{{\mathrm{0}}}  \produces  \Gamma_{{\mathrm{0}}}  \mid  \mathcal{L}  \vdash_{\mathit{ectx} }  \ottnt{E'}  \ottsym{:}  \tau_{{\mathrm{1}}}  \produces  \Gamma_{{\mathrm{1}}}$. for some $\tau_{{\mathrm{0}}}$ and $\Gamma_{{\mathrm{0}}}$.
    
    Next, as $\Gamma'_{\ottmv{p}}  \ottsym{,}  \tau_{\ottmv{p}}  \leq  \Gamma'  \ottsym{,}  \tau$ by an application of \rn{T-Sub},
    we have $ \Theta   \mid   \mathcal{L}   \mid   \Gamma_{{\mathrm{1}}}   \vdash   \ottnt{e''}  :  \tau   \produces   \Gamma' $. By \rn{TE-Seq}, we therefore
    have: $\Theta  \mid  \HOLE  \ottsym{:}  \tau_{{\mathrm{0}}}  \produces  \Gamma_{{\mathrm{0}}}  \mid  \mathcal{L}  \vdash_{\mathit{ectx} }   \ottnt{E'} \SEQ \ottnt{e''}   \ottsym{:}  \tau  \produces  \Gamma'$.

    Finally, from $\Gamma  \leq  \Gamma_{\ottmv{p}}$ and $ \Theta   \mid   \mathcal{L}   \mid   \Gamma_{\ottmv{p}}   \vdash   \ottnt{e'}  :  \tau_{{\mathrm{0}}}   \produces   \Gamma_{{\mathrm{0}}} $,
    and application of \rn{T-Sub}, we have $ \Theta   \mid   \mathcal{L}   \mid   \Gamma   \vdash   \ottnt{e'}  :  \tau_{{\mathrm{0}}}   \produces   \Gamma_{{\mathrm{0}}} $.
  \end{rncase}
\end{proof}

\begin{lemma} %
  \label{lem:ectxt-sub-well-typed}
  If $\Theta  \mid  \HOLE  \ottsym{:}  \tau  \produces  \Gamma'  \mid  \mathcal{L}  \vdash_{\mathit{ectx} }  \ottnt{E}  \ottsym{:}  \tau''  \produces  \Gamma''$ and $ \Theta   \mid   \mathcal{L}   \mid   \Gamma   \vdash   \ottnt{e}  :  \tau   \produces   \Gamma' $ for some $\Gamma$,
  then  $ \Theta   \mid   \mathcal{L}   \mid   \Gamma   \vdash   \ottnt{E}  \ottsym{[}  \ottnt{e}  \ottsym{]}  :  \tau''   \produces   \Gamma'' $.
\end{lemma}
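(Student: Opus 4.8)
The plan is to prove this by structural induction on the evaluation context $E$, equivalently by induction on the derivation of the context typing judgment $\Theta \mid \HOLE : \tau \produces \Gamma' \mid \mathcal{L} \vdash_{\mathit{ectx}} E : \tau'' \produces \Gamma''$. This lemma is the ``plugging'' converse of \Cref{lem:stack_var}: there a well-typed expression was decomposed into a typed context and a typed redex, whereas here a typed context and a typed hole-filler are reassembled into a well-typed expression. Since the intraprocedural evaluation contexts are generated by the minimal grammar $E ::= E' \SEQ e \mid \HOLE$, only two cases arise, corresponding to the rules \rn{TE-Hole} and \rn{TE-Seq}; the return-context rule \rn{TE-Stack} never applies, as such contexts do not occur inside a single $E$.

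In the base case $E = \HOLE$, inversion of the context typing via \rn{TE-Hole} forces $\tau'' = \tau$ and $\Gamma'' = \Gamma'$. By the substitution equation $\HOLE[e] = e$, the goal $\Theta \mid \mathcal{L} \mid \Gamma \vdash \HOLE[e] : \tau'' \produces \Gamma''$ is then literally the second hypothesis $\Theta \mid \mathcal{L} \mid \Gamma \vdash e : \tau \produces \Gamma'$, so nothing further is required.

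In the inductive case $E = E' \SEQ e_0$, inversion of the context typing (only \rn{TE-Seq} applies) yields an intermediate type $\tau_1$ and environment $\Gamma_1$ such that $\Theta \mid \HOLE : \tau \produces \Gamma' \mid \mathcal{L} \vdash_{\mathit{ectx}} E' : \tau_1 \produces \Gamma_1$ and $\Theta \mid \mathcal{L} \mid \Gamma_1 \vdash e_0 : \tau'' \produces \Gamma''$. Applying the induction hypothesis to the subcontext $E'$, together with the given derivation $\Theta \mid \mathcal{L} \mid \Gamma \vdash e : \tau \produces \Gamma'$ (whose output environment $\Gamma'$ matches the hole's output environment exactly), gives $\Theta \mid \mathcal{L} \mid \Gamma \vdash E'[e] : \tau_1 \produces \Gamma_1$. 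Using the substitution equation $(E' \SEQ e_0)[e] = E'[e] \SEQ e_0$ and combining the last two judgments with \rn{T-Seq} then produces $\Theta \mid \mathcal{L} \mid \Gamma \vdash E'[e] \SEQ e_0 : \tau'' \produces \Gamma''$, which is the desired $\Theta \mid \mathcal{L} \mid \Gamma \vdash E[e] : \tau'' \produces \Gamma''$.

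I do not expect a genuine obstacle here: the argument is essentially syntactic bookkeeping, made easy by the two-production context grammar, and---unlike \Cref{lem:stack_var}---it requires no appeal to subtyping or to the inversion lemma, since the hole's type $\tau$ and output environment $\Gamma'$ coincide with those of the filler $e$ by hypothesis. The only points demanding care are threading the intermediate $\tau_1, \Gamma_1$ correctly through the induction hypothesis and invoking the definitional equations for $E[\cdot]$ at the right moments so that the premises line up for \rn{T-Seq}.
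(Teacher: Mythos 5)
Your proposal is correct and follows essentially the same route as the paper's own proof: structural induction on the context typing derivation, with the base case \rn{TE-Hole} discharged by the definitional equation $\HOLE[\ottnt{e}] = \ottnt{e}$, and the \rn{TE-Seq} case handled by applying the induction hypothesis to the subcontext and closing with \rn{T-Seq}. Your added remark that \rn{TE-Stack} cannot arise for an evaluation context $\ottnt{E}$ (as opposed to a return context $F$) is an accurate observation the paper leaves implicit.
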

\begin{proof}
  By induction on the typing derivation of $\ottnt{E}$.
  \begin{rneqncase}{TE-Seq}{
      \ottnt{E} =  \ottnt{E'} \SEQ \ottnt{e'}  \\  \ottnt{E}  \ottsym{[}  \ottnt{e}  \ottsym{]}  \ottsym{=}  \ottnt{E'}  \ottsym{[}  \ottnt{e}  \ottsym{]}  \SEQ  \ottnt{e'}  \\
      \Theta  \mid  \HOLE  \ottsym{:}  \tau  \produces  \Gamma'  \mid  \mathcal{L}  \vdash_{\mathit{ectx} }  \ottnt{E'}  \ottsym{:}  \tau_{{\mathrm{0}}}  \produces  \Gamma_{{\mathrm{0}}} \\  \Theta   \mid   \mathcal{L}   \mid   \Gamma_{{\mathrm{0}}}   \vdash   \ottnt{e'}  :  \tau''   \produces   \Gamma''  \\
    }
    By the induction hypothesis we have $ \Theta   \mid   \mathcal{L}   \mid   \Gamma   \vdash   \ottnt{E'}  \ottsym{[}  \ottnt{e}  \ottsym{]}  :  \tau_{{\mathrm{0}}}   \produces   \Gamma_{{\mathrm{0}}} $. We then
    have our result via an application of \rn{T-Seq}.
  \end{rneqncase}
  \begin{rncase}{TE-Hole}
    Trivial, as $\tau  \ottsym{=}  \tau''$ and $\Gamma'  \ottsym{=}  \Gamma''$ and $\ottnt{E}  \ottsym{[}  \ottnt{e}  \ottsym{]}  \ottsym{=}  \ottnt{e}$.
  \end{rncase}
\end{proof}

\def\subref#1#2{\Cref{#1} (part \labelcref{#2})}

\begin{lemma}[Context Variable Substitution]\label{lem:ctxt-substitution}
  \leavevmode
  \begin{enumerate}
  \item \label{itm:ctxt-sub-distribute} If $\tau_{{\mathrm{3}}}  \ottsym{=}  \tau_{{\mathrm{1}}}  \ottsym{+}  \tau_{{\mathrm{2}}}$ then $\ottsym{[}  \mathcal{L}  \ottsym{/}  \lambda  \ottsym{]} \, \tau_{{\mathrm{3}}}  \ottsym{=}  \ottsym{[}  \mathcal{L}  \ottsym{/}  \lambda  \ottsym{]} \, \tau_{{\mathrm{1}}}  \ottsym{+}  \ottsym{[}  \mathcal{L}  \ottsym{/}  \lambda  \ottsym{]} \, \tau_{{\mathrm{2}}}$
  \item \label{itm:ctxt-sub-wf}For any $\oldvec{\ell}$:
    \begin{enumerate}
    \item If $ \lambda   \vdash _{\wf}  \Gamma $ then $ \oldvec{\ell}   \vdash _{\wf}  \ottsym{[}  \oldvec{\ell}  \ottsym{/}  \lambda  \ottsym{]}  \Gamma $
    \item If $ \lambda   \mid   \Gamma   \vdash _{\wf}  \tau $ then $ \oldvec{\ell}   \mid   \ottsym{[}  \oldvec{\ell}  \ottsym{/}  \lambda  \ottsym{]}  \Gamma   \vdash _{\wf}  \ottsym{[}  \oldvec{\ell}  \ottsym{/}  \lambda  \ottsym{]} \, \tau $
    \item If $ \lambda   \vdash _{\wf}  \tau   \produces   \Gamma $ then $ \oldvec{\ell}   \vdash _{\wf}  \ottsym{[}  \oldvec{\ell}  \ottsym{/}  \lambda  \ottsym{]} \, \tau   \produces   \ottsym{[}  \oldvec{\ell}  \ottsym{/}  \lambda  \ottsym{]}  \Gamma $
    \end{enumerate}
  \item \label{itm:ctxt-sub-subtype} For any $\Gamma$, $\tau_{{\mathrm{1}}}$, $\tau_{{\mathrm{2}}}$, $\lambda$ and $\oldvec{\ell}$, If $\Gamma  \vdash  \tau_{{\mathrm{1}}}  \leq  \tau_{{\mathrm{2}}}$, then $\ottsym{[}  \oldvec{\ell}  \ottsym{/}  \lambda  \ottsym{]}  \Gamma  \vdash  \ottsym{[}  \oldvec{\ell}  \ottsym{/}  \lambda  \ottsym{]} \, \tau_{{\mathrm{1}}}  \leq  \ottsym{[}  \oldvec{\ell}  \ottsym{/}  \lambda  \ottsym{]} \, \tau_{{\mathrm{2}}}$
  \item \label{itm:ctxt-subst-assert} If $\Gamma  \models  \varphi$ where $ \lambda  \not\in  \mathbf{FCV} \, \ottsym{(}  \varphi  \ottsym{)} $ then $\ottsym{[}  \oldvec{\ell}  \ottsym{/}  \lambda  \ottsym{]}  \Gamma  \models  \varphi$
  \item \label{itm:ctxt-subst-well-typed} If $ \Theta   \mid   \lambda   \mid   \Gamma   \vdash   \ottnt{e}  :  \tau   \produces   \Gamma' $ then $ \Theta   \mid   \oldvec{\ell}   \mid   \ottsym{[}  \oldvec{\ell}  \ottsym{/}  \lambda  \ottsym{]}  \Gamma   \vdash   \ottnt{e}  :  \ottsym{[}  \oldvec{\ell}  \ottsym{/}  \lambda  \ottsym{]} \, \tau   \produces   \ottsym{[}  \oldvec{\ell}  \ottsym{/}  \lambda  \ottsym{]}  \Gamma' $
  \end{enumerate}
\end{lemma}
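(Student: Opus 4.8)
The plan is to prove the five parts of \Cref{lem:ctxt-substitution} in order, since each later part depends on the earlier ones, mirroring the dependency structure already seen in \Cref{lem:subtype-transitive}. For \labelcref{itm:ctxt-sub-distribute}, I would proceed by case analysis on the type addition rules \rn{Tadd-Int} and \rn{Tadd-Ref}. Since $\ottsym{[}  \mathcal{L}  \ottsym{/}  \lambda  \ottsym{]}$ is a syntactic substitution that commutes with the logical connective $\wedge$ and with the reference constructor and ownership addition, the key observation is that $\ottsym{[}  \mathcal{L}  \ottsym{/}  \lambda  \ottsym{]}$ distributes over both $\wedge$ (for integer refinements) and over $\TREF$, $+$ on ownerships (for reference types). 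This is essentially a structural induction on the type, checking the two base rules. For \labelcref{itm:ctxt-sub-wf}, the three sub-parts are proved by mutual induction following the well-formedness rules in \Cref{fig:type-wf}. The crucial case is \rn{WF-Phi}: I must show $\mathbf{FCV} \, \ottsym{(}  \ottsym{[}  \oldvec{\ell}  \ottsym{/}  \lambda  \ottsym{]} \, \varphi  \ottsym{)} \subseteq \ottkw{CV} \, \ottsym{(}  \oldvec{\ell}  \ottsym{)}  \ottsym{=}   \emptyset $, which follows because substituting the concrete context $\oldvec{\ell}$ for $\lambda$ eliminates the only free context variable (per the $\mathbf{FCV}$ equations in \Cref{fig:type-wf}), and any context query $  \oldvec{\ell}'     \preceq    \lambda $ becomes $  \oldvec{\ell}'     \preceq    \oldvec{\ell} $, which is a closed predicate simplifying to $\top$ or $\bot$.

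For \labelcref{itm:ctxt-sub-subtype} I would induct on the derivation of $\Gamma  \vdash  \tau_{{\mathrm{1}}}  \leq  \tau_{{\mathrm{2}}}$, with cases \rn{S-Int} and \rn{S-Ref}. The reference case is immediate from the induction hypothesis together with the fact that substitution leaves the ownership values unchanged so that $ r_{{\mathrm{1}}}   \ge   r_{{\mathrm{2}}} $ is preserved. For \rn{S-Int}, I need that validity of the implication $ \sem{ \Gamma }   \wedge  \varphi_{{\mathrm{1}}}  \implies  \varphi_{{\mathrm{2}}}$ is preserved under substitution of a concrete context for $\lambda$; this is the same kind of semantic-preservation fact required for \labelcref{itm:ctxt-subst-assert}. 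Part \labelcref{itm:ctxt-subst-assert} reduces, by the definition of $\Gamma  \models  \varphi$ as $\models   \sem{ \Gamma }   \implies  \varphi$, to showing that applying $\ottsym{[}  \oldvec{\ell}  \ottsym{/}  \lambda  \ottsym{]}$ to a valid formula yields a valid formula. The intuition is that the context variable $\lambda$ ranges over \emph{all} concrete contexts; since the substitution instantiates $\lambda$ at a particular value $\oldvec{\ell}$, validity (universal over all instantiations of $\lambda$) specializes to validity at that instance. Since $ \lambda  \not\in  \mathbf{FCV} \, \ottsym{(}  \varphi  \ottsym{)} $, the right-hand side $\varphi$ is untouched, and only $ \sem{ \Gamma } $ is specialized, which can only make the antecedent logically stronger or unchanged, preserving the implication.

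Finally, for \labelcref{itm:ctxt-subst-well-typed} I would induct on the typing derivation $ \Theta   \mid   \lambda   \mid   \Gamma   \vdash   \ottnt{e}  :  \tau   \produces   \Gamma' $, applying the inversion structure of \Cref{lem:inversion} and pushing the substitution $\ottsym{[}  \oldvec{\ell}  \ottsym{/}  \lambda  \ottsym{]}$ through every environment, type, and refinement appearing in each rule. The earlier parts supply exactly the commutation facts needed: \labelcref{itm:ctxt-sub-distribute} handles the splits $\Gamma  \ottsym{[}  \mathit{x}  \ottsym{:}  \tau_{{\mathrm{1}}}  \ottsym{+}  \tau_{{\mathrm{2}}}  \ottsym{]}$ in \rn{T-Let}, \rn{T-MkRef}, \rn{T-Deref}, \rn{T-Assign}; \labelcref{itm:ctxt-sub-wf} discharges the well-formedness side-conditions; \labelcref{itm:ctxt-sub-subtype} handles \rn{T-Sub} and the $\approx$ premises of \rn{T-Alias}/\rn{T-AliasPtr}; and \labelcref{itm:ctxt-subst-assert} handles the $\Gamma  \models  \varphi$ premise of \rn{T-Assert}. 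The main obstacle is the \rn{T-Call} case: here the derivation already contains a substitution $\sigma_{\alpha}  \ottsym{=}  \ottsym{[}  \ell  \ottsym{:}  \lambda  \ottsym{/}  \lambda'  \ottsym{]}$ over the callee's quantified variable $\lambda'$, and I must show that $\ottsym{[}  \oldvec{\ell}  \ottsym{/}  \lambda  \ottsym{]}$ commutes correctly with $\sigma_{\alpha}$ — concretely that $\ottsym{[}  \oldvec{\ell}  \ottsym{/}  \lambda  \ottsym{]} \, \ottsym{(}  \ottsym{[}  \ell  \ottsym{:}  \lambda  \ottsym{/}  \lambda'  \ottsym{]} \, \tau_{\ottmv{i}}  \ottsym{)}$ equals $\ottsym{[}  \ell  \ottsym{:}  \oldvec{\ell}  \ottsym{/}  \lambda'  \ottsym{]} \, \tau_{\ottmv{i}}$ — so that after substitution the call is retyped under the instantiated context $\ell  \ottsym{:}  \oldvec{\ell}$. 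Establishing this substitution-composition identity, and checking it respects the variable substitution $\sigma_{x}$ on the disjoint program variables, is the delicate bookkeeping step; the remaining cases are routine applications of the induction hypothesis combined with the earlier parts.
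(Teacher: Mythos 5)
Your proposal is correct and takes essentially the same approach as the paper's proof: the same part-by-part dependency structure, the same $\mathbf{FCV}$-emptiness argument for well-formedness, the same instantiation-preserves-validity argument underlying parts 3 and 4, and the same substitution-composition identity $\ottsym{[}  \oldvec{\ell}  \ottsym{/}  \lambda  \ottsym{]} \, \ottsym{[}  \ell  \ottsym{:}  \lambda  \ottsym{/}  \lambda'  \ottsym{]}  \ottsym{=}  \ottsym{[}  \ell  \ottsym{:}  \oldvec{\ell}  \ottsym{/}  \lambda'  \ottsym{]}$ as the crux of the \textsc{T-Call} case. The only cosmetic difference is that the paper inducts directly on the typing derivation (handling \textsc{T-Sub} as an ordinary case) rather than routing through the inversion lemma, which changes nothing substantive.
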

\begin{proof}\leavevmode
  \begin{enumerate}
  \item By straightforward induction on the definition of $\tau_{{\mathrm{1}}}  \ottsym{+}  \tau_{{\mathrm{2}}}  \ottsym{=}  \tau_{{\mathrm{3}}}$.
  \item Observe that any substitution of context variables cannot change simple
    types within $\Gamma$ and thus all types and refinements remain well-formed with respect
    to integer variables in $\Gamma$.
    It thus suffices to show that $\mathbf{FCV} \, \ottsym{(}  \ottsym{[}  \oldvec{\ell}  \ottsym{/}  \lambda  \ottsym{]} \, \varphi  \ottsym{)} \subseteq \ottkw{CV} \, \ottsym{(}  \oldvec{\ell}  \ottsym{)} = \emptyset$
    for any refinement $\varphi$ appearing in $\tau$ or a type in $\Gamma$.
    By the assumed well-formedness of $\tau$ with respect to context
    variable $\lambda$ (resp. $\Gamma$), after substitution all free
    context variables in $\tau$ (resp. the types in $\Gamma$) will be
    replaced with $\oldvec{\ell}$. Thus, post-substitution no free context
    variables appear in the refinement of $\ottsym{[}  \oldvec{\ell}  \ottsym{/}  \lambda  \ottsym{]} \, \tau$ (resp. refinements of
    types in $\ottsym{[}  \oldvec{\ell}  \ottsym{/}  \lambda  \ottsym{]}  \Gamma$), trivially satisfying our requirements.
  \item If $\lambda$ does not appear free in $\tau_{{\mathrm{1}}}$, $\tau_{{\mathrm{2}}}$ or $\Gamma$, then the result trivially holds. Let us then assume
    $\lambda$ appears free. We prove the result by induction on the subtyping derivation.

    \begin{rneqncase}{S-Ref}{
        \tau_{{\mathrm{1}}}  \ottsym{=}   \tau'_{{\mathrm{1}}}  \TREF^{ r_{{\mathrm{1}}} }  & \tau_{{\mathrm{2}}}  \ottsym{=}   \tau'_{{\mathrm{2}}}  \TREF^{ r_{{\mathrm{2}}} }  \\
        \ottsym{[}  \oldvec{\ell}  \ottsym{/}  \lambda  \ottsym{]} \, \tau_{{\mathrm{1}}}  \ottsym{=}   \ottsym{(}  \ottsym{[}  \oldvec{\ell}  \ottsym{/}  \lambda  \ottsym{]} \, \tau'_{{\mathrm{1}}}  \ottsym{)}  \TREF^{ r_{{\mathrm{1}}} }  & \ottsym{[}  \oldvec{\ell}  \ottsym{/}  \lambda  \ottsym{]} \, \tau_{{\mathrm{2}}}  \ottsym{=}   \ottsym{(}  \ottsym{[}  \oldvec{\ell}  \ottsym{/}  \lambda  \ottsym{]} \, \tau'_{{\mathrm{2}}}  \ottsym{)}  \TREF^{ r_{{\mathrm{2}}} }  \\
        \Gamma  \vdash  \tau'_{{\mathrm{1}}}  \leq  \tau'_{{\mathrm{2}}} & r_{{\mathrm{2}}}  \le  r_{{\mathrm{1}}}
      }
      We must show that $\ottsym{[}  \oldvec{\ell}  \ottsym{/}  \lambda  \ottsym{]}  \Gamma  \vdash  \ottsym{[}  \oldvec{\ell}  \ottsym{/}  \lambda  \ottsym{]} \, \tau'_{{\mathrm{1}}}  \leq  \ottsym{[}  \oldvec{\ell}  \ottsym{/}  \lambda  \ottsym{]} \, \tau'_{{\mathrm{2}}}$
      which holds immediately from the induction hypothesis.
    \end{rneqncase}

    \begin{rneqncase}{S-Int}{
        \tau_{{\mathrm{1}}}  \ottsym{=}   \set{  \nu  \COL \TINT \mid  \varphi_{{\mathrm{1}}} }  & \tau_{{\mathrm{2}}}  \ottsym{=}   \set{  \nu  \COL \TINT \mid  \varphi_{{\mathrm{2}}} }  \\
        \ottsym{[}  \oldvec{\ell}  \ottsym{/}  \lambda  \ottsym{]} \, \tau_{{\mathrm{1}}}  \ottsym{=}   \set{  \nu  \COL \TINT \mid  \ottsym{[}  \oldvec{\ell}  \ottsym{/}  \lambda  \ottsym{]} \, \varphi_{{\mathrm{1}}} }  & \ottsym{[}  \oldvec{\ell}  \ottsym{/}  \lambda  \ottsym{]} \, \tau_{{\mathrm{2}}}  \ottsym{=}   \set{  \nu  \COL \TINT \mid  \ottsym{[}  \oldvec{\ell}  \ottsym{/}  \lambda  \ottsym{]} \, \varphi_{{\mathrm{2}}} }  \\
        \Gamma  \models  \varphi_{{\mathrm{1}}}  \implies  \varphi_{{\mathrm{2}}}
      }
      We must show that $\ottsym{[}  \oldvec{\ell}  \ottsym{/}  \lambda  \ottsym{]}  \Gamma  \models  \ottsym{[}  \oldvec{\ell}  \ottsym{/}  \lambda  \ottsym{]} \, \varphi_{{\mathrm{1}}}  \implies  \ottsym{[}  \oldvec{\ell}  \ottsym{/}  \lambda  \ottsym{]} \, \varphi_{{\mathrm{2}}}$, i.e. $\models   \sem{ \ottsym{[}  \oldvec{\ell}  \ottsym{/}  \lambda  \ottsym{]}  \Gamma }   \wedge  \ottsym{[}  \oldvec{\ell}  \ottsym{/}  \lambda  \ottsym{]} \, \varphi_{{\mathrm{1}}}  \implies  \ottsym{[}  \oldvec{\ell}  \ottsym{/}  \lambda  \ottsym{]} \, \varphi_{{\mathrm{2}}}$.
      From our assumption that $\Gamma  \models  \varphi_{{\mathrm{1}}}  \implies  \varphi_{{\mathrm{2}}}$ we have that $\models   \sem{ \Gamma }   \wedge  \varphi_{{\mathrm{1}}}  \implies  \varphi_{{\mathrm{2}}}$ is valid,
      whereby the formula $ \sem{ \Gamma }   \wedge  \varphi_{{\mathrm{1}}}  \implies  \varphi_{{\mathrm{2}}}$ is true for any possible concrete valuation of the free context
      variable $\lambda$. As $\ottsym{[}  \oldvec{\ell}  \ottsym{/}  \lambda  \ottsym{]} \,  \sem{ \Gamma } $ is equivalent to $ \sem{ \ottsym{[}  \oldvec{\ell}  \ottsym{/}  \lambda  \ottsym{]}  \Gamma } $
      we have the formula $ \sem{ \ottsym{[}  \oldvec{\ell}  \ottsym{/}  \lambda  \ottsym{]}  \Gamma }   \wedge  \ottsym{[}  \oldvec{\ell}  \ottsym{/}  \lambda  \ottsym{]} \, \varphi_{{\mathrm{1}}}  \implies  \ottsym{[}  \oldvec{\ell}  \ottsym{/}  \lambda  \ottsym{]} \, \varphi_{{\mathrm{2}}}$ must also be valid.
    \end{rneqncase}
  \item If $\lambda$ does not appear free in $ \sem{ \Gamma } $, then the result trivially holds. Otherwise
    $\models   \sem{ \Gamma }   \implies  \varphi$ holds for any concrete valuation of the free context variable $\lambda$.
    Then the formula $\models   \sem{ \ottsym{[}  \oldvec{\ell}  \ottsym{/}  \lambda  \ottsym{]}  \Gamma }   \implies  \varphi$ must be valid from the equivalence of $\ottsym{[}  \oldvec{\ell}  \ottsym{/}  \lambda  \ottsym{]} \,  \sem{ \Gamma } $ and $ \sem{ \ottsym{[}  \oldvec{\ell}  \ottsym{/}  \lambda  \ottsym{]}  \Gamma } $.
  \item By induction on the typing derivation $ \Theta   \mid   \lambda   \mid   \Gamma   \vdash   \ottnt{e}  :  \tau   \produces   \Gamma' $.
    In every case, that $ \oldvec{\ell}   \vdash _{\wf}  \ottsym{[}  \oldvec{\ell}  \ottsym{/}  \lambda  \ottsym{]} \, \tau   \produces   \ottsym{[}  \oldvec{\ell}  \ottsym{/}  \lambda  \ottsym{]}  \Gamma' $ and $ \oldvec{\ell}   \vdash _{\wf}  \ottsym{[}  \oldvec{\ell}  \ottsym{/}  \lambda  \ottsym{]}  \Gamma $
    holds from \Cref{itm:ctxt-sub-wf}.
    
    \begin{rneqncase}{T-Var}{
        \ottnt{e}  \ottsym{=}  \mathit{x} & \tau  \ottsym{=}  \tau_{{\mathrm{2}}} \\
        \Gamma  \ottsym{=}  \Gamma_{{\mathrm{0}}}  \ottsym{[}  \mathit{x}  \ottsym{:}  \tau_{{\mathrm{1}}}  \ottsym{+}  \tau_{{\mathrm{2}}}  \ottsym{]} & \Gamma'  \ottsym{=}  \Gamma_{{\mathrm{0}}}  \ottsym{[}  \mathit{x}  \hookleftarrow  \tau_{{\mathrm{2}}}  \ottsym{]}
      }
      By application of \Cref{itm:ctxt-sub-distribute}.
    \end{rneqncase}
    \begin{rneqncase}{T-LetInt}{
        \ottnt{e}  \ottsym{=}   \LET  \mathit{x}  =  n  \IN  \ottnt{e'}  &  \Theta   \mid   \lambda   \mid   \Gamma  \ottsym{,}  \mathit{x}  \ottsym{:}   \set{  \nu  \COL \TINT \mid  \nu \, \ottsym{=} \, n }    \vdash   \ottnt{e'}  :  \tau   \produces   \Gamma'  \\
         \mathit{x}  \not\in   \DOM( \Gamma' )   &
      }
      The induction hypothesis gives
      \[
         \Theta   \mid   \oldvec{\ell}   \mid   \ottsym{[}  \oldvec{\ell}  \ottsym{/}  \lambda  \ottsym{]}  \Gamma  \ottsym{,}  \mathit{x}  \ottsym{:}   \set{  \nu  \COL \TINT \mid  \nu \, \ottsym{=} \, n }    \vdash   \ottnt{e}  :  \ottsym{[}  \oldvec{\ell}  \ottsym{/}  \lambda  \ottsym{]} \, \tau   \produces   \ottsym{[}  \oldvec{\ell}  \ottsym{/}  \lambda  \ottsym{]}  \Gamma' 
      \]
      We conclude $ \Theta   \mid   \oldvec{\ell}   \mid   \ottsym{[}  \oldvec{\ell}  \ottsym{/}  \lambda  \ottsym{]}  \Gamma   \vdash    \LET  \mathit{x}  =  n  \IN  \ottnt{e}   :  \ottsym{[}  \oldvec{\ell}  \ottsym{/}  \lambda  \ottsym{]} \, \tau   \produces   \ottsym{[}  \oldvec{\ell}  \ottsym{/}  \lambda  \ottsym{]}  \Gamma' $
      as required.
    \end{rneqncase}

    \begin{rneqncase}{T-Let}{
        & \ottnt{e}  \ottsym{=}   \LET  \mathit{x}  =  \mathit{y}  \IN  \ottnt{e'}  &  \mathit{x}  \not\in   \DOM( \Gamma' )   \\
        &  \Theta   \mid   \lambda   \mid   \Gamma_{{\mathrm{1}}}   \vdash   \ottnt{e'}  :  \tau   \produces   \Gamma'  & \Gamma_{{\mathrm{1}}}  \ottsym{=}  \Gamma  \ottsym{[}  \mathit{y}  \hookleftarrow  \ottsym{(}   \tau_{{\mathrm{1}}}  \wedge_{ \mathit{y} }   \mathit{y}  =_{ \tau_{{\mathrm{1}}} }  \mathit{x}    \ottsym{)}  \ottsym{]}  \ottsym{,}  \mathit{x}  \ottsym{:}  \ottsym{(}   \tau_{{\mathrm{2}}}  \wedge_{ \mathit{x} }   \mathit{x}  =_{ \tau_{{\mathrm{2}}} }  \mathit{y}    \ottsym{)} \\
        & \Gamma \quad \ottsym{(}  \mathit{y}  \ottsym{)}  \ottsym{=}   \mathit{y} \COL \tau_{{\mathrm{1}}}  \ottsym{+}  \tau_{{\mathrm{2}}} 
      }
      By \Cref{itm:ctxt-sub-distribute},
      $\ottsym{(}  \ottsym{[}  \oldvec{\ell}  \ottsym{/}  \lambda  \ottsym{]}  \Gamma  \ottsym{)}  \ottsym{(}  \mathit{y}  \ottsym{)}  \ottsym{=}  \ottsym{[}  \oldvec{\ell}  \ottsym{/}  \lambda  \ottsym{]} \, \ottsym{(}  \tau_{{\mathrm{1}}}  \ottsym{+}  \tau_{{\mathrm{2}}}  \ottsym{)}  \ottsym{=}  \ottsym{(}  \ottsym{[}  \oldvec{\ell}  \ottsym{/}  \lambda  \ottsym{]} \, \tau_{{\mathrm{1}}}  \ottsym{+}  \ottsym{[}  \oldvec{\ell}  \ottsym{/}  \lambda  \ottsym{]} \, \tau_{{\mathrm{2}}}  \ottsym{)}$. We must then
      show that $ \Theta   \mid   \oldvec{\ell}   \mid   \Gamma'_{{\mathrm{1}}}   \vdash   \ottnt{e'}  :  \ottsym{[}  \oldvec{\ell}  \ottsym{/}  \lambda  \ottsym{]} \, \tau   \produces   \ottsym{[}  \oldvec{\ell}  \ottsym{/}  \lambda  \ottsym{]}  \Gamma' $
      where
      \[
        \Gamma'_{{\mathrm{1}}}  \ottsym{=}  \ottsym{(}  \ottsym{[}  \oldvec{\ell}  \ottsym{/}  \lambda  \ottsym{]}  \Gamma  \ottsym{)}  \ottsym{[}  \mathit{y}  \hookleftarrow   \ottsym{[}  \oldvec{\ell}  \ottsym{/}  \lambda  \ottsym{]} \, \tau_{{\mathrm{1}}}  \wedge_{ \mathit{y} }  \mathit{y} \, \ottsym{=} \, \mathit{x}   \ottsym{]}  \ottsym{,}  \mathit{x}  \ottsym{:}  \ottsym{(}   \ottsym{[}  \oldvec{\ell}  \ottsym{/}  \lambda  \ottsym{]} \, \tau_{{\mathrm{2}}}  \wedge_{ \mathit{x} }  \mathit{x} \, \ottsym{=} \, \mathit{y}   \ottsym{)}
      \]
      As $\Gamma'_{{\mathrm{1}}}  \ottsym{=}  \ottsym{[}  \oldvec{\ell}  \ottsym{/}  \lambda  \ottsym{]}  \Gamma_{{\mathrm{1}}}$ the induction hypothesis gives the required typing judgment.
    \end{rneqncase}

    \begin{rncase}{T-If,T-Seq}
      By trivial application of the inductive hypothesis.
    \end{rncase}

    \begin{rncase}{T-MkRef,T-Deref}
      By reasoning similar to \rn{T-Let}.
    \end{rncase}

    \begin{rneqncase}{T-Call}{
        \ottnt{e}  \ottsym{=}   \LET  \mathit{x}  =   \mathit{f} ^ \ell (  \mathit{y_{{\mathrm{1}}}} ,\ldots, \mathit{y_{\ottmv{n}}}  )   \IN  \ottnt{e'}  \\
        \sigma_{x}  \ottsym{=}    [  \mathit{y_{{\mathrm{1}}}}  /  \mathit{x_{{\mathrm{1}}}}  ]  \cdots  [  \mathit{y_{\ottmv{n}}}  /  \mathit{x_{\ottmv{n}}}  ]   \\
        \sigma_{\alpha}  \ottsym{=}  \ottsym{[}  \ell  \ottsym{:}  \lambda  \ottsym{/}  \lambda'  \ottsym{]} \\
         \Theta   \mid   \lambda   \mid   \Gamma_{{\mathrm{1}}}   \vdash   \ottnt{e'}  :  \tau   \produces   \Gamma'  \\
         \mathit{y}  \not\in   \DOM( \Gamma' )  \\
        \Theta  \ottsym{(}  \mathit{f}  \ottsym{)}  \ottsym{=}   \forall  \lambda' .\tuple{ \mathit{x_{{\mathrm{1}}}} \COL \tau_{{\mathrm{1}}} ,\dots, \mathit{x_{\ottmv{n}}} \COL \tau_{\ottmv{n}} }\ra\tuple{ \mathit{x_{{\mathrm{1}}}} \COL \tau'_{{\mathrm{1}}} ,\dots, \mathit{x_{\ottmv{n}}} \COL \tau'_{\ottmv{n}}  \mid  \tau' }   \\
        \Gamma_{{\mathrm{1}}}  \ottsym{=}  \Gamma  \ottsym{[}  \mathit{y_{\ottmv{i}}}  \hookleftarrow  \sigma_{\alpha} \, \sigma_{x} \, \tau'_{\ottmv{i}}  \ottsym{]}  \ottsym{,}  \mathit{x}  \ottsym{:}  \sigma_{\alpha} \, \sigma_{x} \, \tau'
      }
      We must first show that for $\sigma_{\alpha}'  \ottsym{=}  \ottsym{[}  \ell  \ottsym{:}  \oldvec{\ell}  \ottsym{/}  \lambda'  \ottsym{]}$:
      \[
         \Theta   \mid   \oldvec{\ell}   \mid   \Gamma_{{\mathrm{3}}}   \vdash   \ottnt{e'}  :  \ottsym{[}  \oldvec{\ell}  \ottsym{/}  \lambda  \ottsym{]} \, \tau   \produces   \ottsym{[}  \oldvec{\ell}  \ottsym{/}  \lambda  \ottsym{]}  \Gamma' 
      \]
      where $\Gamma_{{\mathrm{3}}}  \ottsym{=}  \ottsym{(}  \ottsym{[}  \oldvec{\ell}  \ottsym{/}  \lambda  \ottsym{]}  \Gamma  \ottsym{)}  \ottsym{[}  \mathit{y_{\ottmv{i}}}  \hookleftarrow  \sigma_{\alpha}' \, \sigma_{x} \, \tau'_{\ottmv{i}}  \ottsym{]}  \ottsym{,}  \mathit{x}  \ottsym{:}  \sigma_{\alpha}' \, \sigma_{x} \, \tau'$.

      We first observe that $\Gamma_{{\mathrm{3}}}  \ottsym{=}  \ottsym{[}  \oldvec{\ell}  \ottsym{/}  \lambda  \ottsym{]}  \Gamma_{{\mathrm{1}}}$ (this follows from the
      equivalence of $\ottsym{[}  \oldvec{\ell}  \ottsym{/}  \lambda  \ottsym{]} \, \ottsym{[}  \ell  \ottsym{:}  \lambda  \ottsym{/}  \lambda'  \ottsym{]}$ and
      $\ottsym{[}  \ell  \ottsym{:}  \oldvec{\ell}  \ottsym{/}  \lambda'  \ottsym{]}$) whereby the induction hypothesis
      gives the required typing derivation.

      We must also show that
      $\forall i \in \set{1..n}.\ottsym{(}  \ottsym{[}  \ell  \ottsym{:}  \oldvec{\ell}  \ottsym{/}  \lambda  \ottsym{]}  \Gamma  \ottsym{)}  \ottsym{(}  \mathit{y_{\ottmv{i}}}  \ottsym{)}  \ottsym{=}  \sigma_{\alpha}' \, \sigma_{x} \, \tau_{\ottmv{i}}$.
      From the assumed well-typing of the term under $\lambda$ we have
      that $\forall i \in\set{1..n}.\Gamma  \ottsym{(}  \mathit{y_{\ottmv{i}}}  \ottsym{)}  \ottsym{=}  \sigma_{\alpha} \, \sigma_{x} \, \tau_{\ottmv{i}}$. Recall
      that $\sigma_{\alpha}'$ is equivalent to $\ottsym{[}  \oldvec{\ell}  \ottsym{/}  \lambda  \ottsym{]} \, \sigma_{\alpha}$, whereby we have
      $\ottsym{[}  \oldvec{\ell}  \ottsym{/}  \lambda  \ottsym{]}  \Gamma  \ottsym{(}  \mathit{y_{\ottmv{i}}}  \ottsym{)}  \ottsym{=}  \ottsym{[}  \oldvec{\ell}  \ottsym{/}  \lambda  \ottsym{]} \, \sigma_{\alpha} \, \sigma_{x} \, \tau_{\ottmv{i}}  \ottsym{=}  \sigma_{\alpha}' \, \sigma_{x} \, \tau_{\ottmv{i}}$ for any $\ottmv{i}$ as
      equality is preserved by consistent substitution.
    \end{rneqncase}

    \begin{rncase}{T-Assign}
      By the inductive hypothesis and application of \Cref{itm:ctxt-sub-distribute}.
    \end{rncase}

    \begin{rneqncase}{T-Alias}{
         \Theta   \mid   \lambda   \mid   \Gamma  \ottsym{[}  \mathit{x}  \ottsym{:}   \tau_{{\mathrm{1}}}  \TREF^{ r_{{\mathrm{1}}} }   \ottsym{]}  \ottsym{[}  \mathit{y}  \ottsym{:}   \tau_{{\mathrm{2}}}  \TREF^{ r_{{\mathrm{2}}} }   \ottsym{]}   \vdash    \ALIAS( \mathit{x}  =  \mathit{y} ) \SEQ  \ottnt{e}   :  \tau   \produces   \Gamma  \\
          \tau_{{\mathrm{1}}}  \TREF^{ r_{{\mathrm{1}}} }   \ottsym{+}  \tau_{{\mathrm{2}}}  \TREF^{ r_{{\mathrm{2}}} }   \approx    \tau'_{{\mathrm{1}}}  \TREF^{ r'_{{\mathrm{1}}} }   \ottsym{+}  \tau'_{{\mathrm{2}}}  \TREF^{ r'_{{\mathrm{2}}} }  \\
         \Theta   \mid   \lambda   \mid   \Gamma  \ottsym{[}  \mathit{x}  \hookleftarrow   \tau'_{{\mathrm{1}}}  \TREF^{ r'_{{\mathrm{1}}} }   \ottsym{]}  \ottsym{[}  \mathit{y}  \hookleftarrow   \tau'_{{\mathrm{2}}}  \TREF^{ r'_{{\mathrm{2}}} }   \ottsym{]}   \vdash   \ottnt{e}  :  \tau   \produces   \Gamma 
      }
      From \Cref{itm:ctxt-sub-distribute} we have that $\ottsym{[}  \oldvec{\ell}  \ottsym{/}  \lambda  \ottsym{]} \, \ottsym{(}   \tau_{{\mathrm{1}}}  \TREF^{ r_{{\mathrm{1}}} }   \ottsym{)}  \ottsym{+}  \ottsym{[}  \oldvec{\ell}  \ottsym{/}  \lambda  \ottsym{]} \, \ottsym{(}   \tau_{{\mathrm{2}}}  \TREF^{ r_{{\mathrm{2}}} }   \ottsym{)}  \ottsym{=}  \ottsym{[}  \oldvec{\ell}  \ottsym{/}  \lambda  \ottsym{]} \, \ottsym{(}    \tau_{{\mathrm{1}}}  \TREF^{ r_{{\mathrm{1}}} }   \ottsym{+}  \tau_{{\mathrm{2}}}  \TREF^{ r_{{\mathrm{2}}} }   \ottsym{)}$
      and similarly for $  \tau'_{{\mathrm{1}}}  \TREF^{ r'_{{\mathrm{1}}} }   \ottsym{+}  \tau'_{{\mathrm{2}}}  \TREF^{ r'_{{\mathrm{2}}} } $. It therefore remains to show that: \[
        \ottsym{[}  \oldvec{\ell}  \ottsym{/}  \lambda  \ottsym{]} \, \ottsym{(}    \tau_{{\mathrm{1}}}  \TREF^{ r_{{\mathrm{1}}} }   \ottsym{+}  \tau_{{\mathrm{2}}}  \TREF^{ r_{{\mathrm{2}}} }   \ottsym{)}  \approx  \ottsym{[}  \oldvec{\ell}  \ottsym{/}  \lambda  \ottsym{]} \, \ottsym{(}    \tau'_{{\mathrm{1}}}  \TREF^{ r'_{{\mathrm{1}}} }   \ottsym{+}  \tau'_{{\mathrm{2}}}  \TREF^{ r'_{{\mathrm{2}}} }   \ottsym{)}
      \]
      For which it suffices to show that $ \bullet   \vdash  \ottsym{[}  \oldvec{\ell}  \ottsym{/}  \lambda  \ottsym{]} \, \ottsym{(}    \tau_{{\mathrm{1}}}  \TREF^{ r_{{\mathrm{1}}} }   \ottsym{+}  \tau_{{\mathrm{2}}}  \TREF^{ r_{{\mathrm{2}}} }   \ottsym{)}  \leq  \ottsym{[}  \oldvec{\ell}  \ottsym{/}  \lambda  \ottsym{]} \, \ottsym{(}    \tau'_{{\mathrm{1}}}  \TREF^{ r'_{{\mathrm{1}}} }   \ottsym{+}  \tau'_{{\mathrm{2}}}  \TREF^{ r'_{{\mathrm{2}}} }   \ottsym{)}$
      and $ \bullet   \vdash  \ottsym{[}  \oldvec{\ell}  \ottsym{/}  \lambda  \ottsym{]} \, \ottsym{(}    \tau'_{{\mathrm{1}}}  \TREF^{ r'_{{\mathrm{1}}} }   \ottsym{+}  \tau'_{{\mathrm{2}}}  \TREF^{ r'_{{\mathrm{2}}} }   \ottsym{)}  \leq  \ottsym{[}  \oldvec{\ell}  \ottsym{/}  \lambda  \ottsym{]} \, \ottsym{(}    \tau_{{\mathrm{1}}}  \TREF^{ r_{{\mathrm{1}}} }   \ottsym{+}  \tau_{{\mathrm{2}}}  \TREF^{ r_{{\mathrm{2}}} }   \ottsym{)}$. From
      $  \tau_{{\mathrm{1}}}  \TREF^{ r_{{\mathrm{1}}} }   \ottsym{+}  \tau_{{\mathrm{2}}}  \TREF^{ r_{{\mathrm{2}}} }   \approx    \tau'_{{\mathrm{1}}}  \TREF^{ r'_{{\mathrm{1}}} }   \ottsym{+}  \tau'_{{\mathrm{2}}}  \TREF^{ r'_{{\mathrm{2}}} } $ these both follow from \Cref{itm:ctxt-sub-subtype},
      whereby the result follows from the inductive hypothesis.
    \end{rneqncase}

    \begin{rncase}{T-AliasPtr}
      By similar reasoning to the \rn{T-Alias} case.
    \end{rncase}

    \begin{rneqncase}{T-Sub}{
         \Theta   \mid   \lambda   \mid   \Gamma_{{\mathrm{1}}}   \vdash   \ottnt{e}  :  \tau_{{\mathrm{1}}}   \produces   \Gamma_{{\mathrm{2}}}  & \Gamma  \leq  \Gamma_{{\mathrm{1}}} \\
        \Gamma_{{\mathrm{2}}}  \ottsym{,}  \tau_{{\mathrm{1}}}  \leq  \Gamma'  \ottsym{,}  \tau \\
      }
      By the induction hypothesis we have that: $ \Theta   \mid   \oldvec{\ell}   \mid   \ottsym{[}  \oldvec{\ell}  \ottsym{/}  \lambda  \ottsym{]}  \Gamma_{{\mathrm{1}}}   \vdash   \ottnt{e}  :  \ottsym{[}  \oldvec{\ell}  \ottsym{/}  \lambda  \ottsym{]} \, \tau_{{\mathrm{1}}}   \produces   \ottsym{[}  \oldvec{\ell}  \ottsym{/}  \lambda  \ottsym{]}  \Gamma_{{\mathrm{2}}} $.
      If we show that $\ottsym{[}  \oldvec{\ell}  \ottsym{/}  \lambda  \ottsym{]}  \Gamma  \leq  \ottsym{[}  \oldvec{\ell}  \ottsym{/}  \lambda  \ottsym{]}  \Gamma_{{\mathrm{1}}}$ and $\ottsym{[}  \oldvec{\ell}  \ottsym{/}  \lambda  \ottsym{]}  \Gamma_{{\mathrm{2}}}  \ottsym{,}  \ottsym{[}  \oldvec{\ell}  \ottsym{/}  \lambda  \ottsym{]} \, \tau_{{\mathrm{1}}}  \leq  \ottsym{[}  \oldvec{\ell}  \ottsym{/}  \lambda  \ottsym{]}  \Gamma'  \ottsym{,}  \ottsym{[}  \oldvec{\ell}  \ottsym{/}  \lambda  \ottsym{]} \, \tau$
      we will have the required result. To show the first requirement, for any $ \mathit{x}  \in \DOM( \Gamma ) $ we have that
      $\ottsym{[}  \oldvec{\ell}  \ottsym{/}  \lambda  \ottsym{]}  \Gamma  \vdash  \ottsym{[}  \oldvec{\ell}  \ottsym{/}  \lambda  \ottsym{]}  \Gamma  \ottsym{(}  \mathit{x}  \ottsym{)}  \leq  \ottsym{[}  \oldvec{\ell}  \ottsym{/}  \lambda  \ottsym{]}  \Gamma_{{\mathrm{1}}}  \ottsym{(}  \mathit{x}  \ottsym{)}$ from \Cref{itm:ctxt-sub-subtype} so we have $\ottsym{[}  \oldvec{\ell}  \ottsym{/}  \lambda  \ottsym{]}  \Gamma  \leq  \ottsym{[}  \oldvec{\ell}  \ottsym{/}  \lambda  \ottsym{]}  \Gamma_{{\mathrm{1}}}$.
      To show the latter requirement, we observe that $\ottsym{[}  \oldvec{\ell}  \ottsym{/}  \lambda  \ottsym{]}  \Gamma_{{\mathrm{2}}}  \ottsym{,}  \ottsym{[}  \oldvec{\ell}  \ottsym{/}  \lambda  \ottsym{]} \, \tau_{{\mathrm{1}}}  \leq  \ottsym{[}  \oldvec{\ell}  \ottsym{/}  \lambda  \ottsym{]}  \Gamma'  \ottsym{,}  \ottsym{[}  \oldvec{\ell}  \ottsym{/}  \lambda  \ottsym{]} \, \tau$ is equivalent to showing
      $\ottsym{[}  \oldvec{\ell}  \ottsym{/}  \lambda  \ottsym{]}  \ottsym{(}  \Gamma_{{\mathrm{2}}}  \ottsym{,}  \mathit{x}  \ottsym{:}  \tau_{{\mathrm{1}}}  \ottsym{)}  \leq  \ottsym{[}  \oldvec{\ell}  \ottsym{/}  \lambda  \ottsym{]}  \ottsym{(}  \Gamma'  \ottsym{,}  \mathit{x}  \ottsym{:}  \tau  \ottsym{)}$ for some $ \mathit{x}  \not\in   \DOM( \Gamma_{{\mathrm{2}}} )  $,
      whereby we have the required subtyping relationship from the application of \Cref{itm:ctxt-sub-subtype}.
    \end{rneqncase}
    \begin{rneqncase}{T-Assert}{
         \Theta   \mid   \lambda   \mid   \Gamma   \vdash    \ASSERT( \varphi ) \SEQ  \ottnt{e}   :  \tau   \produces   \Gamma'  & \Gamma  \models  \varphi \\
         \Theta   \mid   \lambda   \mid   \Gamma   \vdash   \ottnt{e}  :  \tau   \produces   \Gamma'  &   \epsilon    \mid   \Gamma   \vdash _{\wf}  \varphi 
      }
      By induction hypothesis, the result holds if we can show $\ottsym{[}  \oldvec{\ell}  \ottsym{/}  \lambda  \ottsym{]}  \Gamma  \models  \varphi$ which
      follows from \Cref{itm:ctxt-subst-assert} (that $ \lambda  \not\in  \mathbf{FCV} \, \ottsym{(}  \varphi  \ottsym{)} $ follows
      from the well-formedness of $\varphi$ with respect to $ \epsilon $).
    \end{rneqncase}
  \end{enumerate}
\end{proof}

\begin{lemma}[Substitution] %
  \label{lem:substitution}
  If $ \Theta   \mid   \mathcal{L}   \mid   \Gamma   \vdash   \ottnt{e}  :  \tau   \produces   \Gamma' $ and $ \mathit{x'}  \not\in \DOM( \Gamma ) $, then
  $ \Theta   \mid   \mathcal{L}   \mid    [  \mathit{x'}  /  \mathit{x}  ]  \, \Gamma   \vdash   \ottnt{e}  :   [  \mathit{x'}  /  \mathit{x}  ]  \, \tau   \produces    [  \mathit{x'}  /  \mathit{x}  ]  \, \Gamma' $.
\end{lemma}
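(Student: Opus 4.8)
The plan is to prove this by structural induction on the derivation of $\Theta \mid \mathcal{L} \mid \Gamma \vdash e : \tau \produces \Gamma'$, in close parallel with the context-variable substitution argument of \Cref{lem:ctxt-substitution}, applying the renaming $[x'/x]$ uniformly to every syntactic component of the judgment. Before the induction I would dispatch a battery of routine compatibility facts for $[x'/x]$: that it distributes over type addition ($[x'/x](\tau_1 + \tau_2) = [x'/x]\tau_1 + [x'/x]\tau_2$, the exact analogue of \Cref{lem:ctxt-substitution}, part~\labelcref{itm:ctxt-sub-distribute}), that it commutes with the strengthening operation $\tau \wedge_{y} \varphi$, the typed equality $y =_{\tau} x$, and the denotation $\sem{\Gamma}$, and that it preserves type and environment well-formedness, both subtyping judgments $\Gamma \vdash \tau_1 \le \tau_2$ and $\Gamma \le \Gamma'$, and the entailment $\Gamma \models \varphi$. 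Each is a short induction; the subtyping and entailment clauses mirror \Cref{lem:subtype-transitive} and the corresponding clauses of \Cref{lem:ctxt-substitution}, the key observation being that renaming a free program variable changes neither simple types nor context variables, so the integer/reference shape and free-context-variable side conditions of well-formedness are untouched.

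With these in hand, the inductive step pushes $[x'/x]$ through each rule. For the non-binding rules --- \rn{T-Var}, \rn{T-Seq}, \rn{T-If}, \rn{T-Assign}, \rn{T-Alias}, \rn{T-AliasPtr}, and \rn{T-Assert} --- I would apply the induction hypothesis to the subderivations and rewrite each premise using the compatibility facts; here the hypothesis $x' \notin \DOM(\Gamma)$ is exactly what guarantees the renaming acts injectively on variable names, so that lookups, the updates $\Gamma[\cdot \hookleftarrow \cdot]$, and the split/restore in \rn{T-Var} (including the renamed term occurrence so that the leaf case still matches the renamed environment) are all preserved. The \rn{T-Sub} case reduces to preservation of $\Gamma \le \Gamma'$ and of $\Gamma'' , \tau \le \Gamma''' , \tau'$ under renaming, and \rn{T-Assert} to preservation of $\Gamma \models \varphi$ together with refinement well-formedness.

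The main obstacle is the family of binding rules --- \rn{T-Let}, \rn{T-LetInt}, \rn{T-MkRef}, \rn{T-Deref}, and especially \rn{T-Call} --- where a fresh variable $z$ is introduced and a side condition $z \notin \DOM(\Gamma')$ is imposed. Here I must argue that the renaming is capture-avoiding: the bound $z$ must be distinct from both $x$ and $x'$ so that $[x'/x]$ neither captures nor collides with it, and the side condition must transport to $z \notin \DOM([x'/x]\Gamma')$. Distinctness of $z$ from $x$ follows from the global convention that all let- and parameter-bound names are distinct, and distinctness from $x'$ from the freshness of $x'$; granting these, $z \notin \DOM(\Gamma')$ holds iff $z \notin \DOM([x'/x]\Gamma')$ since $z \neq x, x'$, and the induction hypothesis then supplies the typing of the continuation under the renamed extended environment. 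The \rn{T-Call} case is the most delicate, because $[x'/x]$ must be shown to commute with the call-site substitutions $\sigma_{\alpha} = [\ell : \mathcal{L}/\lambda]$ and $\sigma_{x} = [y_1/x_1]\cdots[y_n/x_n]$; this is handled as in the analogous step of \Cref{lem:ctxt-substitution}, observing that $[x'/x]$ touches only the actual argument names $y_i$ and never the formal parameters $x_i$ or the context data, so the two substitutions can be reordered and the argument-type equalities $\Gamma(y_i) = \sigma_{\alpha}\,\sigma_{x}\,\tau_i$ are preserved.
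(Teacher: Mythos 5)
Your proposal is correct and takes essentially the same approach as the paper: the paper's entire proof of this lemma is the single line ``By straightforward induction of typing rules,'' and your plan is a faithful, fully elaborated version of exactly that induction (compatibility of the renaming with type addition, strengthening, subtyping, entailment, and well-formedness, plus the capture-avoidance bookkeeping that the paper's global distinctness and freshness conventions make routine). The only caveat worth recording is one your treatment of the variable case already recognizes implicitly: the renaming must also be applied to the expression $\ottnt{e}$ itself, which the lemma statement omits but clearly intends, as confirmed by the paper's uses of the lemma (e.g., deriving $ \Theta   \mid   \mathcal{L}   \mid    [  \mathit{x'}  /  \mathit{x}  ]  \, \Gamma   \vdash     [  \mathit{x'}  /  \mathit{x}  ]    \ottnt{e}   :   [  \mathit{x'}  /  \mathit{x}  ]  \, \tau   \produces    [  \mathit{x'}  /  \mathit{x}  ]  \, \Gamma' $ in the call and preservation arguments).
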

\begin{proof}
  By straightforward induction of typing rules.
\end{proof}

We now prove that if every variable satisfies its refinement in a type environment $\Gamma$,
we must have $\models  \ottsym{[}  \ottnt{R}  \ottsym{]} \,  \sem{ \Gamma } $.

\begin{lemma}
  \label{lem:sat-implies-gamma}
  If $\ottkw{SAT} \, \ottsym{(}  \ottnt{H}  \ottsym{,}  \ottnt{R}  \ottsym{,}  \Gamma  \ottsym{)}$ then $\models  \ottsym{[}  \ottnt{R}  \ottsym{]} \,  \sem{ \Gamma } $.
\end{lemma}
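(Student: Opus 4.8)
The plan is to prove the statement by induction on the structure of the type environment $\Gamma$, exploiting the fact that both the lifting operation $\sem{\cdot}$ and the register substitution $[R](\cdot)$ act compositionally on the conjunctive structure of $\sem{\Gamma}$. In the base case $\Gamma = \bullet$ we have $\sem{\bullet} = \top$, so $[R]\top = \top$ is trivially valid. For the inductive step $\Gamma = \Gamma_0, x \COL \tau$, the definition gives $\sem{\Gamma} = \sem{\Gamma_0} \wedge \sem{\tau}_x$. Since $[R](\cdot)$ is a syntactic substitution of register values it distributes over conjunction, so $[R]\sem{\Gamma} = [R]\sem{\Gamma_0} \wedge [R]\sem{\tau}_x$, and it suffices to show each conjunct is valid. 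I would first record as an easy auxiliary fact that $[R]$ distributes over $\wedge$, by a trivial induction on $R$ using the defining clauses for $[R]\varphi$.

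From the hypothesis $\ottkw{SAT}(H,R,\Gamma)$ I would observe that its restriction to $\DOM(\Gamma_0)$ yields $\ottkw{SAT}(H,R,\Gamma_0)$, so the induction hypothesis gives validity of $[R]\sem{\Gamma_0}$. It then remains to handle the single binding $x \COL \tau$, and here I would case split on the shape of $\tau$. If $\tau = \tau' \TREF^{r}$ is a reference type, then $\sem{\tau}_x = \top$ by definition and $[R]\top = \top$ is valid, so nothing more is needed. The only substantive case is $\tau = \set{\nu \COL \TINT \mid \varphi}$, where $\sem{\tau}_x = [x/\nu]\varphi$, and the goal is to show that $[R]\,[x/\nu]\varphi$ is valid.

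For this case, $\ottkw{SAT}(H,R,\Gamma)$ guarantees $x \in \DOM(R)$ together with $\ottkw{SATv}(H,R,R(x),\set{\nu \COL \TINT \mid \varphi})$; inverting the definition of $\ottkw{SATv}$ on the integer clause yields $R(x) = n$ for some $n \in \mathbb{Z}$ and validity of $[R]\,[n/\nu]\varphi$. The crux is then the substitution identity $[R]\,[x/\nu]\varphi = [R]\,[n/\nu]\varphi$ whenever $R(x) = n$: the formulas $[x/\nu]\varphi$ and $[n/\nu]\varphi$ differ only at the positions formerly occupied by $\nu$, which hold the variable $x$ in the former and the constant $n$ in the latter, and since $[R]$ contains the replacement $x \mapsto n$ while leaving the closed constant $n$ fixed, applying $[R]$ sends both to the same formula. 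I expect this commutation of the register substitution with the $\nu \mapsto x$ binding to be the main obstacle, and I would isolate it as a short sub-lemma proved by induction on $\varphi$ (and, if convenient, on the structure of $R$), taking care that $\nu$ is not a program variable and is therefore untouched by $[R]$. Given this identity, $[R]\sem{\tau}_x = [R]\,[n/\nu]\varphi$ is valid by $\ottkw{SATv}$, and conjoining it with the validity of $[R]\sem{\Gamma_0}$ completes the inductive step and hence the proof.
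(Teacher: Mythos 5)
Your proposal is correct and follows essentially the same route as the paper's proof: both reduce the claim to the per-binding conjuncts of $\sem{\Gamma}$ (with reference types contributing $\top$), invert $\ottkw{SATv}$ on the integer case to get $\models \ottsym{[}\ottnt{R}\ottsym{]}\,\ottsym{[}\ottnt{R}\ottsym{(}\mathit{x}\ottsym{)}\ottsym{/}\nu\ottsym{]}\varphi$, and conclude via the identity $\ottsym{[}\ottnt{R}\ottsym{]}\,\ottsym{[}\mathit{x}\ottsym{/}\nu\ottsym{]}\varphi = \ottsym{[}\ottnt{R}\ottsym{]}\,\ottsym{[}\ottnt{R}\ottsym{(}\mathit{x}\ottsym{)}\ottsym{/}\nu\ottsym{]}\varphi$. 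The only difference is presentational: you make explicit (as an induction on $\Gamma$ and two small sub-lemmas) the decomposition and substitution-commutation steps that the paper treats as immediate.
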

\begin{proof}
  To show $\models  \ottsym{[}  \ottnt{R}  \ottsym{]} \,  \sem{ \Gamma } $, it suffices to show that for any $ \mathit{x}  \in   \DOM( \Gamma )  $ where $\Gamma  \ottsym{(}  \mathit{x}  \ottsym{)}  \ottsym{=}   \set{  \nu  \COL \TINT \mid  \varphi } $
  $\models  \ottsym{[}  \ottnt{R}  \ottsym{]} \, \ottsym{[}  \mathit{x}  \ottsym{/} \, \nu \, \ottsym{]} \, \varphi$ holds. From $\ottkw{SAT} \, \ottsym{(}  \ottnt{H}  \ottsym{,}  \ottnt{R}  \ottsym{,}  \Gamma  \ottsym{)}$, we must have $ \ottkw{SATv} ( \ottnt{H} , \ottnt{R} , \ottnt{R}  \ottsym{(}  \mathit{x}  \ottsym{)} , \Gamma  \ottsym{(}  \mathit{x}  \ottsym{)} ) $, whereby
  we have $ \ottnt{R}  \ottsym{(}  \mathit{x}  \ottsym{)}  \in  \mathbb{Z} $ and $\ottsym{[}  \ottnt{R}  \ottsym{]} \, \ottsym{[}  \ottnt{R}  \ottsym{(}  \mathit{x}  \ottsym{)}  \ottsym{/}  \nu  \ottsym{]}  \varphi$. As $\ottsym{[}  \ottnt{R}  \ottsym{]} \, \ottsym{[}  \mathit{x}  \ottsym{/} \, \nu \, \ottsym{]} \, \varphi$ is equivalent to $\ottsym{[}  \ottnt{R}  \ottsym{]} \, \ottsym{[}  \ottnt{R}  \ottsym{(}  \mathit{x}  \ottsym{)}  \ottsym{/}  \nu  \ottsym{]}  \varphi$,
  and we have the desired result.
\end{proof}

We prove that subtyping preserves the consistency relation in the following sense.

\begin{lemma} %
  \label{lem:subtyp-preserves-cons}
  If $\Gamma  \leq  \Gamma'$ and $\ottkw{Cons} \, \ottsym{(}  \ottnt{H}  \ottsym{,}  \ottnt{R}  \ottsym{,}  \Gamma  \ottsym{)}$ then:
  \begin{enumerate}
  \item For any $ \mathit{x}  \in   \DOM( \Gamma' )  $, $\forall \,  \ottmv{a}  \in \DOM( \ottnt{H} )   \ottsym{.}  \ottkw{own} \, \ottsym{(}  \ottnt{H}  \ottsym{,}  \ottnt{R}  \ottsym{(}  \mathit{x}  \ottsym{)}  \ottsym{,}  \Gamma  \ottsym{(}  \mathit{x}  \ottsym{)}  \ottsym{)}  \ottsym{(}  \ottmv{a}  \ottsym{)}  \le  \ottkw{own} \, \ottsym{(}  \ottnt{H}  \ottsym{,}  \ottnt{R}  \ottsym{(}  \mathit{x}  \ottsym{)}  \ottsym{,}  \Gamma'  \ottsym{(}  \mathit{x}  \ottsym{)}  \ottsym{)}  \ottsym{(}  \ottmv{a}  \ottsym{)}$
  \item $\forall \,  \ottmv{a}  \in \DOM( \ottnt{H} )   \ottsym{.}  \ottkw{Own} \, \ottsym{(}  \ottnt{H}  \ottsym{,}  \ottnt{R}  \ottsym{,}  \Gamma'  \ottsym{)}  \ottsym{(}  \ottmv{a}  \ottsym{)}  \le  \ottsym{1}$
  \item If $\Gamma  \vdash  \tau  \leq  \tau'$ and $ \ottkw{SATv} ( \ottnt{H} , \ottnt{R} , \ottnt{v} , \tau ) $ then $ \ottkw{SATv} ( \ottnt{H} , \ottnt{R} , \ottnt{v} , \tau' ) $
  \item $\ottkw{SAT} \, \ottsym{(}  \ottnt{H}  \ottsym{,}  \ottnt{R}  \ottsym{,}  \Gamma'  \ottsym{)}$
  \item $\ottkw{Cons} \, \ottsym{(}  \ottnt{H}  \ottsym{,}  \ottnt{R}  \ottsym{,}  \Gamma'  \ottsym{)}$
  \end{enumerate}
\end{lemma}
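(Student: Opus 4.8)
The plan is to reduce all five parts to two structural facts about subtyping---one about ownership maps and one about $\ottkw{SATv}$---each established by induction on the derivation of $\Gamma  \vdash  \tau  \leq  \tau'$, and then to assemble parts~1, 2, 4, and~5 from these together with the definition of $\ottkw{Cons}$ and \Cref{lem:sat-implies-gamma}. Throughout I rely on the fact that $\DOM( \Gamma' )  \subseteq  \DOM( \Gamma )$, which is forced by the premise of \rn{S-TyEnv} (it refers to $\Gamma  \ottsym{(}  \mathit{x}  \ottsym{)}$ for every $ \mathit{x}  \in   \DOM( \Gamma' )  $, so each such $\mathit{x}$ must lie in $\DOM( \Gamma )$), and that all ownerships lie in $[0,1]$ so that $\ottkw{own}$ and $\ottkw{Own}$ are non-negative.

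First I would prove an ownership-comparison lemma: for every $\ottnt{H}$, $v$, and $\ottmv{a}$, the maps $\ottkw{own} \, \ottsym{(}  \ottnt{H}  \ottsym{,}  v  \ottsym{,}  \tau  \ottsym{)}$ and $\ottkw{own} \, \ottsym{(}  \ottnt{H}  \ottsym{,}  v  \ottsym{,}  \tau'  \ottsym{)}$ are pointwise comparable, in the direction forced by the $ r_{{\mathrm{1}}}   \ge   r_{{\mathrm{2}}} $ premise of \rn{S-Ref}, by induction on $\Gamma  \vdash  \tau  \leq  \tau'$. In the \rn{S-Int} base case both types are integers, $\ottkw{own}$ returns $ \emptyset $ on each side, and the maps coincide. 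In the \rn{S-Ref} case, with $\tau  \ottsym{=}   \tau_{{\mathrm{1}}}  \TREF^{ r_{{\mathrm{1}}} } $ and $\tau'  \ottsym{=}   \tau_{{\mathrm{2}}}  \TREF^{ r_{{\mathrm{2}}} } $, I unfold $\ottkw{own}$: when $v  \ottsym{=}  \ottmv{a'}$ with $ \ottmv{a'}  \in \DOM( \ottnt{H} ) $, each side contributes its ownership at $\ottmv{a'}$ (namely $r_{{\mathrm{1}}}$ resp.\ $r_{{\mathrm{2}}}$) plus the recursive ownership of the pointee; the contribution at $\ottmv{a'}$ is ordered by $ r_{{\mathrm{1}}}   \ge   r_{{\mathrm{2}}} $ and the recursive contributions by the induction hypothesis on $\Gamma  \vdash  \tau_{{\mathrm{1}}}  \leq  \tau_{{\mathrm{2}}}$, and monotonicity of $ \ottsym{+} $ closes the case; otherwise both sides are $ \emptyset $. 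Instantiating at $v  \ottsym{=}  \ottnt{R}  \ottsym{(}  \mathit{x}  \ottsym{)}$ for each $ \mathit{x}  \in   \DOM( \Gamma' )  $, using the per-variable judgment $\Gamma  \vdash  \Gamma  \ottsym{(}  \mathit{x}  \ottsym{)}  \leq  \Gamma'  \ottsym{(}  \mathit{x}  \ottsym{)}$ that \rn{S-TyEnv} supplies, yields the per-variable comparison of part~1. Part~2 then follows by summing part~1 over $\DOM( \Gamma' )$ and using non-negativity to extend the sum to all of $\DOM( \Gamma )$, which bounds $\ottkw{Own} \, \ottsym{(}  \ottnt{H}  \ottsym{,}  \ottnt{R}  \ottsym{,}  \Gamma'  \ottsym{)}  \ottsym{(}  \ottmv{a}  \ottsym{)}$ by $\ottkw{Own} \, \ottsym{(}  \ottnt{H}  \ottsym{,}  \ottnt{R}  \ottsym{,}  \Gamma  \ottsym{)}  \ottsym{(}  \ottmv{a}  \ottsym{)}$, which is $ \le  \ottsym{1}$ by $\ottkw{Cons} \, \ottsym{(}  \ottnt{H}  \ottsym{,}  \ottnt{R}  \ottsym{,}  \Gamma  \ottsym{)}$.

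Next I would prove part~3 by induction on $\Gamma  \vdash  \tau  \leq  \tau'$. The \rn{S-Ref} case is immediate, since $\ottkw{SATv}$ ignores ownership: $ \ottkw{SATv} ( \ottnt{H} , \ottnt{R} , v ,  \tau_{{\mathrm{1}}}  \TREF^{ r_{{\mathrm{1}}} }  ) $ forces $v  \ottsym{=}  \ottmv{a}$ with $ \ottmv{a}  \in \DOM( \ottnt{H} ) $ and $ \ottkw{SATv} ( \ottnt{H} , \ottnt{R} , \ottnt{H}  \ottsym{(}  \ottmv{a}  \ottsym{)} , \tau_{{\mathrm{1}}} ) $, and the induction hypothesis on $\Gamma  \vdash  \tau_{{\mathrm{1}}}  \leq  \tau_{{\mathrm{2}}}$ rewrites the pointee obligation to $\tau_{{\mathrm{2}}}$. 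The \rn{S-Int} case is the crux: from $ \ottkw{SATv} ( \ottnt{H} , \ottnt{R} , v ,  \set{  \nu  \COL \TINT \mid  \varphi_{{\mathrm{1}}} }  ) $ I have $ v  \in  \mathbb{Z} $ and $\ottsym{[}  \ottnt{R}  \ottsym{]} \, \ottsym{[}  v  \ottsym{/}  \nu  \ottsym{]}  \varphi_{{\mathrm{1}}}$, and must derive $\ottsym{[}  \ottnt{R}  \ottsym{]} \, \ottsym{[}  v  \ottsym{/}  \nu  \ottsym{]}  \varphi_{{\mathrm{2}}}$ from the side condition $\Gamma  \models  \varphi_{{\mathrm{1}}}  \implies  \varphi_{{\mathrm{2}}}$, i.e.\ from validity of $ \sem{ \Gamma }   \wedge  \varphi_{{\mathrm{1}}}  \implies  \varphi_{{\mathrm{2}}}$. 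The bridge is \Cref{lem:sat-implies-gamma}: $\ottkw{Cons} \, \ottsym{(}  \ottnt{H}  \ottsym{,}  \ottnt{R}  \ottsym{,}  \Gamma  \ottsym{)}$ gives $\ottkw{SAT} \, \ottsym{(}  \ottnt{H}  \ottsym{,}  \ottnt{R}  \ottsym{,}  \Gamma  \ottsym{)}$, hence $\models  \ottsym{[}  \ottnt{R}  \ottsym{]} \,  \sem{ \Gamma } $. Since $ \sem{ \Gamma } $ is closed in $\nu$, applying $\ottsym{[}  \ottnt{R}  \ottsym{]} \, \ottsym{[}  v  \ottsym{/}  \nu  \ottsym{]}$ to the valid implication and discharging the two established premises $\ottsym{[}  \ottnt{R}  \ottsym{]} \,  \sem{ \Gamma } $ and $\ottsym{[}  \ottnt{R}  \ottsym{]} \, \ottsym{[}  v  \ottsym{/}  \nu  \ottsym{]}  \varphi_{{\mathrm{1}}}$ yields the goal. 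I expect the only delicate bookkeeping to be that the value substitution for $\nu$ and the register substitution $\ottsym{[}  \ottnt{R}  \ottsym{]}$ commute as needed and that validity survives these instantiations---exactly the kind of reasoning already used in \Cref{lem:subtype-transitive} and \subref{lem:ctxt-substitution}{itm:ctxt-sub-subtype}.

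Finally, parts~4 and~5 are corollaries. For part~4, take any $ \mathit{x}  \in   \DOM( \Gamma' )  $; since $\DOM( \Gamma' )  \subseteq  \DOM( \Gamma )$ and $\ottkw{SAT} \, \ottsym{(}  \ottnt{H}  \ottsym{,}  \ottnt{R}  \ottsym{,}  \Gamma  \ottsym{)}$, we have $ \mathit{x}  \in \DOM( \ottnt{R} ) $ and $ \ottkw{SATv} ( \ottnt{H} , \ottnt{R} , \ottnt{R}  \ottsym{(}  \mathit{x}  \ottsym{)} , \Gamma  \ottsym{(}  \mathit{x}  \ottsym{)} ) $, and part~3 applied to $\Gamma  \vdash  \Gamma  \ottsym{(}  \mathit{x}  \ottsym{)}  \leq  \Gamma'  \ottsym{(}  \mathit{x}  \ottsym{)}$ upgrades this to $ \ottkw{SATv} ( \ottnt{H} , \ottnt{R} , \ottnt{R}  \ottsym{(}  \mathit{x}  \ottsym{)} , \Gamma'  \ottsym{(}  \mathit{x}  \ottsym{)} ) $, giving $\ottkw{SAT} \, \ottsym{(}  \ottnt{H}  \ottsym{,}  \ottnt{R}  \ottsym{,}  \Gamma'  \ottsym{)}$. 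Part~5 is then just unfolding the definition of $\ottkw{Cons}$ and combining parts~2 and~4. The main obstacle is the \rn{S-Int} step of part~3: making the passage from the abstract entailment $\Gamma  \models  \varphi_{{\mathrm{1}}}  \implies  \varphi_{{\mathrm{2}}}$ to a concrete fact about $\ottnt{R}$ fully rigorous through \Cref{lem:sat-implies-gamma}; the remaining steps are routine structural induction and arithmetic over $[0,1]$.
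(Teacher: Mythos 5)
Your proposal is correct and follows essentially the same route as the paper's proof: the same two structural inductions on the subtyping derivation (an ownership comparison yielding parts~1--2, and $\ottkw{SATv}$ preservation for part~3, with \Cref{lem:sat-implies-gamma} supplying the bridge from $\ottkw{Cons} \, \ottsym{(}  \ottnt{H}  \ottsym{,}  \ottnt{R}  \ottsym{,}  \Gamma  \ottsym{)}$ to the logical entailment in the \rn{S-Int} case), followed by the same per-variable assembly of parts~4 and~5. You also read the part-1 inequality in the direction actually forced by \rn{S-Ref} (the supertype's ownership bounded by the subtype's), which is the direction the paper's own proof of part~2 silently relies on, so your treatment coincides with the paper's intent.
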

\begin{proof}
  \leavevmode
  \begin{enumerate}
  \item By induction on $\Gamma  \vdash  \Gamma  \ottsym{(}  \mathit{x}  \ottsym{)}  \leq  \Gamma'  \ottsym{(}  \mathit{x}  \ottsym{)}$.
  \item Direct consequence of 1 and that $\forall \,  \ottmv{a}  \in \DOM( \ottnt{H} )   \ottsym{.}  \ottkw{Own} \, \ottsym{(}  \ottnt{H}  \ottsym{,}  \ottnt{R}  \ottsym{,}  \Gamma  \ottsym{)}  \ottsym{(}  \ottmv{a}  \ottsym{)}  \le  \ottsym{1}$ from $\ottkw{Cons} \, \ottsym{(}  \ottnt{H}  \ottsym{,}  \ottnt{R}  \ottsym{,}  \Gamma  \ottsym{)}$.
  \item From $\ottkw{Cons} \, \ottsym{(}  \ottnt{H}  \ottsym{,}  \ottnt{R}  \ottsym{,}  \Gamma  \ottsym{)}$ we have $\ottkw{SAT} \, \ottsym{(}  \ottnt{H}  \ottsym{,}  \ottnt{R}  \ottsym{,}  \Gamma  \ottsym{)}$ which by \Cref{lem:sat-implies-gamma} we have $\models  \ottsym{[}  \ottnt{R}  \ottsym{]} \,  \sem{ \Gamma } $.
    We now proceed by induction on $\Gamma  \vdash  \tau  \leq  \tau'$.
    \begin{eqncase}{
        \tau  \ottsym{=}   \set{  \nu  \COL \TINT \mid  \varphi }  & \tau'  \ottsym{=}   \set{  \nu  \COL \TINT \mid  \varphi' }  \\
        \models   \sem{ \Gamma }   \wedge  \varphi  \implies  \varphi'
      }
      From $ \ottkw{SATv} ( \ottnt{H} , \ottnt{R} , \ottnt{v} , \tau ) $ we have $\models  \ottsym{[}  \ottnt{R}  \ottsym{]} \, \ottsym{[}  \ottnt{v}  \ottsym{/}  \nu  \ottsym{]}  \varphi$.
      We must show that $\models  \ottsym{[}  \ottnt{R}  \ottsym{]} \, \ottsym{[}  \ottnt{v}  \ottsym{/}  \nu  \ottsym{]}  \varphi'$.
      From $\models   \sem{ \Gamma }   \wedge  \varphi  \implies  \varphi'$ we must have
      $\models  \ottsym{[}  \ottnt{R}  \ottsym{]} \, \ottsym{[}  \ottnt{v}  \ottsym{/}  \nu  \ottsym{]}   \sem{ \Gamma }   \wedge  \ottsym{[}  \ottnt{R}  \ottsym{]} \, \ottsym{[}  \ottnt{v}  \ottsym{/}  \nu  \ottsym{]}  \varphi  \implies  \ottsym{[}  \ottnt{R}  \ottsym{]} \, \ottsym{[}  \ottnt{v}  \ottsym{/}  \nu  \ottsym{]}  \varphi'$ is valid.
      As $\nu$ does not appear free in $ \sem{ \Gamma } $, we have
      $\models  \ottsym{[}  \ottnt{R}  \ottsym{]} \,  \sem{ \Gamma }   \wedge  \ottsym{[}  \ottnt{R}  \ottsym{]} \, \ottsym{[}  \ottnt{v}  \ottsym{/}  \nu  \ottsym{]}  \varphi  \implies  \ottsym{[}  \ottnt{R}  \ottsym{]} \, \ottsym{[}  \ottnt{v}  \ottsym{/}  \nu  \ottsym{]}  \varphi'$ is
      valid whereby the result is immediate.
    \end{eqncase}
    
    \begin{eqncase}{
        \tau  \ottsym{=}   \tau_{\ottmv{p}}  \TREF^{ r_{{\mathrm{1}}} }  & \tau'  \ottsym{=}   \tau'_{\ottmv{p}}  \TREF^{ r_{{\mathrm{2}}} }  \\
        r_{{\mathrm{2}}}  \le  r_{{\mathrm{1}}}
      }
      Immediate from the induction hypothesis.
    \end{eqncase}
  \item Immediate consequence of 3 and that
    $\Gamma  \leq  \Gamma'$ implies that $\Gamma  \vdash  \Gamma  \ottsym{(}  \mathit{x}  \ottsym{)}  \leq  \Gamma'  \ottsym{(}  \mathit{x}  \ottsym{)}$ for any $ \mathit{x}  \in \DOM( \Gamma' ) $.
  \item Immediate from 2 and 4.
  \end{enumerate}
\end{proof}

To show consistency is preserved during evaluation, \Cref{lem:sattosat,lem:ownequiv-preserv}
show types equivalent according to $ \approx $ are equivalent for the purposes
of $\ottkw{own}$ and $\ottkw{SATv}$. Then \Cref{lem:ownadd,lem:satadd} show that the
type addition operator $+$ ``distributes'' over $\ottkw{SATv}$ and $\ottkw{own}$.

\begin{lemma}[Type Equivalence Preserves Satisfiability] %
  \label{lem:sattosat}
  If $\tau_{{\mathrm{1}}}  \approx  \tau_{{\mathrm{2}}}$, then $ \ottkw{SATv} ( \ottnt{H} , \ottnt{R} , \ottnt{v} , \tau_{{\mathrm{1}}} )   \iff   \ottkw{SATv} ( \ottnt{H} , \ottnt{R} , \ottnt{v} , \tau_{{\mathrm{2}}} ) $.
\end{lemma}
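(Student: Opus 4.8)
The plan is to exploit the definition of $\approx$ as two-sided subtyping and reduce the claim to the already-established clause relating subtyping and $\ottkw{SATv}$. By definition (\Cref{fig:pointer-typing}), $\tau_{{\mathrm{1}}} \approx \tau_{{\mathrm{2}}}$ unfolds to $\bullet \vdash \tau_{{\mathrm{1}}} \leq \tau_{{\mathrm{2}}}$ and $\bullet \vdash \tau_{{\mathrm{2}}} \leq \tau_{{\mathrm{1}}}$. The key observation is that the empty environment is trivially consistent with any $\ottnt{H}$ and $\ottnt{R}$: since $\DOM(\bullet) = \emptyset$, both $\ottkw{SAT}(\ottnt{H},\ottnt{R},\bullet)$ and $\ottkw{Own}(\ottnt{H},\ottnt{R},\bullet)(\ottmv{a}) = \emptyset \leq 1$ hold vacuously, so $\ottkw{Cons}(\ottnt{H},\ottnt{R},\bullet)$ holds.

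Given this, I would instantiate \Cref{lem:subtyp-preserves-cons} with $\Gamma = \Gamma' = \bullet$. The hypothesis $\bullet \leq \bullet$ holds by \rn{S-TyEnv} (again vacuously, over an empty domain), and $\ottkw{Cons}(\ottnt{H},\ottnt{R},\bullet)$ holds as above, so the third clause of that lemma applies: for any $\tau,\tau'$ with $\bullet \vdash \tau \leq \tau'$, $\ottkw{SATv}(\ottnt{H},\ottnt{R},\ottnt{v},\tau)$ implies $\ottkw{SATv}(\ottnt{H},\ottnt{R},\ottnt{v},\tau')$. Applying this with $\tau = \tau_{{\mathrm{1}}}, \tau' = \tau_{{\mathrm{2}}}$ yields the forward direction of the desired biconditional, and applying it with $\tau = \tau_{{\mathrm{2}}}, \tau' = \tau_{{\mathrm{1}}}$ yields the converse; together they give the $\iff$.

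Alternatively, if one prefers a self-contained argument, the same result follows by a straightforward structural induction on $\tau_{{\mathrm{1}}}$, using the observation that mutual subtyping forces $\tau_{{\mathrm{1}}}$ and $\tau_{{\mathrm{2}}}$ to share the same shape. In the integer case $ \set{  \nu  \COL \TINT \mid  \varphi_{{\mathrm{1}}} }  \approx  \set{  \nu  \COL \TINT \mid  \varphi_{{\mathrm{2}}} } $, \rn{S-Int} in both directions gives $\models \varphi_{{\mathrm{1}}} \implies \varphi_{{\mathrm{2}}}$ and $\models \varphi_{{\mathrm{2}}} \implies \varphi_{{\mathrm{1}}}$ (since $\sem{\bullet} = \top$), so the two refinements are logically equivalent and the $\ottsym{[}  \ottnt{R}  \ottsym{]} \, \ottsym{[}  \ottnt{v}  \ottsym{/}  \nu  \ottsym{]}$ instances agree; in the reference case \rn{S-Ref} hands back mutual subtyping of the content types, to which the induction hypothesis applies, while the ownership annotations are irrelevant because $\ottkw{SATv}$ never inspects them.

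The argument is essentially bookkeeping, so I do not anticipate a genuine obstacle; the only point requiring care is the legitimacy of invoking \Cref{lem:subtyp-preserves-cons} at $\Gamma = \bullet$, i.e.\ checking that its consistency precondition is discharged vacuously rather than circularly (and the lemma is stated earlier, so no dependency cycle arises). Since $\ottkw{SATv}$ is defined by cases on the shape of $\tau$, I would also explicitly note that $\approx$ preserves shape, so the two instances of the case split stay aligned and no spurious mismatch between an integer value and a reference type can occur.
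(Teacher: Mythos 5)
Your proposal is correct, and your primary route is genuinely different from the paper's. The paper proves the lemma from scratch: the forward direction is by induction on the derivation of $\bullet \vdash \tau_1 \leq \tau_2$ (in the integer base case, \rn{S-Int} with $\sem{\bullet} = \top$ gives $\models \varphi_1 \implies \varphi_2$, hence the $\ottsym{[}\ottnt{R}\ottsym{]}\,\ottsym{[}\ottnt{v}\ottsym{/}\nu\ottsym{]}$ instances agree; the reference case is the induction hypothesis), and the backward direction is symmetric --- this is essentially your fallback ``self-contained'' argument. Your main route instead obtains both directions as instances of part 3 of \Cref{lem:subtyp-preserves-cons} at $\Gamma = \Gamma' = \bullet$, discharging the preconditions vacuously: $\bullet \leq \bullet$ holds by \rn{S-TyEnv} over an empty domain, and $\ottkw{Cons}(\ottnt{H},\ottnt{R},\bullet)$ holds because $\ottkw{SAT}$ quantifies over an empty domain and $\ottkw{Own}(\ottnt{H},\ottnt{R},\bullet)$ is the empty sum, i.e.\ the constant-zero map. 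This is legitimate and non-circular, exactly as you checked: \Cref{lem:subtyp-preserves-cons} is stated and proved before \Cref{lem:sattosat}, and its proof depends only on \Cref{lem:sat-implies-gamma} and an induction on the subtyping derivation, never on \Cref{lem:sattosat}. What your route buys is economy and a clearer factoring --- it exposes \Cref{lem:sattosat} as the closed-environment special case of the general subtyping-preserves-$\ottkw{SATv}$ fact, rather than re-running the same induction; what the paper's route (and your fallback) buys is self-containedness and independence from the consistency machinery ($\ottkw{Cons}$, $\ottkw{Own}$), which is arguably tidier for a lemma whose statement mentions neither. Your side remarks are also accurate: $\approx$ forces the two types to have the same shape (subtyping only relates like constructors), so the case analyses in $\ottkw{SATv}$ stay aligned, and $\ottkw{SATv}$ never inspects ownership annotations, so the ownership components of \rn{S-Ref} are irrelevant here.
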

\begin{proof}
  We prove the forward case by induction on $ \bullet   \vdash  \tau_{{\mathrm{1}}}  \leq  \tau_{{\mathrm{2}}}$ as implied by
  $\tau_{{\mathrm{1}}}  \approx  \tau_{{\mathrm{2}}}$. The inductive case follows from the IH. In the the base case
  where $\tau_{{\mathrm{1}}}  \ottsym{=}   \set{  \nu  \COL \TINT \mid  \varphi_{{\mathrm{1}}} } $ and $\tau_{{\mathrm{2}}}  \ottsym{=}   \set{  \nu  \COL \TINT \mid  \varphi_{{\mathrm{2}}} } $, from $ \bullet   \vdash  \tau_{{\mathrm{1}}}  \leq  \tau_{{\mathrm{2}}}$
  we have that $\models  \varphi_{{\mathrm{1}}}  \implies  \varphi_{{\mathrm{2}}}$ is valid, from which we must have $\models  \ottsym{[}  \ottnt{R}  \ottsym{]} \, \ottsym{[}  \ottnt{v}  \ottsym{/}  \nu  \ottsym{]}  \varphi_{{\mathrm{1}}}  \implies  \ottsym{[}  \ottnt{R}  \ottsym{]} \, \ottsym{[}  \ottnt{v}  \ottsym{/}  \nu  \ottsym{]}  \varphi_{{\mathrm{2}}}$,
  where from the definition of $ \ottkw{SATv} ( \ottnt{H} , \ottnt{R} , \ottnt{v} , \tau_{{\mathrm{1}}} ) $ we must then have $ \ottkw{SATv} ( \ottnt{H} , \ottnt{R} , \ottnt{v} , \tau_{{\mathrm{2}}} ) $.
    
  The backwards case follows similar reasoning by induction on $ \bullet   \vdash  \tau_{{\mathrm{2}}}  \leq  \tau_{{\mathrm{1}}}$.
\end{proof}

\begin{lemma} %
  \label{lem:ownequiv-preserv}
  If $\tau_{{\mathrm{1}}}  \approx  \tau_{{\mathrm{2}}}$, then $\ottkw{own} \, \ottsym{(}  \ottnt{H}  \ottsym{,}  \ottnt{v}  \ottsym{,}  \tau_{{\mathrm{1}}}  \ottsym{)}  \ottsym{=}  \ottkw{own} \, \ottsym{(}  \ottnt{H}  \ottsym{,}  \ottnt{v}  \ottsym{,}  \tau_{{\mathrm{2}}}  \ottsym{)}$.
\end{lemma}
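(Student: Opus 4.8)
The plan is to induct on the structure of $\tau_{{\mathrm{1}}}$ (equivalently, on the subtyping derivation witnessing $\tau_{{\mathrm{1}}}  \approx  \tau_{{\mathrm{2}}}$), exploiting the fact that $\ottkw{own}$ depends only on the ownership annotations of a type and never inspects its refinements. First I would record the key consequence of $\tau_{{\mathrm{1}}}  \approx  \tau_{{\mathrm{2}}}$: it forces $\tau_{{\mathrm{1}}}$ and $\tau_{{\mathrm{2}}}$ to have identical shape and identical ownership values at every reference level. Since the only subtyping rules are \rn{S-Int} (relating integers to integers) and \rn{S-Ref} (relating references to references), inversion on $ \bullet   \vdash  \tau_{{\mathrm{1}}}  \leq  \tau_{{\mathrm{2}}}$ together with $ \bullet   \vdash  \tau_{{\mathrm{2}}}  \leq  \tau_{{\mathrm{1}}}$ shows that either both are integer types, or both are reference types $\tau_{{\mathrm{1}}}  \ottsym{=}   \tau'_{{\mathrm{1}}}  \TREF^{ r_{{\mathrm{1}}} } $ and $\tau_{{\mathrm{2}}}  \ottsym{=}   \tau'_{{\mathrm{2}}}  \TREF^{ r_{{\mathrm{2}}} } $ with $r_{{\mathrm{1}}}  \ge  r_{{\mathrm{2}}}$ and $r_{{\mathrm{2}}}  \ge  r_{{\mathrm{1}}}$ (hence $r_{{\mathrm{1}}}  \ottsym{=}  r_{{\mathrm{2}}}$) and $\tau'_{{\mathrm{1}}}  \approx  \tau'_{{\mathrm{2}}}$.

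In the base case where $\tau_{{\mathrm{1}}}  \ottsym{=}   \set{  \nu  \COL \TINT \mid  \varphi_{{\mathrm{1}}} } $ and $\tau_{{\mathrm{2}}}  \ottsym{=}   \set{  \nu  \COL \TINT \mid  \varphi_{{\mathrm{2}}} } $, neither type is of the form $ \tau'  \TREF^{ r } $, so both computations of $\ottkw{own}$ take the ``otherwise'' branch and yield $ \emptyset $, giving the desired equality regardless of $v$, $\ottnt{H}$, and the (possibly distinct) refinements $\varphi_{{\mathrm{1}}}, \varphi_{{\mathrm{2}}}$. This is exactly the point at which the observation that $\ottkw{own}$ ignores refinements is used.

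In the inductive case $\tau_{{\mathrm{1}}}  \ottsym{=}   \tau'_{{\mathrm{1}}}  \TREF^{ r_{{\mathrm{1}}} } $ and $\tau_{{\mathrm{2}}}  \ottsym{=}   \tau'_{{\mathrm{2}}}  \TREF^{ r_{{\mathrm{2}}} } $ with $r_{{\mathrm{1}}}  \ottsym{=}  r_{{\mathrm{2}}}$ and $\tau'_{{\mathrm{1}}}  \approx  \tau'_{{\mathrm{2}}}$, I would split on $v$. If $v  \ottsym{=}  \ottmv{a}$ for some $ \ottmv{a}  \in \DOM( \ottnt{H} ) $, then by definition $\ottkw{own} \, \ottsym{(}  \ottnt{H}  \ottsym{,}  \ottnt{v}  \ottsym{,}  \tau_{\ottmv{i}}  \ottsym{)}  \ottsym{=}  \ottsym{\{}  \ottmv{a}  \mapsto  r_{\ottmv{i}}  \ottsym{\}}  \ottsym{+}  \ottkw{own} \, \ottsym{(}  \ottnt{H}  \ottsym{,}  \ottnt{H}  \ottsym{(}  \ottmv{a}  \ottsym{)}  \ottsym{,}  \tau'_{\ottmv{i}}  \ottsym{)}$ for $i \in \set{1,2}$; the point-function summands agree because $r_{{\mathrm{1}}}  \ottsym{=}  r_{{\mathrm{2}}}$, and the recursive summands agree by the induction hypothesis applied to $\tau'_{{\mathrm{1}}}  \approx  \tau'_{{\mathrm{2}}}$ at the value $\ottnt{H}  \ottsym{(}  \ottmv{a}  \ottsym{)}$, so the two sums are equal. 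Otherwise (that is, $v$ is not an address in $\DOM( \ottnt{H} )$) both sides take the ``otherwise'' branch and equal $ \emptyset $.

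The argument is entirely routine; the only step requiring care is the inversion establishing $r_{{\mathrm{1}}}  \ottsym{=}  r_{{\mathrm{2}}}$ and $\tau'_{{\mathrm{1}}}  \approx  \tau'_{{\mathrm{2}}}$ from $\tau_{{\mathrm{1}}}  \approx  \tau_{{\mathrm{2}}}$, i.e.\ checking that applying \rn{S-Ref} in both directions collapses the two ownership inequalities into an equality and descends the bidirectional subtyping to the pointed-to types. I expect no genuine obstacle here: this lemma mirrors the structure of the proof of \Cref{lem:sattosat}, differing only in that ownership equality (rather than logical equivalence of refinements) is the invariant propagated by the induction.
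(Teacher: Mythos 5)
Your proposal is correct and takes essentially the same route as the paper: the paper's entire proof is a one-line appeal to ``reasoning similar to that in Lemma~\ref{lem:sattosat}'', i.e.\ induction over the subtyping derivations underlying $\tau_{{\mathrm{1}}}  \approx  \tau_{{\mathrm{2}}}$, which is exactly the induction you carry out. Your write-up is in fact more careful than the paper's sketch on the one point that genuinely differs from \Cref{lem:sattosat}: there a single direction of subtyping suffices for each direction of the implication, whereas here both directions of $ \approx $ must be inverted simultaneously to collapse $r_{{\mathrm{1}}}  \ge  r_{{\mathrm{2}}}$ and $r_{{\mathrm{2}}}  \ge  r_{{\mathrm{1}}}$ into the ownership equality that the equality of ownership maps requires.
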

\begin{proof}
  By reasoning similar to that in \Cref{lem:sattosat}.
\end{proof}

\begin{lemma} %
  \label{lem:ownadd}
  If $\tau_{\ottmv{p}}  \ottsym{=}  \tau_{{\mathrm{1}}}  \ottsym{+}  \tau_{{\mathrm{2}}}$, then $\ottkw{own} \, \ottsym{(}  \ottnt{H}  \ottsym{,}  \ottnt{v}  \ottsym{,}  \tau_{\ottmv{p}}  \ottsym{)}  \ottsym{=}  \ottkw{own} \, \ottsym{(}  \ottnt{H}  \ottsym{,}  \ottnt{v}  \ottsym{,}  \tau_{{\mathrm{1}}}  \ottsym{)}  \ottsym{+}  \ottkw{own} \, \ottsym{(}  \ottnt{H}  \ottsym{,}  \ottnt{v}  \ottsym{,}  \tau_{{\mathrm{2}}}  \ottsym{)}$.
\end{lemma}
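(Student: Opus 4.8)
The plan is to prove the identity by induction on the structure of the type $\tau_p$. The first thing I would observe is that the type addition operator, being the least commutative partial operation closed under \rn{Tadd-Int} and \rn{Tadd-Ref}, is only defined on types of identical shape; hence $\tau_1$, $\tau_2$, and $\tau_p$ all have the same number of reference constructors. This both makes the induction well-founded and guarantees that the three computations of $\ottkw{own}$ descend through the heap in lockstep, taking the same branch of the case analysis at every level.

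In the base case $\tau_p = \set{ \nu \COL \TINT \mid \varphi }$, rule \rn{Tadd-Int} forces $\tau_1$ and $\tau_2$ to be integer types as well. Since none of the three types is a reference type, the reference guard in the definition of $\ottkw{own}$ fails and every application returns the constant-$0$ map $\emptyset$, so the goal reduces to $\emptyset = \emptyset + \emptyset$, which holds by the pointwise definition of addition on ownership maps. For the inductive case $\tau_p = \tau'_p \TREF^{r}$, rule \rn{Tadd-Ref} forces $\tau_1 = \tau'_1 \TREF^{r_1}$ and $\tau_2 = \tau'_2 \TREF^{r_2}$ with $\tau'_p = \tau'_1 + \tau'_2$ and $r = r_1 + r_2$. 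I would then split on whether $v$ is an address $a \in \DOM(H)$. If it is not, all three computations again fall into the ``otherwise'' branch (the reference guard is met but the heap guard is not) and return $\emptyset$, closing the case as before.

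When $v = a$ with $a \in \DOM(H)$, unfolding the definition of $\ottkw{own}$ gives $\ottkw{own}(H, v, \tau_p) = \set{ a \mapsto r } + \ottkw{own}(H, H(a), \tau'_p)$, and analogously $\ottkw{own}(H, v, \tau_i) = \set{ a \mapsto r_i } + \ottkw{own}(H, H(a), \tau'_i)$ for $i \in \set{1,2}$. Applying the induction hypothesis to $\tau'_p = \tau'_1 + \tau'_2$ at the value $H(a)$, together with the equality $\set{ a \mapsto r_1 + r_2 } = \set{ a \mapsto r_1 } + \set{ a \mapsto r_2 }$ of ownership maps, I can recombine the four summands using commutativity and associativity of pointwise map addition to obtain exactly $\ottkw{own}(H, v, \tau_1) + \ottkw{own}(H, v, \tau_2)$.

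This is essentially a bookkeeping induction, so I do not anticipate a genuine obstacle; the only point worth stating carefully is that addition of ownership maps $\textbf{Addr} \rightarrow [0,1]$ is defined pointwise and is therefore commutative and associative, which is precisely what licenses the final rearrangement of $\set{ a \mapsto r_1 }$, $\set{ a \mapsto r_2 }$, and the two recursive ownership maps. Keeping the three recursions synchronized via the shape-preservation remark above is what makes each case immediate.
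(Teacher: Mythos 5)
Your proposal is correct and takes essentially the same route as the paper: the paper inducts on the derivation of $\tau_{1} + \tau_{2} = \tau_{p}$, which coincides with your structural induction since the addition rules are syntax-directed, and it uses the same case split (integer case giving $\emptyset = \emptyset + \emptyset$; reference case assuming $v = a \in \DOM(H)$, with the complementary case dismissed as trivial). The key steps are identical as well: unfold $\ottkw{own}$, apply the induction hypothesis at $H(a)$, and recombine via $\{a \mapsto r_{1} + r_{2}\} = \{a \mapsto r_{1}\} + \{a \mapsto r_{2}\}$ together with commutativity and associativity of pointwise map addition.
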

\begin{proof}
  By induction on the rules used to derive $\tau_{{\mathrm{1}}}  \ottsym{+}  \tau_{{\mathrm{2}}}  \ottsym{=}  \tau_{\ottmv{p}}$.
  \begin{rncase}{Tadd-Int}
    We have $\ottkw{own} \, \ottsym{(}  \ottnt{H}  \ottsym{,}  \ottnt{v}  \ottsym{,}  \tau_{\ottmv{p}}  \ottsym{)}  \ottsym{=}  \ottkw{own} \, \ottsym{(}  \ottnt{H}  \ottsym{,}  \ottnt{v}  \ottsym{,}  \tau_{{\mathrm{1}}}  \ottsym{+}  \tau_{{\mathrm{2}}}  \ottsym{)}$, where $\tau_{{\mathrm{1}}}  \ottsym{+}  \tau_{{\mathrm{2}}}  \ottsym{=}   \set{  \nu  \COL \TINT \mid   \varphi_{{\mathrm{1}}}  \wedge  \varphi_{{\mathrm{2}}}  } $,
    $\ottkw{own} \, \ottsym{(}  \ottnt{H}  \ottsym{,}  \ottnt{v}  \ottsym{,}  \tau_{{\mathrm{1}}}  \ottsym{)}$ and $\ottkw{own} \, \ottsym{(}  \ottnt{H}  \ottsym{,}  \ottnt{v}  \ottsym{,}  \tau_{{\mathrm{2}}}  \ottsym{)}$, where $\tau_{{\mathrm{1}}}  \ottsym{=}   \set{  \nu  \COL \TINT \mid  \varphi_{{\mathrm{1}}} } , \tau_{{\mathrm{2}}}  \ottsym{=}   \set{  \nu  \COL \TINT \mid  \varphi_{{\mathrm{2}}} } $.
  
    From the definition of ownership, we have $\ottkw{own} \, \ottsym{(}  \ottnt{H}  \ottsym{,}  \ottnt{v}  \ottsym{,}  \tau_{\ottmv{p}}  \ottsym{)}  \ottsym{=}  \ottkw{own} \, \ottsym{(}  \ottnt{H}  \ottsym{,}  \ottnt{v}  \ottsym{,}  \tau_{{\mathrm{1}}}  \ottsym{)}  \ottsym{=}  \ottkw{own} \, \ottsym{(}  \ottnt{H}  \ottsym{,}  \ottnt{v}  \ottsym{,}  \tau_{{\mathrm{2}}}  \ottsym{)}  \ottsym{=}   \emptyset $.
    It is thus trivial that $\ottkw{own} \, \ottsym{(}  \ottnt{H}  \ottsym{,}  \ottnt{v}  \ottsym{,}  \tau_{\ottmv{p}}  \ottsym{)}  \ottsym{=}  \ottkw{own} \, \ottsym{(}  \ottnt{H}  \ottsym{,}  \ottnt{v}  \ottsym{,}  \tau_{{\mathrm{1}}}  \ottsym{)}  \ottsym{+}  \ottkw{own} \, \ottsym{(}  \ottnt{H}  \ottsym{,}  \ottnt{v}  \ottsym{,}  \tau_{{\mathrm{2}}}  \ottsym{)}$.
  \end{rncase}

  \begin{rncase}{Tadd-Ref}
    We assume $\ottnt{v} \, \ottsym{=} \, \ottmv{a}$ and $ \ottmv{a}  \in   \DOM( \ottnt{H} )  $, otherwise the result trivially holds.
    
    We have $\ottkw{own} \, \ottsym{(}  \ottnt{H}  \ottsym{,}  \ottnt{v}  \ottsym{,}  \tau_{\ottmv{p}}  \ottsym{)}  \ottsym{=}  \ottkw{own} \, \ottsym{(}  \ottnt{H}  \ottsym{,}  \ottnt{v}  \ottsym{,}  \tau_{{\mathrm{1}}}  \ottsym{+}  \tau_{{\mathrm{1}}}  \ottsym{)}$, where $\tau_{{\mathrm{1}}}  \ottsym{+}  \tau_{{\mathrm{2}}}  \ottsym{=}   \ottsym{(}  \tau'_{{\mathrm{1}}}  \ottsym{+}  \tau'_{{\mathrm{2}}}  \ottsym{)}  \TREF^{ r_{{\mathrm{1}}}  \ottsym{+}  r_{{\mathrm{2}}} } $,
    and  $\tau_{{\mathrm{1}}}  \ottsym{=}   \tau'_{{\mathrm{1}}}  \TREF^{ r_{{\mathrm{1}}} } $, $\tau_{{\mathrm{2}}}  \ottsym{=}   \tau'_{{\mathrm{2}}}  \TREF^{ r_{{\mathrm{2}}} } $.
    
    From the definition of ownership, we have $\ottkw{own} \, \ottsym{(}  \ottnt{H}  \ottsym{,}  \ottnt{v}  \ottsym{,}  \tau_{\ottmv{p}}  \ottsym{)}  \ottsym{=}  \ottsym{\{}  \ottmv{a}  \mapsto  r_{{\mathrm{1}}}  \ottsym{+}  r_{{\mathrm{2}}}  \ottsym{\}}  \ottsym{+}  \ottkw{own} \, \ottsym{(}  \ottnt{H}  \ottsym{,}  \ottnt{H}  \ottsym{(}  \ottnt{v}  \ottsym{)}  \ottsym{,}  \tau'_{{\mathrm{1}}}  \ottsym{+}  \tau'_{{\mathrm{2}}}  \ottsym{)}$ and:
    \begin{align*}
      \ottkw{own} \, \ottsym{(}  \ottnt{H}  \ottsym{,}  \ottnt{v}  \ottsym{,}  \tau_{{\mathrm{1}}}  \ottsym{)}  \ottsym{+}  \ottkw{own} \, \ottsym{(}  \ottnt{H}  \ottsym{,}  \ottnt{v}  \ottsym{,}  \tau_{{\mathrm{2}}}  \ottsym{)} & = \ottsym{\{}  \ottmv{a}  \mapsto  r_{{\mathrm{1}}}  \ottsym{\}}  \ottsym{+}  \ottkw{own} \, \ottsym{(}  \ottnt{H}  \ottsym{,}  \ottnt{H}  \ottsym{(}  \ottnt{v}  \ottsym{)}  \ottsym{,}  \tau'_{{\mathrm{1}}}  \ottsym{)}  \ottsym{+}  \ottsym{\{}  \ottmv{a}  \mapsto  r_{{\mathrm{2}}}  \ottsym{\}}  \ottsym{+}  \ottkw{own} \, \ottsym{(}  \ottnt{H}  \ottsym{,}  \ottnt{H}  \ottsym{(}  \ottnt{v}  \ottsym{)}  \ottsym{,}  \tau'_{{\mathrm{2}}}  \ottsym{)} \\
                                    & = \ottsym{\{}  \ottmv{a}  \mapsto  r_{{\mathrm{1}}}  \ottsym{+}  r_{{\mathrm{2}}}  \ottsym{\}}  \ottsym{+}  \ottkw{own} \, \ottsym{(}  \ottnt{H}  \ottsym{,}  \ottnt{H}  \ottsym{(}  \ottnt{v}  \ottsym{)}  \ottsym{,}  \tau'_{{\mathrm{1}}}  \ottsym{)}  \ottsym{+}  \ottkw{own} \, \ottsym{(}  \ottnt{H}  \ottsym{,}  \ottnt{H}  \ottsym{(}  \ottnt{v}  \ottsym{)}  \ottsym{,}  \tau'_{{\mathrm{2}}}  \ottsym{)}
    \end{align*}
    By the induction hypothesis, have that $\ottkw{own} \, \ottsym{(}  \ottnt{H}  \ottsym{,}  \ottnt{H}  \ottsym{(}  \ottnt{v}  \ottsym{)}  \ottsym{,}  \tau'_{{\mathrm{1}}}  \ottsym{+}  \tau'_{{\mathrm{2}}}  \ottsym{)}  \ottsym{=}  \ottkw{own} \, \ottsym{(}  \ottnt{H}  \ottsym{,}  \ottnt{H}  \ottsym{(}  \ottnt{v}  \ottsym{)}  \ottsym{,}  \tau'_{{\mathrm{1}}}  \ottsym{)}  \ottsym{+}  \ottkw{own} \, \ottsym{(}  \ottnt{H}  \ottsym{,}  \ottnt{H}  \ottsym{(}  \ottnt{v}  \ottsym{)}  \ottsym{,}  \tau'_{{\mathrm{2}}}  \ottsym{)}$ and can conclude that $\ottkw{own} \, \ottsym{(}  \ottnt{H}  \ottsym{,}  \ottnt{v}  \ottsym{,}  \tau_{\ottmv{p}}  \ottsym{)}  \ottsym{=}  \ottkw{own} \, \ottsym{(}  \ottnt{H}  \ottsym{,}  \ottnt{v}  \ottsym{,}  \tau_{{\mathrm{1}}}  \ottsym{)}  \ottsym{+}  \ottkw{own} \, \ottsym{(}  \ottnt{H}  \ottsym{,}  \ottnt{v}  \ottsym{,}  \tau_{{\mathrm{2}}}  \ottsym{)}$.
  \end{rncase}
\end{proof}

\begin{lemma} %
  \label{lem:satadd}
  If $\tau_{\ottmv{p}}  \ottsym{=}  \tau_{{\mathrm{1}}}  \ottsym{+}  \tau_{{\mathrm{2}}}$, we have
    $ \ottkw{SATv} ( \ottnt{H} , \ottnt{R} , v , \tau_{\ottmv{p}} ) $ iff. $ \ottkw{SATv} ( \ottnt{H} , \ottnt{R} , v , \tau_{{\mathrm{1}}} ) $ and $ \ottkw{SATv} ( \ottnt{H} , \ottnt{R} , v , \tau_{{\mathrm{2}}} ) $
\end{lemma}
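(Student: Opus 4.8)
The plan is to prove both directions at once by induction on the rules used to derive $\tau_p = \tau_1 + \tau_2$, namely \rn{Tadd-Int} and \rn{Tadd-Ref}. In each case I would unfold the definition of $\ottkw{SATv}$ on $\tau_p$, $\tau_1$, and $\tau_2$ and show that the condition characterizing $\ottkw{SATv}(\ottnt{H}, \ottnt{R}, v, \tau_p)$ is logically equivalent to the conjunction of the conditions for $\tau_1$ and $\tau_2$.

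First, for \rn{Tadd-Int}, I have $\tau_1 = \set{\nu \COL \TINT \mid \varphi_1}$, $\tau_2 = \set{\nu \COL \TINT \mid \varphi_2}$, and $\tau_p = \set{\nu \COL \TINT \mid \varphi_1 \wedge \varphi_2}$. Then $\ottkw{SATv}(\ottnt{H}, \ottnt{R}, v, \tau_p)$ unfolds to $v \in \mathbb{Z} \wedge [\ottnt{R}][v/\nu](\varphi_1 \wedge \varphi_2)$. Since the substitution operations $[\ottnt{R}]$ and $[v/\nu]$ are homomorphic on $\wedge$, this is equivalent to $(v \in \mathbb{Z} \wedge [\ottnt{R}][v/\nu]\varphi_1) \wedge (v \in \mathbb{Z} \wedge [\ottnt{R}][v/\nu]\varphi_2)$, i.e., exactly $\ottkw{SATv}(\ottnt{H}, \ottnt{R}, v, \tau_1) \wedge \ottkw{SATv}(\ottnt{H}, \ottnt{R}, v, \tau_2)$. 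This base case uses no induction hypothesis. Next, for \rn{Tadd-Ref}, I have $\tau_1 = \tau_1' \TREF^{r_1}$, $\tau_2 = \tau_2' \TREF^{r_2}$, and $\tau_p = (\tau_1' + \tau_2') \TREF^{r_1 + r_2}$. I would first dispatch the degenerate subcase: if $v$ is not of the form $\ottmv{a}$ with $\ottmv{a} \in \DOM(\ottnt{H})$, then the reference branch of $\ottkw{SATv}$ fails for all three types, so both sides of the biconditional are false and it holds vacuously. Otherwise $v = \ottmv{a}$ with $\ottmv{a} \in \DOM(\ottnt{H})$, and each $\ottkw{SATv}$ reduces to its content obligation on $\ottnt{H}(\ottmv{a})$: the left becomes $\ottkw{SATv}(\ottnt{H}, \ottnt{R}, \ottnt{H}(\ottmv{a}), \tau_1' + \tau_2')$ and the right becomes $\ottkw{SATv}(\ottnt{H}, \ottnt{R}, \ottnt{H}(\ottmv{a}), \tau_1') \wedge \ottkw{SATv}(\ottnt{H}, \ottnt{R}, \ottnt{H}(\ottmv{a}), \tau_2')$. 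Applying the induction hypothesis to the subderivation establishing $\tau_1' + \tau_2'$ equates these two, closing the case.

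This is a routine structural induction, so the only obstacle is bookkeeping rather than mathematical depth. The one point needing care is the reference case: $\ottkw{SATv}$ descends into the pointed-to value only when $v$ is an address actually present in $\ottnt{H}$, so I must explicitly separate the vacuous subcase from the subcase where the induction hypothesis applies. It is also worth observing as a sanity check that the ownership values $r_1$, $r_2$, and $r_1 + r_2$ are discarded in the reference branch of $\ottkw{SATv}$ and hence play no role, so no reasoning about ownership arithmetic is required here, mirroring the analogous structure of \Cref{lem:ownadd}.
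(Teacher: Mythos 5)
Your proposal is correct and follows essentially the same route as the paper's proof: induction on the derivation of $\tau_{{\mathrm{1}}}  \ottsym{+}  \tau_{{\mathrm{2}}}$, splitting the conjunction under substitution in the \rn{Tadd-Int} case, and descending through the heap with the induction hypothesis in the \rn{Tadd-Ref} case (which the paper dismisses as ``immediate''; your explicit treatment of the vacuous subcase where $v$ is not an address in $\DOM(\ottnt{H})$ just spells out what the paper leaves implicit). The only cosmetic difference is that you carry the biconditional through each step, while the paper proves the forward direction and notes the backward direction is symmetric.
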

\begin{proof}
  By induction on the rules used to derive $\tau_{{\mathrm{1}}}  \ottsym{+}  \tau_{{\mathrm{2}}}$. In the following
  we only prove the forward direction of the implication; the backwards
  direction is symmetric.
  \begin{rncase}{Tadd-Int}
    We have $ \ottkw{SATv} ( \ottnt{H} , \ottnt{R} , v , \tau_{{\mathrm{1}}}  \ottsym{+}  \tau_{{\mathrm{2}}} ) $, where $\tau_{{\mathrm{1}}}  \ottsym{+}  \tau_{{\mathrm{2}}}  \ottsym{=}   \set{  \nu  \COL \TINT \mid   \varphi_{{\mathrm{1}}}  \wedge  \varphi_{{\mathrm{2}}}  } $,
    we must show $ \ottkw{SATv} ( \ottnt{H} , \ottnt{R} , v , \tau_{{\mathrm{1}}} ) $ and $ \ottkw{SATv} ( \ottnt{H} , \ottnt{R} , v , \tau_{{\mathrm{2}}} ) $,
    where $\tau_{{\mathrm{1}}}  \ottsym{=}   \set{  \nu  \COL \TINT \mid  \varphi_{{\mathrm{1}}} } , \tau_{{\mathrm{2}}}  \ottsym{=}   \set{  \nu  \COL \TINT \mid  \varphi_{{\mathrm{2}}} } $.

    From the definition of $\ottkw{SATv}$, we must show $\ottsym{[}  \ottnt{R}  \ottsym{]} \, \ottsym{[}  v  \ottsym{/}  \nu  \ottsym{]}  \varphi_{{\mathrm{1}}}$ and $\ottsym{[}  \ottnt{R}  \ottsym{]} \, \ottsym{[}  \ottnt{v}  \ottsym{/}  \nu  \ottsym{]}  \varphi_{{\mathrm{2}}}$.
    From $ \ottkw{SATv} ( \ottnt{H} , \ottnt{R} , v , \tau_{{\mathrm{1}}}  \ottsym{+}  \tau_{{\mathrm{2}}} ) $ we have $\ottsym{[}  \ottnt{R}  \ottsym{]} \, \ottsym{[}  \ottnt{v}  \ottsym{/}  \mathit{x}  \ottsym{]}  \ottsym{(}   \varphi_{{\mathrm{1}}}  \wedge  \varphi_{{\mathrm{2}}}   \ottsym{)}$.
    It is immediate that for any value $v$ such that $\ottsym{[}  \ottnt{R}  \ottsym{]} \, \ottsym{[}  \ottnt{v}  \ottsym{/}  \nu  \ottsym{]}  \ottsym{(}   \varphi_{{\mathrm{1}}}  \wedge  \varphi_{{\mathrm{2}}}   \ottsym{)}$, we must have $\ottsym{[}  \ottnt{R}  \ottsym{]} \, \ottsym{[}  \ottnt{v}  \ottsym{/}  \nu  \ottsym{]}  \varphi_{{\mathrm{1}}}$ and $\ottsym{[}  \ottnt{R}  \ottsym{]} \, \ottsym{[}  \ottnt{v}  \ottsym{/}  \nu  \ottsym{]}  \varphi_{{\mathrm{2}}}$.
    We then conclude $ \ottkw{SATv} ( \ottnt{H} , \ottnt{R} , v , \tau_{{\mathrm{1}}}  \ottsym{+}  \tau_{{\mathrm{2}}} ) $ implies $ \ottkw{SATv} ( \ottnt{H} , \ottnt{R} , v , \tau_{{\mathrm{1}}} ) $ and $ \ottkw{SATv} ( \ottnt{H} , \ottnt{R} , v , \tau_{{\mathrm{2}}} ) $.
  \end{rncase}

  \begin{rncase}{Tadd-Ref}
    Immediate from the definition of $\ottkw{SATv}$ and the inductive hypothesis.
  \end{rncase}
\end{proof}

\begin{definition}
  The valid substitution relation, written $ \ottnt{R} \vdash _{vs}  \tau $ is the smallest relation closed
  under the following rules:
  \infrule[]{
    \forall \,  \mathit{x}  \in   \ottkw{FPV} \, \ottsym{(}  \varphi  \ottsym{)}  \setminus   \set{ \nu }     \ottsym{.}  \exists \, n  \ottsym{.}  \ottnt{R}  \ottsym{(}  \mathit{x}  \ottsym{)}  \ottsym{=}  n
  }{
     \ottnt{R} \vdash _{vs}   \set{  \nu  \COL \TINT \mid  \varphi }  
  }
  \infrule[]{
     \ottnt{R} \vdash _{vs}  \tau 
  }{
    \ottnt{R} \vdash _{vs}   \tau  \TREF^{ r }  
 }
\end{definition}

\begin{lemma} %
  \label{lem:r-valid-subst}
  If $ \oldvec{\ell}   \vdash _{\wf}  \Gamma $ and $\ottkw{Cons} \, \ottsym{(}  \ottnt{H}  \ottsym{,}  \ottnt{R}  \ottsym{,}  \Gamma  \ottsym{)}$, then $\forall \,  \mathit{x}  \in   \DOM( \Gamma )    \ottsym{.}   \ottnt{R} \vdash _{vs}  \Gamma  \ottsym{(}  \mathit{x}  \ottsym{)} $.
\end{lemma}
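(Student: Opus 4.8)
The plan is to fix an arbitrary $ \mathit{x} \in \DOM( \Gamma ) $ and establish $ \ottnt{R} \vdash_{vs} \Gamma(\mathit{x}) $ by structural induction on the type $\Gamma(\mathit{x})$. The two hypotheses play complementary roles: well-formedness constrains which program variables may appear free in the refinements occurring in $\Gamma(\mathit{x})$, while consistency guarantees that each such variable carries an integer value in $\ottnt{R}$. The valid-substitution relation asks for nothing more than this, so once the two facts are lined up the result is immediate at each level of the type.

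First I would unpack the hypotheses. From $ \oldvec{\ell} \vdash_{\wf} \Gamma $, inversion on \rn{WF-Env} gives $ \oldvec{\ell} \mid \Gamma \vdash_{\wf} \Gamma(\mathit{x}) $ for the chosen $\mathit{x}$. From $\ottkw{Cons}(\ottnt{H}, \ottnt{R}, \Gamma)$ I would extract $\ottkw{SAT}(\ottnt{H}, \ottnt{R}, \Gamma)$, and hence $ \ottkw{SATv}(\ottnt{H}, \ottnt{R}, \ottnt{R}(\mathit{y}), \Gamma(\mathit{y})) $ for every $ \mathit{y} \in \DOM( \Gamma ) $. I would then record the key auxiliary observation: for any $ \mathit{y} \in \DOM( \Gamma ) $ whose type is an integer refinement $\Gamma(\mathit{y}) = \set{ \nu \COL \TINT \mid \_ }$, the integer clause of $\ottkw{SATv}$ forces $ \ottnt{R}(\mathit{y}) \in \mathbb{Z} $, i.e.\ $\ottnt{R}(\mathit{y}) = n$ for some $n$.

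The induction on $\tau = \Gamma(\mathit{x})$ then splits into two cases. In the base case $\tau = \set{ \nu \COL \TINT \mid \varphi }$, inversion on the well-formedness derivation (\rn{WF-Int} followed by \rn{WF-Phi}) tells us that every $ \mathit{y} \in \ottkw{FPV}(\varphi) \setminus \set{ \nu } $ satisfies $\Gamma(\mathit{y}) = \set{ \nu \COL \TINT \mid \_ }$; the auxiliary observation then yields $\ottnt{R}(\mathit{y}) = n$, which is exactly the premise of the valid-substitution rule for integer types. In the reference case $\tau = \tau' \TREF^{ r }$, inversion on \rn{WF-Ref} gives $ \oldvec{\ell} \mid \Gamma \vdash_{\wf} \tau' $, so the induction hypothesis delivers $ \ottnt{R} \vdash_{vs} \tau' $, and the reference rule concludes $ \ottnt{R} \vdash_{vs} \tau' \TREF^{ r } $.

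The only point requiring care — and the closest thing to an obstacle — is that the free program variables occurring in a refinement nested beneath reference constructors still range over the top-level domain $\DOM( \Gamma )$, rather than over some inner environment. This is precisely how both well-formedness (which threads a single $\Gamma$ through \rn{WF-Ref} into the inner type) and consistency are formulated, so the structural induction on $\tau$ stays sound and no auxiliary environment bookkeeping is needed. Beyond verifying this alignment, the argument is routine.
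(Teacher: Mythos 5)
Your proof is correct and takes essentially the same approach as the paper: the paper's (two-sentence) argument combines exactly the same two facts---well-formedness forces every free variable of every refinement in $\Gamma$ to be integer-typed in $\Gamma$, and $\ottkw{Cons}$ (via $\ottkw{SATv}$) forces every integer-typed variable of $\Gamma$ to be bound to an integer in $\ottnt{R}$. Your explicit structural induction on $\Gamma  \ottsym{(}  \mathit{x}  \ottsym{)}$, including the observation that \rn{WF-Ref} threads the same top-level $\Gamma$ beneath reference constructors, merely makes precise what the paper leaves implicit.
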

\begin{proof}
  By $\ottkw{Cons} \, \ottsym{(}  \ottnt{H}  \ottsym{,}  \ottnt{R}  \ottsym{,}  \Gamma  \ottsym{)}$, all integer variables in $\Gamma$ must be in
  the domain of $\ottnt{R}$ and must be an integer. From $ \oldvec{\ell}   \vdash _{\wf}  \Gamma $, any free variables
  in any refinement of any type in $\Gamma$ must be an integer valued variable in $\Gamma$,
  which gives the required result.
\end{proof}

\begin{definition}
  We will write $ \ottnt{R}  \sqsubseteq  \ottnt{R'} $ to denote two register files such that:
  \begin{enumerate}
  \item $ \DOM( \ottnt{R} )  \subseteq  \DOM( \ottnt{R'} ) $, and
  \item $\forall \,  \mathit{x}  \in \DOM( \ottnt{R} )   \ottsym{.}  \ottnt{R}  \ottsym{(}  \mathit{x}  \ottsym{)}  \ottsym{=}  \ottnt{R'}  \ottsym{(}  \mathit{x}  \ottsym{)}$
  \end{enumerate}
\end{definition}

\begin{definition}
  Two heaps $\ottnt{H}$ and $\ottnt{H'}$ are \emph{equivalent modulo $\ottmv{a}$}, written $ \ottnt{H}   \approx _ \ottmv{a}   \ottnt{H'} $ if:
  \begin{enumerate}
  \item $ \DOM( \ottnt{H} )   \ottsym{=}   \DOM( \ottnt{H'} ) $
  \item $\forall \,  \ottmv{a'}  \in   \DOM( \ottnt{H} )    \ottsym{.}  \ottmv{a'} \, \neq \, \ottmv{a}  \implies  \ottnt{H}  \ottsym{(}  \ottmv{a}  \ottsym{)}  \ottsym{=}  \ottnt{H}  \ottsym{(}  \ottmv{a'}  \ottsym{)}$
  \item For any $n$, $ \ottnt{H} \vdash   \ottmv{a}  \Downarrow  n $ iff $ \ottnt{H'} \vdash   \ottmv{a}  \Downarrow  n $
  \end{enumerate}
\end{definition}

\begin{lemma}
  \label{lem:top-type-empty-omap}
  For any type $\tau  \ottsym{=}  \top_{\ottmv{n}}$, $\ottnt{H}, \ottnt{v}$, $\ottkw{own} \, \ottsym{(}  \ottnt{H}  \ottsym{,}  \ottnt{v}  \ottsym{,}  \top_{\ottmv{n}}  \ottsym{)}  \ottsym{=}   \emptyset $.
\end{lemma}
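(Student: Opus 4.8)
The plan is to induct on $\ottmv{n}$, following the inductive definition of the maximal type, namely $\top_{{\mathrm{0}}} = \set{ \nu \COL \TINT \mid \top }$ and $\top_{\ottmv{i}} = \top_{{\ottmv{i}-1}} \TREF^{ \ottsym{0} }$. The statement to establish at each $\ottmv{n}$ is that $\ottkw{own}(\ottnt{H}, v, \top_{\ottmv{n}}) = \emptyset$ for \emph{every} heap $\ottnt{H}$ and value $v$, so the induction hypothesis should be quantified over all $\ottnt{H}$ and $v$. This matters because the recursive call in the definition of $\ottkw{own}$ is made on a different value, $\ottnt{H}(\ottmv{a})$, and we will need to invoke the hypothesis at that value.

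For the base case $\ottmv{n} = 0$, the type $\top_{{\mathrm{0}}}$ is an integer type and hence is not of the form $\tau' \TREF^{ r }$. The defining clause of $\ottkw{own}$ therefore selects its otherwise branch and returns $\emptyset$, independently of $\ottnt{H}$ and $v$.

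For the inductive step $\ottmv{n} = \ottmv{i}+1$, I would observe that $\top_{\ottmv{i}+1} = \top_{\ottmv{i}} \TREF^{ \ottsym{0} }$, so the ownership annotation on the outermost reference constructor is exactly $\ottsym{0}$. If $v$ is not an address belonging to $\DOM(\ottnt{H})$, the otherwise branch again yields $\emptyset$ directly. Otherwise $v = \ottmv{a}$ with $ \ottmv{a} \in \DOM(\ottnt{H}) $, and by definition $\ottkw{own}(\ottnt{H}, v, \top_{\ottmv{i}+1}) = \set{ \ottmv{a} \mapsto \ottsym{0} } + \ottkw{own}(\ottnt{H}, \ottnt{H}(\ottmv{a}), \top_{\ottmv{i}})$. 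The induction hypothesis, instantiated at the value $\ottnt{H}(\ottmv{a})$, gives $\ottkw{own}(\ottnt{H}, \ottnt{H}(\ottmv{a}), \top_{\ottmv{i}}) = \emptyset$.

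The crux, such as it is, is the single observation that $\set{ \ottmv{a} \mapsto \ottsym{0} }$ — the function returning $\ottsym{0}$ at $\ottmv{a}$ and $0$ elsewhere — is by definition the constant zero function $\emptyset$; hence $\set{ \ottmv{a} \mapsto \ottsym{0} } + \emptyset = \emptyset$ by the pointwise definition of $+$ on ownership maps. I do not anticipate any real obstacle here: the zero ownership forced on every reference constructor of $\top_{\ottmv{n}}$ by its ownership well-formedness guarantees that no positive ownership is ever contributed, so the result falls out immediately from the structural recursion.
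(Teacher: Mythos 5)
Your proof is correct and takes essentially the same route as the paper: induction on the structure of $\top_{\ottmv{n}}$, a trivial base case for the integer type, and an inductive step that splits on whether $v$ is an address in $\DOM(\ottnt{H})$, concluding via the induction hypothesis at $\ottnt{H}(\ottmv{a})$ together with the observation that $\ottsym{\{}\ottmv{a}\mapsto\ottsym{0}\ottsym{\}}$ is the constant zero function and $+$ is pointwise. Your explicit remark that the induction hypothesis must be quantified over the value (since the recursive call is at $\ottnt{H}(\ottmv{a})$) is a detail the paper leaves implicit, but the argument is the same.
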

\begin{proof}
  By induction on $\tau$. In the base case, the result is trivial.
  Then consider the case where $\tau  \ottsym{=}   \top_{{\ottmv{n}-1}}  \TREF^{ \ottsym{0} } $. If $ \ottnt{v}  \not\in   \textbf{Addr}  $,
  or if $\ottnt{v} \, \ottsym{=} \, \ottmv{a}$ and $ \ottmv{a}  \not\in   \DOM( \ottnt{H} )  $, then the result trivially holds.
  Otherwise the result holds from the inductive hypothesis, the definition of $+$ and $\ottsym{\{}  \ottmv{a}  \mapsto  \ottsym{0}  \ottsym{\}}$.
\end{proof}

\begin{lemma}[Heap Update Ownership Preservation] %
  \label{lem:heapop}
  If $ \ottnt{H}   \approx _ \ottmv{a}   \ottnt{H'} $ and $\ottkw{own} \, \ottsym{(}  \ottnt{H}  \ottsym{,}  \ottnt{v}  \ottsym{,}  \tau  \ottsym{)}  \ottsym{(}  \ottmv{a}  \ottsym{)}  \ottsym{=}  \ottsym{0}$, then
  $\ottkw{own} \, \ottsym{(}  \ottnt{H}  \ottsym{,}  \ottnt{v}  \ottsym{,}  \tau  \ottsym{)}  \ottsym{=}  \ottkw{own} \, \ottsym{(}  \ottnt{H'}  \ottsym{,}  \ottnt{v}  \ottsym{,}  \tau  \ottsym{)}$.
\end{lemma}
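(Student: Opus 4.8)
The plan is to induct on the structure of $\tau$ (equivalently, on $|\tau|$, the number of reference constructors). The base case, where $\tau = \{\nu \COL \TINT \mid \varphi\}$, is immediate: the defining clause for $\ottkw{own}$ falls into its ``otherwise'' branch regardless of $v$, so $\ottkw{own}(H, v, \tau) = \emptyset = \ottkw{own}(H', v, \tau)$. The inductive case is $\tau = \tau' \TREF^{r}$. If $v \notin \textbf{Addr}$, or if $v = a_0$ with $a_0 \notin \DOM(H)$, then both sides reduce to $\emptyset$; here I would use clause (1) of $H \approx_a H'$, namely $\DOM(H) = \DOM(H')$, to transfer the ``$a_0 \notin \DOM$'' condition from $H$ to $H'$ so that both unfoldings agree.

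The interesting case is $v = a_0 \in \DOM(H) = \DOM(H')$, where both maps unfold as $\ottkw{own}(H, a_0, \tau) = \{a_0 \mapsto r\} + \ottkw{own}(H, H(a_0), \tau')$ and symmetrically for $H'$. I would then split on whether $a_0 = a$. When $a_0 \neq a$, clause (2) of $H \approx_a H'$ gives $H(a_0) = H'(a_0)$, which I call $v'$. From the hypothesis $\ottkw{own}(H, a_0, \tau)(a) = 0$ together with $(\{a_0 \mapsto r\})(a) = 0$ (since $a_0 \neq a$), I obtain $\ottkw{own}(H, v', \tau')(a) = 0$, so the induction hypothesis applied to $\tau'$ (which is again ownership well-formed, well-formedness being hereditary on subterms) yields $\ottkw{own}(H, v', \tau') = \ottkw{own}(H', v', \tau')$; adding $\{a_0 \mapsto r\}$ to both sides closes this subcase.

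The main obstacle is the subcase $a_0 = a$, precisely because $a$ is the single address at which $H$ and $H'$ may disagree, so the recursive calls $\ottkw{own}(H, H(a), \tau')$ and $\ottkw{own}(H', H'(a), \tau')$ are taken at possibly different values and the induction hypothesis cannot be applied directly. Indeed, without a further hypothesis the statement is false: take $\tau = (\sigma \TREF^{1})\TREF^{0}$ and heaps sending $a$ to two distinct cells each holding an integer, so that $\ottkw{own}(\cdot,a,\tau)(a)=0$ yet the two maps differ. The resolution is ownership well-formedness of $\tau$: from $\ottkw{own}(H, a, \tau)(a) = r + \ottkw{own}(H, H(a), \tau')(a) = 0$ and the nonnegativity of ownerships I conclude $r = 0$, whereupon ownership well-formedness of $\tau = \tau' \TREF^{0}$ forces $\tau' = \top_{n}$ for some $n$. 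By \Cref{lem:top-type-empty-omap}, $\ottkw{own}(H, H(a), \tau') = \emptyset = \ottkw{own}(H', H'(a), \tau')$ independently of the differing contents at $a$, so both maps collapse to $\{a \mapsto 0\}$ and coincide. I would therefore carry ownership well-formedness of $\tau$ as a standing assumption (it holds for every type arising in the soundness proof); note that clause (3) of $H \approx_a H'$ is not needed here, as it is used instead in the companion preservation argument for $\ottkw{SATv}$.
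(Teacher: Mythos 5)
Your proof is correct and takes essentially the same route as the paper's: the same induction on the shape of $\tau$, the same case split on whether $v$ is the mutated address $a$, the same appeal to ownership well-formedness (forcing $r = 0$ and $\tau' = \top_{n}$) together with \Cref{lem:top-type-empty-omap} when $v = a$, and the same combination of heap agreement off $a$ with the induction hypothesis otherwise. Your extra observations---the counterexample showing ownership well-formedness is indispensable, and the remark that clause (3) of $\approx_{a}$ is needed only for the companion $\ottkw{SATv}$ preservation lemma---correctly make explicit what the paper leaves implicit.
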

\begin{proof}
  By induction on the shape of $\tau$. If $\tau  \ottsym{=}   \set{  \nu  \COL \TINT \mid  \varphi } $ then
  the result trivially holds. Otherwise, $\tau  \ottsym{=}   \tau'  \TREF^{ r } $. We assume
  that $\ottnt{v} \, \ottsym{=} \, \ottmv{a''}$ and $ \ottmv{a''}  \in \DOM( \ottnt{H} ) $ (otherwise the result
  trivially holds, as $ \DOM( \ottnt{H} )   \ottsym{=}   \DOM( \ottnt{H'} ) $ by $ \ottnt{H}   \approx _ \ottmv{a}   \ottnt{H'} $).
  Consider the case where $\ottmv{a''} \, \ottsym{=} \, \ottmv{a}$. By definition
  $\ottkw{own} \, \ottsym{(}  \ottnt{H}  \ottsym{,}  \ottmv{a}  \ottsym{,}  \tau  \ottsym{)}  \ottsym{=}  \ottsym{\{}  \ottmv{a}  \mapsto  r  \ottsym{\}}  \ottsym{+}  \ottkw{own} \, \ottsym{(}  \ottnt{H}  \ottsym{,}  \ottnt{H}  \ottsym{(}  \ottmv{a}  \ottsym{)}  \ottsym{,}  \tau'  \ottsym{)}$, and by the
  assumption that $\ottkw{own} \, \ottsym{(}  \ottnt{H}  \ottsym{,}  \ottmv{a}  \ottsym{,}  \tau  \ottsym{)}  \ottsym{(}  \ottmv{a}  \ottsym{)}  \ottsym{=}  \ottsym{0}$ we must have that
  $r  \ottsym{=}  \ottsym{0}$. Further, by the ownership well-formedness of types,
  we must have $\tau'  \ottsym{=}  \top_{\ottmv{n}}$ for some $n$, thus by \Cref{lem:top-type-empty-omap}
  we have $\ottkw{own} \, \ottsym{(}  \ottnt{H}  \ottsym{,}  \ottnt{v}  \ottsym{,}  \tau  \ottsym{)}  \ottsym{=}   \emptyset   \ottsym{=}  \ottsym{\{}  \ottmv{a}  \mapsto  \ottsym{0}  \ottsym{\}}  \ottsym{+}  \ottkw{own} \, \ottsym{(}  \ottnt{H'}  \ottsym{,}  \ottnt{H'}  \ottsym{(}  \ottmv{a}  \ottsym{)}  \ottsym{,}  \top_{\ottmv{n}}  \ottsym{)}  \ottsym{=}  \ottkw{own} \, \ottsym{(}  \ottnt{H'}  \ottsym{,}  \ottnt{v}  \ottsym{,}  \tau  \ottsym{)}$.

  Finally, consider the case where $\ottmv{a''} \, \neq \, \ottmv{a}$. Then from the
  definition of $\ottkw{own} \, \ottsym{(}  \ottnt{H}  \ottsym{,}  \ottmv{a''}  \ottsym{,}  \tau  \ottsym{)}$ and our assumption that
  $\ottkw{own} \, \ottsym{(}  \ottnt{H}  \ottsym{,}  \ottmv{a''}  \ottsym{,}  \tau  \ottsym{)}  \ottsym{(}  \ottmv{a}  \ottsym{)}  \ottsym{=}  \ottsym{0}$, we have $\ottkw{own} \, \ottsym{(}  \ottnt{H}  \ottsym{,}  \ottnt{H}  \ottsym{(}  \ottmv{a''}  \ottsym{)}  \ottsym{,}  \tau'  \ottsym{)}  \ottsym{(}  \ottmv{a}  \ottsym{)}  \ottsym{=}  \ottsym{0}$,
  and the result holds from the inductive hypothesis and that $\ottnt{H}  \ottsym{(}  \ottmv{a''}  \ottsym{)} \, \ottsym{=} \, \ottnt{H'}  \ottsym{(}  \ottmv{a''}  \ottsym{)}$.
\end{proof}

\begin{lemma}
  \label{lem:sat-implies-shape-cons}
  For any $\ottnt{H}$, $\ottnt{R}$, $\ottnt{v}$, and $\tau$, if $ \ottkw{SATv} ( \ottnt{H} , \ottnt{R} , \ottnt{v} , \tau ) $ then $ \ottnt{H} \vdash   \ottnt{v}  \Downarrow   | \tau |  $
\end{lemma}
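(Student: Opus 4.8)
The plan is to prove this by a routine structural induction on $\tau$. Since the definition of $\ottkw{SATv}$ in \Cref{fig:state-typing} is syntax-directed on the shape of $\tau$, an induction on $\tau$ coincides with an induction on the derivation of $\ottkw{SATv} ( \ottnt{H} , \ottnt{R} , \ottnt{v} , \tau )$, and in each case the matching defining rule of $ \ottnt{H} \vdash   \ottnt{v}  \Downarrow  n $ supplies exactly the conclusion we need.

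First I would handle the base case $\tau  \ottsym{=}   \set{  \nu  \COL \TINT \mid  \varphi } $. Here $ | \tau |  \, \ottsym{=} \, \ottsym{0}$, and from $\ottkw{SATv} ( \ottnt{H} , \ottnt{R} , \ottnt{v} , \tau )$ the definition gives $ \ottnt{v}  \in  \mathbb{Z} $ (together with the irrelevant satisfaction fact $\ottsym{[}  \ottnt{R}  \ottsym{]} \, \ottsym{[}  \ottnt{v}  \ottsym{/}  \nu  \ottsym{]}  \varphi$). The first rule defining $\Downarrow$ then immediately yields $ \ottnt{H} \vdash   \ottnt{v}  \Downarrow  \ottsym{0} $, which is $ \ottnt{H} \vdash   \ottnt{v}  \Downarrow   | \tau |  $ as required.

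Next I would treat the inductive case $\tau  \ottsym{=}   \tau'  \TREF^{ r } $, where $ | \tau |  \, \ottsym{=} \,  | \tau' |   \ottsym{+}  \ottsym{1}$. The only branch of $\ottkw{SATv}$ applicable to a reference type forces $\ottnt{v} \, \ottsym{=} \, \ottmv{a}$ for some address with $ \ottmv{a}  \in \DOM( \ottnt{H} ) $ and $\ottkw{SATv} ( \ottnt{H} , \ottnt{R} , \ottnt{H}  \ottsym{(}  \ottmv{a}  \ottsym{)} , \tau' )$. The induction hypothesis applied to $\tau'$ and the value $\ottnt{H}  \ottsym{(}  \ottmv{a}  \ottsym{)}$ gives $ \ottnt{H} \vdash   \ottnt{H}  \ottsym{(}  \ottmv{a}  \ottsym{)}  \Downarrow   | \tau' |  $. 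Since $\ottnt{H}  \ottsym{(}  \ottmv{a}  \ottsym{)} \, \ottsym{=} \, \ottnt{H}  \ottsym{(}  \ottmv{a}  \ottsym{)}$, the second rule defining $\Downarrow$ then produces $ \ottnt{H} \vdash   \ottmv{a}  \Downarrow   | \tau' |   \ottsym{+}  \ottsym{1} $, i.e. $ \ottnt{H} \vdash   \ottnt{v}  \Downarrow   | \tau |  $.

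There is no real obstacle here: the lemma is a direct reflection of the fact that $\ottkw{SATv}$ and $\Downarrow$ traverse the reference structure of a value in lockstep. The only point worth flagging is that the reference branch of $\ottkw{SATv}$ already guarantees $\ottnt{v}$ is an address in $\DOM( \ottnt{H} )$, which is precisely the side condition needed to apply the second $\Downarrow$ rule; once that observation is made, the induction closes with no further work.
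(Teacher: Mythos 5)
Your proof is correct and is exactly the paper's argument: the paper proves this lemma ``by induction on $\tau$ and the definition of $\ottkw{SATv}$,'' which is precisely the structural induction you spell out, with the integer case invoking the first rule of $\Downarrow$ and the reference case feeding the $ \ottmv{a}  \in \DOM( \ottnt{H} ) $ and $ \ottkw{SATv} ( \ottnt{H} , \ottnt{R} , \ottnt{H}  \ottsym{(}  \ottmv{a}  \ottsym{)} , \tau' ) $ facts through the induction hypothesis into the second rule. No gaps; your writeup is simply a more detailed rendering of the same one-line proof.
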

\begin{proof}
  By induction on $\tau$ and the definition of $\ottkw{SATv}$.
\end{proof}

\begin{lemma} %
  \label{lem:top-type-sat-all}
  For any $n$, if $ \ottnt{H} \vdash   v  \Downarrow  n $ then for any $\ottnt{R}$, $ \ottkw{SATv} ( \ottnt{H} , \ottnt{R} , \ottnt{v} , \top_{\ottmv{n}} ) $.
\end{lemma}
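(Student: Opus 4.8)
The plan is to proceed by induction on $n$, equivalently on the derivation of $ \ottnt{H} \vdash   v  \Downarrow  n $. Recall the definition $\top_{{\mathrm{0}}}  \ottsym{=}   \set{  \nu  \COL \TINT \mid   \top  } $ and $\top_{\ottmv{n}}  \ottsym{=}   \top_{{\ottmv{n}-1}}  \TREF^{ \ottsym{0} } $, so that the shape of $\top_{\ottmv{n}}$ matches exactly the number of dereferences recorded by $ \ottnt{H} \vdash   v  \Downarrow  n $. The two rules defining $\Downarrow$ are syntax-directed on $n$: only the first rule can conclude $ \ottnt{H} \vdash   v  \Downarrow  \ottsym{0} $, and only the second can conclude $ \ottnt{H} \vdash   v  \Downarrow  n  \ottsym{+}  \ottsym{1} $. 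This lets me invert cleanly in each case.

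For the base case $n  \ottsym{=}  \ottsym{0}$, I would invert $ \ottnt{H} \vdash   v  \Downarrow  \ottsym{0} $ to conclude $ v  \in  \mathbb{Z} $. Since $\top_{{\mathrm{0}}}  \ottsym{=}   \set{  \nu  \COL \TINT \mid   \top  } $, the definition of $\ottkw{SATv}$ requires only that $ v  \in  \mathbb{Z} $ and that $\ottsym{[}  \ottnt{R}  \ottsym{]} \, \ottsym{[}  v  \ottsym{/}  \nu  \ottsym{]} \, \top $ be valid; the former holds by inversion and the latter holds trivially for any $\ottnt{R}$. Hence $ \ottkw{SATv} ( \ottnt{H} , \ottnt{R} , \ottnt{v} , \top_{{\mathrm{0}}} ) $.

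For the inductive step $n  \ottsym{=}  m  \ottsym{+}  \ottsym{1}$, inverting $ \ottnt{H} \vdash   v  \Downarrow  m  \ottsym{+}  \ottsym{1} $ via the second rule yields that $v \, \ottsym{=} \, \ottmv{a}$ for some address with $\ottnt{H}  \ottsym{(}  \ottmv{a}  \ottsym{)} \, \ottsym{=} \, v'$ and $ \ottnt{H} \vdash   v'  \Downarrow  m $. Since $\top_{m  \ottsym{+}  \ottsym{1}}  \ottsym{=}   \top_{\ottmv{m}}  \TREF^{ \ottsym{0} } $, the definition of $\ottkw{SATv}$ on reference types requires $ \ottmv{a}  \in \DOM( \ottnt{H} ) $ and $ \ottkw{SATv} ( \ottnt{H} , \ottnt{R} , \ottnt{H}  \ottsym{(}  \ottmv{a}  \ottsym{)} , \top_{\ottmv{m}} ) $. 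The membership $ \ottmv{a}  \in \DOM( \ottnt{H} ) $ follows from $\ottnt{H}  \ottsym{(}  \ottmv{a}  \ottsym{)} \, \ottsym{=} \, v'$, and the second conjunct is obtained by applying the induction hypothesis to $ \ottnt{H} \vdash   v'  \Downarrow  m $, which gives $ \ottkw{SATv} ( \ottnt{H} , \ottnt{R} , v' , \top_{\ottmv{m}} ) $, i.e.\ exactly $ \ottkw{SATv} ( \ottnt{H} , \ottnt{R} , \ottnt{H}  \ottsym{(}  \ottmv{a}  \ottsym{)} , \top_{\ottmv{m}} ) $.

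This argument is entirely routine; there is no real obstacle. The only point requiring mild care is the bookkeeping that ensures the inductive invocation is at the correct value ($\top_{\ottmv{m}}$ rather than $\top_{m  \ottsym{+}  \ottsym{1}}$) and that the refinement $ \top $ at the base type is satisfied independently of $\ottnt{R}$, so the statement indeed holds uniformly for \emph{any} register file.
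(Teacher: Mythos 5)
Your proof is correct and follows exactly the paper's own argument: induction on $n$, with inversion of $ \ottnt{H} \vdash   \ottnt{v}  \Downarrow  \ottsym{0} $ giving $ \ottnt{v}  \in  \mathbb{Z} $ and the trivial $ \top $ refinement in the base case, and inversion of the second rule plus the induction hypothesis at $\top_{\ottmv{m}}$ in the step case. The only difference is presentational — you spell out the bookkeeping the paper leaves implicit.
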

\begin{proof}
  By induction on $n$. In the base case, by inversion on $ \ottnt{H} \vdash   \ottnt{v}  \Downarrow  \ottsym{0} $ we have
  $ \ottnt{v}  \in  \mathbb{Z} $ and as $\ottsym{[}  \ottnt{R}  \ottsym{]} \, \ottsym{[}  \ottnt{v}  \ottsym{/}  \nu  \ottsym{]}   \top   \implies   \top $, we conclude $ \ottkw{SATv} ( \ottnt{H} , \ottnt{R} , \ottnt{v} , \top_{{\mathrm{0}}} ) $.

  For $n > 0$, by inversion on $ \ottnt{H} \vdash   v  \Downarrow  n $ we have that
  $\ottnt{v} \, \ottsym{=} \, \ottmv{a}$, $ \ottmv{a}  \in   \DOM( \ottnt{H} )  $, and $ \ottnt{H} \vdash   \ottnt{H}  \ottsym{(}  \ottmv{a}  \ottsym{)}  \Downarrow  n  \ottsym{-}  \ottsym{1} $,
  whereby the result holds from the inductive hypothesis.
\end{proof}
  
\begin{lemma}[Heap Update Consistency Preservation] %
  \label{lem:heapfor0}
  If $ \ottnt{H}   \approx _ \ottmv{a}   \ottnt{H'} $ and $\ottkw{own} \, \ottsym{(}  \ottnt{H}  \ottsym{,}  \ottnt{v}  \ottsym{,}  \tau  \ottsym{)}  \ottsym{(}  \ottmv{a}  \ottsym{)}  \ottsym{=}  \ottsym{0}$ and $ \ottkw{SATv} ( \ottnt{H} , \ottnt{R} , \ottnt{v} , \tau ) $, then $ \ottkw{SATv} ( \ottnt{H'} , \ottnt{R} , \ottnt{v} , \tau ) $.
\end{lemma}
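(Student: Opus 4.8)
The plan is to prove the statement by induction on the structure of the type $\tau$, following exactly the recursion shared by the definitions of $\ottkw{own}$ and $\ottkw{SATv}$, and relying on the standing assumption (used already in \Cref{lem:heapop}) that $\tau$ is ownership well-formed. The base case $\tau = \set{\nu \COL \TINT \mid \varphi}$ is immediate: the clause of $\ottkw{SATv}$ for integer types requires only $v \in \mathbb{Z}$ and $[R][v/\nu]\varphi$, neither of which mentions the heap, so the defining conditions of $\ottkw{SATv}(H,R,v,\tau)$ and $\ottkw{SATv}(H',R,v,\tau)$ coincide. For the reference case $\tau = \tau'\TREF^{r}$ I may assume $v$ is an address (otherwise $\ottkw{SATv}(H,R,v,\tau)$ is false and the implication holds vacuously), say $v = a'$ with $a' \in \DOM(H) = \DOM(H')$, so that the premise unfolds to $\ottkw{SATv}(H,R,H(a'),\tau')$.

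Next I would exploit the additive shape of ownership, $\ottkw{own}(H,a',\tau'\TREF^{r}) = \{a'\mapsto r\} + \ottkw{own}(H,H(a'),\tau')$, to push the hypothesis $\ottkw{own}(H,a',\tau)(a) = 0$ into its summands (both being nonnegative), and then split on whether $a' = a$. When $a' \neq a$, the point mass contributes $0$ at $a$, so $\ottkw{own}(H,H(a'),\tau')(a) = 0$, and moreover $H(a') = H'(a')$ because the two heaps agree away from $a$. Applying the induction hypothesis to $\tau'$ with $H \approx_{a} H'$ and $\ottkw{SATv}(H,R,H(a'),\tau')$ yields $\ottkw{SATv}(H',R,H'(a'),\tau')$, and re-wrapping with the reference constructor (using $a' \in \DOM(H')$) gives $\ottkw{SATv}(H',R,a',\tau)$.

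The hard part will be the subcase $a' = a$, precisely because the content cell that $v$ points to is the one that may have changed, so the old satisfiability fact cannot be transported directly. This is where ownership well-formedness does the essential work: from $(\{a\mapsto r\} + \ottkw{own}(H,H(a),\tau'))(a) = 0$ and nonnegativity I get $r = 0$, and well-formedness of $\tau'\TREF^{0}$ forces $\tau' = \top_{n}$ for some $n$, so the refinement reached through this pointer is maximal and constrains only the \emph{shape} of the (possibly altered) content. Concretely, I would first obtain $H \vdash H(a) \Downarrow n$ from $\ottkw{SATv}(H,R,H(a),\top_{n})$ via \Cref{lem:sat-implies-shape-cons} (using $|\top_{n}| = n$), lift it to $H \vdash a \Downarrow n+1$ by the second reachability rule, transport it to $H' \vdash a \Downarrow n+1$ using clause 3 of $H \approx_{a} H'$, invert to $H' \vdash H'(a) \Downarrow n$, and finally invoke \Cref{lem:top-type-sat-all} to conclude $\ottkw{SATv}(H',R,H'(a),\top_{n})$, i.e.\ $\ottkw{SATv}(H',R,H'(a),\tau')$. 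Re-wrapping gives $\ottkw{SATv}(H',R,a,\tau)$. The only delicate bookkeeping is checking that the shape-preservation clause of $\approx_{a}$ is exactly what bridges $H(a)$ and $H'(a)$ here, which is the reason that clause appears in the definition in the first place.
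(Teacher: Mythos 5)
Your proposal is correct and follows essentially the same route as the paper's proof: induction on the shape of $\tau$, a trivial integer base case, and a split on $a' = a$ versus $a' \neq a$ in the reference case, where ownership well-formedness forces $r=0$ and a maximal type when $a'=a$, and \Cref{lem:sat-implies-shape-cons}, clause 3 of $\approx_a$, and \Cref{lem:top-type-sat-all} close that subcase. The only (immaterial) difference is bookkeeping: the paper applies the shape-consistency and maximal-type lemmas directly to $a$ and the whole type $\tau = \top_n$, whereas you peel off one reference layer, argue about the contents $H(a)$, $H'(a)$ against $\tau' = \top_n$, and then re-wrap.
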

\begin{proof}
  By induction on the shape of $\tau$. The base case where
  $\tau  \ottsym{=}   \set{  \nu  \COL \TINT \mid  \varphi } $ is trivial. We therefore consider the case
  where $\ottnt{v} \, \ottsym{=} \, \ottmv{a'}$ and $\tau  \ottsym{=}   \tau'  \TREF^{ r } $.

  We first consider the case where $\ottmv{a'} \, \ottsym{=} \, \ottmv{a}$, then by
  our assumption that $\ottkw{own} \, \ottsym{(}  \ottnt{H}  \ottsym{,}  \ottmv{a}  \ottsym{,}  \tau  \ottsym{)}  \ottsym{(}  \ottmv{a}  \ottsym{)}  \ottsym{=}  \ottsym{0}$, we must have that
  $\tau  \ottsym{=}   \tau'  \TREF^{ \ottsym{0} } $, whereby $\tau  \ottsym{=}  \top_{\ottmv{n}}$ for some $\ottmv{n}$.
  From $ \ottkw{SATv} ( \ottnt{H} , \ottnt{R} , \ottmv{a} , \tau ) $ and \Cref{lem:sat-implies-shape-cons}, we must have that
  $ \ottnt{H} \vdash   \ottmv{a}  \Downarrow   | \tau |  $,
  and from $ \ottnt{H}   \approx _ \ottmv{a}   \ottnt{H'} $, we therefore have that $ \ottnt{H'} \vdash   \ottmv{a}  \Downarrow   | \tau |  $
  whereby the result holds from \Cref{lem:top-type-sat-all}.

  Otherwise, we have that $\ottmv{a'} \, \neq \, \ottmv{a}$, and by definition we must have that
  $\ottkw{own} \, \ottsym{(}  \ottnt{H}  \ottsym{,}  \ottnt{H}  \ottsym{(}  \ottmv{a}  \ottsym{)}  \ottsym{,}  \tau'  \ottsym{)}  \ottsym{(}  \ottmv{a}  \ottsym{)}  \ottsym{=}  \ottsym{0}$ and $\ottnt{H'}  \ottsym{(}  \ottmv{a}  \ottsym{)} \, \ottsym{=} \, \ottnt{H}  \ottsym{(}  \ottmv{a}  \ottsym{)}$ hence the result follows
  from the inductive hypothesis.
\end{proof}

\begin{lemma}[Register Weakening] %
  \label{lem:register}
  If $ \ottkw{SATv} ( \ottnt{H} , \ottnt{R} , v , \tau ) $ and $ \ottnt{R} \vdash _{vs}  \tau $,
  then for any $\ottnt{R'}$ such that $ \ottnt{R}  \sqsubseteq  \ottnt{R'} $, $ \ottkw{SATv} ( \ottnt{H} , \ottnt{R'} , v , \tau ) $.
\end{lemma}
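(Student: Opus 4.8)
The plan is to proceed by induction on the shape of $\tau$, which simultaneously drives the definitions of $\ottkw{SATv}$ and of the valid-substitution relation $\vdash_{vs}$. The reference case is immediate from the induction hypothesis, so the real content of the lemma lives in the base case for integer refinement types, where I must argue that the two register files $\ottnt{R}$ and $\ottnt{R'}$ induce the same substituted refinement. Throughout, $\ottnt{R'}$ is treated as an arbitrary file with $\ottnt{R} \sqsubseteq \ottnt{R'}$, so the induction hypothesis at $\tau'$ is taken to range over all such $\ottnt{R'}$.

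First I would dispatch the inductive case. Suppose $\tau = \tau' \TREF^{r}$ and $v = \ottmv{a}$ (if $v$ is not an address in $\DOM(\ottnt{H})$ the premise $\ottkw{SATv}$ already fails and there is nothing to prove). From $\ottkw{SATv}(\ottnt{H},\ottnt{R},\ottmv{a},\tau)$ I obtain $\ottmv{a} \in \DOM(\ottnt{H})$ and $\ottkw{SATv}(\ottnt{H},\ottnt{R},\ottnt{H}(\ottmv{a}),\tau')$, and from $\ottnt{R} \vdash_{vs} \tau$ I get $\ottnt{R} \vdash_{vs} \tau'$ by inversion. The induction hypothesis then yields $\ottkw{SATv}(\ottnt{H},\ottnt{R'},\ottnt{H}(\ottmv{a}),\tau')$; since the heap is unchanged and $\ottmv{a} \in \DOM(\ottnt{H})$ still holds, folding the definition of $\ottkw{SATv}$ back up gives $\ottkw{SATv}(\ottnt{H},\ottnt{R'},\ottmv{a},\tau)$, as required.

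The base case is the crux. Suppose $\tau = \set{ \nu \COL \TINT \mid \varphi}$. The premise gives $v \in \mathbb{Z}$ and $\models [\ottnt{R}]\,[v/\nu]\,\varphi$, and I must establish $\models [\ottnt{R'}]\,[v/\nu]\,\varphi$. The key observation is that the register substitution $[\ottnt{R}]$ replaces exactly the integer-valued program variables in its domain and is vacuous on address-valued or out-of-domain variables, so $[\ottnt{R}]\,\varphi$ depends only on the values $\ottnt{R}$ assigns to the free program variables of $\varphi$. From $\ottnt{R} \vdash_{vs} \tau$, every variable in $\ottkw{FPV}(\varphi) \setminus \set{\nu}$ lies in $\DOM(\ottnt{R})$ with an integer value; and from $\ottnt{R} \sqsubseteq \ottnt{R'}$, the file $\ottnt{R'}$ agrees with $\ottnt{R}$ on precisely those variables. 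Any extra bindings present in $\ottnt{R'}$ but not in $\ottnt{R}$ substitute variables that cannot occur free in $\varphi$ (all free program variables of $\varphi$ already lie in $\DOM(\ottnt{R})$), and so act vacuously. Hence $[\ottnt{R}]\,[v/\nu]\,\varphi$ and $[\ottnt{R'}]\,[v/\nu]\,\varphi$ are the same formula, and the two validity judgments coincide.

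I expect the main obstacle to be making the ``acts vacuously / identical substituted formula'' step of the base case fully rigorous. This is best isolated as a small auxiliary substitution fact, proved by induction on the structure of $\ottnt{R}$ (mirroring the defining clauses $[\emptyset]\varphi = \varphi$, $[\ottnt{R}\{x \mapsto n\}]\varphi = [\ottnt{R}][n/x]\varphi$, and $[\ottnt{R}\{x \mapsto \ottmv{a}\}]\varphi = [\ottnt{R}]\varphi$): namely that if two files agree on $\ottkw{FPV}(\varphi)$ with integer values and each variable of $\ottkw{FPV}(\varphi)\setminus\set{\nu}$ is integer-bound, then they yield syntactically equal substituted refinements. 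Everything else is routine inversion and reassembly of the $\ottkw{SATv}$ and $\vdash_{vs}$ definitions.
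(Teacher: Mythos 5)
Your proof is correct and takes essentially the same approach as the paper: induction on the shape of $\tau$, with the reference case discharged by the induction hypothesis and the integer case resting on the fact that $\ottnt{R} \vdash_{vs} \tau$ forces $\ottkw{FPV}(\varphi)\setminus\set{\nu} \subseteq \DOM(\ottnt{R})$, so the extra bindings in $\ottnt{R'}$ act vacuously and the two substituted refinements coincide. The paper phrases this last step as a without-loss-of-generality single-extra-binding argument whose problematic subcase ($x \in \ottkw{FPV}(\varphi)$, $x \notin \DOM(\ottnt{R})$) contradicts the valid-substitution assumption, which is just the contrapositive of your vacuity argument.
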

\begin{proof}
  By induction on the shape of $\tau$. If $\tau  \ottsym{=}   \tau'  \TREF^{ r } $, then
  the result follows from the inductive hypothesis.
  We therefore consider the case where
  $\tau  \ottsym{=}   \set{  \nu  \COL \TINT \mid  \varphi } $. Without loss of generality, we consider
  the case where $ \DOM( \ottnt{R'} )  \setminus  \DOM( \ottnt{R} )  = \set{x}$,
  and $\ottnt{R'}  \ottsym{(}  \mathit{x}  \ottsym{)} \, \ottsym{=} \, n$. (If $\ottnt{R'}  \ottsym{(}  \mathit{x}  \ottsym{)} \, \ottsym{=} \, \ottmv{a}$, the extra binding
  also has no effect, and the case where more than one binding is
  added follows from $n$ applications of the following argument.)

  From $ \ottkw{SATv} ( \ottnt{H} , \ottnt{R} , v , \tau ) $, we conclude that $ v  \in  \mathbb{Z} $ and that
  $\ottsym{[}  \ottnt{R}  \ottsym{]} \, \ottsym{[}  v  \ottsym{/}  \nu  \ottsym{]}  \varphi$. If $ \mathit{x}  \not\in  \ottkw{FPV} \, \ottsym{(}  \varphi  \ottsym{)} $ then $\ottsym{[}  \ottnt{R}  \ottsym{]} \, \ottsym{[}  v  \ottsym{/}  \nu  \ottsym{]}  \varphi  \iff  \ottsym{[}  \ottnt{R'}  \ottsym{]} \, \ottsym{[}  v  \ottsym{/}  \nu  \ottsym{]}  \varphi$
  and the result holds trivially. Otherwise, if $ \mathit{x}  \in  \ottkw{FPV} \, \ottsym{(}  \varphi  \ottsym{)} $ and $ \mathit{x}  \not\in \DOM( \ottnt{R} ) $
  then $\ottnt{R}$ is not a valid substitution, violating our assumption.
\end{proof}

\begin{lemma}[Heap Extension Consistency Preservation] %
  \label{lem:newaddheap}
  If we have heap $H$, such that $ \ottkw{SATv} ( \ottnt{H} , \ottnt{R} , \ottnt{v} , \tau ) $, for any heap 
  $\ottnt{H'}  \ottsym{=}  \ottnt{H}  \ottsym{\{}  \ottmv{a}  \mapsto  \ottnt{v'}  \ottsym{\}},  \ottmv{a}  \not\in \DOM( \ottnt{H} ) $, then we have $ \ottkw{SATv} ( \ottnt{H'} , \ottnt{R} , \ottnt{v} , \tau ) $.
\end{lemma}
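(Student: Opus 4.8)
The plan is to prove Lemma~\ref{lem:newaddheap} by induction on the structure of the type $\tau$, exploiting the fact that extending a heap with a fresh address $\ottmv{a} \not\in \DOM(\ottnt{H})$ cannot disturb any dereference chain that is already rooted at an existing, satisfied value. The key observation is that $ \ottkw{SATv} ( \ottnt{H} , \ottnt{R} , \ottnt{v} , \tau ) $ only ever inspects addresses already in $\DOM(\ottnt{H})$: by \Cref{lem:sat-implies-shape-cons}, satisfiability implies $ \ottnt{H} \vdash \ottnt{v} \Downarrow | \tau | $, so every address visited while unfolding $\tau$ lies in $\DOM(\ottnt{H}) \subseteq \DOM(\ottnt{H'})$, and in particular none of them is the fresh $\ottmv{a}$. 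Thus the heap lookups performed by $\ottkw{SATv}$ agree on $\ottnt{H}$ and $\ottnt{H'}$.

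First I would handle the base case $\tau = \set{ \nu \COL \TINT \mid \varphi }$. Here $ \ottkw{SATv} ( \ottnt{H} , \ottnt{R} , \ottnt{v} , \tau ) $ unfolds to $ \ottnt{v} \in \mathbb{Z} $ together with $\ottsym{[} \ottnt{R} \ottsym{]} \, \ottsym{[} \ottnt{v} \ottsym{/} \nu \ottsym{]} \varphi$; neither conjunct mentions the heap at all, so the same facts give $ \ottkw{SATv} ( \ottnt{H'} , \ottnt{R} , \ottnt{v} , \tau ) $ immediately. For the inductive case $\tau = \tau' \TREF^{ r }$, I would split on whether $\ottnt{v}$ is an address in the domain of $\ottnt{H}$. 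If $\ottnt{v}$ is not of the form $\ottmv{a'} \in \DOM(\ottnt{H})$ then $ \ottkw{SATv} ( \ottnt{H} , \ottnt{R} , \ottnt{v} , \tau ) $ is vacuously unsatisfiable and the hypothesis is contradicted, so assume $\ottnt{v} = \ottmv{a'}$ with $\ottmv{a'} \in \DOM(\ottnt{H})$. By inversion on $\ottkw{SATv}$ we have $ \ottkw{SATv} ( \ottnt{H} , \ottnt{R} , \ottnt{H} \ottsym{(} \ottmv{a'} \ottsym{)} , \tau' ) $. Since $\ottmv{a'} \in \DOM(\ottnt{H})$ and $\ottmv{a} \not\in \DOM(\ottnt{H})$, we have $\ottmv{a'} \neq \ottmv{a}$, hence $\ottnt{H'} \ottsym{(} \ottmv{a'} \ottsym{)} = \ottnt{H} \ottsym{(} \ottmv{a'} \ottsym{)}$ and also $\ottmv{a'} \in \DOM(\ottnt{H'})$. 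Applying the induction hypothesis to the value $\ottnt{H} \ottsym{(} \ottmv{a'} \ottsym{)}$ and type $\tau'$ (with the same extended heap $\ottnt{H'}$) yields $ \ottkw{SATv} ( \ottnt{H'} , \ottnt{R} , \ottnt{H'} \ottsym{(} \ottmv{a'} \ottsym{)} , \tau' ) $, and reassembling the reference clause of $\ottkw{SATv}$ gives $ \ottkw{SATv} ( \ottnt{H'} , \ottnt{R} , \ottmv{a'} , \tau ) $ as required.

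This argument is essentially bookkeeping, so I do not expect any serious obstacle; the only subtlety worth spelling out is ensuring that the address $\ottmv{a'}$ reached at each level of the reference chain is genuinely distinct from the fresh address $\ottmv{a}$. That distinctness follows uniformly from $\ottmv{a} \not\in \DOM(\ottnt{H})$ together with the fact that $\ottkw{SATv}$ only recurses into addresses that belong to $\DOM(\ottnt{H})$, which is exactly the content of the shape-consistency guarantee \Cref{lem:sat-implies-shape-cons}. One could alternatively phrase the whole proof as a corollary of that shape lemma plus a direct observation that $\ottnt{H}$ and $\ottnt{H'}$ agree on $\DOM(\ottnt{H})$, but the plain structural induction above is the cleanest route and keeps the dependency on earlier lemmas minimal.
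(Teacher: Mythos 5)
Your proposal is correct and follows essentially the same route as the paper's own proof: structural induction on $\tau$, with the integer base case being heap-independent, and the reference case using $\ottmv{a} \not\in \DOM(\ottnt{H})$ to conclude $\ottnt{v} \neq \ottmv{a}$, hence $\ottnt{H'}(\ottnt{v}) = \ottnt{H}(\ottnt{v})$, before applying the induction hypothesis. The appeal to \Cref{lem:sat-implies-shape-cons} is harmless but unnecessary, since the freshness argument you spell out already does all the work, exactly as in the paper.
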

\begin{proof}
  By induction on the shape of $\tau$.
  The base case where $\tau  \ottsym{=}   \set{  \nu  \COL \TINT \mid  \varphi } $ is trivial.
  Next, we consider the case where $\tau  \ottsym{=}   \tau'  \TREF^{ r } $.
  We must show that $ \ottnt{v}  \in   \DOM( \ottnt{H'} )  $ and $ \ottkw{SATv} ( \ottnt{H'} , \ottnt{R} , \ottnt{H'}  \ottsym{(}  \ottnt{v}  \ottsym{)} , \tau' ) $.
  The first condition is immediately satisfied by inversion on $ \ottkw{SATv} ( \ottnt{H} , \ottnt{R} , \ottnt{v} , \tau' ) $,
  and from $ \ottmv{a}  \not\in \DOM( \ottnt{H} ) $, we have $v\neq a$, which gives that $H'(v)=H(v)$.
  That is we must show $ \ottkw{SATv} ( \ottnt{H'} , \ottnt{R} , \ottnt{H}  \ottsym{(}  \ottnt{v}  \ottsym{)} , \tau' ) $, which is follows from
  the induction hypothesis.
\end{proof}

\begin{lemma}[Heap Extension Ownership Preservation]
  \label{lem:ownaddheap}
  If $ \ottkw{SATv} ( \ottnt{H} , \ottnt{R} , \ottnt{v} , \tau ) $, then for any $ \ottmv{a}  \not\in   \DOM( \ottnt{H} )  $
  $\ottkw{own} \, \ottsym{(}  \ottnt{H}  \ottsym{,}  \ottnt{v}  \ottsym{,}  \tau  \ottsym{)}  \ottsym{=}  \ottkw{own} \, \ottsym{(}  \ottnt{H}  \ottsym{\{}  \ottmv{a}  \mapsto  \ottnt{v'}  \ottsym{\}}  \ottsym{,}  \ottnt{v}  \ottsym{,}  \tau  \ottsym{)}$ for any value $\ottnt{v'}$.
\end{lemma}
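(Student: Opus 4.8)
\textbf{The plan} is to prove Lemma~\ref{lem:ownaddheap} by structural induction on the shape of $\tau$, exactly mirroring the structure of the immediately preceding Lemmas~\ref{lem:newaddheap} and~\ref{lem:heapop}. The statement says that extending a heap with a fresh address $\ottmv{a}$ does not change the ownership map computed from a value $\ottnt{v}$ that is already satisfiable in the smaller heap. The key intuition is that $\ottkw{own} \, \ottsym{(}  \ottnt{H}  \ottsym{,}  \ottnt{v}  \ottsym{,}  \tau  \ottsym{)}$ only inspects the chain of addresses reachable from $\ottnt{v}$ through dereferences, and since $ \ottmv{a}  \not\in   \DOM( \ottnt{H} )  $, the fresh address can never occur on that reachable chain. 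I would carry the induction on $ | \tau | $, the number of reference constructors.

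\textbf{First I would} dispatch the base case $\tau  \ottsym{=}   \set{  \nu  \COL \TINT \mid  \varphi } $: here $\ottkw{own} \, \ottsym{(}  \ottnt{H}  \ottsym{,}  \ottnt{v}  \ottsym{,}  \tau  \ottsym{)}  \ottsym{=}   \emptyset $ regardless of the heap (this is the ``o.w.'' branch in the definition of $\ottkw{own}$, since $\tau$ is not a reference type), so the equality is immediate. \textbf{Next}, for the inductive case $\tau  \ottsym{=}   \tau'  \TREF^{ r } $, I would observe that $ \ottkw{SATv} ( \ottnt{H} , \ottnt{R} , \ottnt{v} , \tau ) $ forces $\ottnt{v} \, \ottsym{=} \, \ottmv{a'}$ for some address with $ \ottmv{a'}  \in \DOM( \ottnt{H} ) $; in all other cases (e.g.\ $ \ottnt{v}  \not\in   \textbf{Addr}  $) both sides of the equation reduce to $ \emptyset $ and the result holds trivially. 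Since $ \ottmv{a}  \not\in \DOM( \ottnt{H} ) $ but $ \ottmv{a'}  \in \DOM( \ottnt{H} ) $, we have $\ottmv{a'} \, \neq \, \ottmv{a}$, hence writing $\ottnt{H'}  \ottsym{=}  \ottnt{H}  \ottsym{\{}  \ottmv{a}  \mapsto  \ottnt{v'}  \ottsym{\}}$ gives $\ottnt{H'}  \ottsym{(}  \ottmv{a'}  \ottsym{)}  \ottsym{=}  \ottnt{H}  \ottsym{(}  \ottmv{a'}  \ottsym{)}$. Unfolding the definition of $\ottkw{own}$ on both heaps yields
\[
  \ottkw{own} \, \ottsym{(}  \ottnt{H}  \ottsym{,}  \ottmv{a'}  \ottsym{,}  \tau  \ottsym{)}  \ottsym{=}  \ottsym{\{}  \ottmv{a'}  \mapsto  r  \ottsym{\}}  \ottsym{+}  \ottkw{own} \, \ottsym{(}  \ottnt{H}  \ottsym{,}  \ottnt{H}  \ottsym{(}  \ottmv{a'}  \ottsym{)}  \ottsym{,}  \tau'  \ottsym{)},
\]
and correspondingly $\ottkw{own} \, \ottsym{(}  \ottnt{H'}  \ottsym{,}  \ottmv{a'}  \ottsym{,}  \tau  \ottsym{)}  \ottsym{=}  \ottsym{\{}  \ottmv{a'}  \mapsto  r  \ottsym{\}}  \ottsym{+}  \ottkw{own} \, \ottsym{(}  \ottnt{H'}  \ottsym{,}  \ottnt{H}  \ottsym{(}  \ottmv{a'}  \ottsym{)}  \ottsym{,}  \tau'  \ottsym{)}$, using $\ottnt{H'}  \ottsym{(}  \ottmv{a'}  \ottsym{)}  \ottsym{=}  \ottnt{H}  \ottsym{(}  \ottmv{a'}  \ottsym{)}$. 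It then suffices to equate the two recursive calls.

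\textbf{To close the induction} I must apply the inductive hypothesis to $\tau'$ at the value $\ottnt{H}  \ottsym{(}  \ottmv{a'}  \ottsym{)}$, which requires that $\ottnt{H}  \ottsym{(}  \ottmv{a'}  \ottsym{)}$ itself be satisfiable, i.e.\ $ \ottkw{SATv} ( \ottnt{H} , \ottnt{R} , \ottnt{H}  \ottsym{(}  \ottmv{a'}  \ottsym{)} , \tau' ) $. This is exactly what inversion on $ \ottkw{SATv} ( \ottnt{H} , \ottnt{R} , \ottmv{a'} , \tau' \TREF^{r} ) $ provides, since the reference branch of the $\ottkw{SATv}$ definition demands $ \ottkw{SATv} ( \ottnt{H} , \ottnt{R} , \ottnt{H}  \ottsym{(}  \ottmv{a'}  \ottsym{)} , \tau' ) $. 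The induction hypothesis then gives $\ottkw{own} \, \ottsym{(}  \ottnt{H}  \ottsym{,}  \ottnt{H}  \ottsym{(}  \ottmv{a'}  \ottsym{)}  \ottsym{,}  \tau'  \ottsym{)}  \ottsym{=}  \ottkw{own} \, \ottsym{(}  \ottnt{H'}  \ottsym{,}  \ottnt{H}  \ottsym{(}  \ottmv{a'}  \ottsym{)}  \ottsym{,}  \tau'  \ottsym{)}$, and adding $\ottsym{\{}  \ottmv{a'}  \mapsto  r  \ottsym{\}}$ to both sides completes the argument. \textbf{The only subtlety}—and the step I expect to require the most care—is threading the $\ottkw{SATv}$ hypothesis correctly through the recursion so that the inductive hypothesis is always applicable; this is precisely why the lemma is stated with the $\ottkw{SATv}$ premise rather than unconditionally. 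Everything else is routine unfolding of the defining equations for $\ottkw{own}$, entirely parallel to the companion Lemma~\ref{lem:newaddheap}.
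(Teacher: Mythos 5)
Your proposal is correct and follows essentially the same route as the paper's proof: structural induction on $\tau$, a trivial base case for integer types, and an inductive case that uses $ \ottmv{a}  \not\in   \DOM( \ottnt{H} )  $ to conclude the dereferenced address $\ottmv{a'}$ is unaffected by the extension, closing with the inductive hypothesis. You are in fact slightly more explicit than the paper about why the inductive hypothesis is applicable—namely, that inversion on $ \ottkw{SATv} ( \ottnt{H} , \ottnt{R} , \ottmv{a'} , \tau'  \TREF^{ r } ) $ supplies the premise $ \ottkw{SATv} ( \ottnt{H} , \ottnt{R} , \ottnt{H}  \ottsym{(}  \ottmv{a'}  \ottsym{)} , \tau' ) $—which the paper leaves implicit.
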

\begin{proof}
  By induction on $\tau$. The base case is trivial as $\ottkw{own} \, \ottsym{(}  \ottnt{H}  \ottsym{,}  \ottnt{v}  \ottsym{,}   \set{  \nu  \COL \TINT \mid  \varphi }   \ottsym{)}  \ottsym{=}   \emptyset   \ottsym{=}  \ottkw{own} \, \ottsym{(}  \ottnt{H}  \ottsym{\{}  \ottmv{a}  \mapsto  \ottnt{v}  \ottsym{\}}  \ottsym{,}  \ottnt{v}  \ottsym{,}   \set{  \nu  \COL \TINT \mid  \varphi }   \ottsym{)}$.
  We therefore consider the case where $\tau  \ottsym{=}   \tau'  \TREF^{ r } $.

  From $ \ottkw{SATv} ( \ottnt{H} , \ottnt{R} , \ottnt{v} , \tau ) $ we must have that $\ottnt{v} \, \ottsym{=} \, \ottmv{a'}$ and
  $ \ottmv{a'}  \in   \DOM( \ottnt{H} )  $ (and by extension $ \ottmv{a'}  \in   \DOM( \ottnt{H}  \ottsym{\{}  \ottmv{a}  \mapsto  \ottnt{v'}  \ottsym{\}} )  $).
  From the definition of the ownership function, we have that $\ottkw{own} \, \ottsym{(}  \ottnt{H}  \ottsym{,}  \ottnt{v}  \ottsym{,}  \tau  \ottsym{)}  \ottsym{=}  \ottkw{own} \, \ottsym{(}  \ottnt{H}  \ottsym{,}  \ottnt{H}  \ottsym{(}  \ottmv{a}  \ottsym{)}  \ottsym{,}  \tau'  \ottsym{)}  \ottsym{+}  \ottsym{\{}  \ottmv{a'}  \mapsto  r  \ottsym{\}}$.
  and $\ottkw{own} \, \ottsym{(}  \ottnt{H}  \ottsym{\{}  \ottmv{a}  \mapsto  \ottnt{v'}  \ottsym{\}}  \ottsym{,}  \ottnt{v}  \ottsym{,}  \tau  \ottsym{)}  \ottsym{=}  \ottkw{own} \, \ottsym{(}  \ottnt{H}  \ottsym{\{}  \ottmv{a}  \mapsto  \ottnt{v'}  \ottsym{\}}  \ottsym{,}  \ottnt{H}  \ottsym{\{}  \ottmv{a}  \mapsto  \ottnt{v'}  \ottsym{\}}  \ottsym{(}  \ottmv{a'}  \ottsym{)}  \ottsym{,}  \tau'  \ottsym{)}  \ottsym{+}  \ottsym{\{}  \ottmv{a'}  \mapsto  r  \ottsym{\}}$
  Then from our requirement that  $ \ottmv{a}  \not\in   \DOM( \ottnt{H} )  $, we have $\ottmv{a} \, \neq \, \ottmv{a'}$ and therefore
  $\ottnt{H}  \ottsym{(}  \ottmv{a'}  \ottsym{)} \, \ottsym{=} \, \ottnt{H}  \ottsym{\{}  \ottmv{a}  \mapsto  \ottnt{v'}  \ottsym{\}}  \ottsym{(}  \ottmv{a'}  \ottsym{)}$, whereby the result holds from the inductive hypothesis.
\end{proof}

\begin{lemma}[Environment Weakening] %
  \label{lem:tyenv-weaken}
  Define the partial operation $ \Gamma_{{\mathrm{1}}}  \uplus  \Gamma_{{\mathrm{2}}} $ for two environments such $  \DOM( \Gamma_{{\mathrm{1}}} )   \cap   \DOM( \Gamma_{{\mathrm{2}}} )    \ottsym{=}   \emptyset $:
  \[
    \ottsym{(}   \Gamma_{{\mathrm{1}}}  \uplus  \Gamma_{{\mathrm{2}}}   \ottsym{)} \quad \ottsym{(}  \mathit{x}  \ottsym{)} = \begin{cases}
      \Gamma_{{\mathrm{1}}} \quad \ottsym{(}  \mathit{x}  \ottsym{)} &  \mathit{x}  \in   \DOM( \Gamma_{{\mathrm{1}}} )   \\
      \Gamma_{{\mathrm{2}}} \quad \ottsym{(}  \mathit{x}  \ottsym{)} &  \mathit{x}  \in   \DOM( \Gamma_{{\mathrm{2}}} )   \\
      \textit{undef} & o.w.
    \end{cases}
  \]
  
  Then, for any $\Gamma$ and $\Gamma''$ where $  \DOM( \Gamma )   \cap   \DOM( \Gamma'' )    \ottsym{=}   \emptyset $:
  \begin{enumerate}
  \item $\Gamma  \vdash  \tau_{{\mathrm{1}}}  \leq  \tau_{{\mathrm{2}}}$ implies $ \Gamma  \uplus  \Gamma''   \vdash  \tau_{{\mathrm{1}}}  \leq  \tau_{{\mathrm{2}}}$
  \item $\Gamma  \leq  \Gamma'$ implies $ \Gamma  \uplus  \Gamma''   \leq   \Gamma'  \uplus  \Gamma'' $
  \item If $ \Theta   \mid   \mathcal{L}   \mid   \Gamma   \vdash   \ottnt{e}  :  \tau   \produces   \Gamma' $, $ \mathcal{L}   \vdash _{\wf}   \Gamma  \uplus  \Gamma''  $, and $ \mathcal{L}   \vdash _{\wf}   \Gamma'  \uplus  \Gamma''  $, then $ \Theta   \mid   \mathcal{L}   \mid    \Gamma  \uplus  \Gamma''    \vdash   \ottnt{e}  :  \tau   \produces    \Gamma''  \uplus  \Gamma  $
  \end{enumerate}
\end{lemma}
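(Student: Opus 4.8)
The plan is to establish the three parts in the order stated, since the second builds on the first and the third uses both. Throughout, the crucial fact is \emph{monotonicity of the environment denotation}: because $\sem{\cdot}$ is a conjunction of the liftings of each binding (and $\DOM(\Gamma) \cap \DOM(\Gamma'') = \emptyset$ makes $\Gamma \uplus \Gamma''$ defined), we have $\sem{\Gamma \uplus \Gamma''} = \sem{\Gamma} \wedge \sem{\Gamma''}$, hence $\models \sem{\Gamma \uplus \Gamma''} \implies \sem{\Gamma}$.

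For the first part I would induct on the derivation of $\Gamma \vdash \tau_1 \leq \tau_2$. The \rn{S-Ref} case is immediate from the induction hypothesis, since the ownership inequality $r_1 \ge r_2$ is independent of the environment. The only real content is the \rn{S-Int} base case, where $\Gamma \vdash \set{\nu \COL \TINT \mid \varphi_1} \leq \set{\nu \COL \TINT \mid \varphi_2}$ amounts to validity of $\sem{\Gamma} \wedge \varphi_1 \implies \varphi_2$. By the monotonicity above, this yields validity of $\sem{\Gamma \uplus \Gamma''} \wedge \varphi_1 \implies \varphi_2$, i.e. $\Gamma \uplus \Gamma'' \vdash \tau_1 \le \tau_2$. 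The second part then follows by unfolding \rn{S-TyEnv}: for each $x \in \DOM(\Gamma')$ the hypothesis gives $\Gamma \vdash \Gamma(x) \le \Gamma'(x)$, which the first part lifts to $\Gamma \uplus \Gamma'' \vdash (\Gamma\uplus\Gamma'')(x) \le (\Gamma'\uplus\Gamma'')(x)$; for each $x \in \DOM(\Gamma'')$ both sides are $\Gamma''(x)$ and I only need reflexivity of $\le$, which is immediate from \rn{S-Int} and \rn{S-Ref}. (The analogous weakening for the result-subtyping judgment $\Gamma,\tau \le \Gamma',\tau'$ of \rn{S-Res} reduces to this by pushing $\tau,\tau'$ onto a common fresh binding.)

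For the third part I would induct on the typing derivation $\Theta \mid \mathcal{L} \mid \Gamma \vdash e : \tau \produces \Gamma'$, carrying $\Gamma''$ through untouched so that the produced environment is the disjoint extension of the original output. The non-binder rules (\rn{T-Seq}, \rn{T-Assign}, \rn{T-Alias}, \rn{T-AliasPtr}) follow directly from the induction hypotheses, as their splitting and $\approx$ side-conditions only involve the program variables already in $\DOM(\Gamma)$, which is disjoint from $\DOM(\Gamma'')$. \rn{T-Sub} is discharged by using the first two parts to weaken its three subtyping premises, \rn{T-Assert} uses the $\sem{\cdot}$-monotonicity to re-establish $\Gamma \uplus \Gamma'' \models \varphi$ from $\Gamma \models \varphi$, and \rn{T-Var} type-checks verbatim in the larger environment since $x \notin \DOM(\Gamma'')$ lets $\Gamma''$ pass through the update $\Gamma[x \hookleftarrow \tau_2]$.

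The main obstacle, and where the well-formedness hypotheses are genuinely needed, is the family of binder cases (\rn{T-Let}, \rn{T-LetInt}, \rn{T-MkRef}, \rn{T-Deref}, \rn{T-Call}). There the body is typed under an environment extended with a fresh variable, e.g. $\Gamma_1 = \Gamma[\dots], x : \tau_x$, and to invoke the induction hypothesis I must (i) know $x \notin \DOM(\Gamma'')$ so that $\Gamma_1 \uplus \Gamma'' = (\Gamma \uplus \Gamma'')[\dots], x : \tau_x$ is defined, and (ii) re-establish the IH's well-formedness premises $\mathcal{L} \vdash_{\wf} \Gamma_1 \uplus \Gamma''$ (and likewise for the body's output). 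Point (i) is handled by the global convention that let- and parameter-bound names are distinct, alpha-renaming $x$ away from $\DOM(\Gamma'')$ if necessary. Point (ii) holds because these rules never delete an integer binding — they only add $x$ or rewrite reference types in place — so every integer variable that a refinement of $\Gamma''$ may mention stays in scope and $\Gamma''$ remains well-formed, while the freshly introduced type is already well-formed in the original derivation. Threading this well-formedness bookkeeping through each binder rule is the fiddly core of the argument; the logical content is entirely the $\sem{\cdot}$-monotonicity of the first part.
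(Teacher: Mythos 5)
Your proposal is correct and follows essentially the same route as the paper's proof: the same monotonicity fact $\sem{\Gamma \uplus \Gamma''} = \sem{\Gamma} \wedge \sem{\Gamma''}$ drives parts 1 and 2 (pointwise lifting plus reflexivity on the $\Gamma''$ bindings), and part 3 proceeds by induction on the typing derivation with the same renaming convention keeping let-bound names out of $\DOM(\Gamma'')$, the same use of parts 1--2 for \rn{T-Sub}, the same denotation argument for \rn{T-Assert}, and the same well-formedness bookkeeping in the binder cases.
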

\begin{proof}
  \leavevmode
  \begin{enumerate}
  \item As in the proof of \subref{lem:ctxt-substitution}{itm:ctxt-sub-subtype}, at the root of
    the subtyping derivation is a logical judgment of the form
    $\models   \sem{ \Gamma }   \wedge  \varphi_{{\mathrm{1}}}  \implies  \varphi_{{\mathrm{2}}}$ which can be shown to be valid. We
    must then show that $\models   \sem{  \Gamma  \uplus  \Gamma''  }   \wedge  \varphi_{{\mathrm{1}}}  \implies  \varphi_{{\mathrm{2}}}$ is valid. As
    $ \sem{  \Gamma''  \uplus  \Gamma  }   \wedge  \varphi_{{\mathrm{1}}}  \ottsym{=}   \sem{ \Gamma'' }   \wedge   \sem{ \Gamma }   \wedge  \varphi$ only strengthens the pre-condition
    $ \sem{ \Gamma }   \wedge  \varphi_{{\mathrm{1}}}$, $\models   \sem{  \Gamma''  \uplus  \Gamma  }   \wedge  \varphi_{{\mathrm{1}}}  \implies  \varphi_{{\mathrm{2}}}$ must
    also be valid.
  \item It suffices to show that $ \Gamma  \uplus  \Gamma''   \vdash  \ottsym{(}   \Gamma  \uplus  \Gamma''   \ottsym{)}  \ottsym{(}  \mathit{x}  \ottsym{)}  \leq  \ottsym{(}   \Gamma'  \uplus  \Gamma''   \ottsym{)}  \ottsym{(}  \mathit{x}  \ottsym{)}$ for
    any arbitrary $ \mathit{x}  \in   \DOM(  \Gamma'  \uplus  \Gamma''  )  $. If $ \mathit{x}  \in \DOM( \Gamma' ) $ then we must have
    $\Gamma  \vdash  \Gamma  \ottsym{(}  \mathit{x}  \ottsym{)}  \leq  \Gamma'  \ottsym{(}  \mathit{x}  \ottsym{)}$ by inversion on $\Gamma  \leq  \Gamma'$,
    whereby $ \Gamma  \uplus  \Gamma''   \vdash  \ottsym{(}   \Gamma  \uplus  \Gamma''   \ottsym{)}  \ottsym{(}  \mathit{x}  \ottsym{)}  \ottsym{=}  \Gamma  \ottsym{(}  \mathit{x}  \ottsym{)}  \leq   \Gamma'  \uplus  \Gamma''   \ottsym{(}  \mathit{x}  \ottsym{)}  \ottsym{=}  \Gamma'  \ottsym{(}  \mathit{x}  \ottsym{)}$ from part 1.

    If $ \mathit{x}  \not\in \DOM( \Gamma' ) $, then we must show $\ottsym{(}   \Gamma  \uplus  \Gamma''   \ottsym{)}  \vdash  \Gamma''  \ottsym{(}  \mathit{x}  \ottsym{)}  \leq  \Gamma''  \ottsym{(}  \mathit{x}  \ottsym{)}$, which trivially holds.
  \item By induction on the typing derivation of
    $ \Theta   \mid   \mathcal{L}   \mid   \Gamma   \vdash   \ottnt{e}  :  \tau   \produces   \Gamma' $. We assume that the variables bound in
    any let expressions that appear within $\ottnt{e}$ are not in the domain
    of $\Gamma''$; this requirement can be easily enforced with
    consistent renaming and is preserved during evaluation. The only interesting cases are
    \rn{T-Sub} and \rn{T-Assert} and the let bindings;
    the other cases follow from the induction hypothesis.
    
    We now prove the relevant cases.

    \begin{rneqncase}{T-Let}{
         \Theta   \mid   \mathcal{L}   \mid   \Gamma   \vdash    \LET  \mathit{x}  =  \mathit{y}  \IN  \ottnt{e}   :  \tau   \produces   \Gamma'  \\
         \Theta   \mid   \mathcal{L}   \mid   \Gamma  \ottsym{[}  \mathit{y}  \hookleftarrow   \tau_{{\mathrm{1}}}  \wedge_{ \mathit{y} }   \mathit{y}  =_{ \tau_{{\mathrm{1}}} }  \mathit{x}    \ottsym{]}  \ottsym{,}  \mathit{x}  \ottsym{:}   \tau_{{\mathrm{2}}}  \wedge_{ \mathit{x} }   \mathit{x}  =_{ \tau_{{\mathrm{2}}} }  \mathit{y}     \vdash   \ottnt{e}  :  \tau   \produces   \Gamma'  \\
        \Gamma  \ottsym{(}  \mathit{y}  \ottsym{)}  \ottsym{=}  \tau_{{\mathrm{1}}}  \ottsym{+}  \tau_{{\mathrm{2}}} \andalso  \mathit{x}  \not\in   \DOM( \Gamma' )  
      }
      Let $ \Gamma'''  \ottsym{=}  \Gamma''  \uplus  \Gamma   \ottsym{[}  \mathit{y}  \hookleftarrow   \tau_{{\mathrm{1}}}  \wedge_{ \mathit{y} }   \mathit{y}  =_{ \tau_{{\mathrm{1}}} }  \mathit{x}    \ottsym{]}  \ottsym{,}  \mathit{x}  \ottsym{:}   \tau_{{\mathrm{2}}}  \wedge_{ \mathit{x} }   \mathit{x}  =_{ \tau_{{\mathrm{2}}} }  \mathit{y}  $.
      To use the inductive hypothesis, we must show that
      $ \mathcal{L}   \vdash _{\wf}  \Gamma''' $ and $ \mathcal{L}   \vdash _{\wf}   \Gamma'  \uplus  \Gamma''  $. The latter holds by assumption.
      To show the former, it suffices
      to show $ \mathcal{L}   \mid   \Gamma'''   \vdash _{\wf}   \tau_{{\mathrm{1}}}  \wedge_{ \mathit{y} }   \mathit{y}  =_{ \tau_{{\mathrm{1}}} }  \mathit{x}   $ and $ \mathcal{L}   \mid   \Gamma'''   \vdash _{\wf}   \tau_{{\mathrm{2}}}  \wedge_{ \mathit{x} }   \mathit{x}  =_{ \tau_{{\mathrm{2}}} }  \mathit{y}   $.
      From the assumed well-formedness $ \mathcal{L}   \vdash _{\wf}   \Gamma  \uplus  \Gamma''  $, we must have $ \mathcal{L}   \mid    \Gamma  \uplus  \Gamma''    \vdash _{\wf}  \tau_{{\mathrm{1}}}  \ottsym{+}  \tau_{{\mathrm{2}}} $,
      and in particular $ \mathcal{L}   \mid    \Gamma  \uplus  \Gamma''    \vdash _{\wf}  \tau_{{\mathrm{1}}} $ and $ \mathcal{L}   \mid    \Gamma  \uplus  \Gamma''    \vdash _{\wf}  \tau_{{\mathrm{2}}} $. From this
      we conclude both conditions hold. To show the well-typing of the overall let expression,
      we must show $ \mathit{x}  \not\in   \DOM(  \Gamma'  \uplus  \Gamma''  )  $, which follows from our assumption and $ \mathit{x}  \not\in   \DOM( \Gamma' )  $.
      Finally, we must also show that $ \mathcal{L}   \mid    \Gamma'  \uplus  \Gamma''    \vdash _{\wf}  \tau $. From $ \mathcal{L}   \mid   \Gamma'   \vdash _{\wf}  \tau $
      and the fact that$\forall \,  \mathit{x}  \in   \DOM( \Gamma' )    \ottsym{.}  \Gamma'  \ottsym{(}  \mathit{x}  \ottsym{)}  \ottsym{=}   \set{  \nu  \COL \TINT \mid \_ } $ iff
      $\ottsym{(}   \Gamma'  \uplus  \Gamma''   \ottsym{)}  \ottsym{(}  \mathit{x}  \ottsym{)}  \ottsym{=}   \set{  \nu  \COL \TINT \mid \_ } $, we must have $ \mathcal{L}   \mid    \Gamma'  \uplus  \Gamma''    \vdash _{\wf}  \tau $. 
    \end{rneqncase}
    \begin{namedcase}{\casefont{Cases \rn{T-LetInt}, \rn{T-Mkref}, \rn{T-Mkref}, \rn{T-Deref}, \rn{T-Call}: }}
      Similar to the reasoning in \rn{T-Let}.
    \end{namedcase}

    \begin{rneqncase}{T-Sub}{
        \Gamma  \leq  \Gamma_{{\mathrm{1}}} &  \Theta   \mid   \mathcal{L}   \mid   \Gamma_{{\mathrm{1}}}   \vdash   \ottnt{e}  :  \tau_{{\mathrm{2}}}   \produces   \Gamma_{{\mathrm{2}}}  \\
        \Gamma_{{\mathrm{2}}}  \ottsym{,}  \tau_{{\mathrm{2}}}  \leq  \Gamma'  \ottsym{,}  \tau \\
      }
      From the rules for subtyping, we must have
      $  \DOM( \Gamma_{{\mathrm{1}}} )   \subseteq   \DOM( \Gamma )  $ and $  \DOM( \Gamma' )   \subseteq   \DOM( \Gamma_{{\mathrm{2}}} )  $. A simple inductive argument
      gives that $  \DOM( \Gamma_{{\mathrm{2}}} )   \subseteq   \DOM( \Gamma_{{\mathrm{1}}} )  $, therefore we have $  \DOM( \Gamma' )   \subseteq   \DOM( \Gamma_{{\mathrm{1}}} )  $.
      Let $\mathcal{LV}$ be the set of free variables in the refinements of $\Gamma''$
      that are not in the domain of $\Gamma''$. From the assumed well-formedness of
      $ \mathcal{L}   \vdash _{\wf}   \Gamma'  \uplus  \Gamma''  $, we must have that
      $\forall x \in \mathcal{LV}. \mathit{x}  \in   \DOM( \Gamma' )    \wedge  \Gamma'  \ottsym{(}  \mathit{x}  \ottsym{)}  \ottsym{=}   \set{  \nu  \COL \TINT \mid \_ } $. Thus,
      $\mathcal{LV} \subseteq \Gamma_{{\mathrm{1}}}$ and $\mathcal{LV} \subseteq \Gamma_{{\mathrm{2}}}$. Further, by definition,
      for any $\Gamma_{\ottmv{p}}  \leq  \Gamma_{\ottmv{q}}$, if $\Gamma_{\ottmv{q}}  \ottsym{(}  \mathit{x}  \ottsym{)}  \ottsym{=}   \set{  \nu  \COL \TINT \mid \_ } $ then $\Gamma_{\ottmv{p}}  \ottsym{(}  \mathit{x}  \ottsym{)}  \ottsym{=}   \set{  \nu  \COL \TINT \mid \_ } $,
      i.e. subtyping preserves simple types.
      We conclude that $ \mathcal{L}   \vdash _{\wf}   \Gamma_{{\mathrm{1}}}  \uplus  \Gamma''  $ and
      $ \mathcal{L}   \vdash _{\wf}   \Gamma_{{\mathrm{2}}}  \uplus  \Gamma''  $, whereby the inductive hypothesis gives
      $ \Theta   \mid   \mathcal{L}   \mid    \Gamma_{{\mathrm{1}}}  \uplus  \Gamma''    \vdash   \ottnt{e}  :  \tau_{{\mathrm{2}}}   \produces    \Gamma_{{\mathrm{2}}}  \uplus  \Gamma''  $. To prove the overall result, we must
      show that $ \Gamma  \uplus  \Gamma''   \leq   \Gamma_{{\mathrm{1}}}  \uplus  \Gamma'' $ and $ \Gamma_{{\mathrm{2}}}  \uplus  \Gamma''   \ottsym{,}  \tau_{{\mathrm{2}}}  \leq   \Gamma'  \uplus  \Gamma''   \ottsym{,}  \tau$
      which follow from parts 1 and 2 above.
      That $ \mathcal{L}   \mid    \Gamma_{{\mathrm{2}}}  \uplus  \Gamma''    \vdash _{\wf}  \tau_{{\mathrm{2}}} $ follows by reasoning to the case for \rn{T-Let} above.
    \end{rneqncase}
    \begin{rncase}{T-Assert}
      We must show that $\models   \sem{  \Gamma''  \uplus  \Gamma  }   \implies  \varphi$ which is equivalent to
      $\models   \sem{ \Gamma'' }   \wedge   \sem{ \Gamma }   \implies  \varphi$. As the source term was well typed,
      $\models   \sem{ \Gamma }   \implies  \varphi$ is valid, we must then have $\models   \sem{ \Gamma'' }   \wedge   \sem{ \Gamma }   \implies  \varphi$
      whereby the inductive hypothesis gives the required result.
    \end{rncase}
  \end{enumerate}
\end{proof}

\section{Proof of \Cref{lem:preservation}}
\label{sec:preservation-proof}

We first prove two additional lemmas. 
\Cref{lem:stack-well-typed,lem:callfunc} give key facts used in the return and
call cases respectively; we have separated them into separate lemmas for clarity.

\begin{lemma} %
  \label{lem:stack-well-typed}
  For any $\Gamma_{{\mathrm{0}}}$ such that $ \Theta   \mid   \ell  \ottsym{:}  \oldvec{\ell}   \mid   \Gamma_{{\mathrm{0}}}   \vdash   \mathit{x}  :  \tau_{{\mathrm{1}}}   \produces   \Gamma_{{\mathrm{1}}} $ and $\Theta  \mid  \HOLE  \ottsym{:}  \tau_{{\mathrm{1}}}  \produces  \Gamma_{{\mathrm{1}}}  \mid  \oldvec{\ell}  \vdash_{\mathit{ectx} }   \ottnt{E} [\LET  \mathit{y}  =   \HOLE^ \ell   \IN  \ottnt{e}  ]   \ottsym{:}  \tau_{{\mathrm{2}}}  \produces  \Gamma_{{\mathrm{2}}}$ then
  $ \Theta   \mid   \oldvec{\ell}   \mid   \Gamma_{{\mathrm{0}}}   \vdash    \ottnt{E} [\LET  \mathit{y}  =   \HOLE^ \ell   \IN  \ottnt{e}  ]   \ottsym{[}  \mathit{x}  \ottsym{]}  :  \tau_{{\mathrm{2}}}   \produces   \Gamma_{{\mathrm{2}}} $.
\end{lemma}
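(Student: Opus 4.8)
The plan is to rewrite the goal using the context-substitution identity of \Cref{fig:context-typing}, namely $\ottnt{E}[\LET y = \HOLE^\ell \IN e][x] = \ottnt{E}[\LET y = x \IN e]$, so that it suffices to type the ordinary expression $\ottnt{E}[\LET y = x \IN e]$ under context $\oldvec{\ell}$ and environment $\Gamma_0$. I would obtain this by (i) decomposing the given return-context typing with an inversion on \rn{TE-Stack}, (ii) reconstructing a \rn{T-Let} derivation for $\LET y = x \IN e$ from the variable typing of $x$, and (iii) re-plugging it into $\ottnt{E}$ with \Cref{lem:ectxt-sub-well-typed}.

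First I would invert the hypothesis $\Theta \mid \HOLE : \tau_1 \produces \Gamma_1 \mid \oldvec{\ell} \vdash_{\mathit{ectx}} \ottnt{E}[\LET y = \HOLE^\ell \IN e] : \tau_2 \produces \Gamma_2$. The only applicable rule is \rn{TE-Stack}, which yields some $\tau_h, \Gamma_h$ with $\Theta \mid \HOLE : \tau_h \produces \Gamma_h \mid \oldvec{\ell} \vdash_{\mathit{ectx}} \ottnt{E} : \tau_2 \produces \Gamma_2$, a body typing $\Theta \mid \oldvec{\ell} \mid \Gamma_1, y : \tau_1 \vdash e : \tau_h \produces \Gamma_h$, and $y \notin \DOM(\Gamma_h)$. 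Next I would invert the variable typing of $x$ with \Cref{lem:inversion}(1) to expose the split of $x$'s type: up to the surrounding input/output subtyping (which is routine to compose via \Cref{lem:subtype-transitive}), $\Gamma_0(x) = \tau_1 + \tau'$ and $\Gamma_1 = \Gamma_0[x \hookleftarrow \tau']$ for some residual $\tau'$. I would then apply \rn{T-Let} to $\LET y = x \IN e$ with the complementary split that keeps $\tau'$ on $x$ and hands $\tau_1$ to $y$ (using commutativity of $+$ to read $\Gamma_0(x) = \tau' + \tau_1$). Its premise requires typing $e$ under $\Gamma_0[x \hookleftarrow \tau' \wedge_x x =_{\tau'} y], y : (\tau_1 \wedge_y y =_{\tau_1} x)$, which is exactly the body environment $\Gamma_1, y : \tau_1$ strengthened by the two generated equalities. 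Since adding a conjunct to an integer refinement (and a no-op on reference refinements) produces a subtype, this strengthened environment is $\le \Gamma_1, y : \tau_1$, so \rn{T-Sub} lets me reuse the body typing from step (i); the side condition $y \notin \DOM(\Gamma_h)$ is discharged above. This gives $\Theta \mid \oldvec{\ell} \mid \Gamma_0 \vdash \LET y = x \IN e : \tau_h \produces \Gamma_h$, and \Cref{lem:ectxt-sub-well-typed} together with the $\ottnt{E}$-context typing delivers $\Theta \mid \oldvec{\ell} \mid \Gamma_0 \vdash \ottnt{E}[\LET y = x \IN e] : \tau_2 \produces \Gamma_2$, which is the goal.

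The main obstacle is the context shift: the returned variable $x$ is typed under the callee context $\ell : \oldvec{\ell}$, whereas the reconstructed \rn{T-Let} derivation must live under the caller context $\oldvec{\ell}$. I would justify transporting the variable typing (and the subtyping facts extracted from it) across this shift by observing that both $\ell : \oldvec{\ell}$ and $\oldvec{\ell}$ are concrete contexts, for which well-formedness (\rn{WF-Phi}) forces $\mathbf{FCV}(\varphi) \subseteq \mathbf{CV}(\oldvec{\ell}) = \emptyset$ in every refinement appearing in $\Gamma_0, \Gamma_1, \tau_1$; hence no context query predicate is present and every logical-validity side condition underlying \rn{T-Var} and subtyping is insensitive to which concrete context is ambient. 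The only remaining care is bookkeeping: checking the freshness and well-formedness conditions ($y$ fresh, the intermediate environments well-formed under $\oldvec{\ell}$) needed to apply \rn{T-Let} and \rn{T-Sub}, all of which follow from the corresponding conditions recovered by inversion.
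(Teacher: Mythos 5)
Your proposal is correct and takes essentially the same route as the paper's proof: invert the \rn{TE-Stack} typing to recover the body typing and the typing of $\ottnt{E}$, invert the variable typing to expose the split of $x$'s type, rebuild a \rn{T-Let} derivation for $ \LET  \mathit{y}  =  \mathit{x}  \IN  \ottnt{e} $ (handling the strengthening conjuncts and the recovered input/output subtypings with \rn{T-Sub}), and re-plug via \Cref{lem:ectxt-sub-well-typed}. Your explicit argument for transporting the variable typing across the context shift from $\ell : \oldvec{\ell}$ to $\oldvec{\ell}$ (both concrete, hence no context query predicates can occur) is in fact more careful than the paper, which handles this silently through the well-formedness output of \Cref{lem:inversion} and the context-insensitivity of subtyping judgments.
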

\begin{proof}
  It suffices to show that $ \Theta   \mid   \oldvec{\ell}   \mid   \Gamma_{{\mathrm{0}}}   \vdash    \LET  \mathit{y}  =  \mathit{x}  \IN  \ottnt{e}   :  \tau'_{{\mathrm{1}}}   \produces   \Gamma'_{{\mathrm{1}}} $
  and $\Theta  \mid  \HOLE  \ottsym{:}  \tau'_{{\mathrm{1}}}  \produces  \Gamma'_{{\mathrm{1}}}  \mid  \oldvec{\ell}  \vdash_{\mathit{ectx} }  \ottnt{E}  \ottsym{:}  \tau_{{\mathrm{2}}}  \produces  \Gamma_{{\mathrm{2}}}$ for some $\tau'_{{\mathrm{1}}}$ and $\Gamma'_{{\mathrm{1}}}$
  whereby the result will hold from \Cref{lem:ectxt-sub-well-typed}.

  By inversion on
  $\Theta  \mid  \HOLE  \ottsym{:}  \tau_{{\mathrm{1}}}  \produces  \Gamma_{{\mathrm{1}}}  \mid  \oldvec{\ell}  \vdash_{\mathit{ectx} }   \ottnt{E} [\LET  \mathit{y}  =   \HOLE^ \ell   \IN  \ottnt{e}  ]   \ottsym{:}  \tau_{{\mathrm{2}}}  \produces  \Gamma_{{\mathrm{2}}}$ we have
  \begin{align}
    &  \Theta   \mid   \oldvec{\ell}   \mid   \Gamma_{{\mathrm{1}}}  \ottsym{,}  \mathit{y}  \ottsym{:}  \tau_{{\mathrm{1}}}   \vdash   \ottnt{e}  :  \tau''_{{\mathrm{1}}}   \produces   \Gamma''_{{\mathrm{1}}}  \label{eqn:let-body-ty} \\
    & \Theta  \mid  \HOLE  \ottsym{:}  \tau''_{{\mathrm{1}}}  \produces  \Gamma''_{{\mathrm{1}}}  \mid  \mathcal{L}  \vdash_{\mathit{ectx} }  \ottnt{E}  \ottsym{:}  \tau_{{\mathrm{2}}}  \produces  \Gamma_{{\mathrm{2}}} \label{eqn:ctxt-typed} \\
    &  \mathit{y}  \not\in   \DOM( \Gamma''_{{\mathrm{1}}} )   \label{eqn:y-not-free-var}
  \end{align}
  We take $\Gamma'_{{\mathrm{1}}}  \ottsym{=}  \Gamma''_{{\mathrm{1}}}$, $\tau'_{{\mathrm{1}}}  \ottsym{=}  \tau''_{{\mathrm{1}}}$, and then \Cref{eqn:ctxt-typed}
  gives the necessary typing for $\ottnt{E}$.

  It remains to to show that
  \[
     \Theta   \mid   \oldvec{\ell}   \mid   \Gamma_{{\mathrm{0}}}   \vdash    \LET  \mathit{y}  =  \mathit{x}  \IN  \ottnt{e}   :  \tau'_{{\mathrm{1}}}   \produces   \Gamma'_{{\mathrm{1}}} 
  \]
  (That $ \oldvec{\ell}   \vdash _{\wf}  \tau'_{{\mathrm{1}}}   \produces   \Gamma'_{{\mathrm{1}}} $ follows from \Cref{eqn:let-body-ty})
  
  By \Cref{lem:inversion}, from $ \Theta   \mid   \ell  \ottsym{:}  \oldvec{\ell}   \mid   \Gamma_{{\mathrm{0}}}   \vdash   \mathit{x}  :  \tau_{{\mathrm{1}}}   \produces   \Gamma_{{\mathrm{1}}} $ we conclude there
  exists some $\Gamma_{\ottmv{p}}$, $\tau_{\ottmv{p}}$, and $\Gamma'_{\ottmv{p}}$ such:
  \begin{align}
    & \Gamma_{{\mathrm{0}}}  \leq  \Gamma_{\ottmv{p}} \label{eqn:rin-sub} \\
    & \Gamma'_{\ottmv{p}}  \ottsym{,}  \tau_{\ottmv{p}}  \leq  \Gamma_{{\mathrm{1}}}  \ottsym{,}  \tau_{{\mathrm{1}}} \label{eqn:rout-sub} \\
    & \Gamma'_{\ottmv{p}}  \ottsym{=}  \Gamma_{\ottmv{p}}  \ottsym{[}  \mathit{x}  \hookleftarrow  \tau'_{\ottmv{p}}  \ottsym{]} \label{eqn:gp-up-def} \\
    & \Gamma_{\ottmv{p}}  \ottsym{(}  \mathit{x}  \ottsym{)}  \ottsym{=}  \tau_{\ottmv{p}}  \ottsym{+}  \tau'_{\ottmv{p}} \label{eqn:gp-x-def} \\
    &  \oldvec{\ell}   \vdash _{\wf}  \Gamma_{\ottmv{p}}  \label{eqn:renv-sub-in-wf}
  \end{align} 

  We first apply \rn{T-Sub} with \Cref{eqn:rin-sub,eqn:renv-sub-in-wf}, so it remains to show
  \[
     \Theta   \mid   \oldvec{\ell}   \mid   \Gamma_{\ottmv{p}}  \ottsym{[}  \mathit{x}  \ottsym{:}  \tau_{\ottmv{p}}  \ottsym{+}  \tau'_{\ottmv{p}}  \ottsym{]}   \vdash    \LET  \mathit{y}  =  \mathit{x}  \IN  \ottnt{e}   :  \tau'_{{\mathrm{1}}}   \produces   \Gamma'_{{\mathrm{1}}} 
  \]
  which, by \rn{T-Let} holds if we show that:
  \[
     \Theta   \mid   \oldvec{\ell}   \mid   \Gamma_{\ottmv{p}}  \ottsym{[}  \mathit{x}  \hookleftarrow   \tau'_{\ottmv{p}}  \wedge_{ \mathit{x} }   \mathit{x}  =_{ \tau'_{\ottmv{p}} }  \mathit{y}    \ottsym{]}  \ottsym{,}  \mathit{y}  \ottsym{:}   \tau_{\ottmv{p}}  \wedge_{ \mathit{y} }   \mathit{y}  =_{ \tau_{\ottmv{p}} }  \mathit{x}     \vdash   \ottnt{e}  :  \tau'_{{\mathrm{1}}}   \produces   \Gamma'_{{\mathrm{1}}} 
  \]
  ($ \mathit{y}  \not\in   \DOM( \Gamma'_{{\mathrm{1}}} )  $ follows from \Cref{eqn:y-not-free-var}, and the well-formedness of
  $\Gamma_{\ottmv{p}}  \ottsym{[}  \mathit{x}  \hookleftarrow   \tau'_{\ottmv{p}}  \wedge_{ \mathit{x} }   \mathit{x}  =_{ \tau'_{\ottmv{p}} }  \mathit{y}    \ottsym{]}  \ottsym{,}  \mathit{y}  \ottsym{:}   \tau_{\ottmv{p}}  \wedge_{ \mathit{y} }   \mathit{y}  =_{ \tau_{\ottmv{p}} }  \mathit{x}  $ follows from the well-formedness of
  $\Gamma_{\ottmv{p}}$, $\tau_{\ottmv{p}}$ and $\tau'_{\ottmv{p}}$ and that $\mathit{x}$ and $\mathit{y}$ appear in the refinements
  iff they are mapped to integer types in the new type environment.)
    
  We can use \rn{T-Sub} to weaken the type environment to:
  \[
     \Theta   \mid   \oldvec{\ell}   \mid   \Gamma_{\ottmv{p}}  \ottsym{[}  \mathit{x}  \hookleftarrow  \tau'_{\ottmv{p}}  \ottsym{]}  \ottsym{,}  \mathit{y}  \ottsym{:}  \tau_{\ottmv{p}}   \vdash   \ottnt{e}  :  \tau'_{{\mathrm{1}}}   \produces   \Gamma'_{{\mathrm{1}}} 
  \]
  From \Cref{eqn:rout-sub} above, we have that $\Gamma_{\ottmv{p}}  \ottsym{[}  \mathit{x}  \hookleftarrow  \tau'_{\ottmv{p}}  \ottsym{]}  \ottsym{,}  \mathit{y}  \ottsym{:}  \tau_{\ottmv{p}}  \leq  \Gamma_{{\mathrm{1}}}  \ottsym{,}  \mathit{y}  \ottsym{:}  \tau_{{\mathrm{1}}}$,
  whereby one final application of \rn{T-Sub} allows us to use to
  \Cref{eqn:let-body-ty} above.
\end{proof}

\begin{lemma} %
  \label{lem:callfunc}
  Let $\ottnt{E}  \ottsym{[}   \LET  \mathit{x}  =   \mathit{f} ^ \ell (  \mathit{y_{{\mathrm{1}}}} ,\ldots, \mathit{y_{\ottmv{n}}}  )   \IN  \ottnt{e'}   \ottsym{]}$ be a term such that:
  
  \begin{bcpcasearray}
     \Theta   \mid   \oldvec{\ell}   \mid   \Gamma_{{\mathrm{0}}}   \vdash    \LET  \mathit{x}  =   \mathit{f} ^ \ell (  \mathit{y_{{\mathrm{1}}}} ,\ldots, \mathit{y_{\ottmv{n}}}  )   \IN  \ottnt{e'}   :  \tau_{{\mathrm{1}}}   \produces   \Gamma_{{\mathrm{1}}}  &  \sigma_{\alpha}  \ottsym{=}  \ottsym{[}  \ell  \ottsym{:}  \oldvec{\ell}  \ottsym{/}  \lambda  \ottsym{]} \\
    \Theta  \mid  \HOLE  \ottsym{:}  \tau_{{\mathrm{1}}}  \produces  \Gamma_{{\mathrm{1}}}  \mid  \oldvec{\ell}  \vdash_{\mathit{ectx} }  \ottnt{E}  \ottsym{:}  \tau_{{\mathrm{2}}}  \produces  \Gamma_{{\mathrm{2}}} & \sigma_{x}  \ottsym{=}    [  \mathit{y_{{\mathrm{1}}}}  /  \mathit{x_{{\mathrm{1}}}}  ]  \cdots  [  \mathit{y_{\ottmv{n}}}  /  \mathit{x_{\ottmv{n}}}  ]   \\
     \mathit{f}  \mapsto  \ottsym{(}  \mathit{x_{{\mathrm{1}}}}  \ottsym{,} \, .. \, \ottsym{,}  \mathit{x_{\ottmv{n}}}  \ottsym{)}  \ottnt{e}  \in  \ottnt{D}  & \Theta  \vdash  \mathit{f}  \mapsto  \ottsym{(}  \mathit{x_{{\mathrm{1}}}}  \ottsym{,} \, .. \, \ottsym{,}  \mathit{x_{\ottmv{n}}}  \ottsym{)}  \ottnt{e} \\
     \vdash _{\wf}  \Theta 
  \end{bcpcasearray}

  where $\Theta  \ottsym{(}  \mathit{f}  \ottsym{)}  \ottsym{=}   \forall  \lambda .\tuple{ \mathit{x_{{\mathrm{1}}}} \COL \tau_{{\mathrm{1}}} ,\dots, \mathit{x_{\ottmv{n}}} \COL \tau_{\ottmv{n}} }\ra\tuple{ \mathit{x_{{\mathrm{1}}}} \COL \tau'_{{\mathrm{1}}} ,\dots, \mathit{x_{\ottmv{n}}} \COL \tau'_{\ottmv{n}}  \mid  \tau_{\ottmv{q}} } $.

  Then there exist some $\tau_{{\mathrm{3}}}$ and $\Gamma_{{\mathrm{3}}}$:
  \begin{align*}
    &  \Theta   \mid   \ell  \ottsym{:}  \oldvec{\ell}   \mid   \Gamma_{{\mathrm{0}}}   \vdash    \sigma_{x}   \ottnt{e}   :  \tau_{{\mathrm{3}}}   \produces   \Gamma_{{\mathrm{3}}}  \\
    & \Theta  \mid  \HOLE  \ottsym{:}  \tau_{{\mathrm{3}}}  \produces  \Gamma_{{\mathrm{3}}}  \mid  \oldvec{\ell}  \vdash_{\mathit{ectx} }   \ottnt{E} [\LET  \mathit{x}  =   \HOLE^ \ell   \IN  \ottnt{e}  ]   \ottsym{:}  \tau_{{\mathrm{2}}}  \produces  \Gamma_{{\mathrm{2}}}
  \end{align*}
\end{lemma}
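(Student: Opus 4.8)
The plan is to combine the inversion lemma for the call expression with the inversion of the function-definition typing, and then to transport the callee's body derivation across the call boundary by instantiating first the context variable and then the formal parameters. First I would apply \Cref{lem:inversion} (the function-call case) to the hypothesis $ \Theta   \mid   \oldvec{\ell}   \mid   \Gamma_{{\mathrm{0}}}   \vdash    \LET  \mathit{x}  =   \mathit{f} ^ \ell (  \mathit{y_{{\mathrm{1}}}} ,\ldots, \mathit{y_{\ottmv{n}}}  )   \IN  \ottnt{e'}   :  \tau_{{\mathrm{1}}}   \produces   \Gamma_{{\mathrm{1}}} $. This yields some $\Gamma_{\ottmv{p}}, \tau_{\ottmv{p}}, \Gamma'_{\ottmv{p}}$ with $\Gamma_{{\mathrm{0}}}  \leq  \Gamma_{\ottmv{p}}$, $ \oldvec{\ell}   \vdash _{\wf}  \Gamma_{\ottmv{p}} $, and $\Gamma'_{\ottmv{p}}  \ottsym{,}  \tau_{\ottmv{p}}  \leq  \Gamma_{{\mathrm{1}}}  \ottsym{,}  \tau_{{\mathrm{1}}}$, together with $\Gamma_{\ottmv{p}}  \ottsym{(}  \mathit{y_{\ottmv{i}}}  \ottsym{)}  \ottsym{=}  \sigma_{\alpha} \, \sigma_{x} \, \tau_{\ottmv{i}}$ for each $i$ and the typing of the continuation $ \Theta   \mid   \oldvec{\ell}   \mid   \Gamma_{\ottmv{p}}  \ottsym{[}  \mathit{y_{\ottmv{i}}}  \hookleftarrow  \sigma_{\alpha} \, \sigma_{x} \, \tau'_{\ottmv{i}}  \ottsym{]}  \ottsym{,}  \mathit{x}  \ottsym{:}  \sigma_{\alpha} \, \sigma_{x} \, \tau_{\ottmv{q}}   \vdash   \ottnt{e'}  :  \tau_{\ottmv{p}}   \produces   \Gamma'_{\ottmv{p}} $ with $ \mathit{x}  \not\in \DOM( \Gamma'_{\ottmv{p}} ) $. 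Inverting $\Theta  \vdash  \mathit{f}  \mapsto  \ottsym{(}  \mathit{x_{{\mathrm{1}}}}  \ottsym{,} \, .. \, \ottsym{,}  \mathit{x_{\ottmv{n}}}  \ottsym{)}  \ottnt{e}$ by \rn{T-FunDef} supplies the body derivation $ \Theta   \mid   \lambda   \mid    \mathit{x_{{\mathrm{1}}}} \COL \tau_{{\mathrm{1}}} ,\ldots, \mathit{x_{\ottmv{n}}} \COL \tau_{\ottmv{n}}    \vdash   \ottnt{e}  :  \tau_{\ottmv{q}}   \produces    \mathit{x_{{\mathrm{1}}}} \COL \tau'_{{\mathrm{1}}} ,\ldots, \mathit{x_{\ottmv{n}}} \COL \tau'_{\ottmv{n}}  $.

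Next I would instantiate this body derivation to the caller's context. Applying \Cref{lem:ctxt-substitution} (the typing-preservation part) with $\sigma_{\alpha}  \ottsym{=}  \ottsym{[}  \ell  \ottsym{:}  \oldvec{\ell}  \ottsym{/}  \lambda  \ottsym{]}$ re-types the body under the concrete context $\ell  \ottsym{:}  \oldvec{\ell}$ with every argument, output, and return type prefixed by $\sigma_{\alpha}$; then $n$ applications of \Cref{lem:substitution} apply $\sigma_{x}$, renaming each formal $\mathit{x_{\ottmv{i}}}$ to the actual $\mathit{y_{\ottmv{i}}}$, the freshness conditions following from the global distinctness of bound names. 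Since the two substitutions act on disjoint syntactic categories they commute, so the result is a derivation of $\sigma_{x} \, \ottnt{e}$ under the argument-only environment $ \mathit{y_{{\mathrm{1}}}} \COL \sigma_{\alpha} \sigma_{x} \tau_{{\mathrm{1}}} ,\ldots, \mathit{y_{\ottmv{n}}} \COL \sigma_{\alpha} \sigma_{x} \tau_{\ottmv{n}} $ yielding result $\sigma_{\alpha} \, \sigma_{x} \, \tau_{\ottmv{q}}$ and output $ \mathit{y_{{\mathrm{1}}}} \COL \sigma_{\alpha} \sigma_{x} \tau'_{{\mathrm{1}}} ,\ldots, \mathit{y_{\ottmv{n}}} \COL \sigma_{\alpha} \sigma_{x} \tau'_{\ottmv{n}} $, whose argument types agree with $\Gamma_{\ottmv{p}}$ exactly. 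Writing $\Gamma_{\ottmv{p}}  \ottsym{=}  \ottsym{(}   \mathit{y_{{\mathrm{1}}}} \COL \sigma_{\alpha} \sigma_{x} \tau_{{\mathrm{1}}} ,\ldots, \mathit{y_{\ottmv{n}}} \COL \sigma_{\alpha} \sigma_{x} \tau_{\ottmv{n}}   \ottsym{)}  \uplus  \Gamma_{rest}$ for the residual environment $\Gamma_{rest}$, I would invoke Environment Weakening (\Cref{lem:tyenv-weaken}, part 3) to lift this to a derivation under $\Gamma_{\ottmv{p}}$ producing $\Gamma_{{\mathrm{3}}} := \Gamma_{\ottmv{p}}  \ottsym{[}  \mathit{y_{\ottmv{i}}}  \hookleftarrow  \sigma_{\alpha} \, \sigma_{x} \, \tau'_{\ottmv{i}}  \ottsym{]}$, and set $\tau_{{\mathrm{3}}} := \sigma_{\alpha} \, \sigma_{x} \, \tau_{\ottmv{q}}$. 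A final \rn{T-Sub} using $\Gamma_{{\mathrm{0}}}  \leq  \Gamma_{\ottmv{p}}$ (a context-free environment subtyping, hence usable under $\ell  \ottsym{:}  \oldvec{\ell}$) changes the input environment to $\Gamma_{{\mathrm{0}}}$, establishing the first conclusion $ \Theta   \mid   \ell  \ottsym{:}  \oldvec{\ell}   \mid   \Gamma_{{\mathrm{0}}}   \vdash    \sigma_{x}   \ottnt{e}   :  \tau_{{\mathrm{3}}}   \produces   \Gamma_{{\mathrm{3}}} $.

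For the second conclusion I would note that $\Gamma_{{\mathrm{3}}}  \ottsym{,}  \mathit{x}  \ottsym{:}  \tau_{{\mathrm{3}}}$ is exactly the environment under which the continuation $\ottnt{e'}$ was typed by inversion; one application of \rn{T-Sub} with $\Gamma'_{\ottmv{p}}  \ottsym{,}  \tau_{\ottmv{p}}  \leq  \Gamma_{{\mathrm{1}}}  \ottsym{,}  \tau_{{\mathrm{1}}}$ coerces that derivation to output $\tau_{{\mathrm{1}}}  \produces  \Gamma_{{\mathrm{1}}}$, matching the hole type of the given context derivation $\Theta  \mid  \HOLE  \ottsym{:}  \tau_{{\mathrm{1}}}  \produces  \Gamma_{{\mathrm{1}}}  \mid  \oldvec{\ell}  \vdash_{\mathit{ectx} }  \ottnt{E}  \ottsym{:}  \tau_{{\mathrm{2}}}  \produces  \Gamma_{{\mathrm{2}}}$. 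Rule \rn{TE-Stack} then assembles the return context with hole type $\tau_{{\mathrm{3}}}$ producing $\Gamma_{{\mathrm{3}}}$, giving $\Theta  \mid  \HOLE  \ottsym{:}  \tau_{{\mathrm{3}}}  \produces  \Gamma_{{\mathrm{3}}}  \mid  \oldvec{\ell}  \vdash_{\mathit{ectx} }   \ottnt{E} [\LET  \mathit{x}  =   \HOLE^ \ell   \IN  \ottnt{e'}  ]   \ottsym{:}  \tau_{{\mathrm{2}}}  \produces  \Gamma_{{\mathrm{2}}}$.

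I expect the main obstacle to be discharging the well-formedness side conditions of Environment Weakening, namely $ \ell  \ottsym{:}  \oldvec{\ell}   \vdash _{\wf}  \Gamma_{\ottmv{p}} $ and $ \ell  \ottsym{:}  \oldvec{\ell}   \vdash _{\wf}  \Gamma_{{\mathrm{3}}} $. The former follows because well-formedness under any concrete context is insensitive to the particular call string (concrete contexts forbid context-query predicates outright), so $ \oldvec{\ell}   \vdash _{\wf}  \Gamma_{\ottmv{p}} $ transfers directly. The latter is more delicate: it requires that the instantiated output types $\sigma_{\alpha} \, \sigma_{x} \, \tau'_{\ottmv{i}}$ remain well-formed, including their mutual refinement dependencies on the other argument names $\mathit{y_{\ottmv{j}}}$, which I would derive from $ \vdash _{\wf}  \Theta $ together with the well-formedness-preservation part of \Cref{lem:ctxt-substitution} and the observation that renaming preserves simple types and therefore preserves refinement well-formedness.
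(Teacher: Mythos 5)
Your proposal is correct and follows essentially the same route as the paper's proof: inversion on the call typing (\Cref{lem:inversion}), inversion of \rn{T-FunDef}, transport of the body derivation via \Cref{lem:substitution} and \Cref{lem:ctxt-substitution} (the paper applies $\sigma_{x}$ before $\sigma_{\alpha}$, but as you observe the two commute), then \Cref{lem:tyenv-weaken} and \rn{T-Sub} with $\Gamma_{0} \leq \Gamma_{p}$ for the first conclusion, and \rn{TE-Stack} together with \rn{T-Sub} using $\Gamma'_{p} , \tau_{p} \leq \Gamma_{1} , \tau_{1}$ for the second. The only premise you leave implicit is \rn{TE-Stack}'s requirement $x \not\in \DOM(\Gamma_{1})$, which the paper discharges from $x \not\in \DOM(\Gamma'_{p})$ together with the fact that $\Gamma'_{p} \leq \Gamma_{1}$ implies $\DOM(\Gamma_{1}) \subseteq \DOM(\Gamma'_{p})$.
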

\begin{proof}
  From \Cref{lem:inversion} on
  $ \Theta   \mid   \oldvec{\ell}   \mid   \Gamma_{{\mathrm{0}}}   \vdash    \LET  \mathit{x}  =   \mathit{f} ^ \ell (  \mathit{y_{{\mathrm{1}}}} ,\ldots, \mathit{y_{\ottmv{n}}}  )   \IN  \ottnt{e'}   :  \tau_{{\mathrm{1}}}   \produces   \Gamma_{{\mathrm{1}}} $ we have, for some
  $\Gamma_{\ottmv{p}}, \tau_{\ottmv{p}}, \Gamma'_{\ottmv{p}}$, that:
  \begin{align}
    & \Gamma_{{\mathrm{0}}}  \leq  \Gamma_{\ottmv{p}} \label{eqn:in-sub} \\
    & \Gamma'_{\ottmv{p}}  \ottsym{,}  \tau_{\ottmv{p}}  \leq  \Gamma_{{\mathrm{1}}}  \ottsym{,}  \tau_{{\mathrm{1}}} \label{eqn:out-sub} \\
    & \Gamma_{\ottmv{p}}  \ottsym{(}  \mathit{y_{\ottmv{i}}}  \ottsym{)}  \ottsym{=}  \sigma_{\alpha} \, \sigma_{x} \, \tau_{\ottmv{i}} \label{eqn:arg-typed} \\
    &  \oldvec{\ell}   \vdash _{\wf}  \Gamma_{\ottmv{p}}  \label{eqn:sub-in-env-wf} \\
    &  \Theta   \mid   \oldvec{\ell}   \mid   \Gamma_{\ottmv{p}}  \ottsym{[}  \mathit{y_{\ottmv{i}}}  \hookleftarrow  \sigma_{\alpha} \, \sigma_{x} \, \tau'_{\ottmv{i}}  \ottsym{]}  \ottsym{,}  \mathit{x}  \ottsym{:}  \sigma_{\alpha} \, \sigma_{x} \, \tau_{\ottmv{q}}   \vdash   \ottnt{e'}  :  \tau_{\ottmv{p}}   \produces   \Gamma'_{\ottmv{p}}  \label{eqn:let-body-well-typed} \\
    &  \mathit{x}  \not\in   \DOM( \Gamma'_{\ottmv{p}} )   \label{eqn:cbind-erased}
  \end{align}
  
  To prove the first part, from the well-typing of the function body, we have
  $ \Theta   \mid   \lambda   \mid    \mathit{x_{{\mathrm{1}}}} \COL \tau_{{\mathrm{1}}} ,\ldots, \mathit{x_{\ottmv{n}}} \COL \tau_{\ottmv{n}}    \vdash   \ottnt{e}  :  \tau_{\ottmv{q}}   \produces    \mathit{x_{{\mathrm{1}}}} \COL \tau'_{{\mathrm{1}}} ,\ldots, \mathit{x_{\ottmv{n}}} \COL \tau'_{\ottmv{n}}  $.
  From our assumption that all variable names are distinct,
  by $n$ applications of the substitution lemma (\Cref{lem:substitution}) we have:
  $ \Theta   \mid   \lambda   \mid    \mathit{y_{{\mathrm{1}}}} \COL \sigma_{x} \, \tau_{{\mathrm{1}}} ,\ldots, \mathit{y_{\ottmv{n}}} \COL \sigma_{x} \, \tau_{\ottmv{n}}    \vdash    \sigma_{x}   \ottnt{e}   :  \sigma_{x} \, \tau_{\ottmv{q}}   \produces    \mathit{y_{{\mathrm{1}}}} \COL \sigma_{x} \, \tau'_{{\mathrm{1}}} ,\ldots, \mathit{y_{\ottmv{n}}} \COL \sigma_{x} \, \tau'_{\ottmv{n}}  $.
  By \subref{lem:ctxt-substitution}{itm:ctxt-subst-well-typed} we then have
  $ \Theta   \mid   \ell  \ottsym{:}  \oldvec{\ell}   \mid    \mathit{y_{{\mathrm{1}}}} \COL \sigma_{\alpha} \, \sigma_{x} \, \tau_{{\mathrm{1}}} ,\ldots, \mathit{y_{\ottmv{n}}} \COL \sigma_{\alpha} \, \sigma_{x} \, \tau_{\ottmv{n}}    \vdash    \sigma_{x}   \ottnt{e}   :  \sigma_{\alpha} \, \sigma_{x} \, \tau_{\ottmv{q}}   \produces    \mathit{y_{{\mathrm{1}}}} \COL \sigma_{\alpha} \, \sigma_{x} \, \tau'_{{\mathrm{1}}} ,\ldots, \mathit{y_{\ottmv{n}}} \COL \sigma_{\alpha} \, \sigma_{x} \, \tau'_{\ottmv{n}}  $.
  We take $\tau_{{\mathrm{3}}}  \ottsym{=}  \sigma_{\alpha} \, \sigma_{x} \, \tau_{\ottmv{q}}$ and $\Gamma_{{\mathrm{3}}}  \ottsym{=}  \Gamma_{\ottmv{p}}  \ottsym{[}  \mathit{y_{\ottmv{i}}}  \hookleftarrow  \sigma_{\alpha} \, \sigma_{x} \, \tau'_{\ottmv{i}}  \ottsym{]}$.
  
  By the well-formedness of function types and well-formedness of $\Gamma_{\ottmv{p}}$,
  we must have that $ \ell  \ottsym{:}  \oldvec{\ell}   \vdash _{\wf}  \Gamma_{{\mathrm{3}}} $.
  Then by \Cref{eqn:arg-typed,eqn:sub-in-env-wf,lem:tyenv-weaken} we have
  $ \Theta   \mid   \ell  \ottsym{:}  \oldvec{\ell}   \mid   \Gamma_{\ottmv{p}}   \vdash    \sigma_{x}   \ottnt{e}   :  \tau_{{\mathrm{3}}}   \produces   \Gamma_{{\mathrm{3}}} $, whereby \Cref{eqn:in-sub}
  and an application of \rn{T-Sub}
  gives $ \Theta   \mid   \ell  \ottsym{:}  \oldvec{\ell}   \mid   \Gamma_{{\mathrm{0}}}   \vdash    \sigma_{x}   \ottnt{e}   :  \tau_{{\mathrm{3}}}   \produces   \Gamma_{{\mathrm{3}}} $, i.e., the first result.

  To prove the second part, from the typing rule for \rn{TE-Stack} we must show:
  \begin{align}
    & \Theta  \mid  \HOLE  \ottsym{:}  \tau_{{\mathrm{1}}}  \produces  \Gamma_{{\mathrm{1}}}  \mid  \oldvec{\ell}  \vdash_{\mathit{ectx} }  \ottnt{E}  \ottsym{:}  \tau_{{\mathrm{2}}}  \produces  \Gamma_{{\mathrm{2}}} \label{eqn:context-well-typed} \\
    &  \mathit{x}  \not\in   \DOM( \Gamma_{{\mathrm{1}}} )   \label{eqn:x-not-free} \\
    &  \Theta   \mid   \oldvec{\ell}   \mid   \Gamma_{{\mathrm{3}}}  \ottsym{,}  \mathit{x}  \ottsym{:}  \tau_{{\mathrm{3}}}   \vdash   \ottnt{e'}  :  \tau_{{\mathrm{1}}}   \produces   \Gamma_{{\mathrm{1}}}  \label{eqn:let-body-sub-typed}
  \end{align}
  \Cref{eqn:context-well-typed} holds by assumption, and \Cref{eqn:x-not-free} follows from
  \Cref{eqn:cbind-erased} and that $\Gamma'_{\ottmv{p}}  \leq  \Gamma_{{\mathrm{1}}}$ implies $  \DOM( \Gamma_{{\mathrm{1}}} )   \subseteq   \DOM( \Gamma'_{\ottmv{p}} )  $.
  $ \oldvec{\ell}   \vdash _{\wf}  \tau_{{\mathrm{1}}}   \produces   \Gamma_{{\mathrm{1}}} $ follows from the well-typing of the function call term,
  and $ \oldvec{\ell}   \vdash _{\wf}  \Gamma_{{\mathrm{3}}}  \ottsym{,}  \mathit{x}  \ottsym{:}  \tau_{{\mathrm{3}}} $ follows from \Cref{eqn:let-body-well-typed}.

  From \Cref{eqn:out-sub,eqn:let-body-well-typed} we then have
  \Cref{eqn:let-body-sub-typed} via an application of \rn{T-Sub}.
\end{proof}

\begin{proof}[Preservation; \Cref{lem:preservation}]
  The proof is organized by cases analysis on the transition rule used of $\ottnt{e}$, and showing that the output configuration is well typed by
  $ \vdash_{\mathit{conf} }^D $. We must therefore find a $\Gamma''$ that is consistent with $\ottnt{H'}$ and $\ottnt{R'}$ and also satisfies the other conditions imposed by
  the definition of $ \vdash_{\mathit{conf} }^D $. Here $\Gamma'',\ottnt{H'}, \ottnt{R'}$ represent the type environment, heap and register after the transition respectively.
  We identify the heap and register file before transition with $\ottnt{H}$ and $\ottnt{R}$ respectively.
  In order to show that the ownership invariant is preserved, we need to prove that $\forall \,  \ottmv{a}  \in \DOM( \ottnt{H} )   \ottsym{.}  \ottkw{Own} \, \ottsym{(}  \ottnt{H}  \ottsym{,}  \ottnt{R}  \ottsym{,}  \Gamma''  \ottsym{)}  \ottsym{(}  \ottmv{a}  \ottsym{)}  \le  \ottsym{1}$.
  In many cases, we will show that $\ottkw{Own} \, \ottsym{(}  \ottnt{H}  \ottsym{,}  \ottnt{R}  \ottsym{,}  \Gamma  \ottsym{)}  \ottsym{=}  \ottkw{Own} \, \ottsym{(}  \ottnt{H'}  \ottsym{,}  \ottnt{R'}  \ottsym{,}  \Gamma''  \ottsym{)}$, whereby
  from the assumption that $\ottkw{Cons} \, \ottsym{(}  \ottnt{H}  \ottsym{,}  \ottnt{R}  \ottsym{,}  \Gamma  \ottsym{)}$ as implied by $ \vdash_{\mathit{conf} }^D $
  we have $\forall \,  \ottmv{a}  \in \DOM( \ottnt{H} )   \ottsym{.}  \ottkw{Own} \, \ottsym{(}  \ottnt{H}  \ottsym{,}  \ottnt{R}  \ottsym{,}  \Gamma  \ottsym{)}  \ottsym{(}  \ottmv{a}  \ottsym{)}  \le  \ottsym{1}$, giving the desired result.
  \begin{rneqncase}{R-Var}{
     \vdash_{\mathit{conf} }^D   \tuple{ \ottnt{H} ,  \ottnt{R} ,  F_{{\ottmv{n}-1}}  \ottsym{:}  \oldvec{F} ,  \mathit{x} }  ,   \tuple{ \ottnt{H} ,  \ottnt{R} ,  F_{{\ottmv{n}-1}}  \ottsym{:}  \oldvec{F} ,  \mathit{x} }     \longrightarrow _{ \ottnt{D} }     \tuple{ \ottnt{H} ,  \ottnt{R} ,  \oldvec{F} ,  F_{{\ottmv{n}-1}}  \ottsym{[}  \mathit{x}  \ottsym{]} }   \\
  }
  By inversion on configuration typing $ \vdash_{\mathit{conf} }^D   \tuple{ \ottnt{H} ,  \ottnt{R} ,  F_{{\ottmv{n}-1}}  \ottsym{:}  \oldvec{F} ,  \mathit{x} }  $, we have:
    \begin{align*}
      &  \Theta   \mid   \oldvec{\ell}   \mid   \Gamma   \vdash   \mathit{x}  :  \tau_{\ottmv{n}}   \produces   \Gamma_{\ottmv{n}}  \\
      & \forall i\in\set{1..n}.\Theta  \mid  \HOLE  \ottsym{:}  \tau_{\ottmv{i}}  \produces  \Gamma_{\ottmv{i}}  \mid  \oldvec{\ell}_{{\ottmv{i}-1}}  \vdash_{\mathit{ectx} }  F_{{\ottmv{i}-1}}  \ottsym{:}  \tau_{{\ottmv{i}-1}}  \produces  \Gamma_{{\ottmv{i}-1}}
    \end{align*}
    
    Using \Cref{lem:stack-well-typed}, we can conclude that $ \Theta   \mid   \oldvec{\ell}_{{\ottmv{n}-1}}   \mid   \Gamma   \vdash   F_{{\ottmv{n}-1}}  \ottsym{[}  \mathit{x}  \ottsym{]}  :  \tau_{{\ottmv{n}-1}}   \produces   \Gamma_{{\ottmv{n}-1}} $. We therefore take $\Gamma''  \ottsym{=}  \Gamma$.

    It remains to show that $\ottkw{Cons} \, \ottsym{(}  \ottnt{H}  \ottsym{,}  \ottnt{R}  \ottsym{,}  \Gamma''  \ottsym{)}$ which follows immediately from
    $\ottkw{Cons} \, \ottsym{(}  \ottnt{H}  \ottsym{,}  \ottnt{R}  \ottsym{,}  \Gamma  \ottsym{)}$.
  \end{rneqncase}

  \begin{rneqncase}{R-Deref}{
       \vdash_{\mathit{conf} }^D   \tuple{ \ottnt{H} ,  \ottnt{R} ,  \oldvec{F} ,  \ottnt{E}  \ottsym{[}   \LET  \mathit{x}  =   *  \mathit{y}   \IN  \ottnt{e}   \ottsym{]} }   \\
        \tuple{ \ottnt{H} ,  \ottnt{R} ,  \oldvec{F} ,  \ottnt{E}  \ottsym{[}   \LET  \mathit{x}  =   *  \mathit{y}   \IN  \ottnt{e}   \ottsym{]} }     \longrightarrow _{ \ottnt{D} }     \tuple{ \ottnt{H} ,  \ottnt{R}  \ottsym{\{}  \mathit{x'}  \mapsto  \ottnt{v}  \ottsym{\}} ,  \oldvec{F} ,  \ottnt{E}  \ottsym{[}    [  \mathit{x'}  /  \mathit{x}  ]    \ottnt{e}   \ottsym{]} }   \\
      \ottnt{H}  \ottsym{(}  \ottmv{a}  \ottsym{)} \, \ottsym{=} \, \ottnt{v} \andalso \ottnt{R}  \ottsym{(}  \mathit{y}  \ottsym{)} \, \ottsym{=} \, \ottmv{a} \andalso \ottnt{R'}  \ottsym{=}  \ottnt{R}  \ottsym{\{}  \mathit{x'}  \mapsto  \ottnt{v}  \ottsym{\}}\\
    }
    By inversion on the configuration typing relationship, we have that:
    \begin{align*}
       \Theta   \mid   \oldvec{\ell}   \mid   \Gamma_{{\mathrm{0}}}   \vdash   \ottnt{E}  \ottsym{[}   \LET  \mathit{x}  =   *  \mathit{y}   \IN  \ottnt{e}   \ottsym{]}  :  \tau_{\ottmv{n}}   \produces   \Gamma_{\ottmv{n}}  && \ottkw{Cons} \, \ottsym{(}  \ottnt{H}  \ottsym{,}  \ottnt{R}  \ottsym{,}  \Gamma_{{\mathrm{0}}}  \ottsym{)}
    \end{align*}
    By \Cref{lem:stack_var}, we have some $\tau, \Gamma'_{{\mathrm{0}}}$ such that:
    \begin{align*}
      \Theta  \mid  \HOLE  \ottsym{:}  \tau  \produces  \Gamma'_{{\mathrm{0}}}  \mid  \oldvec{\ell}  \vdash_{\mathit{ectx} }  \ottnt{E}  \ottsym{:}  \tau_{\ottmv{n}}  \produces  \Gamma_{\ottmv{n}} &&  \Theta   \mid   \oldvec{\ell}   \mid   \Gamma_{{\mathrm{0}}}   \vdash    \LET  \mathit{x}  =   *  \mathit{y}   \IN  \ottnt{e}   :  \tau   \produces   \Gamma'_{{\mathrm{0}}} 
    \end{align*}
    Using \Cref{lem:inversion}, we have some $\Gamma_{\ottmv{p}}$, $\Gamma'_{\ottmv{p}}$ and $\tau_{\ottmv{p}}$ such
    that:
    
    \begin{gather*}
      \Gamma_{{\mathrm{0}}}  \leq  \Gamma_{\ottmv{p}} \quad\quad  \oldvec{\ell}   \vdash _{\wf}  \Gamma_{\ottmv{p}}  \quad\quad \Gamma'_{\ottmv{p}}  \ottsym{,}  \tau_{\ottmv{p}}  \leq  \Gamma'_{{\mathrm{0}}}  \ottsym{,}  \tau \\
      \Gamma_{\ottmv{p}}  \ottsym{(}  \mathit{y}  \ottsym{)}  \ottsym{=}   \ottsym{(}  \tau_{{\mathrm{1}}}  \ottsym{+}  \tau_{{\mathrm{2}}}  \ottsym{)}  \TREF^{ r }  \quad\quad  \mathit{x}  \not\in \DOM( \Gamma'_{\ottmv{p}} )  \\
      \tau'' = \begin{cases}
        \ottsym{(}   \tau_{{\mathrm{1}}}  \wedge_{ \mathit{y} }   \mathit{y}  =_{ \tau_{{\mathrm{1}}} }  \mathit{x}    \ottsym{)} &  r   \ottsym{>}   \ottsym{0}  \\
        \tau_{{\mathrm{1}}} & r  \ottsym{=}  \ottsym{0}
      \end{cases} \\
       \Theta   \mid   \oldvec{\ell}   \mid   \Gamma_{\ottmv{p}}  \ottsym{[}  \mathit{y}  \hookleftarrow   \tau''  \TREF^{ r }   \ottsym{]}  \ottsym{,}  \mathit{x}  \ottsym{:}  \tau_{{\mathrm{2}}}   \vdash   \ottnt{e}  :  \tau_{\ottmv{p}}   \produces   \Gamma'_{\ottmv{p}} 
    \end{gather*}
    
    From \Cref{lem:subtyp-preserves-cons}, we then have $\ottkw{Cons} \, \ottsym{(}  \ottnt{H}  \ottsym{,}  \ottnt{R}  \ottsym{,}  \Gamma_{\ottmv{p}}  \ottsym{)}$.
    We will now show that:
    \begin{align}
      \label{eqn:new-tyenv-cons} & \ottkw{Cons} \, \ottsym{(}  \ottnt{H}  \ottsym{,}  \ottnt{R}  \ottsym{\{}  \mathit{x'}  \mapsto  \ottnt{v}  \ottsym{\}}  \ottsym{,}  \Gamma''  \ottsym{)} \\
                                 &  \Theta   \mid   \oldvec{\ell}   \mid   \Gamma''   \vdash     [  \mathit{x'}  /  \mathit{x}  ]    \ottnt{e}   :   [  \mathit{x'}  /  \mathit{x}  ]  \, \tau_{\ottmv{p}}   \produces    [  \mathit{x'}  /  \mathit{x}  ]  \, \Gamma'_{\ottmv{p}} 
    \end{align}
    where $\Gamma''  \ottsym{=}  \Gamma_{\ottmv{p}}  \ottsym{[}  \mathit{y}  \hookleftarrow   \ottsym{(}   [  \mathit{x'}  /  \mathit{x}  ]  \, \tau''  \ottsym{)}  \TREF^{ r }   \ottsym{]}  \ottsym{,}  \mathit{x'}  \ottsym{:}  \tau_{{\mathrm{2}}}$.

    Together these give our desired result. To see how,
    from $ \Theta   \mid   \oldvec{\ell}   \mid   \Gamma_{\ottmv{p}}  \ottsym{[}  \mathit{y}  \hookleftarrow   \tau''  \TREF^{ r }   \ottsym{]}  \ottsym{,}  \mathit{x}  \ottsym{:}  \tau_{{\mathrm{2}}}   \vdash   \ottnt{e}  :  \tau_{\ottmv{p}}   \produces   \Gamma'_{\ottmv{p}} $ above,
    we must have that $ \oldvec{\ell}   \vdash _{\wf}  \tau_{\ottmv{p}}   \produces   \Gamma'_{\ottmv{p}} $.
    From $ \mathit{x}  \not\in \DOM( \Gamma'_{\ottmv{p}} ) $ we must therefore have that
    $ [  \mathit{x'}  /  \mathit{x}  ]  \, \tau_{\ottmv{p}}  \ottsym{=}  \tau_{\ottmv{p}}$ and $ [  \mathit{x'}  /  \mathit{x}  ]  \, \Gamma'_{\ottmv{p}}  \ottsym{=}  \Gamma'_{\ottmv{p}}$. As $\Gamma'_{\ottmv{p}}  \ottsym{,}  \tau_{\ottmv{p}}  \leq  \Gamma'_{{\mathrm{0}}}  \ottsym{,}  \tau$
    an application of \rn{T-Sub} gives
    $ \Theta   \mid   \oldvec{\ell}   \mid   \Gamma''   \vdash     [  \mathit{x'}  /  \mathit{x}  ]    \ottnt{e}   :  \tau   \produces   \Gamma'_{{\mathrm{0}}} $. Then \Cref{lem:ectxt-sub-well-typed}
    will give that $ \Theta   \mid   \oldvec{\ell}   \mid   \Gamma''   \vdash   \ottnt{E}  \ottsym{[}    [  \mathit{x'}  /  \mathit{x}  ]    \ottnt{e}   \ottsym{]}  :  \tau_{\ottmv{n}}   \produces   \Gamma_{\ottmv{n}} $.
    
    As $\ottnt{E}$ and the stack $\oldvec{F}$ remained unchanged, combined with
    \Cref{eqn:new-tyenv-cons} this gives
    $ \vdash_{\mathit{conf} }^D   \tuple{ \ottnt{H} ,  \ottnt{R}  \ottsym{\{}  \mathit{x'}  \mapsto  \ottnt{v}  \ottsym{\}} ,  \oldvec{F} ,  \ottnt{E}  \ottsym{[}    [  \mathit{x'}  /  \mathit{x}  ]    \ottnt{e}   \ottsym{]} }  $ as required.
    As the above argument is used almost completely unchanged
    in all of the following cases, we will invert the redex without regard
    for the \rn{T-Sub} rule, with the understanding that the subtyping rule
    is handled with an argument identical to the above.
    
    We now show that $ \Theta   \mid   \oldvec{\ell}   \mid   \Gamma''   \vdash     [  \mathit{x'}  /  \mathit{x}  ]    \ottnt{e}   :   [  \mathit{x}  /  \mathit{x'}  ]  \, \tau_{\ottmv{p}}   \produces    [  \mathit{x'}  /  \mathit{x}  ]  \, \Gamma' $ and
    $\ottkw{Cons} \, \ottsym{(}  \ottnt{H}  \ottsym{,}  \ottnt{R}  \ottsym{\{}  \mathit{x'}  \mapsto  \ottnt{v}  \ottsym{\}}  \ottsym{,}  \Gamma''  \ottsym{)}$.
    The first is easy to obtain using \Cref{lem:substitution}; from $\ottkw{Cons} \, \ottsym{(}  \ottnt{H}  \ottsym{,}  \ottnt{R}  \ottsym{,}  \Gamma  \ottsym{)}$,
    we have that $\forall \,  \mathit{x}  \in \DOM( \Gamma )   \ottsym{.}   \mathit{x}  \in   \DOM( \ottnt{R} )  $, whereby from $ \mathit{x}  \not\in   \DOM( \ottnt{R} )  $ we have
    $ \mathit{x'}  \not\in   \DOM( \Gamma )  $.
    It therefore remains to show $\ottkw{Cons} \, \ottsym{(}  \ottnt{H}  \ottsym{,}  \ottnt{R}  \ottsym{\{}  \mathit{x'}  \mapsto  \ottnt{v}  \ottsym{\}}  \ottsym{,}  \Gamma''  \ottsym{)}$.

    To show $\ottkw{SAT} \, \ottsym{(}  \ottnt{H}  \ottsym{,}  \ottnt{R}  \ottsym{,}  \Gamma''  \ottsym{)}$, it suffices to
    show that $ \ottkw{SATv} ( \ottnt{H} , \ottnt{R'} , \ottnt{R'}  \ottsym{(}  \mathit{x'}  \ottsym{)} , \tau_{{\mathrm{2}}} ) $
    and $ \ottkw{SATv} ( \ottnt{H} , \ottnt{R'} , \ottnt{H}  \ottsym{(}  \ottnt{R'}  \ottsym{(}  \mathit{y}  \ottsym{)}  \ottsym{)} , \tau'' ) $ (that $\ottkw{SATv}$ holds for all other variables
    $\mathit{z}$ follows from $\Gamma  \ottsym{(}  \mathit{z}  \ottsym{)}  \ottsym{=}  \Gamma''  \ottsym{(}  \mathit{z}  \ottsym{)}$ and \Cref{lem:r-valid-subst,lem:register}).
    If $\tau_{{\mathrm{1}}}$ is an integer type and $ r   \ottsym{>}   \ottsym{0} $, then
    by the definition of the strengthening operator, the latter is equivalent to show
    that $ \ottkw{SATv} ( \ottnt{H} , \ottnt{R'} , \ottnt{H}  \ottsym{(}  \ottnt{R'}  \ottsym{(}  \mathit{y}  \ottsym{)}  \ottsym{)} , \tau_{{\mathrm{1}}} ) $
    and that $\ottnt{R'}  \ottsym{(}  \mathit{x'}  \ottsym{)}  \ottsym{=}  \ottnt{H}  \ottsym{(}  \ottnt{R'}  \ottsym{(}  \mathit{y}  \ottsym{)}  \ottsym{)}  \ottsym{=}  \ottnt{H}  \ottsym{(}  \ottnt{R}  \ottsym{(}  \mathit{y}  \ottsym{)}  \ottsym{)}$, which is immediate from the definition of $\rn{R-Deref}$.
    If $\tau_{{\mathrm{1}}}$ is not an integer or if $r  \ottsym{=}  \ottsym{0}$,
    then we must only show that $ \ottkw{SATv} ( \ottnt{H} , \ottnt{R'} , \ottnt{H}  \ottsym{(}  \ottnt{R'}  \ottsym{(}  \mathit{y}  \ottsym{)}  \ottsym{)} , \tau_{{\mathrm{1}}} ) $.
    
    From $\ottkw{Cons} \, \ottsym{(}  \ottnt{H}  \ottsym{,}  \ottnt{R}  \ottsym{,}  \Gamma_{\ottmv{p}}  \ottsym{)}$, we know that $\ottkw{SAT} \, \ottsym{(}  \ottnt{H}  \ottsym{,}  \ottnt{R}  \ottsym{,}  \Gamma_{\ottmv{p}}  \ottsym{)}$, in particular, $ \ottkw{SATv} ( \ottnt{H} , \ottnt{R} , \ottnt{R}  \ottsym{(}  \mathit{y}  \ottsym{)} , \Gamma_{\ottmv{p}}  \ottsym{(}  \mathit{y}  \ottsym{)} )  $.
    Then by \Cref{lem:satadd,lem:r-valid-subst,lem:register}, from $ \ottnt{R}  \sqsubseteq  \ottnt{R'} $, $ \oldvec{\ell}   \vdash _{\wf}  \Gamma_{\ottmv{p}} $, and $ \ottkw{SATv} ( \ottnt{H} , \ottnt{R} , \ottnt{v} , \tau_{{\mathrm{1}}}  \ottsym{+}  \tau_{{\mathrm{2}}} ) $ we obtain
    that $ \ottkw{SATv} ( \ottnt{H} , \ottnt{R'} , \ottnt{v} , \tau_{{\mathrm{1}}} ) $ and $ \ottkw{SATv} ( \ottnt{H} , \ottnt{R'} , \ottnt{v} , \tau_{{\mathrm{2}}} ) $, where $\ottnt{v} \, \ottsym{=} \, \ottnt{H}  \ottsym{(}  \ottnt{R}  \ottsym{(}  \mathit{y}  \ottsym{)}  \ottsym{)}$.
    We thus have that $ \ottkw{SATv} ( \ottnt{H} , \ottnt{R'} , \ottnt{R'}  \ottsym{(}  \mathit{x'}  \ottsym{)} , \tau_{{\mathrm{2}}} ) $ and $ \ottkw{SATv} ( \ottnt{H} , \ottnt{R'} , \ottnt{H}  \ottsym{(}  \ottnt{R'}  \ottsym{(}  \mathit{y}  \ottsym{)}  \ottsym{)} , \tau_{{\mathrm{1}}} ) $ are satisfied.
    
    We must also show that the ownership invariant is preserved.
    Then, it's to show $\forall \,  \ottmv{a}  \in \DOM( \ottnt{H} )   \ottsym{.}  \ottkw{Own} \, \ottsym{(}  \ottnt{H}  \ottsym{,}  \ottnt{R'}  \ottsym{,}  \Gamma''  \ottsym{)}  \ottsym{(}  \ottmv{a}  \ottsym{)}  \le  \ottsym{1}$. Define $\ottnt{O'_{{\mathrm{0}}}}$ and $\ottnt{O'_{{\mathrm{1}}}}$ as follows:
    \begin{align*}
      \ottkw{Own} \, \ottsym{(}  \ottnt{H}  \ottsym{,}  \ottnt{R}  \ottsym{,}  \Gamma_{\ottmv{p}}  \ottsym{)} & = \ottnt{O'_{{\mathrm{0}}}}  \ottsym{+}  \ottkw{own} \, \ottsym{(}  \ottnt{H}  \ottsym{,}  \ottnt{R}  \ottsym{(}  \mathit{y}  \ottsym{)}  \ottsym{,}  \Gamma_{\ottmv{p}}  \ottsym{(}  \mathit{y}  \ottsym{)}  \ottsym{)} \\
      \ottkw{Own} \, \ottsym{(}  \ottnt{H}  \ottsym{,}  \ottnt{R'}  \ottsym{,}  \Gamma''  \ottsym{)} & =\ottnt{O'_{{\mathrm{1}}}}  \ottsym{+}  \ottkw{own} \, \ottsym{(}  \ottnt{H}  \ottsym{,}  \ottnt{R'}  \ottsym{(}  \mathit{y}  \ottsym{)}  \ottsym{,}  \Gamma''  \ottsym{(}  \mathit{y}  \ottsym{)}  \ottsym{)}  \ottsym{+}  \ottkw{own} \, \ottsym{(}  \ottnt{H}  \ottsym{,}  \ottnt{R'}  \ottsym{(}  \mathit{x'}  \ottsym{)}  \ottsym{,}  \Gamma''  \ottsym{(}  \mathit{x'}  \ottsym{)}  \ottsym{)} \\
      \ottnt{O'_{{\mathrm{0}}}} &  =  \Sigma _{ \mathit{z} \in   \DOM( \Gamma )   \setminus   \set{ \mathit{y} }   }\, \ottkw{own} \, \ottsym{(}  \ottnt{H}  \ottsym{,}  \ottnt{R}  \ottsym{(}  \mathit{z}  \ottsym{)}  \ottsym{,}  \Gamma_{\ottmv{p}}  \ottsym{(}  \mathit{z}  \ottsym{)}  \ottsym{)}  \\
      \ottnt{O'_{{\mathrm{1}}}} & =  \Sigma _{ \mathit{z} \in   \DOM( \Gamma'' )   \setminus  \ottsym{\{}  \mathit{y}  \ottsym{,}  \mathit{x'}  \ottsym{\}}  }\, \ottkw{own} \, \ottsym{(}  \ottnt{H}  \ottsym{,}  \ottnt{R'}  \ottsym{(}  \mathit{z'}  \ottsym{)}  \ottsym{,}  \Gamma''  \ottsym{(}  \mathit{z'}  \ottsym{)}  \ottsym{)} 
    \end{align*}
    By \Cref{lem:heapop}, $\ottnt{O'_{{\mathrm{0}}}}  \ottsym{=}  \ottnt{O'_{{\mathrm{1}}}}$ holds. Then, it suffices to show that
    $\ottkw{own} \, \ottsym{(}  \ottnt{H}  \ottsym{,}  \ottnt{R}  \ottsym{(}  \mathit{y}  \ottsym{)}  \ottsym{,}  \Gamma_{\ottmv{p}}  \ottsym{(}  \mathit{y}  \ottsym{)}  \ottsym{)}  \ottsym{=}  \ottkw{own} \, \ottsym{(}  \ottnt{H}  \ottsym{,}  \ottnt{R'}  \ottsym{(}  \mathit{y}  \ottsym{)}  \ottsym{,}  \Gamma''  \ottsym{(}  \mathit{y}  \ottsym{)}  \ottsym{)}  \ottsym{+}  \ottkw{own} \, \ottsym{(}  \ottnt{H}  \ottsym{,}  \ottnt{R'}  \ottsym{(}  \mathit{x'}  \ottsym{)}  \ottsym{,}  \Gamma''  \ottsym{(}  \mathit{x'}  \ottsym{)}  \ottsym{)}$.
    
    As $\ottnt{R'}  \ottsym{(}  \mathit{x'}  \ottsym{)}  \ottsym{=}  \ottnt{H}  \ottsym{(}  \ottnt{R'}  \ottsym{(}  \mathit{y}  \ottsym{)}  \ottsym{)}  \ottsym{=}  \ottnt{H}  \ottsym{(}  \ottnt{R}  \ottsym{(}  \mathit{y}  \ottsym{)}  \ottsym{)}$ and from the definition of $\Gamma''$, we have:
    \begin{align*}
       \ottkw{own} \, \ottsym{(}  \ottnt{H}  \ottsym{,}  \ottnt{R'}  \ottsym{(}  \mathit{x'}  \ottsym{)}  \ottsym{,}  \Gamma''  \ottsym{(}  \mathit{x'}  \ottsym{)}  \ottsym{)}  & =  \ottkw{own} \, \ottsym{(}  \ottnt{H}  \ottsym{,}  \ottnt{H}  \ottsym{(}  \ottnt{R}  \ottsym{(}  \mathit{y}  \ottsym{)}  \ottsym{)}  \ottsym{,}  \tau_{{\mathrm{2}}}  \ottsym{)}  \\
       \ottkw{own} \, \ottsym{(}  \ottnt{H}  \ottsym{,}  \ottnt{R'}  \ottsym{(}  \mathit{y}  \ottsym{)}  \ottsym{,}  \Gamma''  \ottsym{(}  \mathit{y}  \ottsym{)}  \ottsym{)}  & =  \ottsym{\{}  \ottmv{a}  \mapsto  r  \ottsym{\}}  \ottsym{+}  \ottkw{own} \, \ottsym{(}  \ottnt{H}  \ottsym{,}  \ottnt{H}  \ottsym{(}  \ottnt{R}  \ottsym{(}  \mathit{y}  \ottsym{)}  \ottsym{)}  \ottsym{,}  \tau_{{\mathrm{1}}}  \ottsym{)} 
    \end{align*}
    From the definition of the ownership function, we have that
    \[
      \ottkw{own} \, \ottsym{(}  \ottnt{H}  \ottsym{,}  \ottnt{R}  \ottsym{(}  \mathit{y}  \ottsym{)}  \ottsym{,}  \Gamma_{\ottmv{p}}  \ottsym{(}  \mathit{y}  \ottsym{)}  \ottsym{)}  \ottsym{=}  \ottsym{\{}  \ottmv{a}  \mapsto  r  \ottsym{\}}  \ottsym{+}  \ottkw{own} \, \ottsym{(}  \ottnt{H}  \ottsym{,}  \ottnt{H}  \ottsym{(}  \ottnt{R}  \ottsym{(}  \mathit{y}  \ottsym{)}  \ottsym{)}  \ottsym{,}  \tau_{{\mathrm{1}}}  \ottsym{+}  \tau_{{\mathrm{2}}}  \ottsym{)}
    \]
    which, by \Cref{lem:ownadd}, is equivalent to:
    \[
      \ottsym{\{}  \ottmv{a}  \mapsto  r  \ottsym{\}}  \ottsym{+}  \ottkw{own} \, \ottsym{(}  \ottnt{H}  \ottsym{,}  \ottnt{H}  \ottsym{(}  \ottnt{R}  \ottsym{(}  \mathit{y}  \ottsym{)}  \ottsym{)}  \ottsym{,}  \tau_{{\mathrm{1}}}  \ottsym{)}  \ottsym{+}  \ottkw{own} \, \ottsym{(}  \ottnt{H}  \ottsym{,}  \ottnt{H}  \ottsym{(}  \ottnt{R}  \ottsym{(}  \mathit{y}  \ottsym{)}  \ottsym{)}  \ottsym{,}  \tau_{{\mathrm{2}}}  \ottsym{)}
    \]
    We therefore have $\ottkw{own} \, \ottsym{(}  \ottnt{H}  \ottsym{,}  \ottnt{R}  \ottsym{(}  \mathit{y}  \ottsym{)}  \ottsym{,}  \Gamma  \ottsym{(}  \mathit{y}  \ottsym{)}  \ottsym{)}  \ottsym{=}  \ottkw{own} \, \ottsym{(}  \ottnt{H}  \ottsym{,}  \ottnt{R'}  \ottsym{(}  \mathit{y}  \ottsym{)}  \ottsym{,}  \Gamma''  \ottsym{(}  \mathit{y}  \ottsym{)}  \ottsym{)}  \ottsym{+}  \ottkw{own} \, \ottsym{(}  \ottnt{H}  \ottsym{,}  \ottnt{R'}  \ottsym{(}  \mathit{x'}  \ottsym{)}  \ottsym{,}  \Gamma''  \ottsym{(}  \mathit{x'}  \ottsym{)}  \ottsym{)}$, and conclude that $\ottkw{Own} \, \ottsym{(}  \ottnt{H}  \ottsym{,}  \ottnt{R}  \ottsym{,}  \Gamma  \ottsym{)}  \ottsym{=}  \ottkw{Own} \, \ottsym{(}  \ottnt{H}  \ottsym{,}  \ottnt{R'}  \ottsym{,}  \Gamma''  \ottsym{)}$.
  \end{rneqncase} %

  \begin{rneqncase}{R-Seq}{
       \vdash_{\mathit{conf} }^D   \tuple{ \ottnt{H} ,  \ottnt{R} ,  \oldvec{F} ,  \ottnt{E}  \ottsym{[}   \mathit{x}  \SEQ  \ottnt{e}   \ottsym{]} }  \\
        \tuple{ \ottnt{H} ,  \ottnt{R} ,  \oldvec{F} ,  \ottnt{E}  \ottsym{[}   \mathit{x}  \SEQ  \ottnt{e}   \ottsym{]} }     \longrightarrow _{ \ottnt{D} }     \tuple{ \ottnt{H} ,  \ottnt{R} ,  \oldvec{F} ,  \ottnt{E}  \ottsym{[}  \ottnt{e}  \ottsym{]} }  \\
    }
    By inversion (see \rn{R-Deref}) we have for some $\Gamma$ that:
    \begin{align*}
      &  \Theta   \mid   \oldvec{\ell}   \mid   \Gamma  \ottsym{[}  \mathit{x}  \ottsym{:}  \tau_{{\mathrm{0}}}  \ottsym{+}  \tau_{{\mathrm{1}}}  \ottsym{]}   \vdash   \mathit{x}  :  \tau_{{\mathrm{0}}}   \produces   \Gamma  \ottsym{[}  \mathit{x}  \hookleftarrow  \tau_{{\mathrm{1}}}  \ottsym{]}  \\
      &  \Theta   \mid   \oldvec{\ell}   \mid   \Gamma  \ottsym{[}  \mathit{x}  \hookleftarrow  \tau_{{\mathrm{1}}}  \ottsym{]}   \vdash   \ottnt{e}  :  \tau'   \produces   \Gamma'  \\
      & \ottkw{Cons} \, \ottsym{(}  \ottnt{H}  \ottsym{,}  \ottnt{R}  \ottsym{,}  \Gamma  \ottsym{)}
    \end{align*}
    We take $\Gamma''  \ottsym{=}  \Gamma  \ottsym{[}  \mathit{x}  \hookleftarrow  \tau_{{\mathrm{1}}}  \ottsym{]}$.
    
    It suffices to show (see \rn{R-Deref})
    that $ \Theta   \mid   \oldvec{\ell}   \mid   \Gamma''   \vdash   \ottnt{e}  :  \tau'   \produces   \Gamma' $,
    and $\ottkw{Cons} \, \ottsym{(}  \ottnt{H}  \ottsym{,}  \ottnt{R}  \ottsym{,}  \Gamma''  \ottsym{)}$.
    The first is immediate from the inversion above, and
    $\ottkw{Cons} \, \ottsym{(}  \ottnt{H}  \ottsym{,}  \ottnt{R}  \ottsym{,}  \Gamma''  \ottsym{)}$ follows from \Cref{lem:ownadd,lem:satadd}.
  \end{rneqncase}

  \begin{rneqncase}{R-Let}{
       \vdash_{\mathit{conf} }^D   \tuple{ \ottnt{H} ,  \ottnt{R} ,  \oldvec{F} ,  \ottnt{E}  \ottsym{[}   \LET  \mathit{x}  =  \mathit{y}  \IN  \ottnt{e}   \ottsym{]} }  \\
        \tuple{ \ottnt{H} ,  \ottnt{R} ,  \oldvec{F} ,  \ottnt{E}  \ottsym{[}   \LET  \mathit{x}  =  \mathit{y}  \IN  \ottnt{e}   \ottsym{]} }     \longrightarrow _{ \ottnt{D} }     \tuple{ \ottnt{H} ,  \ottnt{R}  \ottsym{\{}  \mathit{x'}  \mapsto  \ottnt{R}  \ottsym{(}  \mathit{y}  \ottsym{)}  \ottsym{\}} ,  \oldvec{F} ,  \ottnt{E}  \ottsym{[}    [  \mathit{x'}  /  \mathit{x}  ]    \ottnt{e}   \ottsym{]} }  \\
       \mathit{x'}  \not\in \DOM( \ottnt{R} )  \andalso \ottnt{R'}  \ottsym{=}  \ottnt{R}  \ottsym{\{}  \mathit{x'}  \mapsto  \ottnt{R}  \ottsym{(}  \mathit{y}  \ottsym{)}  \ottsym{\}}
    }
    By inversion (see \rn{R-Deref}) we have that for some $\Gamma$ that:
    \begin{align*}
      &  \oldvec{\ell}   \vdash _{\wf}  \Gamma \\
      & \Gamma  \ottsym{(}  \mathit{y}  \ottsym{)}  \ottsym{=}  \tau_{{\mathrm{1}}}  \ottsym{+}  \tau_{{\mathrm{2}}} \\
      &  \Theta   \mid   \oldvec{\ell}   \mid   \Gamma  \ottsym{[}  \mathit{y}  \hookleftarrow   \tau_{{\mathrm{1}}}  \wedge_{ \mathit{y} }   \mathit{y}  =_{ \tau_{{\mathrm{1}}} }  \mathit{x}    \ottsym{]}  \ottsym{,}  \mathit{x}  \ottsym{:}  \ottsym{(}   \tau_{{\mathrm{2}}}  \wedge_{ \mathit{x} }   \mathit{x}  =_{ \tau_{{\mathrm{2}}} }  \mathit{y}    \ottsym{)}   \vdash   \ottnt{e}  :  \tau   \produces   \Gamma'  \\
      & \ottkw{Cons} \, \ottsym{(}  \ottnt{H}  \ottsym{,}  \ottnt{R}  \ottsym{,}  \Gamma  \ottsym{)} \andalso  \mathit{x}  \not\in \DOM( \Gamma' ) 
    \end{align*}
    We give $\Gamma''  \ottsym{=}  \Gamma  \ottsym{[}  \mathit{y}  \hookleftarrow   \tau_{{\mathrm{1}}}  \wedge_{ \mathit{y} }   \mathit{y}  =_{ \tau_{{\mathrm{1}}} }  \mathit{x'}    \ottsym{]}  \ottsym{,}  \mathit{x'}  \ottsym{:}  \ottsym{(}   \tau_{{\mathrm{2}}}  \wedge_{ \mathit{x'} }   \mathit{x'}  =_{ \tau_{{\mathrm{2}}} }  \mathit{y}    \ottsym{)}$.
    
    It suffices to show (see \rn{R-Deref})
    that $ \Theta   \mid   \oldvec{\ell}   \mid   \Gamma''   \vdash     [  \mathit{x'}  /  \mathit{x}  ]    \ottnt{e}   :  \tau   \produces   \Gamma' $
    and $\ottkw{Cons} \, \ottsym{(}  \ottnt{H}  \ottsym{,}  \ottnt{R}  \ottsym{\{}  \mathit{x'}  \mapsto  \ottnt{R}  \ottsym{(}  \mathit{y}  \ottsym{)}  \ottsym{\}}  \ottsym{,}  \Gamma''  \ottsym{)}$.
    The first is easy to obtain using reasoning as in the \rn{R-Deref} case.
    It therefore remains to show $\ottkw{Cons} \, \ottsym{(}  \ottnt{H}  \ottsym{,}  \ottnt{R'}  \ottsym{,}  \Gamma''  \ottsym{)}$.
    
    To show that the output environment is consistent, we must show that $ \ottkw{SATv} ( \ottnt{H} , \ottnt{R'} , \ottnt{R'}  \ottsym{(}  \mathit{x'}  \ottsym{)} ,  \tau_{{\mathrm{2}}}  \wedge_{ \mathit{x'} }  \mathit{x'} \, \ottsym{=} \, \mathit{y}  ) $ and $ \ottkw{SATv} ( \ottnt{H} , \ottnt{R'} , \ottnt{R'}  \ottsym{(}  \mathit{y}  \ottsym{)} ,  \tau_{{\mathrm{1}}}  \wedge_{ \mathit{y} }  \mathit{y} \, \ottsym{=} \, \mathit{x'}  ) $.
    By reasoning similar to that in $\rn{R-Deref}$, it suffices to show that $ \ottkw{SATv} ( \ottnt{H} , \ottnt{R'} , \ottnt{R'}  \ottsym{(}  \mathit{x'}  \ottsym{)} , \tau_{{\mathrm{2}}} ) $ and $ \ottkw{SATv} ( \ottnt{H} , \ottnt{R'} , \ottnt{R'}  \ottsym{(}  \mathit{y}  \ottsym{)} , \tau_{{\mathrm{1}}} ) $.
    We know that $\ottkw{Cons} \, \ottsym{(}  \ottnt{H}  \ottsym{,}  \ottnt{R}  \ottsym{,}  \Gamma  \ottsym{)}$, from which me have $\ottkw{SAT} \, \ottsym{(}  \ottnt{H}  \ottsym{,}  \ottnt{R}  \ottsym{,}  \Gamma  \ottsym{)}$, in particular $ \mathit{y}  \in \DOM( \ottnt{R} ) $ and $ \ottkw{SATv} ( \ottnt{H} , \ottnt{R} , \ottnt{R}  \ottsym{(}  \mathit{y}  \ottsym{)} , \Gamma  \ottsym{(}  \mathit{y}  \ottsym{)} ) $.
    As $ \ottnt{R}  \sqsubseteq  \ottnt{R'} $ and $ \oldvec{\ell}   \vdash _{\wf}  \Gamma $, from \Cref{lem:satadd,lem:r-valid-subst,lem:register}, we obtain from $ \ottkw{SATv} ( \ottnt{H} , \ottnt{R} , \ottnt{v} , \tau_{{\mathrm{1}}}  \ottsym{+}  \tau_{{\mathrm{2}}} ) $ that
    $ \ottkw{SATv} ( \ottnt{H} , \ottnt{R'} , \ottnt{v} , \tau_{{\mathrm{1}}} ) $ and $ \ottkw{SATv} ( \ottnt{H} , \ottnt{R'} , \ottnt{v} , \tau_{{\mathrm{2}}} ) $ where $v = R(y)$.
    We then have $ \ottkw{SATv} ( \ottnt{H} , \ottnt{R'} , \ottnt{R'}  \ottsym{(}  \mathit{x'}  \ottsym{)} , \tau_{{\mathrm{2}}} ) $ and $ \ottkw{SATv} ( \ottnt{H} , \ottnt{R'} , \ottnt{R'}  \ottsym{(}  \mathit{y}  \ottsym{)} , \tau_{{\mathrm{1}}} ) $ are satisfied.
    
    We must also show that the ownership invariant is preserved.
    Then, it's to show $\forall \,  \ottmv{a}  \in \DOM( \ottnt{H} )   \ottsym{.}  \ottkw{Own} \, \ottsym{(}  \ottnt{H}  \ottsym{,}  \ottnt{R'}  \ottsym{,}  \Gamma''  \ottsym{)}  \ottsym{(}  \ottmv{a}  \ottsym{)}  \le  \ottsym{1}$. Define $\ottnt{O'_{{\mathrm{0}}}}$ and $\ottnt{O'_{{\mathrm{1}}}}$ as follows:
    \begin{align*}
    \ottkw{Own} \, \ottsym{(}  \ottnt{H}  \ottsym{,}  \ottnt{R}  \ottsym{,}  \Gamma  \ottsym{)} & = \ottnt{O'_{{\mathrm{0}}}}  \ottsym{+}  \ottkw{own} \, \ottsym{(}  \ottnt{H}  \ottsym{,}  \ottnt{R}  \ottsym{(}  \mathit{y}  \ottsym{)}  \ottsym{,}  \Gamma  \ottsym{(}  \mathit{y}  \ottsym{)}  \ottsym{)} \\ 
    \ottkw{Own} \, \ottsym{(}  \ottnt{H}  \ottsym{,}  \ottnt{R'}  \ottsym{,}  \Gamma''  \ottsym{)} & = \ottnt{O'_{{\mathrm{1}}}}  \ottsym{+}  \ottkw{own} \, \ottsym{(}  \ottnt{H}  \ottsym{,}  \ottnt{R'}  \ottsym{(}  \mathit{y}  \ottsym{)}  \ottsym{,}  \Gamma''  \ottsym{(}  \mathit{y}  \ottsym{)}  \ottsym{)}  \ottsym{+}  \ottkw{own} \, \ottsym{(}  \ottnt{H}  \ottsym{,}  \ottnt{R'}  \ottsym{(}  \mathit{x'}  \ottsym{)}  \ottsym{,}  \Gamma''  \ottsym{(}  \mathit{x'}  \ottsym{)}  \ottsym{)} \\
    \ottnt{O'_{{\mathrm{0}}}} & =  \Sigma _{ \mathit{z} \in   \DOM( \Gamma )   \setminus   \set{ \mathit{y} }   }\, \ottkw{own} \, \ottsym{(}  \ottnt{H}  \ottsym{,}  \ottnt{R}  \ottsym{(}  \mathit{z}  \ottsym{)}  \ottsym{,}  \Gamma  \ottsym{(}  \mathit{z}  \ottsym{)}  \ottsym{)}  \\
    \ottnt{O'_{{\mathrm{1}}}} & =  \Sigma _{ \mathit{z} \in   \DOM( \Gamma'' )   \setminus  \ottsym{\{}  \mathit{y}  \ottsym{,}  \mathit{x'}  \ottsym{\}}  }\, \ottkw{own} \, \ottsym{(}  \ottnt{H}  \ottsym{,}  \ottnt{R'}  \ottsym{(}  \mathit{z'}  \ottsym{)}  \ottsym{,}  \Gamma''  \ottsym{(}  \mathit{z'}  \ottsym{)}  \ottsym{)} 
    \end{align*}
    By \Cref{lem:heapop}, $\ottnt{O'_{{\mathrm{0}}}}  \ottsym{=}  \ottnt{O'_{{\mathrm{1}}}}$ holds.
    That $\ottkw{own} \, \ottsym{(}  \ottnt{H}  \ottsym{,}  \ottnt{R'}  \ottsym{(}  \mathit{x'}  \ottsym{)}  \ottsym{,}  \tau_{{\mathrm{2}}}  \ottsym{)}  \ottsym{+}  \ottkw{own} \, \ottsym{(}  \ottnt{H}  \ottsym{,}  \ottnt{R'}  \ottsym{(}  \mathit{y}  \ottsym{)}  \ottsym{,}  \tau_{{\mathrm{1}}}  \ottsym{)}  \ottsym{=}  \ottkw{own} \, \ottsym{(}  \ottnt{H}  \ottsym{,}  \ottnt{R}  \ottsym{(}  \mathit{y}  \ottsym{)}  \ottsym{,}  \tau_{{\mathrm{1}}}  \ottsym{+}  \tau_{{\mathrm{2}}}  \ottsym{)}$ follows immediately
    from \Cref{lem:ownadd} and the condition $\ottnt{R}  \ottsym{(}  \mathit{y}  \ottsym{)}  \ottsym{=}  \ottnt{R'}  \ottsym{(}  \mathit{x'}  \ottsym{)}  \ottsym{=}  \ottnt{R'}  \ottsym{(}  \mathit{y}  \ottsym{)}$.
    We therefore conclude that $\ottkw{Own} \, \ottsym{(}  \ottnt{H}  \ottsym{,}  \ottnt{R}  \ottsym{,}  \Gamma  \ottsym{)}  \ottsym{=}  \ottkw{Own} \, \ottsym{(}  \ottnt{H}  \ottsym{,}  \ottnt{R'}  \ottsym{,}  \Gamma''  \ottsym{)}$.
  \end{rneqncase} %
 
  \begin{rneqncase}{R-LetInt}{ %
       \vdash_{\mathit{conf} }^D   \tuple{ \ottnt{H} ,  \ottnt{R} ,  \oldvec{F} ,  \ottnt{E}  \ottsym{[}   \LET  \mathit{x}  =  n  \IN  \ottnt{e}   \ottsym{]} }  \\
        \tuple{ \ottnt{H} ,  \ottnt{R} ,  \oldvec{F} ,  \ottnt{E}  \ottsym{[}   \LET  \mathit{x}  =  n  \IN  \ottnt{e}   \ottsym{]} }     \longrightarrow _{ \ottnt{D} }     \tuple{ \ottnt{H} ,  \ottnt{R}  \ottsym{\{}  \mathit{x'}  \mapsto  n  \ottsym{\}} ,  \oldvec{F} ,  \ottnt{E}  \ottsym{[}    [  \mathit{x'}  /  \mathit{x}  ]    \ottnt{e}   \ottsym{]} }  
    }
    By inversion (see \rn{R-Deref}) we have that, for some $\Gamma$:
    \begin{align*}
      &  \Theta   \mid   \oldvec{\ell}   \mid   \Gamma  \ottsym{,}  \mathit{x}  \ottsym{:}   \set{  \nu  \COL \TINT \mid  \nu \, \ottsym{=} \, n }    \vdash   \ottnt{e}  :  \tau   \produces   \Gamma'  \\
      & \ottkw{Cons} \, \ottsym{(}  \ottnt{H}  \ottsym{,}  \ottnt{R}  \ottsym{,}  \Gamma  \ottsym{)} \andalso  \mathit{x}  \not\in   \DOM( \Gamma' )  
    \end{align*}
    
    We give that $\Gamma''  \ottsym{=}  \Gamma  \ottsym{,}  \mathit{x'}  \ottsym{:}   \set{  \nu  \COL \TINT \mid  \nu \, \ottsym{=} \, n } $, and it thus suffices to show
    that $ \Theta   \mid   \oldvec{\ell}   \mid   \Gamma''   \vdash     [  \mathit{x'}  /  \mathit{x}  ]    \ottnt{e}   :  \tau   \produces   \Gamma' $
    and $\ottkw{Cons} \, \ottsym{(}  \ottnt{H}  \ottsym{,}  \ottnt{R}  \ottsym{\{}  \mathit{x'}  \mapsto  n  \ottsym{\}}  \ottsym{,}  \Gamma''  \ottsym{)}$.
    The first one is easy to obtain using the \Cref{lem:substitution} (see \rn{R-Deref})
    and the latter is trivial by similar reasoning to the \rn{T-Let} and \rn{T-Deref} cases.
  \end{rneqncase} %
  
  \begin{rneqncase}{R-IfTrue}{ %
       \vdash_{\mathit{conf} }^D   \tuple{ \ottnt{H} ,  \ottnt{R} ,  \oldvec{F} ,  \ottnt{E}  \ottsym{[}   \IFZERO  \mathit{y}  \THEN  \ottnt{e_{{\mathrm{1}}}}  \ELSE  \ottnt{e_{{\mathrm{2}}}}   \ottsym{]} }  \\
        \tuple{ \ottnt{H} ,  \ottnt{R} ,  \oldvec{F} ,  \ottnt{E}  \ottsym{[}   \IFZERO  \mathit{y}  \THEN  \ottnt{e_{{\mathrm{1}}}}  \ELSE  \ottnt{e_{{\mathrm{2}}}}   \ottsym{]} }     \longrightarrow _{ \ottnt{D} }     \tuple{ \ottnt{H} ,  \ottnt{R} ,  \oldvec{F} ,  \ottnt{E}  \ottsym{[}  \ottnt{e_{{\mathrm{1}}}}  \ottsym{]} }  
    }
    By inversion (see \rn{R-Deref}) we have that for some $\Gamma$:
    \begin{align*}
      & \Gamma  \ottsym{(}  \mathit{x}  \ottsym{)}  \ottsym{=}   \set{  \nu  \COL \TINT \mid  \varphi }  \\
      &  \Theta   \mid   \oldvec{\ell}   \mid   \Gamma  \ottsym{[}  \mathit{x}  \hookleftarrow   \set{  \nu  \COL \TINT \mid   \varphi  \wedge  \nu \, \ottsym{=} \, \ottsym{0}  }   \ottsym{]}   \vdash   \ottnt{e_{{\mathrm{1}}}}  :  \tau   \produces   \Gamma'  \\
      & \ottkw{Cons} \, \ottsym{(}  \ottnt{H}  \ottsym{,}  \ottnt{R}  \ottsym{,}  \Gamma  \ottsym{)}
    \end{align*}
    We take $\Gamma''  \ottsym{=}  \Gamma  \ottsym{[}  \mathit{x}  \hookleftarrow   \set{  \nu  \COL \TINT \mid   \varphi  \wedge  \nu \, \ottsym{=} \, \ottsym{0}  }   \ottsym{]}$, and want to show that $\ottkw{Cons} \, \ottsym{(}  \ottnt{H}  \ottsym{,}  \ottnt{R}  \ottsym{,}  \Gamma''  \ottsym{)}$ (that $ \Theta   \mid   \oldvec{\ell}   \mid   \Gamma''   \vdash   \ottnt{e_{{\mathrm{1}}}}  :  \tau   \produces   \Gamma' $ is immediate).
    
    By definition, from $\ottkw{Cons} \, \ottsym{(}  \ottnt{H}  \ottsym{,}  \ottnt{R}  \ottsym{,}  \Gamma  \ottsym{)}$ we have $\ottkw{SAT} \, \ottsym{(}  \ottnt{H}  \ottsym{,}  \ottnt{R}  \ottsym{,}  \Gamma  \ottsym{)}$, in particular $ \mathit{x}  \in \DOM( \ottnt{R} ) $, $ \ottnt{R}  \ottsym{(}  \mathit{x}  \ottsym{)}  \in  \mathbb{Z} $ and $\ottsym{[}  \ottnt{R}  \ottsym{]} \, \ottsym{[}  \ottnt{R}  \ottsym{(}  \mathit{x}  \ottsym{)}  \ottsym{/}  \nu  \ottsym{]}  \varphi$.
    The refinement predicates $\varphi$ still holds in the output environment, since nothing changes in the register after transition.
    Also from precondition of $\rn{R-IfTrue}$, we have $\ottnt{R}  \ottsym{(}  \mathit{x}  \ottsym{)} \, \ottsym{=} \, \ottsym{0}$, thus $\mathit{x}$ satisfies the refinement that $\nu \, \ottsym{=} \, \ottsym{0}$.
    Thus $\ottsym{[}  \ottnt{R}  \ottsym{]} \, \ottsym{[}  \ottnt{R}  \ottsym{(}  \mathit{x}  \ottsym{)}  \ottsym{/}  \nu  \ottsym{]}  \ottsym{(}   \varphi  \wedge  \nu \, \ottsym{=} \, \ottsym{0}   \ottsym{)}$ is trivially satisfied.
  \end{rneqncase} %
  
  \begin{rncase}{R-IfFalse} %
    Similar to the case for \rn{R-IfTrue}.
  \end{rncase}
  
  \begin{rneqncase}{R-MkRef}{ 
       \vdash_{\mathit{conf} }^D   \tuple{ \ottnt{H} ,  \ottnt{R} ,  \oldvec{F} ,  \ottnt{E}  \ottsym{[}   \LET  \mathit{x}  =   \MKREF  \mathit{y}   \IN  \ottnt{e}   \ottsym{]} }  \\
        \tuple{ \ottnt{H} ,  \ottnt{R} ,  \oldvec{F} ,  \ottnt{E}  \ottsym{[}   \LET  \mathit{x}  =   \MKREF  \mathit{y}   \IN  \ottnt{e}   \ottsym{]} }     \longrightarrow _{ \ottnt{D} }     \tuple{ \ottnt{H'} ,  \ottnt{R'} ,  \oldvec{F} ,  \ottnt{E}  \ottsym{[}    [  \mathit{x'}  /  \mathit{x}  ]    \ottnt{e}   \ottsym{]} }  \\
       \ottmv{a}  \not\in \DOM( \ottnt{H} )  \andalso  \mathit{x'}  \not\in \DOM( \ottnt{R} )  \\
      \ottnt{H'}  \ottsym{=}  \ottnt{H}  \ottsym{\{}  \ottmv{a}  \mapsto  \ottnt{R}  \ottsym{(}  \mathit{y}  \ottsym{)}  \ottsym{\}} \andalso \ottnt{R'}  \ottsym{=}  \ottnt{R}  \ottsym{\{}  \mathit{x'}  \mapsto  \ottmv{a}  \ottsym{\}}
    }
    By inversion (see \rn{R-Deref}) we have that for some $\Gamma$:
    \begin{align*}
      &  \oldvec{\ell}   \vdash _{\wf}  \Gamma  \\
      & \Gamma  \ottsym{(}  \mathit{y}  \ottsym{)}  \ottsym{=}  \tau_{{\mathrm{1}}}  \ottsym{+}  \tau_{{\mathrm{2}}} \\
      &  \Theta   \mid   \oldvec{\ell}   \mid   \Gamma  \ottsym{[}  \mathit{y}  \hookleftarrow  \tau_{{\mathrm{1}}}  \ottsym{]}  \ottsym{,}  \mathit{x}  \ottsym{:}   \ottsym{(}   \tau_{{\mathrm{2}}}  \wedge_{ \mathit{x} }  \mathit{x} \, \ottsym{=} \, \mathit{y}   \ottsym{)}  \TREF^{ \ottsym{1} }    \vdash   \ottnt{e}  :  \tau   \produces   \Gamma'  \\
      & \ottkw{Cons} \, \ottsym{(}  \ottnt{H}  \ottsym{,}  \ottnt{R}  \ottsym{,}  \Gamma  \ottsym{)} \andalso  \mathit{x}  \not\in   \DOM( \Gamma' )  
    \end{align*}
    We give $\Gamma''  \ottsym{=}  \Gamma  \ottsym{[}  \mathit{y}  \hookleftarrow  \tau_{{\mathrm{1}}}  \ottsym{]}  \ottsym{,}  \mathit{x'}  \ottsym{:}   \ottsym{(}   \tau_{{\mathrm{2}}}  \wedge_{ \mathit{x'} }  \mathit{x'} \, \ottsym{=} \, \mathit{y}   \ottsym{)}  \TREF^{ \ottsym{1} } $,
    and must show that $ \Theta   \mid   \oldvec{\ell}   \mid   \Gamma''   \vdash     [  \mathit{x'}  /  \mathit{x}  ]    \ottnt{e}   :  \tau   \produces   \Gamma' $ and
    $\ottkw{Cons} \, \ottsym{(}  \ottnt{H'}  \ottsym{,}  \ottnt{R'}  \ottsym{,}  \Gamma''  \ottsym{)}$. The first follows
    from \Cref{lem:substitution} and the reasoning found in \rn{R-Deref},
    and the second from the assumed well-formedness of $\tau_{{\mathrm{1}}}  \ottsym{+}  \tau_{{\mathrm{2}}}$.
    
    It remains to show $\ottkw{Cons} \, \ottsym{(}  \ottnt{H'}  \ottsym{,}  \ottnt{R'}  \ottsym{,}  \Gamma''  \ottsym{)}$.
    To show that the output environment is consistent, we must show that $ \ottkw{SATv} ( \ottnt{H'} , \ottnt{R'} , \ottnt{R'}  \ottsym{(}  \mathit{x'}  \ottsym{)} ,  \ottsym{(}   \tau_{{\mathrm{2}}}  \wedge_{ \mathit{x'} }  \mathit{x'} \, \ottsym{=} \, \mathit{y}   \ottsym{)}  \TREF^{ \ottsym{1} }  ) $ and $ \ottkw{SATv} ( \ottnt{H'} , \ottnt{R'} , \ottnt{R'}  \ottsym{(}  \mathit{y}  \ottsym{)} , \tau_{{\mathrm{1}}} ) $.
    By reasoning similar to that in $\rn{R-Deref}$, it suffices to show that $ \ottkw{SATv} ( \ottnt{H'} , \ottnt{R'} , \ottnt{R'}  \ottsym{(}  \mathit{x'}  \ottsym{)} ,  \tau_{{\mathrm{2}}}  \TREF^{ \ottsym{1} }  ) $ and $ \ottkw{SATv} ( \ottnt{H'} , \ottnt{R'} , \ottnt{R'}  \ottsym{(}  \mathit{y}  \ottsym{)} , \tau_{{\mathrm{1}}} ) $.
    We know that $\ottkw{Cons} \, \ottsym{(}  \ottnt{H}  \ottsym{,}  \ottnt{R}  \ottsym{,}  \Gamma  \ottsym{)}$, from which we have $\ottkw{SAT} \, \ottsym{(}  \ottnt{H}  \ottsym{,}  \ottnt{R}  \ottsym{,}  \Gamma  \ottsym{)}$, in particular $ \mathit{y}  \in \DOM( \ottnt{R} ) $ and $ \ottkw{SATv} ( \ottnt{H} , \ottnt{R} , \ottnt{R}  \ottsym{(}  \mathit{y}  \ottsym{)} , \Gamma  \ottsym{(}  \mathit{y}  \ottsym{)} )  $.
    As $ \ottnt{R}  \sqsubseteq  \ottnt{R'} $ and $ \oldvec{\ell}   \vdash _{\wf}  \Gamma $, from \Cref{lem:register,lem:r-valid-subst}, we have $ \ottkw{SATv} ( \ottnt{H} , \ottnt{R} , \ottnt{R}  \ottsym{(}  \mathit{y}  \ottsym{)} , \tau_{{\mathrm{1}}}  \ottsym{+}  \tau_{{\mathrm{2}}} ) $ implies $ \ottkw{SATv} ( \ottnt{H} , \ottnt{R'} , \ottnt{R'}  \ottsym{(}  \mathit{y}  \ottsym{)} , \tau_{{\mathrm{1}}}  \ottsym{+}  \tau_{{\mathrm{2}}} ) $.
    By \Cref{lem:newaddheap}, we then have $ \ottkw{SATv} ( \ottnt{H'} , \ottnt{R'} , \ottnt{R}  \ottsym{(}  \mathit{y}  \ottsym{)} , \tau_{{\mathrm{1}}}  \ottsym{+}  \tau_{{\mathrm{2}}} ) $.
    Then by \Cref{lem:satadd}, we have $ \ottkw{SATv} ( \ottnt{H'} , \ottnt{R'} , \ottnt{v} , \tau_{{\mathrm{1}}} ) $ and $ \ottkw{SATv} ( \ottnt{H'} , \ottnt{R'} , \ottnt{v} , \tau_{{\mathrm{2}}} ) $ where $\ottnt{v} \, \ottsym{=} \, \ottnt{R}  \ottsym{(}  \mathit{y}  \ottsym{)}$.
    We then have $ \ottkw{SATv} ( \ottnt{H'} , \ottnt{R'} , \ottnt{R'}  \ottsym{(}  \mathit{x'}  \ottsym{)} ,  \tau_{{\mathrm{2}}}  \TREF^{ \ottsym{1} }  ) $ and $ \ottkw{SATv} ( \ottnt{H'} , \ottnt{R'} , \ottnt{R'}  \ottsym{(}  \mathit{y}  \ottsym{)} , \tau_{{\mathrm{1}}} ) $ are satisfied.
    
    We must also show that the ownership invariant is preserved.
    Then, it's to show $\forall \,  \ottmv{a'}  \in \DOM( \ottnt{H} )   \ottsym{.}  \ottkw{Own} \, \ottsym{(}  \ottnt{H'}  \ottsym{,}  \ottnt{R'}  \ottsym{,}  \Gamma''  \ottsym{)}  \ottsym{(}  \ottmv{a'}  \ottsym{)}  \le  \ottsym{1}$. From $\ottkw{Cons} \, \ottsym{(}  \ottnt{H}  \ottsym{,}  \ottnt{R}  \ottsym{,}  \Gamma  \ottsym{)}$ and \Cref{lem:ownaddheap,lem:ownadd} we have:
    \begin{align*}
      \ottkw{Own} \, \ottsym{(}  \ottnt{H'}  \ottsym{,}  \ottnt{R'}  \ottsym{,}  \Gamma''  \ottsym{)} & =  \Sigma _{ \mathit{z} \in  \DOM( \Gamma'' )  }\, \ottkw{own} \, \ottsym{(}  \ottnt{H'}  \ottsym{,}  \ottnt{R'}  \ottsym{(}  \mathit{z}  \ottsym{)}  \ottsym{,}  \Gamma''  \ottsym{(}  \mathit{z}  \ottsym{)}  \ottsym{)}  \\
             & =  \Sigma _{ \mathit{z} \in  \DOM( \Gamma )  }\, \ottkw{own} \, \ottsym{(}  \ottnt{H}  \ottsym{,}  \ottnt{R}  \ottsym{(}  \mathit{z}  \ottsym{)}  \ottsym{,}  \Gamma  \ottsym{(}  \mathit{z}  \ottsym{)}  \ottsym{)}   \ottsym{+}  \ottsym{\{}  \ottmv{a}  \mapsto  \ottsym{1}  \ottsym{\}} \\
             & = \ottkw{Own} \, \ottsym{(}  \ottnt{H}  \ottsym{,}  \ottnt{R}  \ottsym{,}  \Gamma  \ottsym{)}  \ottsym{+}  \ottsym{\{}  \ottmv{a}  \mapsto  \ottsym{1}  \ottsym{\}}
    \end{align*}

    Since $ \ottmv{a}  \not\in \DOM( \ottnt{H} ) $ and $\forall \,  \ottmv{a'}  \in \DOM( \ottnt{H} )   \ottsym{.}  \ottkw{Own} \, \ottsym{(}  \ottnt{H}  \ottsym{,}  \ottnt{R}  \ottsym{,}  \Gamma  \ottsym{)}  \ottsym{(}  \ottmv{a'}  \ottsym{)}  \le  \ottsym{1}$,
    we have $\forall \,  \ottmv{a'}  \in \DOM( \ottnt{H} )   \ottsym{.}  \ottkw{Own} \, \ottsym{(}  \ottnt{H'}  \ottsym{,}  \ottnt{R'}  \ottsym{,}  \Gamma''  \ottsym{)}  \ottsym{(}  \ottmv{a'}  \ottsym{)}  \le  \ottsym{1}$.

  \end{rneqncase} %

  \begin{rneqncase}{R-Assign}{ %
       \vdash_{\mathit{conf} }^D   \tuple{ \ottnt{H} ,  \ottnt{R} ,  \oldvec{F} ,  \ottnt{E}  \ottsym{[}   \mathit{y}  \WRITE  \mathit{x}  \SEQ  \ottnt{e}   \ottsym{]} }  \\
        \tuple{ \ottnt{H} ,  \ottnt{R} ,  \oldvec{F} ,  \ottnt{E}  \ottsym{[}   \mathit{y}  \WRITE  \mathit{x}  \SEQ  \ottnt{e}   \ottsym{]} }     \longrightarrow _{ \ottnt{D} }     \tuple{ \ottnt{H'} ,  \ottnt{R'} ,  \oldvec{F} ,  \ottnt{E}  \ottsym{[}  \ottnt{e}  \ottsym{]} }   \\
      \ottmv{a} \, \ottsym{=} \, \ottnt{R}  \ottsym{(}  \mathit{y}  \ottsym{)} \andalso \ottnt{H'}  \ottsym{=}  \ottnt{H}  \ottsym{\{}  \ottmv{a}  \hookleftarrow  \ottnt{R}  \ottsym{(}  \mathit{x}  \ottsym{)}  \ottsym{\}} \andalso \ottnt{R'}  \ottsym{=}  \ottnt{R}
    }
    By inversion (see the \rn{R-Deref} case) we have that
    \begin{align*}
      &  \Theta   \mid   \oldvec{\ell}   \mid   \Gamma  \ottsym{[}  \mathit{x}  \ottsym{:}  \tau_{{\mathrm{1}}}  \ottsym{+}  \tau_{{\mathrm{2}}}  \ottsym{]}  \ottsym{[}  \mathit{y}  \ottsym{:}   \tau'  \TREF^{ \ottsym{1} }   \ottsym{]}   \vdash    \mathit{y}  \WRITE  \mathit{x}  \SEQ  \ottnt{e}   :  \tau   \produces   \Gamma'  \\
      &  \Theta   \mid   \oldvec{\ell}   \mid   \Gamma  \ottsym{[}  \mathit{x}  \hookleftarrow  \tau_{{\mathrm{1}}}  \ottsym{]}  \ottsym{[}  \mathit{y}  \hookleftarrow   \ottsym{(}   \tau_{{\mathrm{2}}}  \wedge_{ \mathit{y} }   \mathit{y}  =_{ \tau_{{\mathrm{2}}} }  \mathit{x}    \ottsym{)}  \TREF^{ \ottsym{1} }   \ottsym{]}   \vdash   \ottnt{e}  :  \tau   \produces   \Gamma'  \\
      & \ottkw{Cons} \, \ottsym{(}  \ottnt{H}  \ottsym{,}  \ottnt{R}  \ottsym{,}  \Gamma  \ottsym{)}
    \end{align*}
    We give $\Gamma''  \ottsym{=}  \Gamma  \ottsym{[}  \mathit{x}  \hookleftarrow  \tau_{{\mathrm{1}}}  \ottsym{]}  \ottsym{[}  \mathit{y}  \hookleftarrow   \ottsym{(}   \tau_{{\mathrm{2}}}  \wedge_{ \mathit{y} }   \mathit{y}  =_{ \tau_{{\mathrm{2}}} }  \mathit{x}    \ottsym{)}  \TREF^{ \ottsym{1} }   \ottsym{]}$. That
    $ \Theta   \mid   \oldvec{\ell}   \mid   \Gamma''   \vdash   \ottnt{e}  :  \tau   \produces   \Gamma' $ is immediate.
    
    We must therefore show that $\ottkw{Cons} \, \ottsym{(}  \ottnt{H'}  \ottsym{,}  \ottnt{R'}  \ottsym{,}  \Gamma''  \ottsym{)}$.
    To show that the output environment is consistent, we must show that $ \ottkw{SATv} ( \ottnt{H'} , \ottnt{R} , \ottnt{R}  \ottsym{(}  \mathit{y}  \ottsym{)} ,  \ottsym{(}   \tau_{{\mathrm{2}}}  \wedge_{ \mathit{y} }  \mathit{y} \, \ottsym{=} \, \mathit{x}   \ottsym{)}  \TREF^{ \ottsym{1} }  ) $ and $ \ottkw{SATv} ( \ottnt{H'} , \ottnt{R} , \ottnt{R}  \ottsym{(}  \mathit{x}  \ottsym{)} , \tau_{{\mathrm{1}}} ) $.
    By reasoning similar to that in $\rn{R-Deref}$, it suffices to show that $ \ottkw{SATv} ( \ottnt{H'} , \ottnt{R} , \ottnt{R}  \ottsym{(}  \mathit{y}  \ottsym{)} ,  \tau_{{\mathrm{2}}}  \TREF^{ \ottsym{1} }  ) $ and $ \ottkw{SATv} ( \ottnt{H'} , \ottnt{R} , \ottnt{R}  \ottsym{(}  \mathit{x}  \ottsym{)} , \tau_{{\mathrm{1}}} ) $.

    From $\ottkw{Cons} \, \ottsym{(}  \ottnt{H}  \ottsym{,}  \ottnt{R}  \ottsym{,}  \Gamma  \ottsym{)}$, we know that $\ottkw{SAT} \, \ottsym{(}  \ottnt{H}  \ottsym{,}  \ottnt{R}  \ottsym{,}  \Gamma  \ottsym{)}$, in particular, $ \ottkw{SATv} ( \ottnt{H} , \ottnt{R} , \ottnt{R}  \ottsym{(}  \mathit{x}  \ottsym{)} , \Gamma  \ottsym{(}  \mathit{x}  \ottsym{)} ) $.
    If we show that $\ottkw{own} \, \ottsym{(}  \ottnt{H}  \ottsym{,}  \ottnt{R}  \ottsym{(}  \mathit{x}  \ottsym{)}  \ottsym{,}  \tau_{{\mathrm{1}}}  \ottsym{+}  \tau_{{\mathrm{2}}}  \ottsym{)}  \ottsym{(}  \ottmv{a}  \ottsym{)}  \ottsym{=}  \ottsym{0}$ and $ \ottnt{H}   \approx _ \ottmv{a}   \ottnt{H'} $,
    then, by \Cref{lem:heapfor0}, we will obtain $ \ottkw{SATv} ( \ottnt{H'} , \ottnt{R} , \ottnt{R}  \ottsym{(}  \mathit{x}  \ottsym{)} , \tau_{{\mathrm{1}}}  \ottsym{+}  \tau_{{\mathrm{2}}} ) $,
    from which, by \Cref{lem:satadd},
    $ \ottkw{SATv} ( \ottnt{H'} , \ottnt{R} , \ottnt{R}  \ottsym{(}  \mathit{x}  \ottsym{)} , \tau_{{\mathrm{1}}} ) $ and $ \ottkw{SATv} ( \ottnt{H'} , \ottnt{R} , \ottnt{R}  \ottsym{(}  \mathit{x}  \ottsym{)} , \tau_{{\mathrm{2}}} ) $ follow.
    We then have $ \ottkw{SATv} ( \ottnt{H'} , \ottnt{R} , \ottnt{R}  \ottsym{(}  \mathit{y}  \ottsym{)} ,  \tau_{{\mathrm{2}}}  \TREF^{ \ottsym{1} }  ) $ and $ \ottkw{SATv} ( \ottnt{H'} , \ottnt{R} , \ottnt{R}  \ottsym{(}  \mathit{x}  \ottsym{)} , \tau_{{\mathrm{1}}} ) $ as $\ottnt{H'}  \ottsym{(}  \ottnt{R}  \ottsym{(}  \mathit{y}  \ottsym{)}  \ottsym{)} \, \ottsym{=} \, \ottnt{R}  \ottsym{(}  \mathit{x}  \ottsym{)}$.
    (That any other variables $\mathit{z}$ is consistent will follow from $\ottkw{own} \, \ottsym{(}  \ottnt{H}  \ottsym{,}  \ottnt{R}  \ottsym{(}  \mathit{z}  \ottsym{)}  \ottsym{,}  \Gamma  \ottsym{(}  \mathit{z}  \ottsym{)}  \ottsym{)}  \ottsym{(}  \ottmv{a}  \ottsym{)}  \ottsym{=}  \ottsym{0}$
    as proved below and \cref{lem:heapfor0}.)

    To show $\ottkw{own} \, \ottsym{(}  \ottnt{H}  \ottsym{,}  \ottnt{R}  \ottsym{(}  \mathit{x}  \ottsym{)}  \ottsym{,}  \tau_{{\mathrm{1}}}  \ottsym{+}  \tau_{{\mathrm{2}}}  \ottsym{)}  \ottsym{(}  \ottmv{a}  \ottsym{)}  \ottsym{=}  \ottsym{0}$, 
    we define $\ottnt{O'_{{\mathrm{0}}}}, \ottnt{O''_{{\mathrm{0}}}}, \ottnt{O'_{{\mathrm{1}}}}$ and $\ottnt{O''_{{\mathrm{1}}}}$ as below:
    \begin{align*}
      \ottkw{Own} \, \ottsym{(}  \ottnt{H}  \ottsym{,}  \ottnt{R}  \ottsym{,}  \Gamma  \ottsym{)} & = \ottnt{O'_{{\mathrm{0}}}}  \ottsym{+}  \ottnt{O''_{{\mathrm{0}}}} \\
      \ottkw{Own} \, \ottsym{(}  \ottnt{H'}  \ottsym{,}  \ottnt{R}  \ottsym{,}  \Gamma''  \ottsym{)} & = \ottnt{O'_{{\mathrm{1}}}}  \ottsym{+}  \ottnt{O''_{{\mathrm{1}}}} \\
      \ottnt{O'_{{\mathrm{0}}}} & =  \Sigma _{ \mathit{z} \in   \DOM( \Gamma )   \setminus  \ottsym{\{}  \mathit{y}  \ottsym{,}  \mathit{x}  \ottsym{\}}  }\, \ottkw{own} \, \ottsym{(}  \ottnt{H}  \ottsym{,}  \ottnt{R}  \ottsym{(}  \mathit{z}  \ottsym{)}  \ottsym{,}  \Gamma  \ottsym{(}  \mathit{z}  \ottsym{)}  \ottsym{)}  \\
      \ottnt{O''_{{\mathrm{0}}}} &= \ottkw{own} \, \ottsym{(}  \ottnt{H}  \ottsym{,}  \ottnt{R}  \ottsym{(}  \mathit{y}  \ottsym{)}  \ottsym{,}  \Gamma  \ottsym{(}  \mathit{y}  \ottsym{)}  \ottsym{)}  \ottsym{+}  \ottkw{own} \, \ottsym{(}  \ottnt{H}  \ottsym{,}  \ottnt{R}  \ottsym{(}  \mathit{x}  \ottsym{)}  \ottsym{,}  \Gamma  \ottsym{(}  \mathit{x}  \ottsym{)}  \ottsym{)} \\
      \ottnt{O'_{{\mathrm{1}}}} & =  \Sigma _{ \mathit{z} \in   \DOM( \Gamma'' )   \setminus  \ottsym{\{}  \mathit{y}  \ottsym{,}  \mathit{x}  \ottsym{\}}  }\, \ottkw{own} \, \ottsym{(}  \ottnt{H'}  \ottsym{,}  \ottnt{R}  \ottsym{(}  \mathit{z}  \ottsym{)}  \ottsym{,}  \Gamma''  \ottsym{(}  \mathit{z}  \ottsym{)}  \ottsym{)}  \\
      \ottnt{O''_{{\mathrm{1}}}} &= \ottkw{own} \, \ottsym{(}  \ottnt{H'}  \ottsym{,}  \ottnt{R}  \ottsym{(}  \mathit{y}  \ottsym{)}  \ottsym{,}  \Gamma''  \ottsym{(}  \mathit{y}  \ottsym{)}  \ottsym{)}  \ottsym{+}  \ottkw{own} \, \ottsym{(}  \ottnt{H'}  \ottsym{,}  \ottnt{R}  \ottsym{(}  \mathit{x}  \ottsym{)}  \ottsym{,}  \Gamma''  \ottsym{(}  \mathit{x}  \ottsym{)}  \ottsym{)}
    \end{align*}
    By the definition of the ownership function, $\Gamma  \ottsym{(}  \mathit{y}  \ottsym{)}  \ottsym{=}   \tau'  \TREF^{ \ottsym{1} } $ and $\Gamma  \ottsym{(}  \mathit{x}  \ottsym{)}  \ottsym{=}  \tau_{{\mathrm{1}}}  \ottsym{+}  \tau_{{\mathrm{2}}}$, we have:
    \begin{align*}
      \ottnt{O''_{{\mathrm{0}}}} & = \ottkw{own} \, \ottsym{(}  \ottnt{H}  \ottsym{,}  \ottnt{H}  \ottsym{(}  \ottnt{R}  \ottsym{(}  \mathit{y}  \ottsym{)}  \ottsym{)}  \ottsym{,}  \tau'  \ottsym{)}  \ottsym{+}  \ottsym{\{}  \ottmv{a}  \mapsto  \ottsym{1}  \ottsym{\}}  \ottsym{+}  \ottkw{own} \, \ottsym{(}  \ottnt{H}  \ottsym{,}  \ottnt{R}  \ottsym{(}  \mathit{x}  \ottsym{)}  \ottsym{,}  \tau_{{\mathrm{1}}}  \ottsym{+}  \tau_{{\mathrm{2}}}  \ottsym{)} \\
      \ottnt{O''_{{\mathrm{1}}}} & = \ottkw{own} \, \ottsym{(}  \ottnt{H'}  \ottsym{,}  \ottnt{H'}  \ottsym{(}  \ottnt{R}  \ottsym{(}  \mathit{y}  \ottsym{)}  \ottsym{)}  \ottsym{,}  \tau_{{\mathrm{2}}}  \ottsym{)}  \ottsym{+}  \ottsym{\{}  \ottmv{a}  \mapsto  \ottsym{1}  \ottsym{\}}  \ottsym{+}  \ottkw{own} \, \ottsym{(}  \ottnt{H'}  \ottsym{,}  \ottnt{R}  \ottsym{(}  \mathit{x}  \ottsym{)}  \ottsym{,}  \tau_{{\mathrm{1}}}  \ottsym{)}
    \end{align*}
    As $\ottkw{Own} \, \ottsym{(}  \ottnt{H}  \ottsym{,}  \ottnt{R}  \ottsym{,}  \Gamma  \ottsym{)}  \ottsym{(}  \ottmv{a}  \ottsym{)}  \le  \ottsym{1}$ (from $\ottkw{Cons} \, \ottsym{(}  \ottnt{H}  \ottsym{,}  \ottnt{R}  \ottsym{,}  \Gamma  \ottsym{)}$) and from
    \begin{align*}
      \ottkw{Own} \, \ottsym{(}  \ottnt{H}  \ottsym{,}  \ottnt{R}  \ottsym{,}  \Gamma  \ottsym{)}  \ottsym{(}  \ottmv{a}  \ottsym{)} & = \ottnt{O'_{{\mathrm{0}}}}  \ottsym{(}  \ottmv{a}  \ottsym{)}  \ottsym{+}  \ottnt{O''_{{\mathrm{0}}}}  \ottsym{(}  \ottmv{a}  \ottsym{)} \\
                           & = \ottnt{O'_{{\mathrm{0}}}}  \ottsym{(}  \ottmv{a}  \ottsym{)}  \ottsym{+}  \ottkw{own} \, \ottsym{(}  \ottnt{H}  \ottsym{,}  \ottnt{H}  \ottsym{(}  \ottnt{R}  \ottsym{(}  \mathit{y}  \ottsym{)}  \ottsym{)}  \ottsym{,}  \tau'  \ottsym{)}  \ottsym{(}  \ottmv{a}  \ottsym{)}  \ottsym{+}  \ottsym{1}  \ottsym{+}  \ottkw{own} \, \ottsym{(}  \ottnt{H}  \ottsym{,}  \ottnt{R}  \ottsym{(}  \mathit{x}  \ottsym{)}  \ottsym{,}  \tau_{{\mathrm{1}}}  \ottsym{+}  \tau_{{\mathrm{2}}}  \ottsym{)}  \ottsym{(}  \ottmv{a}  \ottsym{)} \\
                           & = 1
    \end{align*}
    we have that:
    \begin{align*}
      \ottkw{own} \, \ottsym{(}  \ottnt{H}  \ottsym{,}  \ottnt{H}  \ottsym{(}  \ottnt{R}  \ottsym{(}  \mathit{y}  \ottsym{)}  \ottsym{)}  \ottsym{,}  \tau'  \ottsym{)}  \ottsym{(}  \ottmv{a}  \ottsym{)} & = \ottkw{own} \, \ottsym{(}  \ottnt{H}  \ottsym{,}  \ottnt{R}  \ottsym{(}  \mathit{x}  \ottsym{)}  \ottsym{,}  \tau_{{\mathrm{1}}}  \ottsym{+}  \tau_{{\mathrm{2}}}  \ottsym{)}  \ottsym{(}  \ottmv{a}  \ottsym{)} \\
                               & = \ottnt{O'_{{\mathrm{0}}}}  \ottsym{(}  \ottmv{a}  \ottsym{)}  \ottsym{=}  \ottsym{0}
    \end{align*}

    We now show that $ \ottnt{H}   \approx _ \ottmv{a}   \ottnt{H'} $. The first two conditions are clear, so it
    remains to show that, for any $n$, $ \ottnt{H} \vdash   \ottmv{a}  \Downarrow  n $ iff $ \ottnt{H'} \vdash   \ottmv{a}  \Downarrow  n $.
    From \Cref{lem:sat-implies-shape-cons}, we have $ \ottnt{H} \vdash   \ottmv{a}  \Downarrow   |  \tau'  \TREF^{ \ottsym{1} }  |  $, and
    a proof by contradiction gives that $ |  \tau'  \TREF^{ \ottsym{1} }  | $ is the only such $n$
    for which $ \ottnt{H} \vdash   \ottmv{a}  \Downarrow  n $. We now argue the forward case for the bi-implication,
    the backwards case follows similar reasoning.

    Given $ \ottnt{H} \vdash   \ottmv{a}  \Downarrow   |  \tau'  \TREF^{ \ottsym{1} }  |  $, we  must show $ \ottnt{H}  \ottsym{\{}  \ottmv{a}  \hookleftarrow  \ottnt{R}  \ottsym{(}  \mathit{x}  \ottsym{)}  \ottsym{\}} \vdash   \ottmv{a}  \Downarrow   |  \tau_{{\mathrm{2}}}  \TREF^{ \ottsym{1} }  |  $, for which it suffices to show
    $ \ottnt{H}  \ottsym{\{}  \ottmv{a}  \hookleftarrow  \ottnt{R}  \ottsym{(}  \mathit{x}  \ottsym{)}  \ottsym{\}} \vdash   \ottnt{R}  \ottsym{(}  \mathit{x}  \ottsym{)}  \Downarrow   | \tau_{{\mathrm{2}}} |  $. 
    From our requirement that $\tau'$ and $\tau_{{\mathrm{2}}}$
    (and therefore $\tau_{{\mathrm{1}}}  \ottsym{+}  \tau_{{\mathrm{2}}}$) have similar shapes, we have $ | \tau' |   \ottsym{=}   | \tau_{{\mathrm{2}}} |   \ottsym{=}   | \tau_{{\mathrm{1}}}  \ottsym{+}  \tau_{{\mathrm{2}}} | $.
    By inverting the well-typing of the input configuration, we must have $ \ottkw{SATv} ( \ottnt{H} , \ottnt{R} , \ottnt{R}  \ottsym{(}  \mathit{x}  \ottsym{)} , \tau_{{\mathrm{1}}}  \ottsym{+}  \tau_{{\mathrm{2}}} ) $,
    thus by \Cref{lem:sat-implies-shape-cons} we must have $ \ottnt{H} \vdash   \ottnt{R}  \ottsym{(}  \mathit{x}  \ottsym{)}  \Downarrow   | \tau_{{\mathrm{2}}} |  $.
    As $ | \tau_{{\mathrm{2}}} |  \, \ottsym{=} \,  | \tau' |  <  |  \tau'  \TREF^{ \ottsym{1} }  | $, $\ottmv{a}$ cannot be reachable from $\ottnt{R}  \ottsym{(}  \mathit{x}  \ottsym{)}$ in $\ottnt{H}$
    (otherwise we would have $\ottmv{a}$ reaches an integer along multiple heap paths of differing lengths,
    a clear contradiction).
    Then the value of $\ottmv{a}$ in $\ottnt{H}$ is irrelevant to the derivation of $ \ottnt{H} \vdash   \ottnt{R}  \ottsym{(}  \mathit{x}  \ottsym{)}  \Downarrow   | \tau_{{\mathrm{2}}} |  $,
    whereby $ \ottnt{H}  \ottsym{\{}  \ottmv{a}  \hookleftarrow  \ottnt{R}  \ottsym{(}  \mathit{x}  \ottsym{)}  \ottsym{\}} \vdash   \ottnt{R}  \ottsym{(}  \mathit{x}  \ottsym{)}  \Downarrow   | \tau_{{\mathrm{2}}} |  $ must hold.

    Then, it's to show $\forall \,  \ottmv{a'}  \in \DOM( \ottnt{H'} )   \ottsym{.}  \ottkw{Own} \, \ottsym{(}  \ottnt{H'}  \ottsym{,}  \ottnt{R}  \ottsym{,}  \Gamma''  \ottsym{)}  \ottsym{(}  \ottmv{a'}  \ottsym{)}  \ottsym{=}  \ottsym{(}  \ottnt{O'_{{\mathrm{1}}}}  \ottsym{+}  \ottnt{O''_{{\mathrm{1}}}}  \ottsym{)}  \ottsym{(}  \ottmv{a'}  \ottsym{)}  \ottsym{=}  \ottnt{O'_{{\mathrm{1}}}}  \ottsym{(}  \ottmv{a'}  \ottsym{)}  \ottsym{+}  \ottnt{O''_{{\mathrm{1}}}}  \ottsym{(}  \ottmv{a'}  \ottsym{)} \, \le \, \ottsym{1}$.
    For every $ \mathit{z}  \in    \DOM( \Gamma )   \setminus  \ottsym{\{}  \mathit{y}  \ottsym{,}  \mathit{x}  \ottsym{\}}  $ (and similarly for $\Gamma''$), we have
    $\Gamma  \ottsym{(}  \mathit{z}  \ottsym{)}  \ottsym{=}  \Gamma''  \ottsym{(}  \mathit{z}  \ottsym{)}$. Further, from $\ottnt{O'_{{\mathrm{0}}}}  \ottsym{(}  \ottmv{a}  \ottsym{)}  \ottsym{=}  \ottsym{0}$ above, we must have $\ottkw{own} \, \ottsym{(}  \ottnt{H}  \ottsym{,}  \ottnt{R}  \ottsym{(}  \mathit{z}  \ottsym{)}  \ottsym{,}  \Gamma  \ottsym{(}  \mathit{z}  \ottsym{)}  \ottsym{)}  \ottsym{(}  \ottmv{a}  \ottsym{)}  \ottsym{=}  \ottsym{0}$
    for all such $\mathit{z}$. As $ \ottnt{H}   \approx _ \ottmv{a}   \ottnt{H'} $, by \Cref{lem:heapop}, we have that $\ottnt{O'_{{\mathrm{0}}}}  \ottsym{=}  \ottnt{O'_{{\mathrm{1}}}}$.
    Then, from $\forall \,  \ottmv{a'}  \in \DOM( \ottnt{H} )   \ottsym{.}  \ottkw{Own} \, \ottsym{(}  \ottnt{H}  \ottsym{,}  \ottnt{R}  \ottsym{,}  \Gamma  \ottsym{)}  \ottsym{(}  \ottmv{a'}  \ottsym{)}  \ottsym{=}  \ottnt{O'_{{\mathrm{0}}}}  \ottsym{(}  \ottmv{a'}  \ottsym{)}  \ottsym{+}  \ottnt{O''_{{\mathrm{0}}}}  \ottsym{(}  \ottmv{a'}  \ottsym{)} \, \le \, \ottsym{1}$, it suffices to show
    that $\forall \,  \ottmv{a'}  \in \DOM( \ottnt{H} )   \ottsym{.}  \ottnt{O''_{{\mathrm{1}}}}  \ottsym{(}  \ottmv{a'}  \ottsym{)}  \le  \ottnt{O''_{{\mathrm{0}}}}  \ottsym{(}  \ottmv{a'}  \ottsym{)}$.
       
    We first consider the case for $\ottmv{a}$:
    \begin{align*}
      \ottnt{O''_{{\mathrm{1}}}}  \ottsym{(}  \ottmv{a}  \ottsym{)} & = \ottkw{own} \, \ottsym{(}  \ottnt{H'}  \ottsym{,}  \ottnt{H'}  \ottsym{(}  \ottnt{R}  \ottsym{(}  \mathit{y}  \ottsym{)}  \ottsym{)}  \ottsym{,}  \tau_{{\mathrm{2}}}  \ottsym{)}  \ottsym{(}  \ottmv{a}  \ottsym{)}  \ottsym{+}  \ottkw{own} \, \ottsym{(}  \ottnt{H'}  \ottsym{,}  \ottnt{R}  \ottsym{(}  \mathit{x}  \ottsym{)}  \ottsym{,}  \tau_{{\mathrm{1}}}  \ottsym{)}  \ottsym{(}  \ottmv{a}  \ottsym{)}  \ottsym{+}  \ottsym{1} \\
      \ottnt{O''_{{\mathrm{0}}}}  \ottsym{(}  \ottmv{a}  \ottsym{)} & = \ottkw{own} \, \ottsym{(}  \ottnt{H}  \ottsym{,}  \ottnt{R}  \ottsym{(}  \mathit{x}  \ottsym{)}  \ottsym{,}  \tau_{{\mathrm{1}}}  \ottsym{+}  \tau_{{\mathrm{2}}}  \ottsym{)}  \ottsym{(}  \ottmv{a}  \ottsym{)}  \ottsym{+}  \ottkw{own} \, \ottsym{(}  \ottnt{H}  \ottsym{,}  \ottnt{H}  \ottsym{(}  \ottnt{R}  \ottsym{(}  \mathit{y}  \ottsym{)}  \ottsym{)}  \ottsym{,}  \tau'  \ottsym{)}  \ottsym{(}  \ottmv{a}  \ottsym{)}  \ottsym{+}  \ottsym{1}
    \end{align*}
    From above, we have $\ottkw{own} \, \ottsym{(}  \ottnt{H}  \ottsym{,}  \ottnt{R}  \ottsym{(}  \mathit{x}  \ottsym{)}  \ottsym{,}  \tau_{{\mathrm{2}}}  \ottsym{+}  \tau_{{\mathrm{1}}}  \ottsym{)}  \ottsym{(}  \ottmv{a}  \ottsym{)}  \ottsym{=}  \ottkw{own} \, \ottsym{(}  \ottnt{H}  \ottsym{,}  \ottnt{H}  \ottsym{(}  \ottnt{R}  \ottsym{(}  \mathit{y}  \ottsym{)}  \ottsym{)}  \ottsym{,}  \tau'  \ottsym{)}  \ottsym{(}  \ottmv{a}  \ottsym{)}  \ottsym{=}  \ottsym{0}$.
    By \Cref{lem:heapop} and $ \ottnt{H}   \approx _ \ottmv{a}   \ottnt{H'} $, we have $\ottkw{own} \, \ottsym{(}  \ottnt{H}  \ottsym{,}  \ottnt{R}  \ottsym{(}  \mathit{x}  \ottsym{)}  \ottsym{,}  \tau_{{\mathrm{2}}}  \ottsym{+}  \tau_{{\mathrm{1}}}  \ottsym{)}  \ottsym{=}  \ottkw{own} \, \ottsym{(}  \ottnt{H'}  \ottsym{,}  \ottnt{R}  \ottsym{(}  \mathit{x}  \ottsym{)}  \ottsym{,}  \tau_{{\mathrm{2}}}  \ottsym{+}  \tau_{{\mathrm{1}}}  \ottsym{)}$.
    Also by \Cref{lem:ownadd}, we have $\ottkw{own} \, \ottsym{(}  \ottnt{H'}  \ottsym{,}  \ottnt{R}  \ottsym{(}  \mathit{x}  \ottsym{)}  \ottsym{,}  \tau_{{\mathrm{2}}}  \ottsym{+}  \tau_{{\mathrm{1}}}  \ottsym{)}  \ottsym{=}  \ottkw{own} \, \ottsym{(}  \ottnt{H'}  \ottsym{,}  \ottnt{R}  \ottsym{(}  \mathit{x}  \ottsym{)}  \ottsym{,}  \tau_{{\mathrm{1}}}  \ottsym{)}  \ottsym{+}  \ottkw{own} \, \ottsym{(}  \ottnt{H'}  \ottsym{,}  \ottnt{R}  \ottsym{(}  \mathit{x}  \ottsym{)}  \ottsym{,}  \tau_{{\mathrm{2}}}  \ottsym{)}$.
    From $\ottnt{H'}  \ottsym{(}  \ottnt{R}  \ottsym{(}  \mathit{y}  \ottsym{)}  \ottsym{)} \, \ottsym{=} \, \ottnt{R}  \ottsym{(}  \mathit{x}  \ottsym{)}$, we therefore have $\ottkw{own} \, \ottsym{(}  \ottnt{H'}  \ottsym{,}  \ottnt{H'}  \ottsym{(}  \ottnt{R}  \ottsym{(}  \mathit{y}  \ottsym{)}  \ottsym{)}  \ottsym{,}  \tau_{{\mathrm{2}}}  \ottsym{)}  \ottsym{(}  \ottmv{a}  \ottsym{)}  \ottsym{=}  \ottkw{own} \, \ottsym{(}  \ottnt{H'}  \ottsym{,}  \ottnt{R}  \ottsym{(}  \mathit{x}  \ottsym{)}  \ottsym{,}  \tau_{{\mathrm{1}}}  \ottsym{)}  \ottsym{(}  \ottmv{a}  \ottsym{)}  \ottsym{=}  \ottsym{0}$,
    and thus:
    \[
      \ottnt{O''_{{\mathrm{1}}}}  \ottsym{(}  \ottmv{a}  \ottsym{)}  \ottsym{=}  \ottkw{own} \, \ottsym{(}  \ottnt{H'}  \ottsym{,}  \ottnt{H'}  \ottsym{(}  \ottnt{R}  \ottsym{(}  \mathit{y}  \ottsym{)}  \ottsym{)}  \ottsym{,}  \tau_{{\mathrm{2}}}  \ottsym{)}  \ottsym{(}  \ottmv{a}  \ottsym{)}  \ottsym{+}  \ottkw{own} \, \ottsym{(}  \ottnt{H'}  \ottsym{,}  \ottnt{R}  \ottsym{(}  \mathit{x}  \ottsym{)}  \ottsym{,}  \tau_{{\mathrm{1}}}  \ottsym{)}  \ottsym{(}  \ottmv{a}  \ottsym{)}  \ottsym{+}  \ottsym{1}  \ottsym{=}  \ottsym{1}  \ottsym{=}  \ottnt{O''_{{\mathrm{0}}}}  \ottsym{(}  \ottmv{a}  \ottsym{)}
    \]
    
    Next, consider some $\ottmv{a} \, \neq \, \ottmv{a'}$;
    \begin{align*}
      \ottnt{O''_{{\mathrm{1}}}}  \ottsym{(}  \ottmv{a'}  \ottsym{)} & = \ottkw{own} \, \ottsym{(}  \ottnt{H'}  \ottsym{,}  \ottnt{H'}  \ottsym{(}  \ottnt{R}  \ottsym{(}  \mathit{y}  \ottsym{)}  \ottsym{)}  \ottsym{,}  \tau_{{\mathrm{2}}}  \ottsym{)}  \ottsym{(}  \ottmv{a'}  \ottsym{)}  \ottsym{+}  \ottkw{own} \, \ottsym{(}  \ottnt{H'}  \ottsym{,}  \ottnt{R}  \ottsym{(}  \mathit{x}  \ottsym{)}  \ottsym{,}  \tau_{{\mathrm{1}}}  \ottsym{)}  \ottsym{(}  \ottmv{a'}  \ottsym{)} \\
      \ottnt{O''_{{\mathrm{0}}}}  \ottsym{(}  \ottmv{a'}  \ottsym{)} & = \ottkw{own} \, \ottsym{(}  \ottnt{H}  \ottsym{,}  \ottnt{R}  \ottsym{(}  \mathit{x}  \ottsym{)}  \ottsym{,}  \tau_{{\mathrm{2}}}  \ottsym{+}  \tau_{{\mathrm{1}}}  \ottsym{)}  \ottsym{(}  \ottmv{a'}  \ottsym{)}  \ottsym{+}  \ottkw{own} \, \ottsym{(}  \ottnt{H}  \ottsym{,}  \ottnt{H}  \ottsym{(}  \ottnt{R}  \ottsym{(}  \mathit{y}  \ottsym{)}  \ottsym{)}  \ottsym{,}  \tau'  \ottsym{)}  \ottsym{(}  \ottmv{a'}  \ottsym{)}
    \end{align*}
    By reasoning similar to the case for $a = a'$, we have $\ottnt{O''_{{\mathrm{1}}}}  \ottsym{(}  \ottmv{a'}  \ottsym{)}  \le  \ottkw{own} \, \ottsym{(}  \ottnt{H}  \ottsym{,}  \ottnt{R}  \ottsym{(}  \mathit{x}  \ottsym{)}  \ottsym{,}  \tau_{{\mathrm{2}}}  \ottsym{+}  \tau_{{\mathrm{1}}}  \ottsym{)}  \ottsym{(}  \ottmv{a'}  \ottsym{)}  \le  \ottnt{O''_{{\mathrm{0}}}}  \ottsym{(}  \ottmv{a'}  \ottsym{)}$.
    We therefore conclude that $\forall \,  \ottmv{a'}  \in \DOM( \ottnt{H'} )   \ottsym{.}  \ottkw{Own} \, \ottsym{(}  \ottnt{H'}  \ottsym{,}  \ottnt{R}  \ottsym{,}  \Gamma''  \ottsym{)}  \ottsym{(}  \ottmv{a'}  \ottsym{)}  \le  \ottsym{1}$.
  \end{rneqncase} %
  
  \begin{rneqncase}{R-Alias}{
     \vdash_{\mathit{conf} }^D   \tuple{ \ottnt{H} ,  \ottnt{R} ,  \oldvec{F} ,  \ottnt{E}  \ottsym{[}   \ALIAS( \mathit{x}  =  \mathit{y} ) \SEQ  \ottnt{e}   \ottsym{]} }  \\
      \tuple{ \ottnt{H} ,  \ottnt{R} ,  \oldvec{F} ,  \ottnt{E}  \ottsym{[}   \ALIAS( \mathit{x}  =  \mathit{y} ) \SEQ  \ottnt{e}   \ottsym{]} }     \longrightarrow _{ \ottnt{D} }     \tuple{ \ottnt{H} ,  \ottnt{R} ,  \oldvec{F} ,  \ottnt{E}  \ottsym{[}  \ottnt{e}  \ottsym{]} }  \\
    \ottnt{R}  \ottsym{(}  \mathit{x}  \ottsym{)} \, \ottsym{=} \, \ottnt{R}  \ottsym{(}  \mathit{y}  \ottsym{)}
    }
    By inversion (see \rn{R-Deref}) we have for some $\Gamma$ that:
    \begin{align*}
      &  \Theta   \mid   \oldvec{\ell}   \mid   \Gamma  \ottsym{[}  \mathit{x}  \ottsym{:}   \tau_{{\mathrm{1}}}  \TREF^{ r_{{\mathrm{1}}} }   \ottsym{]}  \ottsym{[}  \mathit{y}  \ottsym{:}   \tau_{{\mathrm{2}}}  \TREF^{ r_{{\mathrm{2}}} }   \ottsym{]}   \vdash    \ALIAS( \mathit{x}  =  \mathit{y} ) \SEQ  \ottnt{e}   :  \tau   \produces   \Gamma'  \\
      &  \Theta   \mid   \oldvec{\ell}   \mid   \Gamma  \ottsym{[}  \mathit{x}  \hookleftarrow   \tau'_{{\mathrm{1}}}  \TREF^{ r'_{{\mathrm{1}}} }   \ottsym{]}  \ottsym{[}  \mathit{y}  \hookleftarrow   \tau'_{{\mathrm{2}}}  \TREF^{ r'_{{\mathrm{2}}} }   \ottsym{]}   \vdash   \ottnt{e}  :  \tau   \produces   \Gamma'  \\
      &   \tau_{{\mathrm{1}}}  \TREF^{ r_{{\mathrm{1}}} }   \ottsym{+}  \tau_{{\mathrm{2}}}  \TREF^{ r_{{\mathrm{2}}} }   \approx    \tau'_{{\mathrm{1}}}  \TREF^{ r'_{{\mathrm{1}}} }   \ottsym{+}  \tau'_{{\mathrm{2}}}  \TREF^{ r'_{{\mathrm{2}}} }  \\
      & \ottkw{Cons} \, \ottsym{(}  \ottnt{H}  \ottsym{,}  \ottnt{R}  \ottsym{,}  \Gamma  \ottsym{)}
    \end{align*}
    We give $\Gamma''  \ottsym{=}  \Gamma  \ottsym{[}  \mathit{x}  \hookleftarrow   \tau'_{{\mathrm{1}}}  \TREF^{ r'_{{\mathrm{1}}} }   \ottsym{]}  \ottsym{[}  \mathit{y}  \hookleftarrow   \tau'_{{\mathrm{2}}}  \TREF^{ r'_{{\mathrm{2}}} }   \ottsym{]}$, and must show
    $ \Theta   \mid   \oldvec{\ell}   \mid   \Gamma''   \vdash   \ottnt{e}  :  \tau   \produces   \Gamma' $ and $\ottkw{Cons} \, \ottsym{(}  \ottnt{H}  \ottsym{,}  \ottnt{R}  \ottsym{,}  \Gamma''  \ottsym{)}$.
    The first is immediate.
    
    To show $\ottkw{Cons} \, \ottsym{(}  \ottnt{H}  \ottsym{,}  \ottnt{R}  \ottsym{,}  \Gamma''  \ottsym{)}$ we first define:
    \begin{align*}
      \tau_{\ottmv{p}\,{\mathrm{1}}} = &  \tau_{{\mathrm{1}}}  \TREF^{ r_{{\mathrm{1}}} }  \\
      \tau_{\ottmv{p}\,{\mathrm{2}}} = &  \tau_{{\mathrm{2}}}  \TREF^{ r_{{\mathrm{2}}} }  \\
      \tau_{\ottmv{q}\,{\mathrm{1}}} = &  \tau'_{{\mathrm{1}}}  \TREF^{ r'_{{\mathrm{1}}} }  \\
      \tau_{\ottmv{q}\,{\mathrm{2}}} = &  \tau'_{{\mathrm{2}}}  \TREF^{ r'_{{\mathrm{2}}} }  \\
      \tau_{\ottmv{q}} = & \tau_{\ottmv{q}\,{\mathrm{1}}}  \ottsym{+}  \tau_{\ottmv{q}\,{\mathrm{2}}} \\
      \tau_{\ottmv{p}} = & \tau_{\ottmv{p}\,{\mathrm{1}}}  \ottsym{+}  \tau_{\ottmv{p}\,{\mathrm{2}}}
    \end{align*}
    We thus have $\tau_{\ottmv{q}}  \approx  \tau_{\ottmv{p}}$.
    
    We know that $\ottkw{Cons} \, \ottsym{(}  \ottnt{H}  \ottsym{,}  \ottnt{R}  \ottsym{,}  \Gamma  \ottsym{)}$, from which we have $\ottkw{SAT} \, \ottsym{(}  \ottnt{H}  \ottsym{,}  \ottnt{R}  \ottsym{,}  \Gamma  \ottsym{)}$, in particular
    $ \ottkw{SATv} ( \ottnt{H} , \ottnt{R} , \ottnt{R}  \ottsym{(}  \mathit{y}  \ottsym{)} ,  \Gamma  \ottsym{(}  \mathit{y}  \ottsym{)}  \ottsym{=}  \tau_{{\mathrm{2}}}  \TREF^{ r_{{\mathrm{2}}} }  ) $ and $ \ottkw{SATv} ( \ottnt{H} , \ottnt{R} , \ottnt{R}  \ottsym{(}  \mathit{x}  \ottsym{)} ,  \Gamma  \ottsym{(}  \mathit{x}  \ottsym{)}  \ottsym{=}  \tau_{{\mathrm{1}}}  \TREF^{ r_{{\mathrm{1}}} }  ) $.
    From $\tau_{\ottmv{p}\,{\mathrm{1}}}  \ottsym{+}  \tau_{\ottmv{p}\,{\mathrm{2}}}  \ottsym{=}  \tau_{\ottmv{p}}$ and \Cref{lem:satadd}, we have
    $ \ottkw{SATv} ( \ottnt{H} , \ottnt{R} , \ottnt{v} , \tau_{\ottmv{p}\,{\mathrm{1}}} ) $ and $ \ottkw{SATv} ( \ottnt{H} , \ottnt{R} , \ottnt{v} , \tau_{\ottmv{p}\,{\mathrm{2}}} ) $ imply $ \ottkw{SATv} ( \ottnt{H} , \ottnt{R} , \ottnt{v} , \tau_{\ottmv{p}} ) $, where $\ottnt{v}  \ottsym{=}  \ottnt{H}  \ottsym{(}  \ottnt{R}  \ottsym{(}  \mathit{y}  \ottsym{)}  \ottsym{)}  \ottsym{=}  \ottnt{H}  \ottsym{(}  \ottnt{R}  \ottsym{(}  \mathit{x}  \ottsym{)}  \ottsym{)}$.
    From $\tau_{\ottmv{q}}  \approx  \tau_{\ottmv{p}}$ and \Cref{lem:sattosat}, we have that $ \ottkw{SATv} ( \ottnt{H} , \ottnt{R} , \ottnt{v} , \tau_{\ottmv{p}} ) $ implies $ \ottkw{SATv} ( \ottnt{H} , \ottnt{R} , \ottnt{v} , \tau_{\ottmv{q}} ) $.
    From \Cref{lem:satadd} we also have that $ \ottkw{SATv} ( \ottnt{H} , \ottnt{R} , \ottnt{v} , \tau_{\ottmv{q}} ) $ implies $ \ottkw{SATv} ( \ottnt{H} , \ottnt{R} , \ottnt{v} , \tau_{\ottmv{q}\,{\mathrm{1}}} ) $ and $ \ottkw{SATv} ( \ottnt{H} , \ottnt{R} , \ottnt{v} , \tau_{\ottmv{q}\,{\mathrm{2}}} ) $, where
    again $\ottnt{v}  \ottsym{=}  \ottnt{H}  \ottsym{(}  \ottnt{R}  \ottsym{(}  \mathit{y}  \ottsym{)}  \ottsym{)}  \ottsym{=}  \ottnt{H}  \ottsym{(}  \ottnt{R}  \ottsym{(}  \mathit{x}  \ottsym{)}  \ottsym{)}$.

    Then from the reasoning above, the refinements of $\tau_{\ottmv{q}\,{\mathrm{1}}}$ and $\tau_{\ottmv{q}\,{\mathrm{2}}}$ are valid and $\ottkw{Cons} \, \ottsym{(}  \ottnt{H}  \ottsym{,}  \ottnt{R}  \ottsym{,}  \Gamma''  \ottsym{)}$ holds.

    Then, it's to show $\forall \,  \ottmv{a}  \in \DOM( \ottnt{H} )   \ottsym{.}  \ottkw{Own} \, \ottsym{(}  \ottnt{H}  \ottsym{,}  \ottnt{R}  \ottsym{,}  \Gamma''  \ottsym{)}  \ottsym{(}  \ottmv{a}  \ottsym{)}  \le  \ottsym{1}$.
    To prove that $\ottkw{Own} \, \ottsym{(}  \ottnt{H}  \ottsym{,}  \ottnt{R}  \ottsym{,}  \Gamma  \ottsym{)}  \ottsym{=}  \ottkw{Own} \, \ottsym{(}  \ottnt{H}  \ottsym{,}  \ottnt{R}  \ottsym{,}  \Gamma''  \ottsym{)}$ follows from:
    \begin{align*}
      &\ottkw{own} \, \ottsym{(}  \ottnt{H}  \ottsym{,}  \ottnt{R}  \ottsym{(}  \mathit{x}  \ottsym{)}  \ottsym{,}   \tau_{{\mathrm{1}}}  \TREF^{ r_{{\mathrm{1}}} }   \ottsym{)}  \ottsym{+}  \ottkw{own} \, \ottsym{(}  \ottnt{H}  \ottsym{,}  \ottnt{R}  \ottsym{(}  \mathit{y}  \ottsym{)}  \ottsym{,}   \tau_{{\mathrm{2}}}  \TREF^{ r_{{\mathrm{2}}} }   \ottsym{)} =  \\
      &\,\,\,\,\ottkw{own} \, \ottsym{(}  \ottnt{H}  \ottsym{,}  \ottnt{R}  \ottsym{(}  \mathit{x}  \ottsym{)}  \ottsym{,}   \tau'_{{\mathrm{1}}}  \TREF^{ r'_{{\mathrm{1}}} }   \ottsym{)}  \ottsym{+}  \ottkw{own} \, \ottsym{(}  \ottnt{H}  \ottsym{,}  \ottnt{R}  \ottsym{(}  \mathit{y}  \ottsym{)}  \ottsym{,}   \tau'_{{\mathrm{2}}}  \TREF^{ r'_{{\mathrm{2}}} }   \ottsym{)}
    \end{align*}
    which follows immediately from the conditions $  \tau_{{\mathrm{1}}}  \TREF^{ r_{{\mathrm{1}}} }   \ottsym{+}  \tau_{{\mathrm{2}}}  \TREF^{ r_{{\mathrm{2}}} }   \approx    \tau'_{{\mathrm{1}}}  \TREF^{ r'_{{\mathrm{1}}} }   \ottsym{+}  \tau'_{{\mathrm{2}}}  \TREF^{ r'_{{\mathrm{2}}} } $, $\ottnt{R}  \ottsym{(}  \mathit{x}  \ottsym{)} \, \ottsym{=} \, \ottnt{R}  \ottsym{(}  \mathit{y}  \ottsym{)}$, and \Cref{lem:ownadd,lem:ownequiv-preserv}.
  \end{rneqncase} %

  \begin{rncase}{R-AliasPtr}
    By reasoning similar to the \rn{R-Alias} case.
  \end{rncase}

  \begin{rncase}{R-AliasFail,R-AliasPtrFail} %
    The result configuration $ \mathbf{AliasFail} $ is trivially well-typed.
  \end{rncase}
  
  \begin{rneqncase}{R-Assert}{ %
     \vdash_{\mathit{conf} }^D   \tuple{ \ottnt{H} ,  \ottnt{R} ,  \oldvec{F} ,  \ottnt{E}  \ottsym{[}   \ASSERT( \varphi ) \SEQ  \ottnt{e}   \ottsym{]} }  \andalso \Gamma  \models  \ottsym{[}  \ottnt{R}  \ottsym{]} \, \varphi\\
      \tuple{ \ottnt{H} ,  \ottnt{R} ,  \oldvec{F} ,  \ottnt{E}  \ottsym{[}   \ASSERT( \varphi ) \SEQ  \ottnt{e}   \ottsym{]} }     \longrightarrow _{ \ottnt{D} }     \tuple{ \ottnt{H} ,  \ottnt{R} ,  \oldvec{F} ,  \ottnt{E}  \ottsym{[}  \ottnt{e}  \ottsym{]} }  \\
    }
    By inversion (see \rn{R-Deref}) we can obtain $ \Theta   \mid   \oldvec{\ell}   \mid   \Gamma   \vdash    \ASSERT( \varphi ) \SEQ  \ottnt{e}   :  \tau   \produces   \Gamma' 
$ and $ \Theta   \mid   \oldvec{\ell}   \mid   \Gamma   \vdash   \ottnt{e}  :  \tau   \produces   \Gamma' $, and
    the result follows immediately by taking $\Gamma''  \ottsym{=}  \Gamma$.
  \end{rneqncase} %
  
  \begin{rneqncase}{R-AssertFail}{ %
       \vdash_{\mathit{conf} }^D   \tuple{ \ottnt{H} ,  \ottnt{R} ,  \oldvec{F} ,  \ottnt{E}  \ottsym{[}   \ASSERT( \varphi ) \SEQ  \ottnt{e}   \ottsym{]} }   \\
        \tuple{ \ottnt{H} ,  \ottnt{R} ,  \oldvec{F} ,  \ottnt{E}  \ottsym{[}   \ASSERT( \varphi ) \SEQ  \ottnt{e}   \ottsym{]} }     \longrightarrow _{ \ottnt{D} }     \mathbf{AssertFail}  \\
       \Theta   \mid   \oldvec{\ell}   \mid   \Gamma   \vdash    \ASSERT( \varphi ) \SEQ  \ottnt{e}   :  \tau   \produces   \Gamma' 
    }
    By inversion (see the \rn{R-Deref} case) we have that
    $\Gamma  \models  \varphi$, i.e., $\models   \sem{ \Gamma }   \implies  \varphi$,
    for some $\Gamma$ such that $\ottkw{Cons} \, \ottsym{(}  \ottnt{H}  \ottsym{,}  \ottnt{R}  \ottsym{,}  \Gamma  \ottsym{)}$.
    From \Cref{lem:sat-implies-gamma} we therefore have $\models  \ottsym{[}  \ottnt{R}  \ottsym{]} \,  \sem{ \Gamma } $.
    From the precondition of \rn{R-AssertFail} we have that
    $\not\models  \ottsym{[}  \ottnt{R}  \ottsym{]} \, \varphi$. But from $\models   \sem{ \Gamma }   \implies  \varphi$ and $\models  \ottsym{[}  \ottnt{R}  \ottsym{]} \,  \sem{ \Gamma } $
    we can conclude that $\models  \ottsym{[}  \ottnt{R}  \ottsym{]} \, \varphi$, yielding a contradiction.
    We therefore conclude that this case is impossible.
  \end{rneqncase} %
  
  \begin{rneqncase}{R-Call}{
       \vdash_{\mathit{conf} }^D   \tuple{ \ottnt{H} ,  \ottnt{R} ,  \oldvec{F} ,  \ottnt{E}  \ottsym{[}   \LET  \mathit{x}  =   \mathit{f} ^ \ell (  \mathit{y_{{\mathrm{1}}}} ,\ldots, \mathit{y_{\ottmv{n}}}  )   \IN  \ottnt{e'}   \ottsym{]} }  \\
       \mathit{f}  \mapsto  \ottsym{(}  \mathit{x_{{\mathrm{1}}}}  \ottsym{,} \, .. \, \ottsym{,}  \mathit{x_{\ottmv{n}}}  \ottsym{)}  \ottnt{e}  \in  \ottnt{D}  \\
       \begin{array}{l}  \tuple{ \ottnt{H} ,  \ottnt{R} ,  \oldvec{F} ,  \ottnt{E}  \ottsym{[}   \LET  \mathit{x}  =   \mathit{f} ^ \ell (  \mathit{y_{{\mathrm{1}}}} ,\ldots, \mathit{y_{\ottmv{n}}}  )   \IN  \ottnt{e'}   \ottsym{]} }   \\ \quad   \longrightarrow _{ \ottnt{D} }     \tuple{ \ottnt{H} ,  \ottnt{R} ,   \ottnt{E} [\LET  \mathit{x}  =   \HOLE^ \ell   \IN  \ottnt{e'}  ]   \ottsym{:}  \oldvec{F} ,     [  \mathit{y_{{\mathrm{1}}}}  /  \mathit{x_{{\mathrm{1}}}}  ]  \cdots  [  \mathit{y_{\ottmv{n}}}  /  \mathit{x_{\ottmv{n}}}  ]     \ottnt{e}  }  \end{array}  \\
    }
    We must show that
    $ \vdash_{\mathit{conf} }^D   \tuple{ \ottnt{H} ,  \ottnt{R} ,   \ottnt{E} [\LET  \mathit{x}  =   \HOLE^ \ell   \IN  \ottnt{e'}  ]   \ottsym{:}  \oldvec{F} ,     [  \mathit{y_{{\mathrm{1}}}}  /  \mathit{x_{{\mathrm{1}}}}  ]  \cdots  [  \mathit{y_{\ottmv{n}}}  /  \mathit{x_{\ottmv{n}}}  ]     \ottnt{e}  }  $
    for some $\Gamma''$.
    
    By inversion on the configuration typing, we have that, for some $\Gamma$:
    \[
       \Theta   \mid   \oldvec{\ell}   \mid   \Gamma   \vdash   \ottnt{E}  \ottsym{[}   \LET  \mathit{x}  =   \mathit{f} ^ \ell (  \mathit{y_{{\mathrm{1}}}} ,\ldots, \mathit{y_{\ottmv{n}}}  )   \IN  \ottnt{e'}   \ottsym{]}  :  \tau_{\ottmv{n}}   \produces   \Gamma_{\ottmv{n}}  .
    \]
    By \Cref{lem:stack_var}, we then have for some $\tau$, and $\Gamma'$ that:
    \begin{align*}
      &  \Theta   \mid   \oldvec{\ell}   \mid   \Gamma   \vdash    \LET  \mathit{x}  =   \mathit{f} ^ \ell (  \mathit{y_{{\mathrm{1}}}} ,\ldots, \mathit{y_{\ottmv{n}}}  )   \IN  \ottnt{e'}   :  \tau   \produces   \Gamma'  \\
      & \Theta  \mid  \HOLE  \ottsym{:}  \tau  \produces  \Gamma'  \mid  \oldvec{\ell}  \vdash_{\mathit{ectx} }  \ottnt{E}  \ottsym{:}  \tau_{\ottmv{n}}  \produces  \Gamma_{\ottmv{n}}
    \end{align*}
    Taking $\tau_{{\mathrm{1}}}  \ottsym{=}  \tau, \Gamma_{{\mathrm{1}}}  \ottsym{=}  \Gamma', \Gamma_{{\mathrm{0}}}  \ottsym{=}  \Gamma, \Gamma_{{\mathrm{2}}}  \ottsym{=}  \Gamma_{\ottmv{n}}, \tau_{{\mathrm{2}}}  \ottsym{=}  \tau_{\ottmv{n}}$,
    by \Cref{lem:callfunc} we have, for some $\tau''', \Gamma'''$:
    \begin{align*}
      &  \Theta   \mid   \ell  \ottsym{:}  \oldvec{\ell}   \mid   \Gamma   \vdash    \sigma_{x}   \ottnt{e}   :  \tau'''   \produces   \Gamma'''  \\
      & \Theta  \mid  \HOLE  \ottsym{:}  \tau'''  \produces  \Gamma'''  \mid  \oldvec{\ell}  \vdash_{\mathit{ectx} }   \ottnt{E} [\LET  \mathit{x}  =   \HOLE^ \ell   \IN  \ottnt{e'}  ]   \ottsym{:}  \tau_{\ottmv{n}}  \produces  \Gamma_{\ottmv{n}}
    \end{align*}
    where:
    \begin{align*}
      \sigma_{x} & =   [  \mathit{y_{{\mathrm{1}}}}  /  \mathit{x_{{\mathrm{1}}}}  ]  \cdots  [  \mathit{y_{\ottmv{n}}}  /  \mathit{x_{\ottmv{n}}}  ]   \\
      \Theta  \ottsym{(}  \mathit{f}  \ottsym{)} & =  \forall  \lambda .\tuple{ \mathit{x_{{\mathrm{1}}}} \COL \tau_{\ottmv{i}} ,\dots, \mathit{x_{\ottmv{n}}} \COL \tau_{\ottmv{n}} }\ra\tuple{ \mathit{x_{{\mathrm{1}}}} \COL \tau'_{{\mathrm{1}}} ,\dots, \mathit{x_{\ottmv{n}}} \COL \tau'_{\ottmv{n}}  \mid  \tau_{\ottmv{p}} } 
    \end{align*}
    We therefore take $\Gamma''  \ottsym{=}  \Gamma$.
     
    We must also prove that $\forall i\in\set{1..n+1}.\Theta  \mid  \HOLE  \ottsym{:}  \tau_{\ottmv{i}}  \produces  \Gamma_{\ottmv{i}}  \mid  \oldvec{\ell}_{{\ottmv{i}-1}}  \vdash_{\mathit{ectx} }  \ottnt{E'_{{\ottmv{i}-1}}}  \ottsym{:}  \tau_{{\ottmv{i}-1}}  \produces  \Gamma_{{\ottmv{i}-1}}$  where $\ottnt{E'_{\ottmv{n}}}  \ottsym{=}  \ottnt{E}  \ottsym{[}   \LET  \mathit{x}  =   \HOLE^ \ell   \IN  \ottnt{e'}   \ottsym{]}$ and $\ottnt{E'_{\ottmv{i}}}  \ottsym{=}  \ottnt{E_{\ottmv{i}}} (0 \leq i < n)$,
    which can be divided into proving
    $\forall i\in\set{1..n}.\Theta  \mid  \HOLE  \ottsym{:}  \tau_{\ottmv{i}}  \produces  \Gamma_{\ottmv{i}}  \mid  \oldvec{\ell}_{{\ottmv{i}-1}}  \vdash_{\mathit{ectx} }  \ottnt{E'_{{\ottmv{i}-1}}}  \ottsym{:}  \tau_{{\ottmv{i}-1}}  \produces  \Gamma_{{\ottmv{i}-1}}$ and $\Theta  \mid  \HOLE  \ottsym{:}   \tau_{n+1}   \produces   \tenv_{n+1}   \mid  \oldvec{\ell}_{\ottmv{n}}  \vdash_{\mathit{ectx} }  \ottnt{E'_{\ottmv{n}}}  \ottsym{:}  \tau_{\ottmv{n}}  \produces  \Gamma_{\ottmv{n}}$. The first follows by inversion on $ \vdash_{\mathit{conf} }^D   \tuple{ \ottnt{H} ,  \ottnt{R} ,  \oldvec{F} ,  \ottnt{E}  \ottsym{[}   \LET  \mathit{x}  =   \mathit{f} ^ \ell (  \mathit{y_{{\mathrm{1}}}} ,\ldots, \mathit{y_{\ottmv{n}}}  )   \IN  \ottnt{e'}   \ottsym{]} }  $.
    To show the latter, we define $ \tenv_{n+1}   \ottsym{=}  \Gamma'''$ and $ \tau_{n+1}   \ottsym{=}  \tau'''$, whereby the
    well-typing holds from the result of applying \Cref{lem:callfunc} above.
    
    Finally, $\ottkw{Cons} \, \ottsym{(}  \ottnt{H}  \ottsym{,}  \ottnt{R}  \ottsym{,}  \Gamma''  \ottsym{)}$ follows immediately from $\ottkw{Cons} \, \ottsym{(}  \ottnt{H}  \ottsym{,}  \ottnt{R}  \ottsym{,}  \Gamma  \ottsym{)}$ and $\Gamma''  \ottsym{=}  \Gamma$.
  \end{rneqncase}
\end{proof}

\section{Proof of Progress}
\label{sec:progress-proof}

We first state the standard decomposition lemma.
\begin{lemma}[Decomposition]
  \label{lem:decomposition}
  For any term $\ottnt{e}$, either $\ottnt{e}  \ottsym{=}  \mathit{x}$ for some $\mathit{x}$ or there exists some $\ottnt{E}$ and $\ottnt{e'}$ where $\ottnt{E}  \ottsym{[}  \ottnt{e'}  \ottsym{]}  \ottsym{=}  \ottnt{e}$ and one of the following
  cases hold:
  \begin{enumerate}
  \item $\ottnt{e'}  \ottsym{=}   \LET  \mathit{x}  =   \MKREF  \mathit{y}   \IN  \ottnt{e''} $
  \item $\ottnt{e'}  \ottsym{=}   \LET  \mathit{x}  =  \mathit{y}  \IN  \ottnt{e''} $
  \item $\ottnt{e'}  \ottsym{=}   \LET  \mathit{x}  =  n  \IN  \ottnt{e''} $
  \item $\ottnt{e'}  \ottsym{=}   \LET  \mathit{x}  =   *  \mathit{y}   \IN  \ottnt{e''} $
  \item $\ottnt{e'}  \ottsym{=}   \LET  \mathit{x}  =   \mathit{f} ^ \ell (  \mathit{y_{{\mathrm{1}}}} ,\ldots, \mathit{y_{\ottmv{n}}}  )   \IN  \ottnt{e''} $
  \item $ \ottnt{e'}  \ottsym{=}  \mathit{x}  \SEQ  \ottnt{e''} $
  \item $\ottnt{e'}  \ottsym{=}   \ALIAS( \mathit{x}  =  \mathit{y} ) \SEQ  \ottnt{e''} $
  \item $\ottnt{e'}  \ottsym{=}   \ALIAS( \mathit{x}  = *  \mathit{y} ) \SEQ  \ottnt{e''} $
  \item $\ottnt{e'}  \ottsym{=}   \IFZERO  \mathit{x}  \THEN  \ottnt{e_{{\mathrm{1}}}}  \ELSE  \ottnt{e_{{\mathrm{2}}}} $
  \item $\ottnt{e'}  \ottsym{=}   \ASSERT( \varphi ) \SEQ  \ottnt{e''} $
  \item $\ottnt{e'}  \ottsym{=}   \mathit{x}  \WRITE  \mathit{y}  \SEQ  \ottnt{e} $
  \end{enumerate}
\end{lemma}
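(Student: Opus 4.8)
The final statement to prove is the Decomposition Lemma (\Cref{lem:decomposition}): every term $\ottnt{e}$ is either a bare variable $\mathit{x}$, or decomposes uniquely as $\ottnt{E}  \ottsym{[}  \ottnt{e'}  \ottsym{]}$ where $\ottnt{e'}$ is one of the eleven listed redex forms. Let me sketch my approach.

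\textbf{The plan} is to proceed by structural induction on the grammar of expressions $\ottnt{e}$, using the definition of evaluation contexts $\ottnt{E}  ::=   \ottnt{E'} \SEQ \ottnt{e}  \mid  \HOLE$. The key observation is that the evaluation context grammar is extremely restricted: a context is either the empty hole $\HOLE$ or a left-nested sequence. This means that for most syntactic forms of $\ottnt{e}$, the decomposition is forced to take $\ottnt{E}  \ottsym{=}  \HOLE$ and $\ottnt{e'}  \ottsym{=}  \ottnt{e}$, and I simply check that the head form of $\ottnt{e}$ matches one of the enumerated redexes. The only case requiring genuine recursion is the general sequencing form $ \ottnt{e_{{\mathrm{1}}}}  \SEQ  \ottnt{e_{{\mathrm{2}}}} $.

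First I would dispatch the base case $\ottnt{e}  \ottsym{=}  \mathit{x}$, which matches the first disjunct of the lemma directly. Then I would handle each of the ``let'' binding forms ($ \LET  \mathit{x}  =   \MKREF  \mathit{y}   \IN  \ottnt{e}' $, $ \LET  \mathit{x}  =  \mathit{y}  \IN  \ottnt{e}' $, $ \LET  \mathit{x}  =  n  \IN  \ottnt{e}' $, $ \LET  \mathit{x}  =   *  \mathit{y}   \IN  \ottnt{e}' $, $ \LET  \mathit{x}  =   \mathit{f} ^ \ell (  \mathit{y_{{\mathrm{1}}}} ,\ldots, \mathit{y_{\ottmv{n}}}  )   \IN  \ottnt{e}' $) together with the side-effecting continuation forms ($ \ALIAS( \mathit{x}  =  \mathit{y} ) \SEQ  \ottnt{e}' $, $ \ALIAS( \mathit{x}  = *  \mathit{y} ) \SEQ  \ottnt{e}' $, $ \ASSERT( \varphi ) \SEQ  \ottnt{e}' $, $ \mathit{x}  \WRITE  \mathit{y}  \SEQ  \ottnt{e}' $) and the conditional $ \IFZERO  \mathit{x}  \THEN  \ottnt{e_{{\mathrm{1}}}}  \ELSE  \ottnt{e_{{\mathrm{2}}}} $ uniformly: in every such case I take $\ottnt{E}  \ottsym{=}  \HOLE$, so that $\ottnt{E}  \ottsym{[}  \ottnt{e}  \ottsym{]}  \ottsym{=}  \ottnt{e}$ by the substitution definition $\HOLE  \ottsym{[}  \ottnt{e'}  \ottsym{]}  \ottsym{=}  \ottnt{e'}$, and observe that the head form is exactly one of the enumerated redexes. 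These are all immediate and require no induction.

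The one substantive case is $ \ottnt{e}  \ottsym{=}  \ottnt{e_{{\mathrm{1}}}}  \SEQ  \ottnt{e_{{\mathrm{2}}}} $. Here I apply the induction hypothesis to $\ottnt{e_{{\mathrm{1}}}}$. If $\ottnt{e_{{\mathrm{1}}}}  \ottsym{=}  \mathit{x}$, then the whole term is $ \mathit{x}  \SEQ  \ottnt{e_{{\mathrm{2}}}} $, which is redex form (6), and I take $\ottnt{E}  \ottsym{=}  \HOLE$. Otherwise $\ottnt{e_{{\mathrm{1}}}}  \ottsym{=}  \ottnt{E_{{\mathrm{1}}}}  \ottsym{[}  \ottnt{e'}  \ottsym{]}$ for some context $\ottnt{E_{{\mathrm{1}}}}$ and redex $\ottnt{e'}$ of one of the listed forms; I then set $ \ottnt{E}  \ottsym{=}  \ottnt{E_{{\mathrm{1}}}} \SEQ \ottnt{e_{{\mathrm{2}}}} $, which is a legal evaluation context, and using the context substitution rule $\ottsym{(}   \ottnt{E_{{\mathrm{1}}}} \SEQ \ottnt{e_{{\mathrm{2}}}}   \ottsym{)}  \ottsym{[}  \ottnt{e'}  \ottsym{]}  \ottsym{=}   \ottnt{E_{{\mathrm{1}}}}  \ottsym{[}  \ottnt{e'}  \ottsym{]}  \SEQ  \ottnt{e_{{\mathrm{2}}}} $ I conclude $\ottnt{E}  \ottsym{[}  \ottnt{e'}  \ottsym{]}  \ottsym{=}  \ottnt{e_{{\mathrm{1}}}}  \SEQ  \ottnt{e_{{\mathrm{2}}}}  \ottsym{=}  \ottnt{e}$, with the same redex form carried over from the hypothesis.

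\textbf{The main obstacle}, such as it is, is bookkeeping rather than conceptual difficulty: I must make sure the redex classification in the sequencing case is inherited correctly from the induction hypothesis (the redex is unchanged, only the surrounding context grows), and I must confirm that the listed redex forms in the statement are genuinely exhaustive against the full expression grammar. In particular I should double-check that no expression form falls through the cracks---every production of $\ottnt{e}$ in the surface grammar either is a variable, is directly one of redex forms (1)--(5) and (7)--(11) at its head, or is a sequence handled by recursion. Since the grammar of \Cref{sec:language} lists exactly these constructors, the case analysis is complete and the lemma follows.
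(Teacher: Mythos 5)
Your proof is correct and takes essentially the same approach as the paper, which proves this lemma by ``straightforward induction on $\ottnt{e}$''; your structural induction with the sequencing case as the only recursive step is exactly that argument spelled out. (One minor nit: the lemma asserts only existence of a decomposition, not uniqueness as your opening sentence suggests, but your proof does not rely on uniqueness so nothing is affected.)
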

\begin{proof}
  Straightforward induction on $\ottnt{e}$.
\end{proof}

\begin{proof}[Progress; \Cref{lem:progress}]
  By inversion on $ \vdash_{\mathit{conf} }^D  \mathbf{C} $, either $\mathbf{C}  \ottsym{=}   \mathbf{AliasFail} $ or $\mathbf{C}  \ottsym{=}   \tuple{ \ottnt{H} ,  \ottnt{R} ,  \oldvec{F} ,  \ottnt{e} } $.
  In the former case the result is immediate. In the latter case we have
  that $ \Theta   \mid   \oldvec{\ell}   \mid   \Gamma   \vdash   \ottnt{e}  :  \tau   \produces   \Gamma' $ for some $\tau, \Gamma$ and $\Gamma'$,
  and further from   \Cref{lem:decomposition}, we have that either $\ottnt{e}  \ottsym{=}  \mathit{x}$ for some $\mathit{x}$ or there exists
  some $\ottnt{E}$ or $\ottnt{e'}$ where $\ottnt{e}  \ottsym{=}  \ottnt{E}  \ottsym{[}  \ottnt{e'}  \ottsym{]}$ and $\ottnt{e'}$ meets one of the cases in
  \Cref{lem:decomposition}.

  In the case $\ottnt{e}  \ottsym{=}  \mathit{x}$, we further make case analysis on the form of $\oldvec{F}$.
  The case where $\oldvec{F}  \ottsym{=}   \epsilon $ is immediate;
  In the other case where $\oldvec{F}  \ottsym{=}  F  \ottsym{:}  \oldvec{F}'$, the configuration can step
  to $ \tuple{ \ottnt{H} ,  \ottnt{R} ,  \oldvec{F} ,  F  \ottsym{[}  \mathit{x}  \ottsym{]} } $ according to \rn{R-Var}.
  
  For the remaining cases where $\ottnt{e}  \ottsym{=}  \ottnt{E}  \ottsym{[}  \ottnt{e'}  \ottsym{]}$,
  by the well-typing of $\ottnt{e}$ with respect to $\Gamma$ and \Cref{lem:stack_var},
  we have that $ \Theta   \mid   \mathcal{L}   \mid   \Gamma   \vdash   \ottnt{e'}  :  \tau_{{\mathrm{0}}}   \produces   \Gamma_{{\mathrm{0}}} $ some $\tau_{{\mathrm{0}}}$ and $\Gamma_{{\mathrm{0}}}$.

  We now treat the remaining forms of $\ottnt{e'}$
  \begin{eqncase}{
      \ottnt{e'}  \ottsym{=}   \LET  \mathit{x}  =   *  \mathit{y}   \IN  \ottnt{e''} 
    }
    By inversion (\Cref{lem:inversion}) and \Cref{lem:subtyp-preserves-cons} we must have
    that for some $\Gamma_{\ottmv{p}}$ where $\ottkw{Cons} \, \ottsym{(}  \ottnt{H}  \ottsym{,}  \ottnt{R}  \ottsym{,}  \Gamma_{\ottmv{p}}  \ottsym{)}$ that $ \mathit{y}  \in   \DOM( \Gamma_{\ottmv{p}} )  $ and
    $\Gamma_{\ottmv{p}}  \ottsym{(}  \mathit{y}  \ottsym{)}  \ottsym{=}   \tau'  \TREF^{ r } $. From $\ottkw{Cons} \, \ottsym{(}  \ottnt{H}  \ottsym{,}  \ottnt{R}  \ottsym{,}  \Gamma_{\ottmv{p}}  \ottsym{)}$ we must have $ \mathit{y}  \in   \DOM( \ottnt{R} )  $
    and further $ \ottkw{SATv} ( \ottnt{H} , \ottnt{R} , \ottnt{R}  \ottsym{(}  \mathit{y}  \ottsym{)} ,  \tau'  \TREF^{ r' }  ) $, from which we must have
    $\ottnt{R}  \ottsym{(}  \mathit{y}  \ottsym{)} \, \ottsym{=} \, \ottmv{a}$ and  $ \ottmv{a}  \in   \DOM( \ottnt{H} )  $. Then $\mathbf{C}$ can step according to \rn{R-Deref}.
  \end{eqncase}
  
  \begin{eqncase}{\ottnt{e'}  \ottsym{=}   \LET  \mathit{x}  =  \mathit{y}  \IN  \ottnt{e''} }
    Again, by \Cref{lem:inversion,lem:subtyp-preserves-cons} and the definition
    of $\ottkw{Cons}$, we must have that $ \mathit{y}  \in   \DOM( \ottnt{R} )  $, and the system can step
    according to \rn{R-LetVar}.
  \end{eqncase}

  \begin{eqncase}{\ottnt{e'}  \ottsym{=}   \LET  \mathit{x}  =   \MKREF  \mathit{y}   \IN  \ottnt{e''} }
    Similar to the \rn{R-LetVar} case above.
  \end{eqncase}

  \begin{eqncase}{
      \ottnt{e'}  \ottsym{=}   \LET  \mathit{x}  =  n  \IN  \ottnt{e''}  \\
       \ottnt{e'}  \ottsym{=}  \mathit{x}  \SEQ  \ottnt{e''}  \\
      \ottnt{e'}  \ottsym{=}   \ASSERT( \varphi ) \SEQ  \ottnt{e''} 
    }
    The first two can trivially step according to \rn{R-LetInt} and \rn{R-Seq} respectively.
    the last can step according to \rn{R-Assert} or \rn{R-AssertFalse} (although
    by \Cref{lem:preservation,lem:assertfail} the latter is impossible).
  \end{eqncase}
  
  \begin{eqncase}{
      \ottnt{e'}  \ottsym{=}   \ALIAS( \mathit{x}  =  \mathit{y} ) \SEQ  \ottnt{e''} 
    }
    Again by \Cref{lem:inversion,lem:subtyp-preserves-cons} and that
    $\ottkw{Cons} \, \ottsym{(}  \ottnt{H}  \ottsym{,}  \ottnt{R}  \ottsym{,}  \Gamma_{\ottmv{p}}  \ottsym{)}$ implies $\mathit{x}$ and $\mathit{y}$ are bound to addresses
    in the register file, we have that the configuration can step according to \rn{R-Alias}
    or \rn{R-AliasFail}.
  \end{eqncase}

  \begin{eqncase}{
      \ottnt{e'}  \ottsym{=}   \ALIAS( \mathit{x}  = *  \mathit{y} ) \SEQ  \ottnt{e''}  
    }
    Similar to the case above, we must have that $\mathit{x}$ is bound to an address
    in the register file, and that $\mathit{y}$ is bound to an address that
    is itself mapped to an address in the heap $\ottnt{H}$. Then the configuration may
    step according to \rn{R-AliasPtr} or \rn{R-AliasPtrFail}
  \end{eqncase}

  \begin{eqncase}{
      \ottnt{e}  \ottsym{=}   \IFZERO  \mathit{x}  \THEN  \ottnt{e_{{\mathrm{1}}}}  \ELSE  \ottnt{e_{{\mathrm{2}}}} 
    }
    As above, from the well-typing we must have that $\mathit{x}$ is bound in $\ottnt{R}$
    to some integer $n$. Then the configuration may step according to \rn{R-IfTrue}
    or \rn{R-IfFalse} depending on whether
    $n \, \ottsym{=} \, \ottsym{0}$ or $n \, \neq \, \ottsym{0}$.
  \end{eqncase}
  
  \begin{eqncase}{
      \ottnt{e'}  \ottsym{=}   \mathit{x}  \WRITE  \mathit{y}  \SEQ  \ottnt{e''} 
    }
    From the well-typing of $\ottnt{e'}$,
    \Cref{lem:inversion,lem:subtyp-preserves-cons} and the definition
    of $\ottkw{Cons}$, we must have that $ \mathit{y}  \in   \DOM( \ottnt{R} )  $,
    $ \mathit{x}  \in   \DOM( \ottnt{R} )  $, $\ottnt{R}  \ottsym{(}  \mathit{x}  \ottsym{)} \, \ottsym{=} \, \ottmv{a}$, and $ \ottmv{a}  \in   \DOM( \ottnt{H} )  $. Then
    the configuration can step according \rn{R-Assign}.
  \end{eqncase}

  \begin{eqncase}{
      \ottnt{e'}  \ottsym{=}   \LET  \mathit{x}  =   \mathit{f} ^ \ell (  \mathit{y_{{\mathrm{1}}}} ,\ldots, \mathit{y_{\ottmv{n}}}  )   \IN  \ottnt{e''} 
    }
    From the well-typing of the function call we must have that $ \mathit{f}  \in   \DOM( \Theta )  $.
    From $\Theta  \vdash  \ottnt{D}$ in the precondition of $ \vdash_{\mathit{conf} }^D  \mathbf{C} $, we must have that
    $ \mathit{f}  \mapsto  \ottsym{(}  \mathit{x_{{\mathrm{1}}}}  \ottsym{,} \, .. \, \ottsym{,}  \mathit{x_{\ottmv{j}}}  \ottsym{)}  \ottnt{e'''}  \in  \ottnt{D} $. Then from \rn{T-FunDef} we must have that
    $ j = n $ whereby the configuration can step according to \rn{R-Call}.
  \end{eqncase}
\end{proof}

\fi

\end{document}

